\def\bN{\mathbb{N}}
\def\bQ{\mathbb{Q}}
\def\bI{\mathbb{I}}
\def\cA{\mathcal{A}}
\def\cB{\mathcal{B}}
\def\cC{\mathcal{C}}
\def\cD{\mathcal{D}}
\def\cF{\mathcal{F}}
\def\cL{\mathcal{L}}
\def\cM{\mathcal{M}}
\def\cN{\mathcal{N}}
\def\cR{\mathcal{R}}
\def\cT{\mathcal{T}}
\def\sL{\mathscr{L}}
\newcommand{\XP}{\textsf{XP}\xspace}
\newcommand{\W}{\textsf{W}}
\newcommand{\ETH}{\textsf{ETH}\xspace}
\newcommand{\NP}{\textsf{NP}\xspace}
\renewcommand{\mid}{\, : \, }
\def\wrtn{{w.r.t.}\xspace}
\def\ien{{i.e.}\xspace}
\def\eg{{e.g.}\xspace}
\def\Rep#1#2{\cR_{#2}^{#1}}
\def\rep#1#2{\mathsf{rep}_{#2}^{#1}}
\def\equi#1#2{\equiv_{#2}^{#1}}
\def\comp#1{\overline{#1}}
\newcommand{\opt}{\mathsf{opt}}
\newcommand{\best}{\mathsf{best}}
\newcommand{\reduce}{\mathsf{reduce}}
\newcommand{\cuts}{\mathsf{ccut}}
\newcommand{\cc}{\mathsf{cc}}
\newcommand{\w}{\mathsf{w}}
\newcommand{\acy}{\mathsf{acy}}
\newcommand{\nec}{\mathsf{nec}}
\newcommand{\snec}{\mathsf{s\text{-}nec}}
\newcommand{\mim}{\mathsf{mim}}
\newcommand{\rw}{\mathsf{rw}}
\newcommand{\Qrw}{\mathsf{rw}_\bQ}
\newcommand{\cw}{\mathsf{cw}}
\newcommand{\mw}{\mathsf{mw}}
\newcommand{\f}{\mathsf{f}}
\newcommand{\tw}{\mathsf{tw}}
\newtheorem{theorem}{Theorem}[section]
\newtheorem{lemma}[theorem]{Lemma}
\newtheorem{corollary}[theorem]{Corollary}
\newtheorem{claim}{Claim}[theorem]
\newtheorem{fact}[theorem]{Fact}
\newtheorem{observation}[claim]{Observation}
\newtheorem{definition}[theorem]{Definition}
\renewcommand{\leq}{\leqslant}
\begin{document}
	
	\hypersetup{
		pdftitle={More applications of the d-neighbor equivalence: acyclicity and connectivity constraints},
		pdfauthor={Benjamin Bergougnoux and Mamadou Moustapha Kant\'e}
	}
	\title{More applications of the \texorpdfstring{$D$}{D}-neighbor equivalence: acyclicity and connectivity constraints}
	
	\author{Benjamin Bergougnoux}
	\address{University of Warsaw, Poland}
	\email{benjamin.bergougnoux@gmail.com}	
	
	\author{Mamadou Moustapha Kant\'e}
	\address{Université Clermont Auvergne, LIMOS, CNRS, Aubière, France.}
	\email{mamadou.kante@uca.fr}	
	
	\thanks{This work is supported by French Agency for Research under the GraphEN project (ANR-15-CE40-0009). An extended abstract appeared in the proceedings of
		the $27^{\rm th}$ European Symposium on Algorithms 2019.}
	
	\subjclass{F.2.2, G.2.1, G.2.2}
	\keywords{connectivity problem, feedback vertex set, $d$-neighbor equivalence, $\sigma,\rho$-domination, clique-width, rank-width, mim-width.}
	
	\begin{abstract} 
		In this paper, we design a framework to obtain efficient algorithms for several problems with a global constraint (acyclicity or connectivity) such as {\sc Connected Dominating Set},
		{\sc Node Weighted Steiner Tree}, {\sc Maximum Induced Tree}, {\sc Longest Induced Path}, and {\sc Feedback Vertex Set}.
		We design a meta-algorithm that solves all these problems and whose running time is upper bounded by  $2^{O(k)}\cdot n^{O(1)}$, $2^{O(k \log(k))}\cdot n^{O(1)}$, $2^{O(k^2)}\cdot n^{O(1)}$, and $n^{O(k)}$ where $k$ is respectively the clique-width, $\mathbb{Q}$-rank-width, rank-width, and maximum induced matching width of a given decomposition.
		Our approach simplifies and unifies the known algorithms for each of the parameters and its running time matches asymptotically also the running times of the best known algorithms for basic \textsf{NP}-hard problems such as {\sc Vertex Cover} and {\sc Dominating Set}.
		Our framework is based on the $d$-neighbor equivalence defined in [B.\ Bui-Xuan, J.\ A.\ Telle, and M.\ Vatshelle, {\it Theoret. Comput. Sci.}, (2013), pp.\ 66--76] and the rank-based approach introduced in [H.\ L.\ Bodlaender, M.\ Cygan, S.\ Kratsch, and J.\ Nederlof, {\it Inform. and Comput.}, 243 (2015), pp.\ 86--111].
		The results we obtain highlight the importance of the $d$-neighbor equivalence relation on the algorithmic applications of width measures.
		We also prove that our framework could be useful for ${\sf W}[1]$-hard problems parameterized by clique-width such as {\sc Max Cut} and {\sc Maximum Minimal Cut}.
		For these latter problems, we obtain $n^{O(k)}$, $n^{O(k)}$, and $n^{2^{O(k)}}$ time algorithms where $k$ is respectively the clique-width, the $\mathbb{Q}$-rank-width, and the rank-width of the input graph.
	\end{abstract}
	
	\maketitle
	
	\section{Introduction}

\emph{Tree-width} is one of the most well-studied graph parameters in the graph algorithm community, due to its numerous structural and algorithmic properties.
Nevertheless, despite the broad interest on tree-width, only sparse graphs can have bounded tree-width.
But many \NP-hard problems are tractable on dense graph classes.
For many graph classes, this tractability can be explained through other width measures.
The most remarkable ones are certainly clique-width \cite{CourcelleO00}, rank-width \cite{Oum05a}, and maximum induced matching width ({a.k.a.} mim-width) \cite{Vatshelle12}.

We obtain most of these parameters through the well-known notion of \emph{layout} (a.k.a. \emph{branch-decomposition}) introduced in \cite{RobertsonS91}.
A layout of a graph $G$ is a tree $T$ whose leaves are in bijection with the vertices of $G$.
Every edge $e$ of the layout is associated with a vertex bipartition of $G$ through the two connected components obtained by the removal of $e$.
Given a symmetric function $\f:2^{V(G)}\to \bN$, one can associate with each layout $T$ a measure, usually called $\f$-width, defined as the maximum  $\f(A)$ over all the vertex bipartitions $(A,\comp{A})$ of $V(G)$ associated with the edges of $T$.
For instance, rank-width is defined from the function $\f(A)$ which corresponds to the rank over $GF(2)$ of the adjacency matrix between the vertex sets $A$ and $\comp{A}$; if we take the rank over $\bQ$, we obtain a variant of rank-width introduced in \cite{OumSV13}, called $\bQ$-rank-with.
For mim-width, $\f(A)$ is the maximum size of an induced matching in the bipartite graph between $A$ and $\comp{A}$.

These other width measures have a modeling power strictly stronger than the modeling power of tree-width.
For example, if a graph class has bounded tree-width, then it has bounded clique-width \cite{CourcelleO00}, but the converse is false as cliques have clique-width at most $2$ and unbounded tree-width.
While ($\bQ$-)rank-width has the same modeling power as clique-width, mim-width has the strongest one among all these width measures and is even bounded on interval graphs~\cite{BelmonteV13}.
Despite their generality, a lot of \NP-hard problems admit polynomial time algorithms when one of these width measures is fixed.
But, designing efficient algorithms with these width measures is known to be harder than with tree-width.

Concerning their computations, it is not known whether the clique-width (resp., mim-width) of a graph can be approximated within a constant factor in time $\f(k)\cdot n^{O(1)}$ (resp., $n^{\f(k)}$) for some function $\f$.
However, for ($\bQ$-)rank-width, there is a $2^{3k}\cdot n^{4}$ time algorithm that, given a graph $G$ as input and $k\in \bN$, either outputs a decomposition for $G$ of ($\bQ$-)rank-width at most $3k+1$ or confirms that the ($\bQ$-)rank-width of $G$ is more than $k$ \cite{OumSV13,OumS06}.

Finding efficient algorithms parameterized by one of these width measures is by now a standard exercise for problems based on local constraints \cite{Bui-XuanTV13,TelleP97}.
In contrast, the task is quite complicated for problems involving a global constraint, \eg, connectivity or acyclicity.  For a long time, our knowledge on the
parameterized complexity of this latter kind of problem, with as parameters the common width measures, was quite limited even for tree-width.  For a while, the
parameterized complexity community used to think that for problems involving global constraints the naive $k^{O(k)}\cdot n^{O(1)}$ time algorithm, $k$ being the
tree-width of the input graph, could not be improved.
But, quite surprisingly, in 2011, Cygan et al. introduced in \cite{CyganNPPRW11} a technique called \textit{Cut \& Count} to design
Monte Carlo $2^{O(k)}\cdot n^{O(1)}$ time algorithms for a wide range of problems with global constraints, including \textsc{Hamiltonian Cycle}, \textsc{Feedback
	Vertex Set}, and \textsc{Connected Dominating Set}.  Later, Bodlaender et al. proposed in \cite{BodlaenderCKN15} a general toolkit, called the \textit{rank-based
	approach}, to design deterministic $2^{O(k)}\cdot n$ time algorithms to solve a wider range of problems.

In a recent paper \cite{BergougnouxK19}, the authors adapted the rank-based approach of \cite{BodlaenderCKN15} to obtain $2^{O(k)}\cdot n$ time algorithms, $k$
being the clique-width of a given decomposition, for many problems with a global constraint, \eg, \textsc{Connected Dominating Set} and \textsc{Feedback Vertex Set}.

Unlike tree-width and clique-width, algorithms parameterized by rank-width and mim-width for problems with a global constraint were not investigated, except for some special cases such as \textsc{Feedback Vertex Set} \cite{GanianH10,JaffkeKT20b} and \textsc{Longest Induced Path} \cite{JaffkeKT20a}.

One successful way to design efficient algorithms with these width measures is through the notion of \textit{$d$-neighbor equivalence}.
This concept was introduced by Bui-Xuan, Telle, and Vatshelle in \cite{Bui-XuanTV13}.
Formally, given $A\subseteq V(G)$ and an integer $d\geq 1$, two sets $X,Y\subseteq A$ are $d$-neighbor equivalent over $A$ if, for all $v\in V(G)\setminus A$, we have $\min(d,|N(v)\cap X|)=\min(d,|N(v)\cap Y|)$, where $N(v)$ is the set of neighbors of $v$ in $G$.
Notice that $X$ and $Y$ are $1$-neighbor equivalent over A if and only if both have the same neighborhood in $V(G)\setminus A$.

The $d$-neighbor equivalence gives rise to a width measure, called in this paper \emph{$d$-neighbor-width}.  This width measure, based also on layouts, is defined
from the function $\snec_d(A)$ which corresponds to the maximum number of equivalence classes of the $d$-neighbor equivalence over $A$ and $V(G)\setminus A$.
It is worth noticing that the boolean-width of a layout introduced in \cite{Bui-XuanTV11} corresponds to the binary logarithm of the $1$-neighbor-width.

These notions were used by Bui-Xuan, Telle, and Vatshelle in \cite{Bui-XuanTV13} to design efficient algorithms for the family of problems called \textsc{$(\sigma,\rho)$-Dominating Set} problems.
This family of problems was introduced by Telle and Proskurowski in \cite{TelleP97}.
Given a pair $(\sigma,\rho)$ of finite or co-finite subsets of $\bN$ and a graph $G$, a $(\sigma,\rho)$\emph{-dominating set} of $G$ is a subset $D$ of $V(G)$ such that, for each vertex $x\in V(G)$, the number of neighbors of $x$ in $D$ is in $\sigma$ if $x\in D$ and in $\rho$ otherwise.
A problem is a  \textsc{$(\sigma, \rho)$-Dominating  Set} problem if it consists in finding a minimum (or maximum) $(\sigma,\rho)$\emph{-dominating set}.
For instance, the \textsc{Dominating Set} problem asks for the computation of a minimum $(\bN,\bN\setminus \{0\})$-dominating set.
Many \NP-hard problems based on local constraints belong to this family; see \cite[Table 1]{Bui-XuanTV13}.

Bui-Xuan, Telle, and Vatshelle~\cite{Bui-XuanTV13} designed a meta-algorithm that, given a rooted layout $\cL$, solves any \textsc{$(\sigma,\rho)$-Dominating Set} problem in time $\snec_d(\cL)^{O(1)}\cdot n^{O(1)}$ where $d$ is a constant depending on the considered problem.
The known upper bounds on $\snec_d(\cL)$ (see Lemma \ref{lem:comparemim}) and the algorithm of \cite{Bui-XuanTV13} give efficient algorithms to solve any
\textsc{$(\sigma,\rho)$-Dominating Set} problem, with parameters tree-width, clique-width, ($\bQ$-)rank-width, and mim-width.
The running times of these algorithms are given in Table \ref{tab:result}.

\renewcommand{\arraystretch}{1.5}
\begin{table}[t]\vspace*{-.3pc}
	\footnotesize
	\centering
	\caption{Upper bounds on  $\snec_d(\cL)^{O(1)}\cdot n^{O(1)}$ with $\cL$ a layout and $d$ a constant.}     \label{tab:result}
	\begin{tabular}{c|c|c|c|c}
		\hline
		Tree-width &Clique-width & Rank-width & $\bQ$-rank-width  & Mim-width \\
		\hline
		$2^{O(k)}\cdot n^{O(1)}$ &$2^{O(k)}\cdot n^{O(1)}$ & $2^{O(k^2)}\cdot n^{O(1)}$ & $2^{O(k  \log( k))}\cdot n^{O(1)}$
		& $n^{O(k)}$\\
		\hline
	\end{tabular}\vspace*{-.3pc}
\end{table}

\subsection*{Our contributions}

In this paper, we design a framework based on the $1$-neighbor equivalence  relation and some ideas from the rank-based approach of \cite{BodlaenderCKN15} to design efficient algorithms for many problems involving a connectivity constraint.
This framework provides tools to reduce the size of the sets of partial solutions we compute at each step of a dynamic programming algorithm. We prove that many ad hoc
algorithms for these problems can be unified into a single meta-algorithm that is almost the same as the one from \cite{Bui-XuanTV13} computing a $(\sigma,\rho)$-dominating set.

We use our framework to design a meta-algorithm that, given a rooted layout $\cL$, solves any connectivity variant (a solution must induce a
connected graph) of a \textsc{$(\sigma, \rho)$-Dominating Set} problem.  This includes some well-known problems such as \textsc{Connected Dominating Set},
\textsc{Connected Vertex Cover,} or \textsc{Node Weighted Steiner Tree}.  The running time of our algorithm is polynomial in $n$ and $\snec_d(\cL)$, with $d$ a
constant that depends on $\sigma$ and $\rho$.  Consequently, each connectivity variant of a \textsc{$(\sigma, \rho)$-Dominating Set} problem admits algorithms
with the running times given in Table \ref{tab:result}.

We introduce some new concepts to deal with acyclicity.  We use these concepts to deal with the \emph{AC} variants\footnote{AC stands for ``acyclic and connected.''} (a
solution must induce a tree) of \textsc{$(\sigma, \rho)$-Dominating Set} problems.  Both \textsc{Maximum Induced Tree} and \textsc{Longest Induced Path} are \emph{AC}
variants of \textsc{$(\sigma, \rho)$-Dominating Set} problems.  We prove that we can modify slightly the meta-algorithm for connectivity constraints so that it can solve
these AC variants in the running times given in Table \ref{tab:result}.  To obtain these results, we rely heavily on the $d$-neighbor equivalence.  However, we were not
able to upper bound the running time of this modified meta-algorithm by a polynomial in $n$ and $\snec_d(\cL)$ for some constant~$d$.

We then reduce any acyclic variant (a solution must induce a forest) of a \textsc{$(\sigma, \rho)$-Dominating Set} problem to solving the AC-variant of the same problem.
In particular, this shows that we can use the algorithm for \textsc{Maximum Induced Tree} to solve the \textsc{Feedback Vertex Set} problem.

Up to a constant in the exponent, the running times of our meta-algorithms and their algorithmic consequences match those of the best known algorithms for basic
problems such as \textsc{Vertex Cover} and \textsc{Dominating Set} \cite{Bui-XuanTV13,OumSV13}.
Moreover, the $2^{O(k)}\cdot n^{O(1)}$ time algorithms we obtain for clique-width are optimal under the well-known exponential time hypothesis (\ETH)~\cite{ImpagliazzoP01}.
That is, unless \ETH fails, there are no $2^{o(k)}\cdot n^{O(1)}$ time algorithms, $k$ being the clique-width of a given decomposition, for the \NP-hard problems considered in this paper.
This follows from the facts that the clique-width of a graph is at most  its number of vertices and that (under well-known Karp reduction \cite{ImpagliazzoP01,ImpagliazzoPZ01}) those problems do not admit a $2^{o(n)}\cdot n^{O(1)}$ time algorithm unless \ETH fails.

Our results reveal that the $d$-neighbor equivalence relation can be used for problems which are not based only on local constraints.
This highlights the importance and the generalizing power of this concept on many width measures: for many problems and many width measures, one obtains the ``best''
algorithms by using the upper bounds on $\snec_{d}(\cL)$ (see Lemma \ref{lem:compare}).
We prove that the $d$-neighbor equivalence relation could be also useful for problems that are $\W[1]$-hard parameterized by clique-width. We provide some
evidence for this potentiality by showing  that, given an $n$-vertex graph and a rooted layout $\cL$, one can use the $n$-neighbor equivalence to solve, with almost the same
algorithm as for solving the \textsc{Independent Set} problem,

\begin{enumerate}
	\item \textsc{Max Cut} in time $\snec_n^{O(1)}(\cL)\cdot n^{O(1)}$.
	This algorithm gives the best known algorithms parameterized by clique-width, $\bQ$-rank-width, and rank-width for \textsc{Max Cut}.
	\item \textsc{Maximum Minimal Cut}, a variant of \textsc{Max Cut}, with two connectivity constraints discussed in \cite{DuarteLPSS19,EtoHKK19}. This is done by
	combining our algorithm for \textsc{Max Cut} and our framework.
\end{enumerate}

Finally, we also show the wide applicability of our framework by explaining how to use it for solving locally checkable partitioning problems with multiple global constraints.

\subsection*{Our approach}
To solve the considered problems, our algorithms do a bottom-up traversal of a given layout $\cL$ of the input graph $G$ and at each step we compute a set of partial solutions.
In our case, the steps of our algorithms are associated with the vertex bipartitions $(A,\comp{A})$ induced by the edges of a layout and the partial solutions are subsets of $A$.

Let us explain our approach for the variant of \textsc{$(\sigma,\rho)$-Dominating Set} with the connected and AC variants of the \textsc{Dominating Set} problem.
At each step, our algorithms compute, for each pair $(R,R')$ where $R$ (resp., $R'$) is a $1$-neighbor equivalence class of $A$ (resp., $\comp{A}$), a set of partial solutions $\cA_{R,R'}\subseteq R$.
The way we compute these sets guarantees that the partial solutions in $\cA_{R,R'}$ will be completed with sets in $R'$.
Consequently, we have information about how we will complete our partial solutions since every $Y\in R'$ has the same neighborhood in $A$.

To deal with the local constraint of these problems, namely the domination constraint, we use the ideas of  Bui-Xuan, Telle, and Vatshelle \cite{Bui-XuanTV13}.
For each pair $(R,R')$, let us say that $X\subseteq A$ is \emph{coherent} with $(R,R')$ if (1) $X \in R$ and (2) $X\cup Y$ dominates $A$ in the graph $G$ for every $Y\in R'$.
To compute a minimum dominating set, Bui-Xuan, Telle, and Vatshelle proved that it is enough to keep, for each pair $(R,R')$, a partial solution $X$ of minimum weight that is coherent with $(R,R')$.
Intuitively, if a partial solution $X$ that is coherent with $(R,R')$ could be completed into a dominating set of $G$ with a set $Y\in R'$, then it is the case for every partial solution coherent with $(R,R')$.
This is due to the fact that any pair of sets in $R$ (resp., $R'$) dominates the same vertices in $\comp{A}$ (resp., $A$).

To solve the connectivity variant, we compute, for each $(R,R')$, a set $\cA_{R,R'}$ of partial solutions coherent with $(R,R')$.  Informally, $\cA_{R,R'}$ has to be as
small as possible, but if a partial solution coherent with $(R,R')$ leads to a minimum connected dominating set, then $\cA_{R,R'}$ must contain a similar partial solution.
To deal with this intuition, we introduce the relation of \emph{$R'$-representativity} between sets of partial solutions.  We say that $\cA^\star$ $R'$-\emph{represents}
a set $\cA$ if, for all sets $Y\in R'$, we have $\best(\cA,Y)=\best(\cA^\star, Y)$ where $\best(\cB,Y)$ is the minimum weight of a set $X\in \cB$ such that $G[X\cup Y]$
is connected.

The main tool of our framework is a function $\reduce$ that, given a set of partial solutions $\cA$ and a 1-neighbor equivalence class $R'$ of $\comp{A}$, outputs a subset of $\cA$ that $R'$-represents $\cA$ and whose size is upper bounded by $\snec_1(\cL)^2$.  To design this function, we use ideas from the rank-based approach of \cite{BodlaenderCKN15}.
That is, we define a small matrix $\cC$ with $|\cA|$ rows and $\snec_1(\cL)^2$ columns.
Then, we show that a basis of minimum weight of the row space of $\cC$ corresponds to an $R'$-representative set of $\cA$.
Since $\cC$ has $\snec_1(\cL)^2$ columns, the size of a basis of $\cC$ is smaller than $\snec_1(\cL)^2$.
By calling $\reduce$ after each computational step, we keep the sizes of the sets of partial solutions polynomial in $\snec_1(\cL)$.
Besides, the definition of $R'$-representativity guarantees that the set of partial solutions computed for the root of $\cL$ contains a minimum connected dominating set.

For the AC variant of dominating set, we need more information in order to deal with the acyclicity.
We obtain this extra information by considering that $R$ (resp., $R'$) is a $2$-neighbor equivalence class over $A$ (resp., $\comp{A}$).
This way, for all sets $X\subseteq A$, the set $X^{+2}$ of vertices in $X$ that have at least two neighbors in some $W\in R'$ corresponds exactly to the set of vertices in $X$ that have at least two neighbors in $Y$, for all $Y\in R'$.
The vertices in $X^{+2}$ play a major role in the acyclicity constraint because if we control them, then we control the number of edges between $X$ and the sets $Y\in R'$.
To obtain this control, we prove in particular that if there exists $Y\in R'$ such that $X\cup Y$ is a forest, then $|X^{+2}|\leq 2\mim(A)$ where $\mim(A)$ is the size of an induced matching in the bipartite graph between $A$ and $\comp{A}$. Consequently, the size of  $X^{+2}$ must be small, otherwise the partial solution $X$ can be discarded.

We need also a new notion of representativity.
We say that $\cA^\star$ $R'$-\emph{ac-represents} a set $\cA$ if, for all sets $Y\in R'$, we have $\best^\acy(\cA,Y)=\best^\acy(\cA^\star, Y)$ where $\best^\acy(\cB,Y)$ is the minimum weight of a set $X\in \cB$ such that $G[X\cup Y]$ is a tree.

As for the $R'$-representativity, we provide a function that, given a set of partial solutions $\cA$ and a $2$-neighbor
equivalence class $R'$ of $\comp{A}$, outputs a small subset $\cA^\star$ of $\cA$ that $R'$-ac-represents $\cA$.
Unfortunately, we were not able to upper bound the size of $\cA^\star$ by a polynomial in $n$ and $\snec_d(\cL)$ (for some constant $d$).
Instead, we prove that, for clique-width, rank-width, $\bQ$-rank-width, and mim-width, the size of $\cA^\star$ can be upper bounded by, respectively, $2^{O(k)}\cdot n$, $2^{O(k^2)}\cdot n$, $2^{O(k\log(k))}\cdot n$, and $n^{O(k)}$.
The key to compute $\cA^\star$ is to decompose $\cA$ into a small number of sets $\cA_1\dots,\cA_\ell$, called \emph{$R'$-consistent}, where the notion of $R'$-ac-representativity matches the notion of $R'$-representativity.
More precisely, any $R'$-representative set of an $R'$-consistent set $\cA$ is also an $R'$-ac-representative set of $\cA$.

To compute an $R'$-ac-representative set of $\cA$ it is then enough to compute an $R'$-representative set for each $R'$-consistent set in the decomposition of $\cA$.
The union of these $R'$-representative sets is an $R'$-ac-representative set of $\cA$.
Besides the notion of representativity, the algorithm for the AC variant of \textsc{Dominating Set} is very similar to the one for finding a minimum connected dominating set.

It is worth noticing that we cannot use the same trick as in \cite{BodlaenderCKN15} to ensure the acyclicity, that is, counting the number of edges induced by the partial solutions.
Indeed, we would need to differentiate at least $n^{k}$ partial solutions (for any parameter $k$ considered in Table \ref{tab:result}) in order to update this information.
We give more explanation on this statement at the beginning of section \ref{sec:maxinducedtree}.

For \textsc{Max Cut}, we use the $n$-neighbor equivalence, with $n$ the number of vertices of $G$.
For every $X,W\subseteq A$, we prove that if $X$ and $W$ are $n$-neighbor equivalent over $A$, then, for every $Y\subseteq \comp{A}$, the number of edges between $X$ and $Y$ are the same as the number of edges between $W$ and $Y$.
It follows that if the number of edges between $X$ and $A\setminus X$ is greater than the number of edges between $W$ and $A\setminus W$, then $X\cup Y$ is a better solution than $W\cup Y$  for every $Y\subseteq \comp{A}$.
Consequently, it is sufficient to keep a partial solution for each $n$-neighbor equivalence class over $A$.
This observation leads to an $\snec_{n}(\cL)^{O(1)}\cdot n^{O(1)}$ time algorithm for \textsc{Max Cut}.

\subsection*{Relation to previous works} One general question in parameterized complexity is, given a function $f:\bN\to \bN$ and a parameter $k$, to identify
problems which admit algorithms running in time $f(k)\cdot n^{O(1)}$  for any input of size $n$ (or $n^{f(k)}$ if the problem is known to be $W[i]$-hard for
some $i\geq 1$), and the literature on parameterized complexity is full of
exemplified functions and problems (see, for instance, the book \cite{Cyganetal15} with a bunch of upper and lower bounds on running times of algorithms, under
(S)ETH). Among the exemplified functions, the one $[k\mapsto 2^{O(k)}]$ is probably the most considered one, maybe  because it allows us to solve efficiently the same problems in
instances whose parameters are logarithmic in their sizes. Indeed, there are a lot of
problems, parameterized by tree-width or clique-width, that were identified to admit algorithms running in time $2^{O(k)}\cdot n^{O(1)}$ and that are essentially tight under ETH (see, for instance, \cite{Bui-XuanTV13,Cyganetal15}). The authors of \cite{BergougnouxK19} and \cite{BodlaenderCKN15} pushed further this set of problems by showing that many problems with connectivity or acyclicity constraints admit also $2^{O(k)}\cdot n^{O(1)}$ time algorithms, parameterized by tree-width or clique-width.
We can transform in polynomial time a decomposition of clique-width $k$ into a layout of $d$-neighbor-width at most $2^{kd}$ (Theorem~\ref{thm:rao} and Lemma~\ref{lem:compare}), and given a vertex separator $S$ of size $k$, the number of $d$-neighbor equivalence classes over $S$ (resp., $V(G)\setminus S$) is upper bounded by $2^k$ (resp., $(d+1)^{k}$).
For this reason, we can consider our framework as a generalization of the rank-based approach considered in \cite{BergougnouxK19} and \cite{BodlaenderCKN15} as it allows us to consider more parameters and push further the set of problems that can be solved efficiently, under ETH, when
considering these parameters, for instance, mim-width and rank-width. Notice, however, that the constants in the running times of the algorithms in  \cite{BergougnouxK19,BodlaenderCKN15} are better than those of our algorithms.
For instance, the authors in \cite{BergougnouxK19} obtained a $15^k\cdot 2^{(\omega+1)\cdot k}\cdot k^{O(1)}\cdot n$ time algorithm for \textsc{Feedback Vertex Set}, while in this paper, we design a  $54^{k}\cdot 2^{2(\omega+1)\cdot k}\cdot n^{4}$ time algorithm for this latter problem.
Indeed, our approach is based on a more general parameter and is optimized for neither tree-width nor clique-width.
Even though the use of partitions can be considered as a formalism in \cite{BergougnouxK19} and \cite{BodlaenderCKN15}, our framework avoids naturally the use
of partitions. We prove that for connected and/or acyclic constraints, classifying the partial solutions with respect to their $d$-neighborhoods, for some
$d\geq 1$ depending on the considered problem, is enough to show that, if $M$ is a binary matrix whose rows are the partial solutions computed so far, and its
columns are the possible extensions of the partial solutions into complete solutions, any minimum (or maximum) weighted basis of $M$ is a representative set. We
moreover prove that one can indeed compute a minimum (or maximum) weighted basis of size bounded by a polynomial on the number of $1$-neighborhoods for
connectivity constraints (see Theorem \ref{thm:reduce1}). For acyclicity constraints, we prove a bound polynomial on the number of $2$-neighborhoods and linear
in a parameter $\mathcal{N}_{\sf f}$ that depends on the considered graph parameter ${\sf f}$ (see Theorem \ref{thm:reduceacy}). Consequently, the definitions
of the dynamic programming tables and the computational steps of our algorithms are simpler than those in \cite{BergougnouxK19,BodlaenderCKN15} and are almost
the same as the ones given in \cite{Bui-XuanTV13}, except for the fact we call the {\sf reduce} function in addition at each step.  This is particularly true for
\textsc{Feedback Vertex Set} where the use of weighted partitions to encode the partial solutions in \cite{BergougnouxK19} takes care of many technical details
concerning the acyclicity.

The results we obtain simplify the $2^{O(k^2)}\cdot n^{O(1)}$ time algorithm parameterized by rank-width for \textsc{Feedback Vertex Set} from \cite{GanianH10} and the
$n^{O(k)}$ time algorithms parameterized by mim-width for \textsc{Feedback Vertex Set} and \textsc{Longest Induced Path} from \cite{JaffkeKT20a,JaffkeKT20b}. We also notice
that contrary to the algorithm for \textsc{Max Cut} (and its variants) given in \cite{DuarteLPSS19,EtoHKK19,FominGLS14}, there is no need to assume that the graph is given with a clique-width expression as our algorithm can be parameterized by $\bQ$-rank-width, which is always smaller than clique-width and for which also a fast fixed-parameter tractable (FPT) $(3k+1)$-approximation algorithm exists \cite{OumSV13,OumS06}.

Concerning mim-width, we provide unified polynomial time algorithms for the considered problems for all well-known graph classes having bounded mim-width and for which a layout of bounded mim-width can be computed in polynomial time \cite{BelmonteV13} (\eg, interval graphs, \text{circular arc} graphs, \text{permutation} graphs, Dilworth-$k$ graphs, and $k$-polygon graphs for all fixed $k$).
Notice that we also generalize one of the results from \cite{MontealegreT16} proving that the \textsc{Connected Vertex Cover} problem is solvable in polynomial time for circular arc graphs.

It is worth noticing that the approach used in \cite{CyganNPPRW11} called Cut \& Count can also be generalized to the $d$-neighbor-width for any
\textsc{Connected $(\sigma,\rho)$-dominating set} problem with more or less the same arguments used in this paper (see the Ph.D. thesis \cite[Theorem 4.66]{Bergougnoux19}).
However, it is not clear how to generalize the Cut \& Count approach to solve the acyclic variants of the \textsc{Connected $(\sigma,\rho)$-dominating set} problems with
the width measures considered in this paper.

\subsection*{Organization of this paper} We give some general definitions and notations in section \ref{sec:prelim} and present the \emph{$d$-neighbor equivalence} relation in section
\ref{sec:dneighbor}. The framework based on the $1$-neighbor equivalence relation is given in section \ref{sec:represents}. The applications to connectivity and
acyclicity constraints on $(\sigma,\rho)$-\textsc{Dominating Set} problems are given in sections \ref{sec:dom} and \ref{sec:maxinducedtree}. The algorithms concerning \textsc{Max
	Cut} and variants are given in section \ref{sec:maxcut}. In section \ref{sec:problem}, we explain how to use our framework for solving locally checkable partitioning problems with multiple global constraints. Finally, we conclude with some open questions and by giving some examples of
problems which might be interesting to tackle with the help of the $d$-neighbor equivalence relation in section \ref{sec:conclusion}.

\section{Preliminaries}\label{sec:prelim}

We denote by $\bN$ the set of nonnegative integers and by $\bN^+$ the set $\bN\setminus \{0\}$.
Let $V$ be a finite set. The size of a set $V$ is denoted by $|V|$ and its power set is denoted by $2^V$.
A set function $\f: 2^V \to \bN$ is \emph{symmetric} if, for all $S\subseteq V$, we have $\f(S)=\f(V\setminus S)$.
We write $A\setminus B$ for the set difference of $A$ from $B$.
We let $\min (\emptyset):= +\infty$ and $\max(\emptyset):=-\infty$.

\subsection*{Graphs}
Our graph terminology is standard and we refer to \cite{Diestel12}.  The vertex set of a graph $G$ is denoted by $V(G)$ and its edge set by $E(G)$.  All the
graphs considered in this paper are node-weighted, and for all of them, we denote by $\w : V(G)\rightarrow \bQ$ the weight function.
For every $X\subseteq V(G)$, we denote by $\w(X)$ the sum $\sum_{x\in X} \w(x)$ and we consider that $\w(\emptyset)=0$.

For every vertex set $X\subseteq V(G)$, when the underlying graph is clear from context, we denote by $\comp{X}$ the set $V(G)\setminus X$.
An edge between two vertices $x$ and $y$ is denoted by $xy$ or $yx$.
The set of vertices that are adjacent to $x$ is denoted by $N_G(x)$. For a set $U\subseteq V(G)$, we define  $N_G(U):=(\cup_{x\in U}N_G(x))\setminus U$. If the underlying graph is clear, then we may remove $G$ from the subscript.

The subgraph of $G$ induced by a subset $X$ of its vertex set is denoted by $G[X]$.
For two disjoint subsets $X$ and $Y$ of $V(G)$, we denote by $G[X,Y]$ the bipartite graph with vertex set $X\cup Y$ and edge set $\{xy \in E(G)\mid x\in X \text{ and } \ y\in Y \}$.
Moreover, we write $E(X,Y)$ to denote the set $E(G[X,Y])$ and we denote by $M_{X,Y}$ the adjacency matrix between $X$ and $Y$, \ien, the $(X,Y)$-matrix such that $M_{X,Y}[x,y]=1$ if $y\in N(x)$ and 0 otherwise.
When $X$ or $Y$ is empty, $M_{X,Y}$ is the empty matrix and $E(X,Y)=\emptyset$.

A \emph{matching} is a set of edges having no common endpoint
and an \emph{induced matching} is a matching $M$ where every pair of edges of $M$ do not have a common adjacent edge in $G$. 
The \emph{size of an induced matching} $M$ refers to the number of edges in $M$.

For two subsets $\cA$ and $\cB$ of $2^{V(G)}$, we define the \emph{merging} of $\cA$ and $\cB$, denoted by $\cA\otimes \cB$, as
$\cA \otimes \cB :=\{ X\cup Y \mid X\in \cA \,\text{ and }\, Y\in \cB\}$.
Observe that $\cA\otimes \cB=\emptyset$ whenever $\cA=\emptyset$ or $\cB=\emptyset$.

For a graph $G$, we denote by $\cc(G)$ the partition $\{V(C)\mid C$ is a connected component of $G\}$.
Let $X\subseteq V(G)$.
A \emph{consistent cut} of $X$ is an ordered bipartition $(X_1,X_2)$ of $X$ such that $N(X_1)\cap X_2 =\emptyset$.
We denote by $\cuts(X)$ the set of all consistent cuts of $X$.
In our proofs, we use the following facts.
\begin{fact}\label{fact:cutsCC}
	Let $X\subseteq V(G)$. For every $C\in \cc(G[X])$ and every $(X_1,X_2)\in \cuts(X)$, we have either $C\subseteq X_1$ or $C\subseteq X_2$.
\end{fact}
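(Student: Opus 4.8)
The plan is to prove Fact~\ref{fact:cutsCC} directly from the definitions, by a short connectivity argument. Let $C \in \cc(G[X])$ be the vertex set of a connected component of $G[X]$, and let $(X_1, X_2) \in \cuts(X)$ be a consistent cut, so that $N(X_1) \cap X_2 = \emptyset$. Since $C \subseteq X = X_1 \cup X_2$, the pair $(C \cap X_1,\, C \cap X_2)$ is a bipartition of $C$; I want to show one of these two parts is empty.

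First I would argue by contradiction: suppose both $C \cap X_1 \neq \emptyset$ and $C \cap X_2 \neq \emptyset$. Since $G[C]$ is connected (as $C$ induces a connected component of $G[X]$), there must be an edge of $G[C]$ with one endpoint $u$ in $C \cap X_1$ and the other endpoint $v$ in $C \cap X_2$ --- otherwise $C \cap X_1$ and $C \cap X_2$ would be a separation of $G[C]$ into two nonempty parts with no edge between them, contradicting connectedness. Formally, consider a path in $G[C]$ from a vertex of $C \cap X_1$ to a vertex of $C \cap X_2$; walking along this path, there is a first vertex lying in $X_2$, and its predecessor lies in $X_1$, giving the desired edge $uv$ with $u \in X_1$ and $v \in X_2$.

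Now $u \in X_1$ and $uv \in E(G)$ give $v \in N(u) \subseteq N(X_1)$, while $v \in C \cap X_2 \subseteq X_2$; hence $v \in N(X_1) \cap X_2$, contradicting $N(X_1) \cap X_2 = \emptyset$. Therefore at least one of $C \cap X_1$, $C \cap X_2$ is empty, i.e.\ $C \subseteq X_2$ or $C \subseteq X_1$ respectively, which is exactly the claim.

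There is essentially no obstacle here: the only point requiring a little care is the extraction of a ``crossing edge'' from the connectedness of $G[C]$, which is the standard fact that a connected graph cannot be partitioned into two nonempty vertex classes with no edge between them; this is handled by the first-vertex-on-a-path argument above. Everything else is unwinding the definitions of $\cuts(X)$ and of connected component.
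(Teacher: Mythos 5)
Your proof is correct. The paper states Fact~\ref{fact:cutsCC} without proof (treating it as immediate from the definitions), and your argument --- taking a path in $G[C]$ between the two nonempty sides and locating the first crossing edge to contradict $N(X_1)\cap X_2=\emptyset$ --- is exactly the standard verification one would supply.
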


We deduce from the above fact that $|\cuts(X)|=2^{|\cc(G[X])|}$.
\begin{fact}\label{fact:cutsXcupY}
	Let $X$ and $Y$ be two disjoint subsets of $V(G)$. We have $(W_1,W_2)\in \cuts(X\cup Y)$ if and only if the following conditions are satisfied:
	\begin{enumerate}
		\item  $(W_1\cap X, W_2\cap X)\in \cuts(X)$,
		\item  $(W_1\cap Y, W_2\cap Y)\in \cuts(Y)$, and
		\item  $N(W_1\cap X)\cap (W_2 \cap Y)=\emptyset$ and $N(W_2\cap X)\cap (W_1 \cap Y)=\emptyset$.
	\end{enumerate}
\end{fact}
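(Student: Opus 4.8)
The plan is to prove the ``if and only if'' of Fact~\ref{fact:cutsXcupY} by unwinding the definition of a consistent cut. Recall $(W_1,W_2)\in\cuts(X\cup Y)$ means $(W_1,W_2)$ is an ordered bipartition of $X\cup Y$ with $N(W_1)\cap W_2=\emptyset$. Since $X$ and $Y$ are disjoint, any bipartition $(W_1,W_2)$ of $X\cup Y$ restricts to a bipartition $(W_1\cap X,W_2\cap X)$ of $X$ and a bipartition $(W_1\cap Y,W_2\cap Y)$ of $Y$, and conversely these two restrictions determine $(W_1,W_2)$. So the only content is to show that the single condition $N(W_1)\cap W_2=\emptyset$ is equivalent to the conjunction of the three displayed conditions.

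\begin{proof}
Assume first that $(W_1,W_2)\in\cuts(X\cup Y)$; in particular $(W_1,W_2)$ is an ordered bipartition of $X\cup Y$ and $N(W_1)\cap W_2=\emptyset$. Since $X\cap Y=\emptyset$, the pairs $(W_1\cap X, W_2\cap X)$ and $(W_1\cap Y, W_2\cap Y)$ are ordered bipartitions of $X$ and of $Y$, respectively. Because $W_1\cap X\subseteq W_1$ and $W_2\cap X\subseteq W_2$, we get $N(W_1\cap X)\cap(W_2\cap X)\subseteq N(W_1)\cap W_2=\emptyset$, so (1) holds; the same argument with $Y$ in place of $X$ gives (2). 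Likewise $N(W_1\cap X)\cap(W_2\cap Y)\subseteq N(W_1)\cap W_2=\emptyset$ and $N(W_2\cap X)\cap(W_1\cap Y)\subseteq N(W_2)\cap W_1=\emptyset$ (note $N(W_2)\cap W_1=\emptyset$ too, since $xy\in E(G)$ with $x\in W_1$, $y\in W_2$ would violate $N(W_1)\cap W_2=\emptyset$), so (3) holds.

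Conversely, assume (1), (2) and (3). By (1) and (2), $(W_1\cap X,W_2\cap X)$ and $(W_1\cap Y,W_2\cap Y)$ are ordered bipartitions of $X$ and $Y$; since $X\cap Y=\emptyset$, the pair $(W_1,W_2)$ with $W_1=(W_1\cap X)\cup(W_1\cap Y)$ and $W_2=(W_2\cap X)\cup(W_2\cap Y)$ is an ordered bipartition of $X\cup Y$. It remains to check $N(W_1)\cap W_2=\emptyset$. Suppose $z\in N(W_1)\cap W_2$; then $z\in W_2\cap X$ or $z\in W_2\cap Y$, and $z$ has a neighbour in $W_1$, hence in $W_1\cap X$ or in $W_1\cap Y$. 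This gives four cases: a neighbour pair inside $X$ (excluded by (1)), inside $Y$ (excluded by (2)), between $W_1\cap X$ and $W_2\cap Y$ (excluded by the first part of (3)), or between $W_1\cap Y$ and $W_2\cap X$, i.e.\ a vertex of $W_2\cap X$ adjacent to a vertex of $W_1\cap Y$, which means $N(W_2\cap X)\cap(W_1\cap Y)\neq\emptyset$, excluded by the second part of (3). Hence no such $z$ exists, so $N(W_1)\cap W_2=\emptyset$ and $(W_1,W_2)\in\cuts(X\cup Y)$.
\end{proof}

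I do not anticipate a genuine obstacle here: the statement is a direct bookkeeping consequence of the definition, and the only mild subtlety is remembering that $N(W_1)\cap W_2=\emptyset$ in an undirected graph is symmetric, so it simultaneously controls edges in both directions between $W_1\cap X$, $W_2\cap X$, $W_1\cap Y$, $W_2\cap Y$; organizing the four cross-cases cleanly (two ``inside $X$ or inside $Y$'' cases plus the two mixed cases of (3)) is the whole argument.
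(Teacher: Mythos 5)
Your proof is correct. The paper states Fact~\ref{fact:cutsXcupY} without proof, treating it as a routine consequence of the definition of a consistent cut, and your argument is precisely the natural direct verification one would supply: restrict the single condition $N(W_1)\cap W_2=\emptyset$ to the four $X$/$Y$ cross-cases (using symmetry of adjacency for the $N(W_2)\cap W_1$ direction), which gives exactly conditions (1)--(3) and back.
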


\subsection*{Rooted layout}
A \emph{rooted binary tree} is a binary tree with a distinguished vertex called the \emph{root}.
Since we manipulate at the same time graphs and trees representing them, the vertices of trees will be called \emph{nodes}.

A \emph{rooted layout} of $G$ is a pair $\cL=(T,\delta)$ of a rooted binary tree $T$ and a bijective function $\delta$ between $V(G)$ and the leaves\footnote{When $|V(G)|=1$ the only rooted layout of $G$ is an isolated vertex that is considered as a leaf.} of $T$.

For each node $x$ of $T$, let $L_x$ be the set of all the leaves $l$ of $T$ such that the path from the root of $T$ to $l$ contains $x$.
We denote by $V_x^\cL$ the set of vertices that are in bijection with $L_x$, \ien, $V_x^\cL:=\{v \in V(G)\mid \delta(v)\in L_x\}$.
When $\cL$ is clear from the context, we may remove $\cL$ from the superscript.
Observe that every rooted layout of an $n$-vertex graph has $2n-1$ nodes.

All the width measures dealt with in this paper are special cases of the following one, the difference being in each case the used set function.
Given a set function $\f: 2^{V(G)} \to\bN$ and a rooted layout $\cL=(T,\delta)$, the $\f$-width of a node $x$ of $T$ is $\f(V_x^\cL)$ and the $\f$-width of $(T,\delta)$, denoted by $\f(T,\delta)$ (or $\f(\cL)$), is $\max\{\f(V_x^\cL ) \mid x \in V(T)\}$.
Finally, the $\f$-width of $G$ is the minimum $\f$-width over all rooted layouts of $G$.

\subsection*{Clique-width/Module-width}
We will not define \emph{clique-width} but define its equivalent measure \emph{module-width} \cite{Rao06}.
The module-width of a graph $G$ is the $\mw$-width where $\mw(A)$ is the cardinality of $\{ N(v)\cap \comp{A} \mid  v\in A \}$ for all $A\subseteq V(G)$.
One also observes that $\mw(A)$ is the number of different rows in $M_{A,\comp{A}}$.
The following theorem shows the link between module-width and clique-width.

\begin{theorem}[{see \cite[Theorem 6.6]{Rao06}}]\label{thm:rao}
	For every $n$-vertex graph $G$, $\mw(G)\leq \cw(G) \leq 2\mw(G)$, where $\cw(G)$ denotes the clique-width of $G$. One can moreover translate, in time at most $O(n^2)$, a given decomposition into the other one with width at most the given bounds.
\end{theorem}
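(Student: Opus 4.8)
The plan is to prove the two inequalities separately, each time via an explicit translation between a $k$-expression (a clique-width expression on $k$ labels) and a rooted layout, which will also yield the claimed $O(n^2)$ running time.

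\textbf{The bound $\mw(G)\leq \cw(G)$.} I would start from a $k$-expression $\phi$ building $G$, with $k=\cw(G)$, and form its syntax tree; contracting every unary node (the relabelings $\rho_{i\to j}$ and edge-additions $\eta_{i,j}$) yields a rooted binary tree $T$ whose leaves are exactly the create-vertex operations, hence in bijection with $V(G)$, so $\cL=(T,\delta)$ is a rooted layout. For an internal node $x$, $V_x^\cL$ is the set of vertices introduced in the corresponding subexpression, and at the moment that subexpression is completed these vertices carry at most $k$ distinct labels. The key observation is that if $u,v\in V_x^\cL$ share a label at that moment then they keep a common label at every later step (relabelings act on whole label classes); hence for every $w\notin V_x^\cL$ — whose incident edges into $V_x^\cL$ are all added strictly later, since $w$ and $V_x^\cL$ only coexist at ancestors of $x$ — the same $\eta$-operation decides $uw\in E(G)$ and $vw\in E(G)$ simultaneously. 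Thus $N(u)\cap\comp{V_x^\cL}=N(v)\cap\comp{V_x^\cL}$, so $\mw(V_x^\cL)$ is at most the number of labels present on $V_x^\cL$, i.e.\ at most $k$. Taking the maximum over all nodes gives $\mw(\cL)\leq k$, so $\mw(G)\leq\cw(G)$.

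\textbf{The bound $\cw(G)\leq 2\mw(G)$.} Conversely, from a rooted layout $\cL=(T,\delta)$ with $\mw(\cL)=k$ I would build a $2k$-expression by a bottom-up traversal of $T$, maintaining the invariant that after processing node $x$ the graph $G[V_x^\cL]$ has been constructed and two vertices of $V_x^\cL$ share a label, taken from $\{1,\dots,k\}$, if and only if they have the same neighbourhood in $\comp{V_x^\cL}$; that is, the labelling encodes the at most $k$ classes of the $\mw$-partition of $V_x^\cL$. A leaf is a single vertex (one class), handled trivially. At an internal node $x$ with children $a,b$, I relabel the output of $b$ from $\{1,\dots,k\}$ to $\{k+1,\dots,2k\}$, take the disjoint union with the output of $a$, and for every class $P$ of $V_a^\cL$ and every class $Q$ of $V_b^\cL$ that are adjacent in $G$ apply $\eta_{\text{label}(P),\text{label}(Q)}$; this is well defined because whether an edge $uv$ with $u\in V_a^\cL$, $v\in V_b^\cL$ is present depends only on the class of $u$ in $V_a^\cL$ (adjacency being read inside $\comp{V_a^\cL}\supseteq V_b^\cL$) and on the class of $v$ in $V_b^\cL$. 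Finally, since every class of $V_a^\cL$ and of $V_b^\cL$ is contained in a single class of $V_x^\cL$ (neighbourhoods computed in $\comp{V_x^\cL}\subseteq\comp{V_a^\cL}$ are coarser), I collapse the at most $2k$ current labels onto the at most $k$ classes of $V_x^\cL$, reusing labels in $\{1,\dots,k\}$, which restores the invariant. At the root this produces a $2k$-expression for $G$, so $\cw(G)\leq 2k=2\mw(G)$.

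\textbf{Main obstacle and running time.} The only delicate point is the last collapsing step of the second direction: one must sequence the relabelings so that no two distinct classes of $V_x^\cL$ get accidentally merged and no label class is clobbered before it has been moved, all while staying within the $2k$ available labels. Since the left classes occupy $\{1,\dots,k\}$ and the right classes $\{k+1,\dots,2k\}$, and at most $k$ target labels are needed in total, a simple ordering works: first merge each $V_x^\cL$-class that already contains a left-label onto one of its left-labels, then send each purely-right $V_x^\cL$-class to one of the still-free labels of $\{1,\dots,k\}$ (of which there are enough, as the number of target classes is at most $k$). For the complexity, each internal node contributes $O(k)$ relabelings and, summed over the whole tree, the $\eta$-operations number at most $\min(nk^{2},|E(G)|)$ since each edge is introduced at the least common ancestor of its endpoints; hence the $2k$-expression has size $O(n^{2})$ and is produced in $O(n^{2})$ time, and the syntax-tree-to-layout direction is immediate in $O(n^2)$ time as well.
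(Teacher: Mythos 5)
The paper states this result by citing Theorem 6.6 of \cite{Rao06} and gives no proof of its own, so there is no in-paper argument to compare against. Your reconstruction is the standard translation between the two kinds of decompositions, and both inequalities are argued correctly: for $\mw(G)\leq\cw(G)$, contracting the unary nodes of the syntax tree and observing that two vertices below a disjoint-union node that share a label there share a label at every ancestor (hence have the same neighbourhood outside the node's cluster, because all such edges are added at ancestors via $\eta$-operations acting on labels) is precisely the right reasoning; for $\cw(G)\leq 2\mw(G)$, the bottom-up construction with left labels in $\{1,\dots,k\}$, right labels in $\{k+1,\dots,2k\}$, $\eta$-operations between adjacent class pairs, and a final relabelling that coarsens to the at most $k$ $\mw$-classes of the parent is sound, and your sequencing of the final relabelling (merge onto left labels first, then route purely-right classes to freed-up left labels) correctly avoids clashes within the $2k$ budget. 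The one point you leave under-justified is the $O(n^2)$ running time in the layout-to-expression direction: you bound the size of the produced $2k$-expression, but you do not account for the cost of computing, at each of the $\Theta(n)$ internal nodes $x$, the $\mw$-classes of $V_a$ and $V_b$, which pairs of classes are adjacent, and how they coarsen into classes of $V_x$. A naive comparison of the neighbourhood vectors of class representatives costs $\Theta(kn)$ per node, hence $\Theta(kn^2)$ overall, which exceeds $O(n^2)$ unless $k=O(1)$; reaching the advertised bound requires an incremental or partition-refinement argument (e.g.\ noting that between $\comp{V_a}$ and $\comp{V_x}$ a representative's neighbourhood vector changes only on the coordinates in $V_b$, so the total work can be charged to $\sum_x |V_a|\cdot|V_b| = O(n^2)$ or similar). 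This is an implementation detail rather than a flaw in the mathematics.
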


\subsection*{(\boldmath$\bQ$-)rank-width}
The rank-width and $\bQ$-rank-width are, respectively, the $\rw$-width and $\Qrw$-width where $\rw(A)$ (resp., $\Qrw(A)$) is the rank over $GF(2)$ (resp., $\bQ$) of the matrix $M_{A,\comp{A}}$ for all $A\subseteq V(G)$.

\subsection*{mim-width}
The mim-width of a graph $G$ is the $\mim$-width of $G$ where $\mim(A)$ is the size of a maximum induced matching of the graph $G[A,\comp{A}]$ for all $A\subseteq V(G)$.

It is worth noticing that module-width is the only parameter associated with a set function that is not symmetric.

The following lemma provides some upper bounds between mim-width and the other parameters that we use in section \ref{sec:maxinducedtree}. All of these upper bounds are proved in \cite{Vatshelle12}.

\begin{lemma}[see \cite{Vatshelle12}]\label{lem:comparemim}
	Let $G$ be a graph. For every $A\subseteq V(G)$,  $\mim(A)$ is upper bounded by $\mw(A)$, $\rw(A)$, and $\Qrw(A)$.
\end{lemma}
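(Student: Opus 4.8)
The plan is to bound $\mim(A)$ separately by each of the three matrix-derived parameters $\mw(A)$, $\rw(A)$, and $\Qrw(A)$, in each case by exhibiting a submatrix of $M_{A,\comp{A}}$ whose structure forces the parameter to be at least $\mim(A)$. Throughout, fix a maximum induced matching of $G[A,\comp{A}]$, say with edges $a_1b_1,\dots,a_kb_k$ where $a_i\in A$, $b_i\in\comp{A}$, and $k=\mim(A)$. Being an \emph{induced} matching means that for $i\neq j$ we have $a_ib_j\notin E(G)$, so the $k\times k$ submatrix $N$ of $M_{A,\comp{A}}$ with rows indexed by $a_1,\dots,a_k$ and columns by $b_1,\dots,b_k$ is exactly the identity matrix $I_k$ (up to permutation): $N[a_i,b_j]=1$ iff $i=j$.

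First, for rank-width and $\bQ$-rank-width: the rank of a matrix is at least the rank of any of its submatrices (over any field), and $\operatorname{rank}(I_k)=k$ both over $GF(2)$ and over $\bQ$. Hence $\rw(A)=\operatorname{rank}_{GF(2)}(M_{A,\comp{A}})\geq \operatorname{rank}_{GF(2)}(N)=k=\mim(A)$, and likewise $\Qrw(A)\geq k=\mim(A)$. This is the easy part.

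Second, for module-width: recall $\mw(A)$ is the number of distinct rows of $M_{A,\comp{A}}$. I would argue that the rows indexed by $a_1,\dots,a_k$ are pairwise distinct: the row of $a_i$ restricted to columns $b_1,\dots,b_k$ equals the $i$-th standard basis vector $e_i$, and these are pairwise distinct, so the full rows must be pairwise distinct as well. Therefore $M_{A,\comp{A}}$ has at least $k$ distinct rows, i.e. $\mw(A)\geq k=\mim(A)$.

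I do not expect any real obstacle here; the only thing to be slightly careful about is the degenerate case $k=0$ (empty matching), where all inequalities hold trivially since $\mim(A)=0\leq\mw(A),\rw(A),\Qrw(A)$, and the possibility that some $b_i$ could coincide for distinct indices — but that cannot happen since $b_1,\dots,b_k$ are endpoints of a matching, hence distinct, and similarly the $a_i$ are distinct, so $N$ is genuinely a $k\times k$ submatrix with distinct row and column indices. The whole argument is thus a short, self-contained linear-algebra observation, and it is precisely the content attributed to \cite{Vatshelle12}.
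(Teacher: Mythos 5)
Your proposal is correct and takes essentially the same approach as the paper: fix a maximum induced matching, observe that the submatrix of $M_{A,\comp{A}}$ induced by its endpoints is the identity matrix, and conclude the rank lower bounds over $GF(2)$ and $\bQ$ and the distinctness of rows for module-width.
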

\begin{proof}
	Let $A\subseteq V(G)$.
	It is clear that $\mim(A)$ is upper bounded by the number of different rows in $M_{A,\comp{A}}$, so $\mim(A)\leq \mw(A)$.
	Let $S$ be the vertex set of a maximum induced matching of the graph $G[A,\comp{A}]$.
	Observe that the restriction of the matrix $M_{A,\comp{A}}$ to rows and columns in $S$ is the identity matrix.
	Hence, $\mim(A)$ is upper bounded by both $\rw(A)$ and $\Qrw(A)$.
\end{proof}

\section{The \boldmath$d$-neighbor equivalence}
\label{sec:dneighbor}

Let $G$ be a graph. The following definition is from \cite{Bui-XuanTV13}.
Let $A\subseteq V(G)$ and $d\in \bN^+$. Two subsets $X$ and $Y$ of $A$ are \emph{$d$-neighbor equivalent over $A$}, denoted by $X\equi{A}{d} Y$, if $\min(d,|X\cap N(u)|) = \min(d,|Y\cap N(u)|)$ for all $u\in \comp{A}$.
It is not hard to check that $\equi{A}{d}$ is an equivalence relation.
See Figure \ref{fig:nec} for an example of $2$-neighbor equivalent sets.

\begin{figure}[h!t]
	\centering
	\includegraphics[scale=0.95]{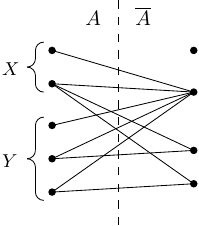}
	\caption{We have $X\equi{A}{2} Y$, but it is not the case that $X\equi{A}{3} Y$.}
	\label{fig:nec}
\end{figure}

For all $d\in \bN^+$, we let $\nec_d : 2^{V(G)}\to \bN$ where, for all $A\subseteq V(G)$, $\nec_d(A)$ is the number of equivalence classes of $\equi{A}{d}$.
Notice that while $\nec_1$ is a symmetric function \cite[Theorem 1.2.3]{Kim82}, $\nec_d$ is not necessarily symmetric for $d\geq 2$.
For example, if a vertex $x$ of $G$ has $c$ neighbors, then,    for every $d\in\bN^+$, we have $\nec_d(\{x\})= 2$ and $\nec_d(\comp{\{x\}})=1+\min(d,c)$.
It is worth noticing that, for all $d\in \bN^+$ and $A\subseteq V(G)$, $\nec_d(A)$ and $\nec_d(\comp{A})$ are at most $\nec_1(A)^{d \log_2(\nec_1(A))}$ \cite{Bui-XuanTV13}.

The following fact follows directly from the definition of the $d$-neighbor equivalence relation. We use it several times in our proofs.
\begin{fact}\label{fact:equivbiggerset}
	Let $A,B\subseteq V(G)$ such that $A\subseteq B$ and let $d\in\bN^+$.
	For all $X,Y\subseteq A$, if $X\equi{A}{d}  Y$, then $X\equi{B}{d} Y$.
\end{fact}

In order to manipulate the equivalence classes of $\equi{A}{d}$, one needs to compute a representative  for each equivalence class in polynomial time. This is achieved with the following notion of a representative.
Let $G$ be a graph with an arbitrary ordering of $V(G)$ and let $A\subseteq V(G)$.
For each $X\subseteq A$, let us denote by $\rep{A}{d}(X)$ the lexicographically smallest set $R\subseteq A$ among all $R\equi{A}{d} X$ of minimum size.
Moreover, we denote by $\Rep{A}{d}$ the set $\{\rep{A}{d}(X)\mid X\subseteq A\}$.
It is worth noticing that the empty set always belongs to $\Rep{A}{d}$ for all $A\subseteq V(G)$ and $d\in\bN^+$.
Moreover, we have $\Rep{V(G)}{d}=\Rep{\emptyset}{d}=\{\emptyset\}$ for all $d\in \bN^+$.
In order to compute $\Rep{A}{d}$, we use the following lemma.\vspace*{-.2pc}

\begin{lemma}[see \cite{Bui-XuanTV13}]\label{lem:computenecd}
	Let $G$ be an $n$-vertex graph. For every $A\subseteq V(G)$ and $d\in \bN^+$, one can compute in time $O(\nec_d(A) \cdot n^2 \cdot \log(\nec_d(A)))$ the sets $\Rep{A}{d}$ and a data structure that, given
	a set $X\subseteq A$, computes $\rep{A}{d}(X)$ in time $O(|A|\cdot n\cdot \log(\nec_d(A)))$.
\end{lemma}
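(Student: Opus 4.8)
The plan is to encode each equivalence class of $\equiv_A^d$ by a \emph{signature}: for $X\subseteq A$ put $\sigma(X):=\bigl(\min(d,|X\cap N(u)|)\bigr)_{u\in\comp A}$, a vector over $\{0,\dots,d\}$ indexed by $\comp A$. By the definition of the $d$-neighbor equivalence, $X\equiv_A^d Y$ if and only if $\sigma(X)=\sigma(Y)$. A signature occupies $O(n)$ space and two signatures are compared in $O(n)$ time; moreover $\sigma(X\cup\{v\})$ is obtained from $\sigma(X)$, for a single vertex $v\notin X$, by incrementing (and capping at $d$) only the coordinates in $N(v)\cap\comp A$, which costs $O(n)$ time, while $\sigma(X)$ computed from scratch costs $O(|X|\cdot n)=O(|A|\cdot n)$.

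First I would produce $\Rep{A}{d}$ together with a dictionary $\mathcal D$ that maps every signature occurring among subsets of $A$ to the canonical representative $\rep_A^d(\cdot)$ of the corresponding class; $\mathcal D$ is kept as a balanced binary search tree (or a trie) keyed by signatures. The enumeration runs in rounds indexed by the size $t=0,1,2,\dots$ of the sets under consideration, maintaining the invariant that after round $t$ the dictionary contains exactly the classes of minimum size at most $t$, each with its canonical representative. Round $0$ inserts $(\sigma(\emptyset),\emptyset)$. In round $t+1$, for every stored representative $R$ with $|R|=t$ and every $v\in A\setminus R$, I form the candidate $R\cup\{v\}$ (of size $t+1$) and its signature incrementally, and then, among all candidates of this round whose signature was not already present in $\mathcal D$, I record for each such new signature the lexicographically smallest candidate carrying it. The process stops after the first round that creates no new signature; at that point $\Rep{A}{d}$ is the set of stored representatives and $\nec_d(A)=|\mathcal D|$. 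The promised query data structure is $\mathcal D$ itself: given $X\subseteq A$, compute $\sigma(X)$ in $O(|A|\cdot n)$ time and look it up in $\mathcal D$ in $O(n\cdot\log\nec_d(A))$ time, which returns $\rep_A^d(X)$ within the stated $O(|A|\cdot n\cdot\log\nec_d(A))$ bound.

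For the running time, the number of candidates generated over the whole run is $\sum_{R\in\Rep{A}{d}}|A\setminus R|\le\nec_d(A)\cdot|A|\le\nec_d(A)\cdot n$, and each candidate costs $O(n)$ for its incremental signature, $O(n\cdot\log\nec_d(A))$ for the search-tree lookup and possible insertion, and $O(n)$ to compare it lexicographically with the representative currently stored for its signature; hence the whole computation takes $O\bigl(\nec_d(A)\cdot n^{2}\cdot\log\nec_d(A)\bigr)$ time (sorting a round's candidates by signature, if one prefers that to repeated lookups, is absorbed into the same bound). Correctness requires two things — that every class is eventually discovered, and that the set stored for a class is exactly $\rep_A^d(\cdot)$ — and I would derive both from one structural lemma; granting that lemma, the round-by-size order together with the within-round lexicographic tie-break makes the verification immediate, because a class first becomes reachable at the round indexed by its minimum size, so the candidate kept for it is a minimum-size member, and the tie-break picks the lexicographically smallest such.

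I expect the main obstacle to be precisely that structural lemma: if $R\in\Rep{A}{d}$ and $R\neq\emptyset$, then $R=R'\cup\{v\}$ for some $R'\in\Rep{A}{d}$ with $|R'|=|R|-1$ and some $v\in A\setminus R'$, so that $R$ is generated as a candidate in round $|R|$ from the already-stored predecessor $R'$; in particular every class is reached. One must be careful here: the sharper claim that $R\setminus\{\max R\}$ is always canonical fails for $d\geq 2$, so the vertex to peel off has to be chosen with care. The proof would go by contradiction — assuming that for \emph{every} $v\in R$ the set $R\setminus\{v\}$ is not canonical, pick for each $v$ a minimum-size set $Q_v\equiv_A^d (R\setminus\{v\})$ with $|Q_v|\le |R|-1$, and splice the $Q_v$'s together, using the easy fact that $d$-neighbor equivalence is preserved under adding a vertex disjoint from both sides, to build a set equivalent to $R$ that is strictly smaller (or one of the same size that is lexicographically smaller), contradicting $R\in\Rep{A}{d}$. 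Controlling the case analysis according to whether the peeled vertices lie inside or outside the witnesses $Q_v$ is the one genuinely technical point; the incremental signature maintenance, the search tree, the round structure and the time accounting are all routine.
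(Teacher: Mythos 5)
Your overall plan --- encode each $X\subseteq A$ by the signature $\sigma(X)=\bigl(\min(d,|X\cap N(u)|)\bigr)_{u\in\comp A}$, keep a dictionary of known signatures in a balanced search tree, and enumerate by a breadth-first search over set sizes --- is the expected one, and the time accounting is sound: $O(\nec_d(A)\cdot n)$ candidates at $O(n\log\nec_d(A))$ each gives $O(\nec_d(A)\cdot n^2\log\nec_d(A))$. You have also correctly isolated the one non-routine ingredient: the structural claim that every nonempty $R\in\Rep{A}{d}$ can be written as $R'\cup\{v\}$ with $R'\in\Rep{A}{d}$, $|R'|=|R|-1$ and $v\notin R'$. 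Without it neither the termination test (stop after an empty round) nor the assertion that what is stored for a class is exactly $\rep_A^d(\cdot)$ --- rather than merely the lexicographically smallest \emph{candidate} generated in that round, which could well be a different minimum-size member of the class --- can be justified.

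That structural claim, however, is not actually proved in your write-up. Your sketch picks, for each $v\in R$, a witness $Q_v\equiv_A^d R\setminus\{v\}$ of no larger size. The easy case is $v\notin Q_v$: then $Q_v\cup\{v\}\equiv_A^d R$ is smaller or lex-smaller than $R$, contradicting $R\in\Rep{A}{d}$. The hard case is when $v\in Q_v$ for \emph{every} $v\in R$ --- and this cannot be excluded --- where the ``add the same vertex to both sides'' fact gives nothing, since that fact fails when the vertex already lies in one of the two sets (for $d=2$ and $a,b$ sharing a neighbor, $\{a\}\equiv_A^2\{b\}$ yet $\{a\}\not\equiv_A^2\{a,b\}$). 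From $\sigma(Q_v)=\sigma(R\setminus\{v\})$ and $v\in Q_v$ one only gets the pointwise inequality $\sigma(Q_v)\leq\sigma(R)$, and no single added vertex need close the gap; ``splicing the $Q_v$'s together'' is not specified at a checkable level, and combining several $Q_v$'s naively overcounts coordinates. You flag this yourself as ``the one genuinely technical point,'' and it is exactly where the argument stops; the signature maintenance, the dictionary, the round structure and the time bound are fine, but without this lemma the proposal is an algorithm design together with a claim, not a proof.
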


\subsection*{\boldmath$d$-neighbor-width}
For every graph $G$ and $d\in\bN^+$, the \emph{$d$-neighbor-width} is the parameter obtained through the symmetric function $\snec_d: 2^{V(G)} \to \bN$ such that \[ \snec_d(A)=\max(\nec_d(A),\nec_d(\comp{A})). \]

The following lemma shows how the $d$-neighbor-width is upper bounded by the other parameters; most of the upper bounds were already proved in \cite{BelmonteV13,OumSV13}.\vspace*{-.2pc}

\begin{lemma}[see \cite{BelmonteV13,OumSV13,Vatshelle12}]\label{lem:compare}
	Let $G$ be a graph. For every $A\subseteq V(G)$ and $d\in \bN^+$, we have the following upper bounds on $\nec_d(A)$ and $\nec_d(\comp{A})$:
	\begin{multicols}{2}
		\begin{enumerate}[\rm(a)]
			\item $(d+1)^{\mw(A)}$,
			\item $2^{d\cdot \rw(A)^2}$,
			\par\begingroup
			\item $(d\cdot \Qrw(A) + 1 )^{\Qrw(A)}$,
			\item $n^{d\cdot \mim(A)}$.
			\par\endgroup
		\end{enumerate}
	\end{multicols}
\end{lemma}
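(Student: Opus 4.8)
Here is the plan I would follow for Lemma~\ref{lem:compare}. The four inequalities are of two flavours: (a)--(c) exploit a bounded ``skeleton'' of the cut $(A,\comp{A})$ that is insensitive to the exact sizes of neighbourhoods (the distinct rows of $M_{A,\comp{A}}$ for (a); a $GF(2)$- resp.\ $\bQ$-row basis of $M_{A,\comp{A}}$ for (b) resp.\ (c)), whereas (d) is obtained by showing that every subset of $A$ is $d$-neighbor equivalent over $A$ to a \emph{small} subset. I would prove (a) in full, obtain (b) and (c) from the basis argument of \cite{OumSV13}, and for (d) reduce to the size bound of \cite{Vatshelle12}; these are precisely the references recorded in the statement.

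\emph{Proof of (a).} Partition $A$ into $m := \mw(A)$ blocks $A_1,\dots,A_m$, two vertices lying in the same block precisely when they have the same neighbourhood in $\comp{A}$ (so $m$ is the number of distinct rows of $M_{A,\comp{A}}$). For each $u\in\comp{A}$ every block is either contained in $N(u)$ or disjoint from it, hence $|N(u)\cap X| = \sum_{i:\,A_i\subseteq N(u)} |X\cap A_i|$ for $X\subseteq A$; together with the elementary identity $\min(d,\sum_i a_i) = \min(d,\sum_i \min(d,a_i))$ for non-negative integers, this shows that $\min(d,|N(u)\cap X|)$ depends only on the vector $(\min(d,|X\cap A_i|))_{i\le m}\in\{0,\dots,d\}^m$. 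Therefore $\nec_d(A)\le (d+1)^{\mw(A)}$. Symmetrically, two vertices in the same block intersect every $Y\subseteq\comp{A}$ in the same set, so the $\equiv^d_{\comp{A}}$-class of $Y$ is determined by one value of $\{0,\dots,d\}$ per block, giving $\nec_d(\comp{A})\le (d+1)^{\mw(A)}$ as well.

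\emph{Proof of (d).} It is enough to show that every $X\subseteq A$ is $d$-neighbor equivalent over $A$ to some $X^\star\subseteq X$ with $|X^\star|\le d\cdot\mim(A)$, since then $\nec_d(A)$ is at most the number of subsets of $A$ of size at most $d\cdot\mim(A)$, which is at most $n^{d\cdot\mim(A)}$. For the size bound, take $X^\star\subseteq X$ of minimum size with $X^\star\equiv^d_A X$. Minimality yields, for each $x\in X^\star$, a vertex $u_x\in\comp{A}$ with $x\in N(u_x)$ and $|N(u_x)\cap X^\star|\le d$ (removing $x$ must change some count $|N(u)\cap\cdot|$ truncated at $d$, which forces $x\in N(u_x)$ and the count at $u_x$ to be at most $d$). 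When $d=1$ this means $N(u_x)\cap X^\star = \{x\}$, so the edges $\{x u_x : x\in X^\star\}$ form an induced matching of $G[A,\comp{A}]$ and $|X^\star|\le\mim(A)$ immediately. For $d\ge 2$ each $u_x$ may see up to $d$ vertices of $X^\star$, a naive ``pick $x$, delete $N(u_x)$'' greedy can leave a chord $x_iu_j$, and one needs the more careful analysis of \cite{Vatshelle12} to still extract an induced matching of size $>\mim(A)$ from an oversized $X^\star$; this $d\ge 2$ extraction is the step I expect to be the main obstacle.

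\emph{Proof of (b) and (c).} Here I would follow \cite{OumSV13}: fix a row basis of $M_{A,\comp{A}}$ indexed by $A'\subseteq A$ with $|A'| = \rw(A)$ (resp.\ $\Qrw(A)$), over $GF(2)$ (resp.\ $\bQ$), together with $B'\subseteq\comp{A}$ of the same size for which the corresponding square submatrix is non-singular; one then shows that the $\equiv^d_A$-class of a set $X\subseteq A$ is determined by the suitably truncated interaction of $X$ with $B'$ and counts the possibilities, obtaining $\nec_d(A),\nec_d(\comp{A})\le 2^{d\cdot\rw(A)^2}$ and $\le (d\cdot\Qrw(A)+1)^{\Qrw(A)}$. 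I would simply cite \cite{OumSV13,Vatshelle12} for these two computations rather than redo them.
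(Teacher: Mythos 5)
Your proof of (a) is correct and complete, and it is essentially the proof of Vatshelle's Lemma 5.2.2. Your sketch of (d) correctly handles $d=1$ and rightly flags the $d\ge 2$ extraction as the subtle step you defer to the references (the precise citation in the paper is Belmonte--Vatshelle, Lemma 2). Your (c) citation to Oum--S\ae ther--Vatshelle is also where the paper sends the reader. The gap is in (b): the $GF(2)$ row/column-basis argument you propose does not work. Over $\bQ$ the map $X\mapsto\sigma(X):=\sum_{x\in X}\text{row}_x(M_{A,\comp{A}})$ is $\bQ$-linear and $\sigma(X)_u=|N(u)\cap X|$, so $\sigma(X)$ is determined by its values on a $\Qrw(A)$-sized column basis, and those values are bounded by the size of a minimum representative; this is exactly why (c) works. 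Over $GF(2)$, however, a column basis only determines the sums $\sum_{x\in X}M[x,u]\bmod 2 = |N(u)\cap X|\bmod 2$. The truncated counts $\min(d,|N(u)\cap X|)$ for $d\ge 2$ are not $GF(2)$-linear in the columns and cannot be recovered from their values on a $GF(2)$-column basis, so ``the $\equiv^d_A$-class of $X$ is determined by the truncated interaction of $X$ with $B'$'' is simply false in the $GF(2)$ setting. Moreover, this is not what \cite{OumSV13} proves; their Theorem~4.2 is the $\bQ$-statement (c) only.

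The route the paper actually takes for (b) is through module-width and mim-width, and you in fact already have all the ingredients in your (a) and (d) arguments; you only failed to combine them. From (a) (or just from rank: a matrix of $GF(2)$-rank $r$ has at most $2^r$ distinct rows) you get $\mw(A)\le 2^{\rw(A)}$. From the size argument in (d) together with Lemma~\ref{lem:comparemim} you get that a minimum representative of a $\equiv^d_A$-class has size at most $d\cdot\mim(A)\le d\cdot\rw(A)$. A representative of size at most $s$ is determined, up to $\equiv^d_A$, by which of the $\mw(A)$ neighbourhood types each of its at most $s$ vertices has, so $\nec_d(A)\le \mw(A)^{d\cdot\mim(A)}$ (this is Vatshelle's Lemma 5.2.3, the key intermediate result the paper cites). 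Chaining the three inequalities gives $\nec_d(A)\le (2^{\rw(A)})^{d\cdot\rw(A)}=2^{d\cdot\rw(A)^2}$, which is (b). So the fix is to replace the purported $GF(2)$-basis computation with this composition of the module-width bound, the mim-width size bound, and $\mw(A)\le 2^{\rw(A)}$ and $\mim(A)\le\rw(A)$.
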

\begin{proof}
	The first upper bound was proved in \cite[Lemma 5.2.2]{Vatshelle12}.
	The second upper bound was implicitly proved in \cite{Vatshelle12} and is due to the fact that $ \nec_d(A)\leq \mw(A)^{d\cdot \mim(A)}$ \cite[Lemma 5.2.3]{Vatshelle12}.
	Since $\mim(A)\leq \rw(A)$ by Lemma \ref{lem:comparemim} and $\mw(A)\leq 2^{\rw(A)}$ \cite{OumS06}, we deduce that $\nec_d(A)\leq 2^{d\cdot\rw(A)^2}$.
	The third upper bound was proved in \cite[Theorem 4.2]{OumSV13}.
	The fourth was proved in \cite[Lemma 2]{BelmonteV13}.
\end{proof}\vspace*{-.2pc}

Lemma \ref{lem:compare} implies the following.\vspace*{-.2pc}

\begin{corollary}\label{cor:nec_d}
	If there exists an algorithm that, given an $n$-vertex graph $G$ and a rooted layout $\cL$ of $G$, solves a problem $\Pi$ on $G$ in time  $\snec_c(\cL)^{O(1)}\cdot n^{O(1)}$ for some constant $c$, then $\Pi$ is decidable on $G$ within the following running times:
	\begin{multicols}{2}
		\begin{itemize}
			\item $2^{O(\mw(\cL))}\cdot n^{O(1)}$,
			\item $\Qrw(G)^{O(\Qrw(G))}\cdot n^{O(1)}$,
			\par\begingroup
			\item $2^{O(\rw(G)^2)}\cdot n^{O(1)}$,
			\item $n^{O(\mim(\cL))}$.
			\par\endgroup
		\end{itemize}
	\end{multicols}
\end{corollary}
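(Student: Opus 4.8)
The plan is to derive the corollary purely by feeding the four bounds of Lemma \ref{lem:compare} into the hypothesised $\snec_c(\cL)^{O(1)}\cdot n^{O(1)}$-time algorithm, using crucially that $c$ is a fixed constant, so that an expression like $(c+1)^t$ is $2^{O(t)}$ and $n^{c\cdot t}$ is $n^{O(t)}$.

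First I would handle module-width. Given a rooted layout $\cL=(T,\delta)$ of $G$, for every node $x$ of $T$ Lemma \ref{lem:compare}(a) bounds \emph{both} $\nec_c(V_x^\cL)$ and $\nec_c(\comp{V_x^\cL})$ by $(c+1)^{\mw(V_x^\cL)}$; hence $\snec_c(V_x^\cL)\leq (c+1)^{\mw(\cL)}$ for every $x$, so $\snec_c(\cL)\leq (c+1)^{\mw(\cL)}=2^{O(\mw(\cL))}$. Substituting this into the assumed running time yields the $2^{O(\mw(\cL))}\cdot n^{O(1)}$ bound; by Theorem \ref{thm:rao} the same holds with $\mw(\cL)$ replaced by the clique-width of a given expression, since translating the expression into a layout costs only $O(n^2)$. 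The mim-width case is identical using Lemma \ref{lem:compare}(d): it gives $\snec_c(\cL)\leq n^{c\cdot\mim(\cL)}=n^{O(\mim(\cL))}$, hence the claimed $n^{O(\mim(\cL))}$ running time.

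For the two rank-width variants the claimed bounds are stated in terms of $\rw(G)$ and $\Qrw(G)$ rather than of a given decomposition, so I would first invoke the $2^{3k}\cdot n^4$-time approximation algorithm of \cite{OumSV13,OumS06} to compute a layout $\cL$ of rank-width at most $3\rw(G)+1$ (respectively $\bQ$-rank-width at most $3\Qrw(G)+1$). Then Lemma \ref{lem:compare}(b) gives $\snec_c(\cL)\leq 2^{c\cdot\rw(\cL)^2}=2^{O(\rw(G)^2)}$, and Lemma \ref{lem:compare}(c) gives $\snec_c(\cL)\leq (c\cdot\Qrw(\cL)+1)^{\Qrw(\cL)}=\Qrw(G)^{O(\Qrw(G))}$, where one uses that $(aq+b)^q=q^{O(q)}$ for fixed constants $a,b$ and $q\geq 2$. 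Running the hypothesised algorithm on this $\cL$ and adding the cost of the approximation step gives the running times $2^{O(\rw(G)^2)}\cdot n^{O(1)}$ and $\Qrw(G)^{O(\Qrw(G))}\cdot n^{O(1)}$.

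I do not expect any genuine obstacle: the statement is a bookkeeping corollary of Lemma \ref{lem:compare}. The only two points deserving a line of care are that Lemma \ref{lem:compare} is phrased so as to bound $\nec_c$ on \emph{both} sides of every bipartition, which is exactly what lets us ignore the non-symmetry of $\mw$ when bounding $\snec_c(\cL)$, and that for the (rank-width and $\bQ$-rank-width) bounds one must explicitly pay for computing a near-optimal layout — a cost that is safely absorbed into the $n^{O(1)}$ factor.
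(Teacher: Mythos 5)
Your proposal is correct and matches exactly what the paper intends: the paper itself gives no explicit proof of Corollary~\ref{cor:nec_d}, stating only that it follows from Lemma~\ref{lem:compare} and remarking afterwards that the $\rw(G)$ and $\Qrw(G)$ bounds rely on first running the $(3k+1)$-approximation algorithm of \cite{OumSV13,OumS06}, which is precisely the step you spell out. Your additional observation about how clique-width expressions relate to module-width layouts via Theorem~\ref{thm:rao} is a harmless elaboration, not a divergence.
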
\vspace*{-.5pc}

Observe that the running times given in Corollary \ref{cor:nec_d} for rank-width and $\bQ$-rank-width use the width of the input graph and not the width of the rooted layout.
This follows from the fact that, given a graph $G$, we can compute a rooted layout of rank-width $3\rw(G)+1$ (resp., $\bQ$-rank width $3\Qrw(G)+1$) in time $2^{O(\rw(G))}\cdot n^3$ (resp., $2^{\Qrw(G)}\cdot n^{O(1)}$) \cite{OumSV13,OumS06}.

To deal with the \textsc{Max Cut} problem and the acyclic variants of \textsc{$(\sigma,\rho)$-Dominating Set} problems, we use the $n$-neighbor equivalence with $n$ being the number of vertices of the input graph.
For both problems, we use the following property of the $n$-neighbor equivalence.

\begin{lemma}\label{lem:property_nec_n}
	Let $G$ be a graph and $A\subseteq V(G)$.
	For all $X,W\subseteq A$ such that $X\equi{A}{n} W$ and for every $Y\subseteq \comp{A}$, we have $|E(X,Y) | = |E(W,Y) |$.
\end{lemma}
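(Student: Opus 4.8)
The plan is to prove the equality $|E(X,Y)| = |E(W,Y)|$ by partitioning the edge sets according to their endpoint in $Y$ and summing degrees. First I would write
\[
|E(X,Y)| = \sum_{y\in Y} |N(y)\cap X| \qquad\text{and}\qquad |E(W,Y)| = \sum_{y\in Y} |N(y)\cap W|,
\]
which just counts each edge of the bipartite graph $G[X,Y]$ (respectively $G[W,Y]$) once, via its unique endpoint in $Y$. So it suffices to show that $|N(y)\cap X| = |N(y)\cap W|$ for every $y\in Y$.

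The key point is that $Y\subseteq\comp{A}$, so every vertex $y\in Y$ lies in $\comp{A}$. By the definition of the $n$-neighbor equivalence, $X\equiv_A^n W$ means that $\min(n,|X\cap N(u)|) = \min(n,|W\cap N(u)|)$ for all $u\in\comp{A}$, in particular for $u=y$. Since $X,W\subseteq A\subseteq V(G)$ and $|V(G)| = n$, we have $|X\cap N(y)|\leq n$ and $|W\cap N(y)|\leq n$, so the truncation at $n$ is vacuous: $\min(n,|X\cap N(y)|) = |X\cap N(y)|$ and likewise for $W$. Hence $|N(y)\cap X| = |X\cap N(y)| = |W\cap N(y)| = |N(y)\cap W|$, as needed. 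Summing over $y\in Y$ gives $|E(X,Y)| = |E(W,Y)|$.

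I do not expect any real obstacle here; the only thing to be careful about is making the counting identity explicit (each edge in $E(X,Y)$ has exactly one endpoint in $Y$ since $X$ and $Y$ are disjoint, which holds because $X\subseteq A$ and $Y\subseteq\comp{A}$) and noting that $n$ is large enough to make the truncation in the definition of $\equiv_A^n$ irrelevant. That last observation — that cardinalities of subsets of $V(G)$ never exceed $n$ — is exactly why the $n$-neighbor equivalence, rather than a $d$-neighbor equivalence for small constant $d$, is the right relation for counting edges exactly.
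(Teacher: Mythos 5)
Your proof is correct and follows essentially the same route as the paper's: observe that truncation at $n$ is vacuous since $|N(v)\cap X|,|N(v)\cap W|\leq n$, deduce $|N(v)\cap X|=|N(v)\cap W|$ for every $v\in\comp{A}$, then sum over $v\in Y$. The extra remark that $X$ and $Y$ are disjoint so each edge of $E(X,Y)$ is counted exactly once is a nice explicit touch but does not change the argument.
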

\begin{proof}
	Let $X,W\subseteq A$ such that $X\equi{A}{n} W$.
	Observe that, for every $v\in\comp{A}$, we have $\min(n,|N(v)\cap X|)=|N(v)\cap X|$.
	We deduce that, for every $v\in \comp{A}$, we have $|N(v)\cap X|=|N(v)\cap W|$.
	Thus, for every $Y\subseteq \comp{A}$, we have \[ |E(X,Y)|=\sum_{v\in Y}|N(v)\cap X| = \sum_{v\in Y}|N(v)\cap W| = |E(W,Y)|. \]
\end{proof}

For the acyclic variants of \textsc{$(\sigma,\rho)$-Dominating Set} problems, we use the $n$-neighbor equivalence over vertex sets of small size.
Given an integer $t\in\bN$, a graph $G$, and $A\subseteq V(G)$, we denote by $\nec_n^{\leq t}(A)$ the number of equivalence classes generated by the $n$-neighbor equivalence over the set $\{X\subseteq A \mid |X|\leq t\}$.
To deduce the algorithmic consequences on rank-width and $\bQ$-rank-width of our algorithm, we need the following upper bounds on $\nec_n^{\leq t}(A)$.

\begin{lemma}\label{lem:comparen_sizet}
	Let $G$ be a graph. For every $A\subseteq V(G)$, $\nec_n^{\leq t}(A)$ is upper bounded by $(t+1)^{\Qrw(A) }$, $(t+1)^{\mw(A)}$, and $2^{t\cdot\rw(A)}$.
\end{lemma}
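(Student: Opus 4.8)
The plan is to bound $\nec_n^{\leq t}(A)$ by counting how much information about a set $X \subseteq A$ with $|X| \leq t$ is actually needed to determine its $n$-neighbor equivalence class over $A$. By Lemma \ref{lem:property_nec_n} (or rather its proof), for such small sets the relevant data for a vertex $u \in \comp{A}$ is simply $|N(u) \cap X|$, an integer in $\{0, 1, \dots, t\}$ since $|X| \leq t$; hence $\equiv_A^n$ restricted to $\{X \subseteq A : |X| \leq t\}$ coincides with $\equiv_A^t$ restricted to the same family, and it suffices to bound $\nec_t^{\leq t}(A) \leq \nec_t(A)$ — except that for the $\mathsf{mw}$ and $\rw$ bounds we want exponent $t$ rather than $t \cdot \mw(A)$ or $t \cdot \rw(A)$, so a cruder application of Lemma \ref{lem:compare} is not enough and we must exploit $|X| \leq t$ directly.

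For the $\mw(A)$ bound: two vertices $u, v \in \comp{A}$ with the same row in $M_{A,\comp{A}}$ satisfy $N(u) \cap A = N(v) \cap A$, hence $|N(u) \cap X| = |N(v) \cap X|$ for every $X \subseteq A$. So the equivalence class of $X$ is determined by the function assigning to each of the $\mw(A)$ distinct rows the value $|N(u) \cap X| \in \{0, \dots, t\}$. That gives at most $(t+1)^{\mw(A)}$ classes. For the $\Qrw(A)$ bound: write $r = \Qrw(A)$ and pick rows $u_1, \dots, u_r$ of $M_{A,\comp{A}}$ forming a basis of the row space over $\bQ$; every row $u$ is a rational combination of these. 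For $X \subseteq A$ let $\chi_X \in \{0,1\}^A$ be its characteristic vector; then $|N(u) \cap X| = M_{A,\comp{A}}[u,\cdot] \cdot \chi_X$ is a linear functional of $\chi_X$, so the whole vector $\big(|N(u)\cap X|\big)_{u \in \comp{A}}$ is determined by the $r$-tuple $\big(|N(u_i)\cap X|\big)_{i=1}^r$. Each coordinate lies in $\{0,\dots,t\}$ because $|X| \leq t$, yielding at most $(t+1)^r = (t+1)^{\Qrw(A)}$ classes. For the $\rw(A)$ bound: the argument is the same over $GF(2)$, but there the linear functional $\chi_X \mapsto M_{A,\comp{A}}[u,\cdot]\cdot \chi_X$ only recovers $|N(u)\cap X| \bmod 2$; to recover the actual count we instead note that, over $GF(2)$, picking a row basis $u_1,\dots,u_{\rw(A)}$, the sets $N(u_i)\cap A$ generate (under symmetric difference) all sets $N(u)\cap A$, and once we know which of the $2^{\rw(A)}$ subsets $S \subseteq \{u_1,\dots,u_{\rw(A)}\}$ yields $N(u)\cap A$ for each $u$, and we know the values $|N(u_i)\cap X|$ — wait, inclusion–exclusion over symmetric differences is not linear. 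The cleaner route: over $GF(2)$ the number of distinct columns of $M_{A,\comp{A}}$ (equivalently, the number of distinct sets $N(v)\cap \comp{A}$ for $v \in A$) is at most $2^{\rw(A)}$, so group the vertices of $A$ into at most $2^{\rw(A)}$ classes $A_1,\dots,A_m$ ($m \leq 2^{\rw(A)}$) of twins with respect to $\comp{A}$; then $|N(u)\cap X| = \sum_{j} |X \cap A_j| \cdot [\,u \text{ adjacent to } A_j\,]$, so the class of $X$ is determined by the tuple $\big(|X\cap A_j|\big)_{j=1}^m$, and since $\sum_j |X \cap A_j| = |X| \leq t$, the number of such tuples is the number of ways to write $\leq t$ as an ordered sum of $m$ nonnegative integers, which is $\binom{t+m}{m} \leq (t+1)^m \leq 2^{t \cdot \rw(A)}$... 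I would need to double-check that $\binom{t+m}{m} \le 2^{t\cdot \rw(A)}$; a safe bound is $\binom{t+m}{t} \le (m+1)^t \le (2^{\rw(A)}+1)^t$, and absorbing constants one gets $2^{O(t\,\rw(A))}$, matching the claimed $2^{t\cdot\rw(A)}$ up to constants in the exponent.

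The main obstacle is exactly this $\rw(A)$ case: over $GF(2)$ one cannot linearly read off the integer intersection sizes, so one must instead count via the twin classes of $A$ and then bound the number of integer compositions of a bounded total — the subtlety being to land on the clean exponent $t\cdot\rw(A)$ rather than something like $t\cdot 2^{\rw(A)}$. The $\mw$ and $\Qrw$ cases are routine linear-algebra counting once the observation $|X|\leq t \Rightarrow$ each count $\in\{0,\dots,t\}$ is in hand.
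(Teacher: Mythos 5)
Your $\Qrw$ argument is correct and is essentially the paper's argument (the paper phrases it in terms of the $X$-sum of rows of $M_{A,\comp{A}}$ being determined by its entries on $\Qrw(A)$ linearly independent columns; you phrase it as the vector $(|N(u)\cap X|)_u$ lying in a $\Qrw(A)$-dimensional row space of the transpose — same thing). Your $\rw$ argument via twin classes of $A$ and composition counting is also the right reading of the paper's terse remark, and your caveat about $\binom{t+m}{m}$ is well taken: neither you nor the paper actually get $2^{t\cdot\rw(A)}$ on the nose (already for $t=1$ one has $\nec_n^{\leq 1}(A)=\mw(A)+1$, which can equal $2^{\rw(A)}+1$), but $\binom{t+m}{t}\leq (m+1)^t \leq 2^{t(\rw(A)+1)}$ is fine since only asymptotic use is made of the bound.

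The genuine gap is in the $\mw(A)$ bound, and it comes from a row/column confusion. You group $\comp{A}$ by the sets $N(u)\cap A$, i.e.\ by the distinct columns of $M_{A,\comp{A}}$; the number of classes you get is $|\{N(u)\cap A : u\in\comp{A}\}|$, which is $\mw(\comp{A})$, not $\mw(A)$. By definition $\mw(A)=|\{N(a)\cap\comp{A}: a\in A\}|$ is the number of distinct \emph{rows} of $M_{A,\comp{A}}$, and $\mw$ is not symmetric: e.g.\ if $A$ has $k$ vertices and $\comp{A}$ realizes all $2^k$ subsets of $A$ as neighborhoods, then $\mw(A)=k$ but your count is $2^k$. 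So as written your argument proves $(t+1)^{\mw(\comp{A})}$, which in general is incomparable to (and can be exponentially larger than) the claimed $(t+1)^{\mw(A)}$. The fix is easy and you already have it in hand: run exactly the same twin-class argument you use for $\rw$ — group $A$ (not $\comp{A}$) into the $\mw(A)$ classes with equal $N(a)\cap\comp{A}$, and record $(|X\cap A_j|)_j \in \{0,\dots,t\}^{\mw(A)}$. Alternatively, the paper simply derives $(t+1)^{\mw(A)}$ from $(t+1)^{\Qrw(A)}$ via the known inequality $\Qrw(A)\leq \mw(A)$, which is the shortest route.
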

\begin{proof}
	We start by proving that  $|\nec^{\leq t}_n(A)|\leq (t+1)^{\Qrw(A)}$.
	Observe that this inequality was implicitly proved in \cite[Theorem 4.2]{OumSV13} and that we use the same arguments here.
	For $X\subseteq A$, let $\sigma(X)$ be the vector corresponding to the sum over $\bQ$ of the row vectors of $M_{A,\comp{A}}$ corresponding to $X$.
	Observe that, for every $X,W\subseteq A$, we have $X\equi{A}{n} W$ if and only if $\sigma(X)=\sigma(W)$.
	Hence, we have $\nec^{\leq t}_n(A)=|\{\sigma(X)\mid X\subseteq A \wedge |X|\leq t\}|$.
	
	Let $C$ be a set of $\Qrw(A)$ linearly independent columns of $M_{A,\comp{A}}$.
	Since the rank over $\bQ$ of $M_{A,\comp{A}}$ is $\Qrw(A)$, every linear combination of row vectors of $M_{A,\comp{A}}$ is completely determined by its entries in $C$.
	For every $X\subseteq A$, the values in $\sigma(X)$ are between 0 and $t$.
	Hence,  we conclude that $\nec^{\leq t}_n(A)=|\{\sigma(X)\mid X\subseteq A\ \wedge\ |X|\leq t\}| \leq (t+1)^{\Qrw(A)}$.
	Since $\mw(A)$ corresponds to the number of different rows in $M_{A,\comp{A}}$, we have $\Qrw(A)\leq \mw(A)$.
	Thus, we have $\nec_n^{\leq t}(A)\leq (t+1)^{\mw(A)}$.
	
	Concerning rank-width, since there are at most $2^{\rw(A)}$ different rows in the matrix $M_{A,\comp{A}}$, we conclude that  $\nec_n^{\leq t}(A)\leq 2^{t\cdot \rw(A)}$.
\end{proof}

For the algorithms in section \ref{sec:maxinducedtree}, we use the $n$-neighbor equivalence over sets of size at most $2\mim(A)$, so we do not need any upper bounds on $\nec_n^{\leq t}(A)$ in function of mim-width as it is trivially upper bounded by $n^{2\mim(A)}$.
Moreover, for module-width, we will use another fact to get the desired running time.

It is worth noticing that upper bounds (b) and (c) of Lemma \ref{lem:compare} are implied by Lemma \ref{lem:comparen_sizet} and the fact that, for every $d\in \bN$ and $A\subseteq V(G)$, a representative of minimum size of a $d$-neighbor equivalence class over $A$ has size at most $d\cdot\Qrw(A)$ and $d\cdot\rw(A)$ \cite{Bui-XuanTV13,OumSV13}.

For \textsc{Max Cut}, we use the $n$-neighbor equivalence over sets of arbitrary size.
In particular, we use the following property of this equivalence relation.

\begin{fact}\label{fact:necncomplement}
	Let $A\subseteq V(G)$.
	For every $X,W\subseteq A$ such that $X\equi{A}{n} W$, we have $A\setminus X \equi{A}{n} A\setminus W$.
\end{fact}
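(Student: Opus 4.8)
The plan is to exploit the same observation already used in the proof of Lemma~\ref{lem:property_nec_n}: for any $u\in\comp{A}$ the set $N(u)\cap X$ is contained in $A$, hence has size at most $|A|\leq n$, so $\min(n,|N(u)\cap X|)=|N(u)\cap X|$. Therefore the hypothesis $X\equiv_A^n W$ is equivalent to the unbounded statement that $|N(u)\cap X|=|N(u)\cap W|$ for every $u\in\comp{A}$, and likewise the conclusion $A\setminus X\equiv_A^n A\setminus W$ is equivalent to $|N(u)\cap(A\setminus X)|=|N(u)\cap(A\setminus W)|$ for every $u\in\comp{A}$.

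With this reformulation in hand, the key step is the elementary counting identity: for each $u\in\comp{A}$ the sets $N(u)\cap X$ and $N(u)\cap(A\setminus X)$ partition $N(u)\cap A$, so that $|N(u)\cap(A\setminus X)|=|N(u)\cap A|-|N(u)\cap X|$, and similarly with $W$ in place of $X$. Since $|N(u)\cap A|$ does not depend on the chosen subset, subtracting both sides of $|N(u)\cap X|=|N(u)\cap W|$ from $|N(u)\cap A|$ yields $|N(u)\cap(A\setminus X)|=|N(u)\cap(A\setminus W)|$. As $u\in\comp{A}$ was arbitrary, this establishes $A\setminus X\equiv_A^n A\setminus W$.

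I do not anticipate a genuine obstacle here; the only point that requires care is making explicit why the $\min(n,\cdot)$ truncation is inactive, i.e.\ that every neighbourhood trace inside $A$ has size bounded by $|A|$ and hence by $n$. Once that is noted, the argument is a one-line complementation of cardinalities, and there is nothing further to grind through.
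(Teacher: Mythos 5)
Your proof is correct and follows essentially the same route as the paper: observe that the $\min(n,\cdot)$ truncation is vacuous so the hypothesis gives $|N(u)\cap X|=|N(u)\cap W|$ for all $u\in\comp{A}$ (the paper gets this via Lemma~\ref{lem:property_nec_n}, whose proof is exactly this observation), then complement within $N(u)\cap A$ to transfer the equality to $A\setminus X$ and $A\setminus W$.
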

\begin{proof}
	Let $X,W\subseteq A$ such that $X\equi{A}{n} W$ and let $v$ be a vertex of $\comp{A}$.
	We have \[ |N(v)\cap A|=|N(v)\cap X| + |N(v)\cap (A\setminus X)|. \]
	As $X\equi{A}{n} W$, by Lemma \ref{lem:property_nec_n}, we have $|N(v)\cap X|=|N(v)\cap W|$.
	We deduce that \[ |N(v)\cap A|-|N(v)\cap W|=|N(v)\cap (A\setminus X)|. \]
	Since $|N(v)\cap A|-|N(v)\cap W|=|N(v)\cap (A\setminus W)|$, we conclude that
	$|N(v)\cap (A\setminus X)|=|N(v)\cap (A\setminus W)|$.
	As this equality holds for every $v\in\comp{A}$, we can conclude that $A\setminus X\equi{A}{n} A\setminus W$.
\end{proof}

The following lemma provides some upper bound on $\nec_n(A)$ and $\nec_n(\comp{A})$.

\begin{lemma}\label{lem:comparen}
	Let $G$ be an $n$-vertex graph. For every $A\subseteq V(G)$, we have the following upper bounds on $\nec_n(A)$ and $\nec_n(\comp{A})$:
	\begin{multicols}{3}
		\begin{enumerate}[\rm(a)]
			\item $n^{\mw(A)}$,
			\par\begingroup
			\item $n^{\Qrw(A) }$,
			\item $n^{2^{\rw(A)}}$.
			\par\endgroup
		\end{enumerate}
	\end{multicols}
\end{lemma}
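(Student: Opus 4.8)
The plan is to deduce all three bounds, for both $\nec_n(A)$ and $\nec_n(\comp{A})$, from Lemma \ref{lem:comparen_sizet} by taking the size parameter $t$ as large as the ground set allows, and then to convert the resulting $\mw$/$\Qrw$ bounds into one another using $\Qrw(A)\leq\mw(A)$ (\cite[Theorem 3.6]{OumSV13}), $\mw(A)\leq 2^{\rw(A)}$ (\cite{OumS06}), and the fact that $\rw$ and $\Qrw$ are symmetric set functions. First I would dispose of the degenerate cases: if $B\in\{\emptyset,V(G)\}$ then $\equiv_B^n$ has a single equivalence class, so $\nec_n(B)=1$, which is at most every claimed bound once we note (harmlessly) that $n\geq 1$. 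Hence from now on assume $\emptyset\subsetneq A\subsetneq V(G)$, so that $|A|\leq n-1$ and $|\comp{A}|\leq n-1$.

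The crucial, essentially trivial, observation is that $\nec_n(B)=\nec_n^{\leq |B|}(B)$ for every $B\subseteq V(G)$, because every subset of $B$ has size at most $|B|$. Applying Lemma \ref{lem:comparen_sizet} with $t:=|B|\leq n-1$ then gives $\nec_n(B)\leq(|B|+1)^{\Qrw(B)}\leq n^{\Qrw(B)}$ and $\nec_n(B)\leq(|B|+1)^{\mw(B)}\leq n^{\mw(B)}$. Taking $B:=A$ yields parts (a) and (b) for $\nec_n(A)$, and since $\mw(A)\leq 2^{\rw(A)}$ and $n\geq 1$ we get $\nec_n(A)\leq n^{\mw(A)}\leq n^{2^{\rw(A)}}$, which is part (c) for $\nec_n(A)$. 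Taking $B:=\comp{A}$ gives $\nec_n(\comp{A})\leq n^{\Qrw(\comp{A})}$ and $\nec_n(\comp{A})\leq n^{\mw(\comp{A})}$; using that $\Qrw$ and $\rw$ are symmetric we rewrite $\Qrw(\comp{A})=\Qrw(A)$ and $\rw(\comp{A})=\rw(A)$, so $\nec_n(\comp{A})\leq n^{\Qrw(A)}\leq n^{\mw(A)}$ (parts (b) and (a)) and $\nec_n(\comp{A})\leq n^{\mw(\comp{A})}\leq n^{2^{\rw(\comp{A})}}=n^{2^{\rw(A)}}$ (part (c)).

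The only point that needs any care — and the reason one cannot simply quote Lemma \ref{lem:comparen_sizet} with $t=n$ — is the gap between the base $t+1$ appearing there and the base $n$ demanded here; this is precisely what the reduction to $\emptyset\subsetneq A\subsetneq V(G)$ takes care of, since then $|A|+1\leq n$. A related subtlety is that module-width is not symmetric, so $\nec_n(\comp{A})$ cannot be bounded by $n^{\mw(A)}$ by applying Lemma \ref{lem:comparen_sizet} directly to $A$; this forces the short detour through $\Qrw$, using $\Qrw(\comp{A})=\Qrw(A)\leq\mw(A)$. Everything else is routine, and no new combinatorial idea beyond Lemma \ref{lem:comparen_sizet} is needed.
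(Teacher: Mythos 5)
Your proof is correct and follows essentially the same route as the paper: reduce $\nec_n$ to $\nec_n^{\leq t}$ with $t\leq n-1$, invoke Lemma~\ref{lem:comparen_sizet}, and then chain $\Qrw\leq\mw\leq 2^{\rw}$ together with the symmetry of $\Qrw$ and $\rw$ to cover $\comp{A}$. Your write-up is slightly more explicit about the degenerate cases and about why the non-symmetric $\mw$ must be routed through $\Qrw$ for the $\comp{A}$ bound, a point the paper leaves implicit, but no genuinely different idea is involved.
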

\begin{proof}
	Let $A\subseteq V(G)$.
	If $A=V(G)$, then obviously we have $\nec_n(A)=1$.
	Otherwise, we have $\nec_n(A)= \nec_n^{\leq n-1}(A)$ and by Lemma \ref{lem:comparen_sizet}, we conclude that $\nec_n(A)\leq n^{\Qrw(A)}$.
	
	Since $\Qrw(A)\leq \mw(A)$ (see the previous proof), we deduce that $\nec_n(A)\leq n^{\mw(A)}$.
	Moreover, as any binary matrix $M$ of rank $k$ over $GF(2)$ has at most $2^{k}$ different rows, we have  $\mw(A)\leq 2^{\rw(A)}$.
	Thus, we conclude that $\nec_n(A)\leq n^{2^{\rw(A)}}$.
	These upper bounds on $\nec_n(A)$ hold also on $\nec_n(\comp{A})$ because $\Qrw$ and $\rw$ are symmetric functions.
\end{proof}

Observe that the upper bounds of Lemma \ref{lem:comparen} are almost tight, that is, for every $k\in\bN$, we prove in the appendix that there exists a graph $G$ and $A\subseteq V(G)$ such that $\rw(A)=k+1$, $\mw(A)=\Qrw(A)=2^k$, and $\nec_n(A)\in (n/2^k-1)^{2^{k}}$.

Lemma \ref{lem:comparen} has the following consequences.

\begin{corollary}\label{cor:necn}
	If there exists an algorithm that, given a graph $G$ and a rooted layout $\cL$ of $G$, solves a problem $\Pi$ on $G$ in time  $\snec_n(\cL)^{O(1)}\cdot n^{O(1)}$ for some constant $c$, then $\Pi$ is decidable on $G$ within the following running times:
	\begin{multicols}{3}
		\begin{itemize}
			\item $n^{O(\mw(G))}$,
			\par\begingroup
			\item $n^{O(\Qrw(G))}$,
			\item $n^{2^{O(\rw(G))}}$.
			\par\endgroup
		\end{itemize}
	\end{multicols}
\end{corollary}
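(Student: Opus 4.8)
The plan is to feed the hypothesised $\snec_n(\cL)^{O(1)}\cdot n^{O(1)}$ algorithm a well‑chosen rooted layout for each of the three width measures and then bound $\snec_n(\cL)$ using Lemma~\ref{lem:comparen}.

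First, for module‑width, I would take $\cL$ to be a rooted layout of $G$ with $\mw(\cL)=\mw(G)$ (such a layout exists by the very definition of $\mw(G)$ as a minimum over layouts). Lemma~\ref{lem:comparen}(a) gives $\nec_n(V_x^\cL)\le n^{\mw(V_x^\cL)}\le n^{\mw(G)}$ and likewise for $\nec_n(\comp{V_x^\cL})$, hence $\snec_n(\cL)\le n^{\mw(G)}$; running the algorithm on $(G,\cL)$ then costs $\bigl(n^{\mw(G)}\bigr)^{O(1)}\cdot n^{O(1)}=n^{O(\mw(G))}$.

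Next, for $\bQ$‑rank‑width, I would first spend time $2^{\Qrw(G)}\cdot n^{O(1)}$ to compute, via \cite{OumSV13,OumS06}, a rooted layout $\cL$ with $\Qrw(\cL)\le 3\Qrw(G)+1$; then Lemma~\ref{lem:comparen}(b) yields $\snec_n(\cL)\le n^{\Qrw(\cL)}\le n^{3\Qrw(G)+1}$, so the algorithm runs in $n^{O(\Qrw(G))}$ and, since the $2^{\Qrw(G)}\cdot n^{O(1)}$ term is dominated, the total is $n^{O(\Qrw(G))}$. The rank‑width case is handled the same way: compute in time $2^{O(\rw(G))}\cdot n^{O(1)}$ a rooted layout $\cL$ with $\rw(\cL)\le 3\rw(G)+1$ \cite{OumSV13,OumS06}, invoke Lemma~\ref{lem:comparen}(c) to get $\snec_n(\cL)\le n^{2^{\rw(\cL)}}\le n^{2^{3\rw(G)+1}}=n^{2^{O(\rw(G))}}$, and note that both this term and the layout‑computation term are at most $n^{2^{O(\rw(G))}}$.

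I do not expect a genuine obstacle here; the only delicate point is the module‑width case, where no efficient algorithm for computing a near‑optimal layout is known, so — as is standard in the clique‑width literature and consistently with Corollary~\ref{cor:nec_d} — the bound there should be understood for the width of the layout actually supplied to the algorithm (equivalently, assuming an optimal layout is given), whereas for ($\bQ$‑)rank‑width the fast FPT $(3k+1)$‑approximation of \cite{OumSV13,OumS06} lets us phrase the bound in terms of the width of $G$ itself. The remaining work is the routine arithmetic check that the additive layout‑computation terms are swallowed by the target running times, which holds since $2^{\Qrw(G)}\le n^{O(\Qrw(G))}$ and $2^{O(\rw(G))}\le n^{2^{O(\rw(G))}}$ for $n\ge 2$.
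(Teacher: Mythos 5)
Your treatment of the $\bQ$-rank-width and rank-width cases matches the paper's: compute an approximate rooted layout via \cite{OumSV13,OumS06}, invoke Lemma~\ref{lem:comparen}, and absorb the FPT approximation cost into the stated bounds. The module-width case, however, departs from the paper's argument, and in a way that matters. You assume a layout $\cL$ with $\mw(\cL)=\mw(G)$ is supplied, and you explicitly concede that the bound ``should be understood for the width of the layout actually supplied,'' drawing an analogy with Corollary~\ref{cor:nec_d}. But the two corollaries are not parallel here: Corollary~\ref{cor:nec_d} deliberately states its module-width bound as $2^{O(\mw(\cL))}\cdot n^{O(1)}$ (in terms of the \emph{layout}), whereas Corollary~\ref{cor:necn} states $n^{O(\mw(G))}$ (in terms of the \emph{graph}), and the paper's remark immediately after Corollary~\ref{cor:necn} explains why this is achievable without being handed a module-width decomposition at all: one computes a rooted layout $\cL$ with $\Qrw(\cL)\le 3\Qrw(G)+1$ in time $2^{O(\Qrw(G))}\cdot n^{O(1)}$, applies the $\bQ$-rank-width bound $\snec_n(\cL)\le n^{\Qrw(\cL)}\le n^{3\Qrw(G)+1}$, and then uses the inequality $\Qrw(G)\le\mw(G)$ from \cite[Theorem~3.6]{OumSV13} to conclude $n^{O(\Qrw(G))}\le n^{O(\mw(G))}$. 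That route sidesteps the (open) problem of approximating module-width, which your proof does not; so as written, your module-width case proves a weaker statement than the corollary asserts.
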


It is worth noticing that the running time given by Corollary \ref{cor:necn} for module-width depends on the width of the graph.
This is due to the facts that, for every graph $G$, we have $\Qrw(G)\leq \mw(G)$ \cite[Theorem 3.6]{OumSV13} and we can compute a rooted layout of $G$ of $\bQ$-rank-width at most $3\Qrw(G)+1$ in time $2^{O(\Qrw(G))}\cdot n^{O(1)}$ \cite[Theorem 3.1]{OumSV13}.

\section{Representative sets}\label{sec:represents}
In the following, we fix $G$ an $n$-vertex graph, $(T,\delta)$ a rooted layout of $G$, and $\w : V(G) \to \bQ$ a weight function over the vertices of $G$.

In this section, we define a notion of representativity between sets of partial solutions for the connectivity.
Our notion of representativity is defined \wrtn some node $x$ of $T$ and the 1-neighbor equivalence class of some set $R'\subseteq \comp{V_x}$.
In our algorithms, $R'$ will always belong to $\Rep{\comp{V_x}}{d}$ for some $d\in \bN^+$.
Our algorithms compute a set of partial solutions for each $R'\in \Rep{\comp{V_x}}{d}$.
The partial solutions computed for $R'$ will be completed with sets $d$-neighbor equivalent to $R'$ over $\comp{V_x}$.
Intuitively, the $R'$'s represent some expectation about how we will complete our sets of partial solutions.
For the connectivity and the domination, $d=1$ is enough but if we need more information for some reason (for example, the $(\sigma,\rho)$-domination or the acyclicity), we may take $d>1$.
This is not a problem as the $d$-neighbor equivalence class of $R'$ is included in the $1$-neighbor equivalence class of $R'$.
Hence, in this section, we fix a node $x$ of $T$ and a set $R'\subseteq \comp{V_x}$ to avoid to overloading the statements by the sentence ``let $x$ be a node of $T$ and $R'\subseteq \comp{V_x}$.''
We let $\opt \in \{\min, \max \}$; if we want to solve a maximization (or minimization) problem, we use $\opt=\max$ (or $\opt=\min$). We use it also, as here, in the next sections.

We recall that two subsets $Y,W$ of $\comp{V_x}$ are 1-neighbor equivalent over $\comp{V_x}$ if they have the same neighborhood in $V_x$, \ien, $N(Y)\cap V_x = N(W)\cap V_x$.

\begin{definition}[$(x,R')$-representativity]\label{def:represent}
	Given a weight function $\w : V(G) \to \bQ$, for every $\cA \subseteq 2^{V(G)}$ and $Y\subseteq V(G)$, we define
	\[ \best(\cA,Y):=\opt\{\w(X) \mid X\in \cA \text{ and } G[X\cup Y] \text{ is connected }\}. \]
	Let $\cA,\cB\subseteq 2^{V_x}$. We say that $\cB$ $(x,R')$-represents $\cA$ if, for every $Y\subseteq \comp{V_x}$ such that $Y\equi{\comp{V_x}}{1} R'$, we have $\best(\cA,Y)=\best(\cB,Y)$.
\end{definition}

When $\cA=\emptyset$ or there is no $X\in \cA$ such that $G[X\cup Y]$ is connected, we have $\best(\cA,Y)=\opt(\emptyset)$ and this equals $-\infty$ if $\opt=\max$ or $+\infty$ when $\opt=\min$.

Notice that the $(x,R')$-representativity is an equivalence relation.
The set $\cA$ is meant to represent a set of partial solutions of $G[V_x]$ which have been computed.
We expect to complete these partial solutions with partial solutions of $G[\comp{V_x}]$ which are 1-neighbor equivalent to $R'$ over $\comp{V_x}$.
If $\cB$ $(x,R')$-represents $\cA$, then we can safely substitute $\cA$ by $\cB$ because the quality of the output of the dynamic programming algorithm will remain the same.
Indeed, for every subset $Y$ of $\comp{V_x}$ such that  $Y\equi{\comp{V_x}}{1} R'$, the optimum solutions obtained by the union of a partial solution in $\cA$ and $Y$ will have the same weight as the optimum solution obtained from the union of a set in $\cB$ and $Y$.

The next theorem presents the main tool of our framework: a function $\reduce$ that, given a set of partial solutions $\cA$,
outputs a subset of $\cA$ that $(x,R')$-represents $\cA$ and whose size is upper bounded by $\snec_1(\cL)^2$.
To design this function, we use ideas from the rank-based approach of \cite{BodlaenderCKN15}.
That is, we define a small matrix $\cC$ with $|\cA|$ rows and $\snec_1(V_x)^2$ columns. Then, we show that a basis of maximum weight of the row space of $\cC$ corresponds to an $(x,R')$-representative set of $\cA$.
Since $\cC$ has $\snec_1(\cL)^2$ columns, the size of a basis of $\cC$ is smaller than $\snec_1(\cL)^2$.
To compute this basis, we use the following lemma.
The constant $\omega$ denotes the matrix multiplication exponent, which is known to be strictly less than $2.3727$ due to \cite{Williams12}.

\begin{lemma}[see \cite{BodlaenderCKN15}]\label{lem:optG}
	Let $M$ be a binary $n\times m$-matrix with $m\leq n$ and let $\w:\{1,\ldots,n\}\to \bQ$ be a weight function on the rows of $M$. Then, one can find a basis of maximum (or minimum) weight of the row space of $M$ in time $O(nm^{\omega-1})$.
\end{lemma}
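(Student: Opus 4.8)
The final statement to prove is Lemma 3.20 (the \texttt{lem:optG} reference), attributed to \cite{BodlaenderCKN15}: given a binary $n\times m$ matrix $M$ with $m\leq n$ and a weight function on its rows, one can find a maximum- (or minimum-) weight basis of the row space in time $O(nm^{\omega-1})$.

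The plan is to reduce the problem to Gaussian elimination on a matrix with $O(m)$ rows at a time, processed greedily. First I would sort the rows of $M$ by weight (decreasing weight for a maximum-weight basis); this costs $O(n\log n)$, which is dominated by the claimed bound since $m^{\omega-1}\geq m \geq \log n$ in the interesting regime (and if not, one argues the bound trivially). The greedy/matroid structure is the key conceptual point: the independent sets of rows of a matrix form a linear matroid, so the matroid greedy algorithm — repeatedly take the next heaviest row that is not in the span of the rows chosen so far — produces a maximum-weight basis. So correctness reduces to the standard fact that greedy is optimal on a matroid; I would state this and cite it (or give the two-line exchange-argument proof).

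The main work is the running-time bound: a naive implementation does $n$ rank/independence tests, each costing $O(m^{\omega})$ or $O(m^2)$ via Gaussian elimination, giving $O(nm^2)$, which is worse than claimed. To get $O(nm^{\omega-1})$ I would process the sorted rows in blocks of $m$ consecutive rows. For each block, form the $(c+m)\times m$ matrix whose first $c\leq m$ rows are the basis vectors already selected (in echelon form) and whose remaining rows are the new block; run a single Gaussian elimination (or fast rectangular elimination) on this $O(m)\times m$ matrix, which costs $O(m^{\omega})$, to determine which of the new rows are independent of the current basis and of each other, adding them (the heaviest-first order within the block preserves the greedy choice). Since there are $n/m$ blocks, the total cost is $(n/m)\cdot O(m^{\omega}) = O(n m^{\omega-1})$, as required. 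I would remark that rectangular matrix multiplication / block elimination on an $O(m)\times m$ matrix indeed runs in $O(m^{\omega})$ time, and that maintaining the running basis in reduced echelon form makes the "is this new row in the span" test a matter of reducing it against the pivots.

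The hard part — or at least the part needing care — is verifying that the block-wise greedy still yields a \emph{globally} maximum-weight basis: one must check that restricting the exchange to within the current block plus the accumulated basis does not cause us to miss a heavier independent row, which follows because rows are globally sorted by weight and once a block is fully processed every later row has weight no larger than any row in the block. I would phrase this as: the algorithm is exactly the matroid greedy algorithm, merely batched for efficiency, so optimality is inherited; only the time analysis is new, and it is the block decomposition that buys the $m^{\omega-1}$ instead of $m^2$ factor. For the minimum-weight variant one sorts in increasing order and runs the identical procedure (this is greedy on the dual-weighted matroid, equivalently replacing $\w$ by $-\w$).
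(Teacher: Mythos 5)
The paper does not reprove this lemma; it is imported verbatim from \cite{BodlaenderCKN15}, which establishes it in essentially the way you describe: treat the rows as a linear matroid, sort by weight, and run the matroid greedy algorithm, observing that the whole greedy computation can be carried out as a single row-echelon reduction of the sorted $n\times m$ matrix in $O(nm^{\omega-1})$ time (a classical result on fast Gaussian elimination of rectangular matrices, which your explicit block decomposition is one clean way to prove). So your argument is in substance the intended one, and the correctness reasoning (greedy on a matroid is optimal, and processing in batches of $m$ rows while keeping the accumulated basis at the top of each batch faithfully simulates the sequential greedy) is sound.

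One small point worth tightening: you wave away the $O(n\log n)$ cost of the initial sort by asserting $m^{\omega-1}\geq\log n$, but this is false in general (e.g.\ $m$ constant, $n$ huge), so the lemma as literally stated would need an additive $O(n\log n)$ term or a side remark that $\log n = O(m^{\omega-1})$ in all intended applications. In the present paper this is harmless since the lemma is always invoked with $m=\nec_1(V_x)^2$ (or $\nec_1(V_x)^4$) columns and $n=|\cA|$ rows bounded by a polynomial in $\snec_d$, so $\log n$ is negligible compared to $m^{\omega-1}$; but as a self-contained statement your ``and if not, one argues the bound trivially'' is doing more work than it admits. Everything else --- the batched echelon reduction, the $O(m)\times m$ elimination in $O(m^\omega)$, the $(n/m)\cdot O(m^\omega)=O(nm^{\omega-1})$ total, and the reduction of the minimum-weight case to the maximum-weight case by negating weights --- is correct.
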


In order to compute a small $(x,R')$-representative set of a set $\cA\subseteq 2^{V_x}$, the following theorem requires that the sets in $\cA$ are pairwise 1-neighbor equivalent over $V_x$.
This is useful since in our algorithm we classify our sets of partial solutions with respect to this property.
We need this to guarantee that the partial solutions computed for $R'$ will be completed with sets $d$-neighbor equivalent to $R'$ over $\comp{V_x}$.
However, if one wants to compute a small $(x,R')$-representative set of a set $\cA$ that does not respect this property, then it is enough to compute an $(x,R')$-representative set for each $1$-neighbor equivalence class of $\cA$.
The union of these $(x,R')$-representative sets is an $(x,R')$-representative set of $\cA$.

\begin{theorem}\label{thm:reduce1}
	Let $R\in \Rep{V_x}{1}$ and $R'\subseteq \comp{V_x}$. Then, there exists an algorithm $\reduce$ that, given $\cA\subseteq 2^{V_x}$ such that $X\equi{V_x}{1} R$ for all $X\in \cA$, outputs in time $O(|\cA|\cdot \nec_1(V_x)^{2 (\omega -1)} \cdot n^2)$ a subset $\cB\subseteq \cA$ such that $\cB$ $(x,R')$-represents $\cA$ and  $|\cB| \leq \nec_1(V_x)^2$.
\end{theorem}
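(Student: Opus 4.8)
The idea is to follow the rank-based approach of \cite{BodlaenderCKN15}: encode the relevant information about each partial solution $X\in\cA$ as a row of a small binary matrix $\cC$, show that ``$\cB$ $(x,R')$-represents $\cA$'' follows from $\cB$ spanning the same row space as $\cA$ (restricted to the rows indexed by $\cA$), and then invoke Lemma~\ref{lem:optG} to extract a maximum-weight basis. The columns of $\cC$ should be indexed by pairs of $1$-neighbor equivalence classes, one over $V_x$ and one over $\comp{V_x}$, so that $\cC$ has at most $\nec_1(V_x)\cdot\nec_1(\comp{V_x})\leq\snec_1(\cL)^2$ columns; a basis then has size at most $\snec_1(\cL)^2$.

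\textbf{Key steps.} First I would fix, for the class $R$ and for $R'$, representatives of the $1$-neighbor equivalence classes, say $\Rep{V_x}{1}=\{R_1,\dots,R_p\}$ and $\Rep{\comp{V_x}}{1}=\{R'_1,\dots,R'_q\}$. For a partial solution $X\in\cA$ and a set $Y\subseteq\comp{V_x}$ with $Y\equiv^1_{\comp{V_x}}R'$, connectivity of $G[X\cup Y]$ should be reformulated via consistent cuts: using Facts~\ref{fact:cutsCC} and \ref{fact:cutsXcupY}, $G[X\cup Y]$ is connected iff $|\cuts(X\cup Y)|=2$, i.e. iff the only consistent cuts of $X\cup Y$ are the two trivial ones. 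A consistent cut of $X\cup Y$ decomposes (Fact~\ref{fact:cutsXcupY}) into a consistent cut $(X_1,X_2)$ of $X$, a consistent cut $(Y_1,Y_2)$ of $Y$, and a cross-condition saying no edges run between $X_1$ and $Y_2$ or between $X_2$ and $Y_1$. The crucial observation is that whether the cross-condition holds depends on $X_1,X_2$ only through their $1$-neighbor equivalence classes over $V_x$, and on $Y_1,Y_2$ only through their classes over $\comp{V_x}$ (since $1$-neighbor equivalence is exactly equality of neighborhoods on the other side). So I would define the entry $\cC[X,(i,j)]$ to record, roughly, ``the parity of the number of consistent cuts $(X_1,X_2)$ of $X$ with $X_1\equiv^1_{V_x}R_i$, combined appropriately with the $j$-th class on the other side''—the precise gadget is the same parity-of-partitions trick as in \cite{BodlaenderCKN15}, adapted so that each column corresponds to a pair $(R_i,R'_j)$ and the scalar product of rows of $\cC$ with a fixed ``cut vector'' over $\GF(2)$ computes $|\cuts(X\cup Y)|\bmod 2$ for any $Y$ in the class of $R'$. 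Then I would prove the representation lemma: if $\cB\subseteq\cA$ and every row of $\cC$ indexed by $\cA$ lies in the $\GF(2)$-span of the rows indexed by $\cB$, then for every $Y\equiv^1_{\comp{V_x}}R'$ and every $X\in\cA$ with $G[X\cup Y]$ connected, there is $X'\in\cB$ with $\w(X')\geq\w(X)$ (for $\opt=\max$; symmetric otherwise) and $G[X'\cup Y]$ connected. This is the standard argument: write the row of $X$ as a sum of rows of $\cB$; the ``connected'' condition means a certain odd scalar product; some summand must also have odd scalar product, hence be connected with $Y$; and taking $\cB$ to be a maximum-weight basis ensures the weight does not drop. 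Finally, apply Lemma~\ref{lem:optG} to find a maximum-weight basis $\cB$ of the row space of $\cC$ in time $O(|\cA|\cdot m^{\omega-1})$ with $m\leq\nec_1(V_x)^2$; checking $|\cB|\leq m\leq\nec_1(V_x)^2$ and that building $\cC$ costs $O(|\cA|\cdot\nec_1(V_x)^2\cdot n^{O(1)})$, possibly using Lemma~\ref{lem:computenecd} to compute the $\rep^1$ data structure, gives the claimed running time $O(|\cA|\cdot\nec_1(V_x)^{2(\omega-1)}\cdot n^2)$ after the usual accounting (one needs $m\leq|\cA|$, which can be assumed w.l.o.g. since otherwise $\cA$ itself already has size $\leq\nec_1(V_x)^2$).

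\textbf{Main obstacle.} The delicate part is the exact construction of the matrix $\cC$ so that a single $\GF(2)$ cut-vector, depending only on the class $R'$, simultaneously reads off connectivity of $G[X\cup Y]$ for all $Y$ in that class. The subtlety is that the cross-condition in Fact~\ref{fact:cutsXcupY} couples the $V_x$-side and the $\comp{V_x}$-side, so the columns must be indexed by \emph{pairs} of classes and the entry must encode, for each ordered pair, whether a consistent cut of $X$ with left part in class $R_i$ is ``compatible'' (no forbidden cross-edges) with a consistent cut of the completion whose left part is in class $R'_j$. Getting the parities and the compatibility bookkeeping right—so that the row operations over $\GF(2)$ really do preserve ``can be completed to a connected graph with $Y$'' and so that the hypothesis $X\equiv^1_{V_x}R$ for all $X\in\cA$ is genuinely used (it guarantees all rows live in a controlled part of the space and keeps $m$ at $\nec_1(V_x)^2$ rather than larger)—is where essentially all the work lies; the extraction via Lemma~\ref{lem:optG} and the running-time bookkeeping are then routine.
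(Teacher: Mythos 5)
Your proposal captures the right high‑level strategy — build a small $\GF(2)$ matrix $\cC$ indexed by pairs of $1$‑neighbor equivalence classes, use consistent cuts and the factorization $\cM=_2\cC\cdot\comp{\cC}$ to reduce connectivity to parity, then extract a maximum‑weight basis via Lemma~\ref{lem:optG} — and the ``representation lemma'' argument you sketch (write a row as a $\GF(2)$‑sum, observe that one summand must be connected with $Y$, use the matroid exchange to keep the weight) is exactly the paper's. But there are two concrete gaps in what you wrote, and one notable deviation from the paper.

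First, the parity gadget as you state it does not work. You say the inner product should compute ``$|\cuts(X\cup Y)|\bmod 2$'', but Fact~\ref{fact:cutsCC} gives $|\cuts(X\cup Y)|=2^{|\cc(G[X\cup Y])|}$, which is \emph{always even} once $X\cup Y\neq\emptyset$; so the quantity you propose to compute is identically zero and cannot detect connectivity. The paper fixes this by distinguishing a vertex $v_Y\in Y$ and counting only the consistent cuts $(W_1,W_2)$ with $v_Y\in W_1$, which gives $2^{|\cc(G[X\cup Y])|-1}$ — odd precisely when $G[X\cup Y]$ is connected. This anchoring of one side of the cut is built into the definition of $\comp{\cC}$, and without it the whole $\GF(2)$ argument collapses.

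Second, you omit the preprocessing that makes the ``at most one consistent cut per column index'' claim true. The paper first handles separately the case $R'\equiv_{\comp{V_x}}^1\emptyset$, and then removes from $\cA$ every $X$ that has a connected component $C\in\cc(G[X])$ with $N(C)\cap R'=\emptyset$ (and correspondingly restricts the $Y$'s considered to those whose components all see $R$). Only after this cleanup is every connected component of $G[X\cup Y]$ guaranteed to meet both $X$ and $Y$, which is exactly what drives the uniqueness argument (the paper's Claim~\ref{claim:onewayreduce}). A component living entirely on one side can flip freely between the two halves of a cut without changing the equivalence‑class signature, so without the removal the product $(\cC\cdot\comp{\cC})[X,Y]$ need not equal $2^{|\cc(G[X\cup Y])|-1}$ and the identity $\cM=_2\cC\cdot\comp{\cC}$ fails.

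Finally, your column indexing differs from the paper's. You index columns by pairs $(R_i,R'_j)\in\Rep{V_x}{1}\times\Rep{\comp{V_x}}{1}$, one class per side. The paper instead indexes by pairs $(R_1',R_2')\in\Rep{\comp{V_x}}{1}\times\Rep{\comp{V_x}}{1}$, both over $\comp{V_x}$: $\cC[X,(R_1',R_2')]$ asks whether $X$ admits a consistent cut $(X_1,X_2)$ compatible with that pair (no edges from $X_1$ to $R_2'$ nor from $X_2$ to $R_1'$), and $\comp{\cC}[(R_1',R_2'),Y]$ asks whether $Y$ admits a consistent cut $(Y_1,Y_2)$, with $v_Y\in Y_1$, landing in exactly those classes. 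This keeps all the ``which class'' bookkeeping on one side, so the uniqueness claim is clean (a consistent cut of $X\cup Y$ is determined by the classes of its two $Y$‑halves). Your mixed indexing could in principle be made to work via an analogous but different uniqueness lemma, and would give the same column count since $\nec_1$ is symmetric — but you leave the construction entirely unspecified (``the precise gadget is the same parity‑of‑partitions trick, adapted\ldots''), which, combined with the two issues above, is where the actual content of the proof lives.
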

\begin{proof}
	We assume without loss of generality (w.l.o.g.) that $\opt=\max$; the proof is symmetric for $\opt=\min$.
	First, we suppose that $R'\equi{\comp{V_x}}{1}\emptyset$.
	Observe that, for every $Y\equi{\comp{V_x}}{1} \emptyset$, we have $N(Y)\cap V_x=N(\emptyset)\cap V_x=\emptyset$.
	It follows that, for every $Y\subseteq \comp{V_x}$ such that $Y\equi{\comp{V_x}}{1} \emptyset $ and $Y\neq \emptyset$, $G[X\cup Y]$ cannot be connected
	for any $\emptyset \ne X\in \cA$, and thus we have $\best(\cA,Y)= 0$ if $\emptyset\in \cA$ and $G[Y]$ is connected, or $-\infty$ otherwise.
	Moreover, we have $\best(\cA,\emptyset)=\max\{ \w(X) \mid X\in \cA \text{ and } G[X] \textrm{ is connected}\}$.
	Note that we consider the empty graph $G[\emptyset]$ to be connected and we have $\w(\emptyset)=0$.
	Hence, if $R'\equi{\comp{V_x}}{1}\emptyset$, then it is sufficient to return a set $\cB$ constructed as follows:
	if $\cA$ contains a set inducing a connected graph, we add to $\cB$ a set in $\cA$ of maximum weight that induces a connected graph and if $\emptyset\in \cA$, we add the empty set to $\cB$.

	Assume from now on that $R'$ is not 1-neighbor equivalent to $\emptyset$ over $\comp{V_x}$.  Let $X\in \cA$. If there exists $C\in \cc(G[X])$ such that $N(C)\cap R'=\emptyset$, then, for all $Y\equi{\comp{V_x}}{1}
	R'$, we have $N(C)\cap Y = \emptyset$.
	Moreover, as $R'$ is not 1-neighbor equivalent to $\emptyset$ over $\comp{V_x}$, we have $Y\neq \emptyset$.  Consequently, for every $Y\equi{\comp{V_x}}{1} R'$, the
	graph $G[X\cup Y]$ is not connected.  We can conclude that $\cA\setminus \{X\}$ $(x,R')$ represents $\cA$.  Thus, we can safely remove from $\cA$ all such sets and this can be done in time $|\cA|\cdot n^2$.
	From now on, we may assume that, for all $X\in \cA$ and for all $C\in \cc(G[X])$, we have $N(C)\cap R'\neq \emptyset$.
	It is worth noticing that if $R=\emptyset$ or more generally $N(R)\cap R'=\emptyset$, then by assumption, $\cA=\emptyset$.
	
	Indeed, if $N(R)\cap R'=\emptyset$, then, for every $X\in\cA$, we have $N(X)\cap R' = N(R)\cap R'=\emptyset$ and in particular, for every $C\in\cc(G[X])$, we have $N(C)\cap R'=\emptyset$ (and we have assumed that no such set exists in $\cA$).
	
	Symmetrically, if for some $Y\subseteq \comp{V_x}$ there exists $C\in \cc(G[Y])$ such that $N(C)\cap R=\emptyset$, then, for every $X\in \cA$, the graph $G[X\cup Y]$ is not connected.
	Let $\cD$ be the set of all subsets $Y$ of $\comp{V_x}$ such that $Y\equi{\comp{V_x}}{1} R'$ and, for all $C\in \cc(G[Y])$, we have $N(C)\cap R\neq \emptyset$.
	Notice that the sets in $2^{\comp{V_x}} \setminus \cD$ do not matter for the $(x,R')$-representativity.
	
	For every $Y\in \cD$, we let $v_Y$ be one fixed vertex of $Y$.
	In the following, we denote by $\mathfrak{R}$ the set $\{ (R_1',R_2') \in \Rep{\comp{V_x}}{1}\times \Rep{\comp{V_x}}{1}  \}$.
	Let $\cM$, $\cC$, and $\comp{\cC}$ be, respectively, an $(\cA, \cD)$-matrix, an $(\cA,\mathfrak{R} )$-matrix, and an $(\mathfrak{R}, \cD)$-matrix such that
	{\small \begin{align*}
			\cM[X,Y] & :=
			\begin{cases} 1 &\!\! \textrm{if $G[X\cup Y]$ is connected},\\
				0 &\!\! \textrm{otherwise}. \end{cases}\\
			\cC[X,(R_1',R_2')]&:=
			\begin{cases} 1 &\!\! \textrm{if $\exists (X_1,X_2)\,{\in}\, \cuts(X)$ such that $N(X_1)\,{\cap}\, R_2'\,{=}\,\emptyset$ and $N(X_2)\,{\cap}\, R_1' \,{=}\,\emptyset$,}\\
				0 &\!\! \textrm{otherwise}. \end{cases}\\
			\comp{\cC}[(R_1',R_2'),Y]&:=
			\begin{cases} 1 &\!\! \textrm{if $\exists (Y_1,Y_2)\in \cuts(Y)$ such that $v_Y \in Y_1$, $Y_1\equi{\comp{V_x}}{1}R_1'$ and $Y_2 \equi{\comp{V_x}}{1}R_2'$},\\
				0 &\!\! \textrm{otherwise}. \end{cases}
	\end{align*}}
	
	Intuitively, $\cM$ contains all the information we need.
	In fact, a basis of maximum weight of the row space of $\cM$ in $GF(2)$ is an $(x,R')$-representative set of $\cA$.
	But, $\cM$ is too big to be computable efficiently.
	Instead, we prove that a basis of maximum weight of the row space of $\cC$ is an $(x,R')$-representative set of $\cA$.
	This follows from the fact that $(\cC\cdot \comp{\cC})[X,Y]$ equals the number of consistent cuts $(W_1,W_2)$ in $\cuts(X\cup Y)$ such that $v_Y\in W_1$.
	That is, $(\cC\cdot \comp{\cC})[X,Y]=2^{|\cc(G[X\cup Y])| -1}$. Consequently, $\cM=_2 \cC\cdot \comp{\cC}$, where $=_2$ denotes the equality in $GF(2)$, \ien, $(\cC\cdot \comp{\cC})[X,Y]$ is odd if and only if $G[X\cup Y]$ is connected.
	We deduce the running time of $\reduce$ and the size of $\reduce(\cA)$ from the size of $\cC$ (\ien, $|\cA|\cdot \nec_1(V_x)^2$).
	
	We start by proving that $\cM=_2 \cC\cdot \comp{\cC}$.
	Let $X\in \cA$ and $Y\in \cD$.
	We want to prove the following equality:
	\[ (\cC\cdot \comp{\cC})[X,Y]= \sum_{(R_1',R_2')\in \mathfrak{R}}\cC[X,(R_1',R_2')]\cdot \comp{\cC}[(R_1',R_2'),Y] = 2^{|\cc(G[X\cup Y])| -1}. \]
	We prove this equality with the following two claims.
	\begin{claim}\label{claim:secondwayreduce}
		We have $\cC[X,(R_1',R_2')]\cdot \comp{\cC}[(R_1',R_2'),Y]=1$ if and only if there exists $(W_1,W_2)\in \cuts(X\cup Y)$ such that $v_Y\in W_1$, $W_1\cap Y \equi{\comp{V_x}}{1} R_1'$, and $W_2\cap Y\equi{\comp{V_x}}{1} R_2'$.
	\end{claim}
	\begin{proof}
		By definition, we have  $\cC[X,(R_1',R_2')]\cdot \comp{\cC}[(R_1',R_2'),Y]=1$ if and only if
		\begin{enumerate}[(a)]
			\item $\exists(Y_1,Y_2)\in \cuts(Y)$ such that $v_Y\in Y_1$, $Y_1 \equi{\comp{V_x}}{1} R_1'$, $Y_2\equi{\comp{V_x}}{1} R_2'$ and
			\item $\exists(X_1,X_2)\in \cuts(X)$ such that $N(X_1)\cap R_2'=\emptyset$ and $N(X_2)\cap R_1' =\emptyset$.
		\end{enumerate}
		
		Let $(Y_1,Y_2)\in \cuts(Y)$ and $(X_1,X_2)\in \cuts(X)$ that satisfy, respectively, properties (a) and (b).
		By definition of $\equi{\comp{V_x}}{1}$, we have $N(X_1)\cap Y_2=\emptyset$ because $N(X_1)\cap R_2'=\emptyset$ and  $Y_2\equi{\comp{V_x}}{1} R_2'$. Symmetrically, we have $N(X_2)\cap Y_1=\emptyset$.
		By Fact \ref{fact:cutsXcupY}, we deduce that $(X_1\cup Y_1,X_2\cup Y_2)\in \cuts(X\cup Y)$.
		This proves the claim.
	\end{proof}

	\begin{claim}\label{claim:onewayreduce}
		Let $(W_1, W_2)$ and $(W_1',W_2')\in \cuts(X\cup Y)$.  We have $W_1\cap Y\equi{\comp{V_x}}{1} W_1'\cap Y$ and $W_2\cap Y\equi{\comp{V_x}}{1} W_2'\cap Y$ if and only if $W_1=W_1'$ and $W_2=W_2'$.
	\end{claim}
	\begin{proof}
		We start by an observation about the connected components of $X\cup Y$.
		As $Y\in \cD$, for all $C\in\cc(G[Y])$, we have $N(C)\cap R \neq \emptyset$.
		Moreover, by assumption, for all $C\in \cc(G[X])$, we have $N(C)\cap R'\neq \emptyset$.
		Since $X\equi{V_x}{1} R$ and $Y \equi{\comp{V_x}}{1} R'$, every connected component of $G[X\cup Y]$ contains at least one vertex of $X$ and one vertex of $Y$.
		
		Suppose that $W_1\cap Y\equi{\comp{V_x}}{1} W_1'\cap Y$ and $W_2\cap Y\equi{\comp{V_x}}{1} W_2'\cap Y$.
		Assume toward a contradiction that $(W_1,W_2)\neq (W_1',W_2')$.
		As these cuts are a bipartition of $X\cup Y$, we deduce that $W_1\neq W_1'$ and $W_2\ne W_2'$.
		Since $W_1\neq W_1'$, by Fact \ref{fact:cutsCC}, we deduce that there exists $C\in \cc(G[X\cup Y])$ such that either (1) $C\subseteq W_1$ and $C\subseteq W_2'$ or (2) $C\subseteq W_1'$ and $C\subseteq W_2$.
		We can assume w.l.o.g. that there exists $C\in\cc(G[X\cup Y])$ such that $C\subseteq W_1$ and $C\subseteq W_2'$.
		From the above observation, $C$ contains at least one vertex of $X$ and one of $Y$ and we have $N(C\cap X)\cap (W_1\cap Y)\neq \emptyset$ and $N(C\cap X)\cap (W_2'\cap Y)\neq\emptyset$.
		But, since $W_2\cap Y\equi{\comp{V_x}}{1} W_2'\cap Y$, we have $N(C\cap X)\cap( W_2\cap Y)\neq \emptyset$.
		This implies in particular that $N(W_1)\cap W_2\neq \emptyset$. It is a contradiction with the fact that $(W_1,W_2)\in\cuts(X\cup Y)$.
		
		The other direction being trivial we can conclude the claim.
	\end{proof}
	
	Notice that Claim \ref{claim:onewayreduce} implies that, for every $(R_1',R_2')\in \mathfrak{R}$, there exists at most one consistent cut $(W_1,W_2)\in \cuts(X\cup Y)$ such that $v_Y\in W_1$, $W_1\cap Y \equi{\comp{V_x}}{1} R_1'$ and $W_2\cap Y \equi{\comp{V_x}}{1} R_2'$.
	We can thus conclude from these two claims that
	\[      (\cC\cdot \comp{\cC})[X,Y]=|\{ (W_1,W_2)\in\cuts(X\cup Y) \mid v_Y\in W_1 \}|.  \]
	By Fact \ref{fact:cutsCC}, we deduce that $(\cC\cdot \comp{\cC})[X,Y] = 2^{|\cc(G[X\cup Y])|-1}$ since every connected component of $G[X\cup Y]$ that does not contain $v_Y$ can be in both sides of a consistent cut.
	Hence, $(\cC\cdot \comp{\cC})[X,Y]$ is odd if and only if $|\cc(G[X\cup Y])|=1$. We conclude that $\cM=_2 \cC\cdot \comp{\cC}$.
	
	Let $\cB\subseteq \cA$ be a basis of maximum weight of the row space of $\cC$ over $GF(2)$.
	We claim that $\cB$ $(x,R')$-represents $\cA$.
	Let $Y\subseteq \comp{V_x}$ such that $Y\equi{\comp{V_x}}{1} R'$.
	Observe that, by definition of $\cD$, if $Y\notin \cD$, then $\best(\cA,Y)=\best(\cB, Y)=-\infty$.
	Thus, it is sufficient to prove that, for every $Y\in \cD$, we have $\best(\cA,Y)=\best(\cB, Y)$.
	
	Let $X\in \cA$ and $Y\in\cD$. Recall that we have proved that $M[X,Y]=_2 (\cC\cdot \comp{\cC})[X,Y]$.
	Since $\cB$ is a basis of $\cC$, there exists\footnote{Notice that $\cB'$ is unique because $\cB$ is a row basis of $\cC$.} $\cB'\subseteq \cB$ such that, for each $(R_1',R_2')\in \mathfrak{R}$, we have $\cC[X,(R_1',R_2')]=_2 \sum_{W\in \cB'} \cC[W,(R_1',R_2')]$. Thus, we have the following equality:
	\begin{align*}
		\cM[X,Y]&=_2\sum_{(R_1',R_2')\in \mathfrak{R}} \cC[X,(R_1',R_2')]\cdot \comp{\cC}[(R_1',R_2'),Y]\\
		&=_2 \sum_{(R_1',R_2')\in \mathfrak{R}} \left(  \sum_{W\in \cB'} \cC[W,(R_1',R_2')]\right)  \cdot \comp{\cC}[(R_1',R_2'),Y]\\
		&=_2 \sum_{W\in \cB'} \left(\sum_{(R_1',R_2')\in \mathfrak{R}} \cC[W,(R_1',R_2')] \cdot \comp{\cC}[(R_1',R_2'),Y]\right)\\
		&=_2 \sum_{W\in \cB'} (\cC\cdot \comp{\cC})[W,Y]=_2 \sum_{W\in \cB'} \cM[W,Y].
	\end{align*}
	
	If $\cM[X,Y]=1$ (\ien, $G[X\cup Y]$ is connected), then there is an odd number of sets $W$ in $\cB'$ such that $\cM[W,Y]=1$ (\ien, $G[W\cup Y]$ is connected). Hence, there exists at least one $W\in \cB'$ such that $G[W\cup Y]$ is connected.
	Let $W\in \cB'$ such that $\cM[W,Y]=1$ and $\w(W)$ is maximum. Assume toward a contradiction that $\w(W)<\w(X)$.
	Notice that $(\cB \setminus \{W\} )\cup \{X\}$ is also a basis of $\cC$.
	Indeed, by definition of $\cB'$, for every $(R_1',R_2')$, we have $\cC[W,(R_1',R_2')]=_2 \cC[X,(R_1',R_2')] + \sum_{Z\in \cB'\setminus\{W\}} \cC[Z,(R_1',R_2')]$. That is, we can generate the row of $W$ in $\cC$ with the rows of $(\cB \setminus \{W\} )\cup \{X\}$.
	As $\cB$ is a basis of $\cC$, we deduce that $(\cB \setminus \{W\} )\cup \{X\}$ is also a basis of $\cC$.
	Since $\w(W)<\w(X)$, the weight of the basis $(\cB \setminus \{W\} )\cup \{X\}$ is strictly greater than the weight of the basis $\cB$, yielding a contradiction. Thus, we have $\w(X)\leq \w(W)$.
	Hence, for all $Y\in \cD$ and all $X\in \cA$, if $G[X\cup Y]$ is connected, then there exists $W\in \cB$ such that $G[W\cup Y]$ is connected and $\w(X)\leq \w(W)$.
	This is sufficient to prove that $\cB$ $(x,R')$-represents $\cA$.
	Since $\cB$ is a basis, the size of $\cB$ is at most the number of columns of $\cC$, thus,  $|\cB|\leq \nec_1(V_x)^2$.
	
	It remains to prove the running time.
	We claim that $\cC$ is computable in time $O(|\cA|\cdot \nec_1 (V_x)^2 \cdot n^2)$
	By Fact \ref{fact:cutsCC}, $\cC[X,(R_1',R_2')]=1$ if and only if, for each $C\in\cc(G[X])$, we have either $N(C)\cap R_1'=\emptyset$ or $N(C)\cap R_2'=\emptyset$.
	Thus, each entry of $\cC$ is computable in time $O(n^2)$.
	Since $\cC$ has $|\cA|\cdot |\Rep{\comp{V_x}}{1}|^2=|\cA|\cdot \nec_1(V_x)^2$ entries, we can compute $\cC$ in time $O(|\cA|\cdot \nec_1 (V_x)^2 \cdot n^2)$.
	Furthermore, by Lemma \ref{lem:optG}, a basis of maximum weight of $\cC$ can be computed in time $O(|\cA|\cdot \nec_1(V_x)^{2 (\omega -1)})$.
	We conclude that $\cB$ can be computed in time $O(|\cA|\cdot \nec_1(V_x)^{2 (\omega -1)} \cdot n^2)$.
\end{proof}

Now to boost up a dynamic programming algorithm $P$ on some rooted layout $(T,\delta)$ of $G$, we can use the function $\reduce$ to keep the size of the sets of partial solutions bounded by $\snec_1(T,\delta)^2$.
We call $P'$ the algorithm obtained from $P$ by calling the function $\reduce$ at every step of computation.
We can assume that the set of partial solutions $\cA_r$ computed by $P$ and associated with the root $r$ of $(T,\delta)$ contains an optimal solution (this will be the cases in our algorithms).
To prove the correctness of $P'$, we need to prove that $\cA_r'$ $(r,\emptyset)$-represents $\cA_r$ where $\cA_r'$ is the set of partial solutions computed by $P'$ and associated with $r$.
For doing so, we need to prove that at each step of the algorithm the operations we use preserve the $(x,R')$-representativity.
The following fact states that we can use the union without restriction; it follows directly from Definition \ref{def:represent} of $(x,R')$-representativity.

\begin{fact}\label{fact:unionpreserve}
	If $\cB$ and $\cD$ $(x,R')$-represents,  respectively, $\cA$ and $\cC$, then $\cB\cup \cD$ $(x,R')$-represents $\cA\cup \cC$.
\end{fact}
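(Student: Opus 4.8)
The statement is essentially immediate from the fact that $\best(\cdot,Y)$ distributes over set unions. The plan is as follows.

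First I would record the key elementary observation: for any two families $\cA_1,\cA_2\subseteq 2^{V_x}$ and any $Y\subseteq V(G)$,
\[
\best(\cA_1\cup \cA_2,Y)=\opt\bigl(\best(\cA_1,Y),\best(\cA_2,Y)\bigr).
\]
This follows directly from the definition $\best(\cA,Y)=\opt\{\w(X)\mid X\in\cA \text{ and } G[X\cup Y]\text{ is connected}\}$, since optimizing a value over the union of two candidate sets is the same as taking the $\opt$ of the two separate optima (this also handles the degenerate cases, using the conventions $\min(\emptyset)=+\infty$ and $\max(\emptyset)=-\infty$).

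Next I would fix an arbitrary $Y\subseteq \comp{V_x}$ with $Y\equiv^{1}_{\comp{V_x}}R'$ and simply chain equalities: by the observation above, then by the hypotheses that $\cB$ $(x,R')$-represents $\cA$ and $\cD$ $(x,R')$-represents $\cC$ (applied to this particular $Y$), and then by the observation again,
\[
\best(\cA\cup\cC,Y)=\opt\bigl(\best(\cA,Y),\best(\cC,Y)\bigr)=\opt\bigl(\best(\cB,Y),\best(\cD,Y)\bigr)=\best(\cB\cup\cD,Y).
\]
Since $Y$ was an arbitrary set $1$-neighbor equivalent to $R'$ over $\comp{V_x}$, this shows $\cB\cup\cD$ $(x,R')$-represents $\cA\cup\cC$ by Definition \ref{def:represent}.

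There is no real obstacle here; the only thing to be slightly careful about is the handling of the empty families and the $\pm\infty$ conventions, but these are subsumed by the distributivity observation.
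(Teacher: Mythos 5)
Your proof is correct and is exactly the argument the paper implicitly invokes when it says the fact "follows directly from Definition \ref{def:represent}": $\best$ distributes over unions via $\opt$, and then one chains equalities for each admissible $Y$. Nothing more to add.
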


The second operation we use in our dynamic programming algorithms is the merging operator $\otimes$.
In order to safely use it, we need the following notion of compatibility that just tells which partial solutions from $V_a$ and $V_b$ can be joined to possibly form a
partial solution in $V_x$. (It was already used in \cite{Bui-XuanTV13} without naming it.)
\begin{definition}[$d$-$(R,R')$-compatibility]\label{def:compatibility}
	Suppose that $x$ is an internal node of $T$ with $a$ and $b$ as children.
	Let $d\in \bN^+$ and $R\in \Rep{V_x}{d}$.
	We say that $(A,A')\in \Rep{V_a}{d}\times \Rep{\comp{V_a}}{d} $ and $(B,B')\in \Rep{V_b}{d} \times \Rep{\comp{V_b}}{d}$ are $d$-$(R,R')$-compatible if we have
	\begin{itemize}
		\item $A\cup B \equi{V_x}{d} R$,
		\item $A' \equi{\comp{V_a}}{d} B \cup R'$, and
		\item $B' \equi{\comp{V_b}}{d} A\cup R'$.
	\end{itemize}
\end{definition}

The $d$-$(R,R')$-compatibility just tells which partial solutions from $V_a$ and $V_b$ can be joined to possibly form a partial solution in $V_x$.

\begin{lemma}\label{lem:bigotimespreserve}
	Suppose that $x$ is an internal node of $T$ with $a$ and $b$ as children. Let $d\in \bN^+$ and $R\in \Rep{V_x}{d}$. Let $(A,A')\in \Rep{V_a}{d}\times \Rep{\comp{V_a}}{d} $ and $(B,B')\in \Rep{V_b}{d} \times \Rep{\comp{V_b}}{d}$ that are $d$-$(R,R')$-compatible.
	Let $ \cA\subseteq 2^{V_a}$ such that, for all $X\in \cA$, we have $X\equi{V_a}{d} A$, and let
	$\cB\subseteq 2^{V_b}$ such that, for all $W\in \cB$, we have $W\equi{V_b}{d} B$.
	If $\cA'\subseteq \cA$ $(a,A')$-represents $\cA$ and $\cB'\subseteq \cB$ $(b,B')$-represents $\cB$, then
	$ \cA' \otimes \cB' \ (x,R')\text{-represents } \cA \otimes \cB. $
\end{lemma}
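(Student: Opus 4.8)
The plan is to reduce the claim to the defining equality of $(x,R')$-representativity by a direct unwinding of the definitions, using Fact~\ref{fact:equivbiggerset} to transport the $d$-neighbor equivalences from $V_a$ and $V_b$ up to $V_x$, and Fact~\ref{fact:cutsXcupY} (together with the $1$-neighbor equivalence control) to split a completion $Y\subseteq\comp{V_x}$ into parts that can be fed separately to $\cA$ and $\cB$. Fix $Y\subseteq\comp{V_x}$ with $Y\equiv^1_{\comp{V_x}}R'$; we must show $\best(\cA\otimes\cB,Y)=\best(\cA'\otimes\cB',Y)$. Since $\cA'\otimes\cB'\subseteq\cA\otimes\cB$, one inequality is immediate (for $\opt=\max$; symmetric for $\min$), so the work is to show that every connected $G[(X\cup W)\cup Y]$ with $X\in\cA$, $W\in\cB$ is matched by a connected $G[(X^\star\cup W^\star)\cup Y]$ with $X^\star\in\cA'$, $W^\star\in\cB'$ of weight at least $\w(X)+\w(W)$.

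First I would set up the decomposition: write $Y_a:=W\cup Y$ viewed as a subset of $\comp{V_a}$ (valid since $V_b\cup\comp{V_x}=\comp{V_a}$) and $Y_b:=X\cup Y$ as a subset of $\comp{V_b}$. The key neighbourhood bookkeeping is that $Y_a\equiv^1_{\comp{V_a}}A'$ and $Y_b\equiv^1_{\comp{V_b}}B'$: indeed $W\equiv^d_{V_b}B$ gives, via Fact~\ref{fact:equivbiggerset}, $W\equiv^1_{\comp{V_a}}B$ on the $V_b$-part, while $Y\equiv^1_{\comp{V_x}}R'$ controls the $\comp{V_x}$-part, so $W\cup Y\equiv^1_{\comp{V_a}}B\cup R'\equiv^d_{\comp{V_a}}A'$ by $d$-$(R,R')$-compatibility (and the $d$-class of $A'$ refines its $1$-class); symmetrically for $B'$. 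Consequently $\best(\cA,Y_a)=\best(\cA',Y_a)$ and $\best(\cB,Y_b)=\best(\cB',Y_b)$ by the representativity hypotheses.

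Next I would argue that connectivity of $G[X\cup W\cup Y]$ is equivalent to the conjunction ``$G[X\cup Y_a]$ connected and $G[W\cup Y_b]$ connected'' — or rather, more carefully, I would observe that if both $G[X\cup(W\cup Y)]$ and $G[W\cup(X\cup Y)]$ are connected then so is $G[X\cup W\cup Y]$ (they share $Y$ and overlap on $X$, $W$ respectively), and conversely connectivity of the whole implies connectivity of each, because every vertex lies in the same graph. This lets me run the exchange: from $X\in\cA$ with $G[X\cup Y_a]$ connected, representativity of $\cA'$ for $(a,A')$ (applicable since $Y_a\equiv^1_{\comp{V_a}}A'$) yields $X^\star\in\cA'$ with $G[X^\star\cup Y_a]$ connected and $\w(X^\star)\ge\w(X)$; but $Y_a$ contains the specific $W$, not an arbitrary representative, and here I must be careful — representativity is stated for all $Y$ in the $1$-class of $A'$, and $Y_a$ is such a $Y$, so this is fine. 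However, after replacing $X$ by $X^\star$ the set $Y_b$ changes to $X^\star\cup Y$, and I need $G[W\cup(X^\star\cup Y)]$ connected. This is the main obstacle: the two exchanges are not independent, since swapping $X$ alters the completion seen by the $W$-side.

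The way around this is to do the exchanges sequentially and to note that only the $1$-neighbor equivalence class of the completion matters, and it is preserved. Concretely: replace $X$ by $X^\star\in\cA'$ first; since $X\equiv^d_{V_a}A$ and $X^\star\equiv^d_{V_a}A$, we have $X\equiv^1_{V_x}X^\star$ by Fact~\ref{fact:equivbiggerset}, so $X^\star\cup Y\equiv^1_{\comp{V_b}}X\cup Y=Y_b$, i.e.\ $X^\star\cup Y$ is still in the $1$-class $B'$. We established $G[X^\star\cup W\cup Y]$ is connected (from $G[X^\star\cup Y_a]$ connected plus $G[W\cup Y_b]$ connected — wait, we need $G[W\cup(X^\star\cup Y)]$ connected, which follows because $G[X^\star\cup W\cup Y]$ being connected is exactly what ``$G[X^\star\cup Y_a]$ connected'' gives, once we know it contains $W$). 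Then, with the completion $X^\star\cup Y$ fixed and lying in the $1$-class of $B'$, apply representativity of $\cB'$ to get $W^\star\in\cB'$ with $G[W^\star\cup X^\star\cup Y]$ connected and $\w(W^\star)\ge\w(W)$. Finally check $X^\star\cup W^\star\in\cA'\otimes\cB'$ (nonempty since $X^\star,W^\star$ exist) and $\w(X^\star\cup W^\star)=\w(X^\star)+\w(W^\star)\ge\w(X)+\w(W)$ because $V_a$ and $V_b$ are disjoint. Running this for the $X,W$ attaining $\best(\cA\otimes\cB,Y)$ gives $\best(\cA'\otimes\cB',Y)\ge\best(\cA\otimes\cB,Y)$, and the reverse inequality is monotonicity; the edge cases where one of $\cA,\cB$ (or their representatives) is empty are handled by the convention that $\cA\otimes\cB=\emptyset$ and $\best(\emptyset,Y)=-\infty$ throughout, and by the fact that $\cA'$ $(a,A')$-represents $\cA$ forces $\cA'=\emptyset$ iff $\cA=\emptyset$ only in the relevant connected-completion sense — but since here we only need the inequality in one direction beyond monotonicity, emptiness of $\cA'\otimes\cB'$ would force emptiness of $\cA\otimes\cB$ by the exchange argument, closing the case.
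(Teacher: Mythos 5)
Your proof is correct and follows essentially the same approach as the paper's: the key observation (your $Y_a\equiv^1_{\comp{V_a}}A'$ and $Y_b\equiv^1_{\comp{V_b}}B'$, proved via Fact~\ref{fact:equivbiggerset} and $d$-$(R,R')$-compatibility) is exactly the paper's Facts (a) and (b), and your sequential exchange of $X$ for $X^\star$ and then $W$ for $W^\star$ is what the paper packages more compactly as the identity $\best(\cA\otimes\cB,Y)=\max\{\best(\cA,W\cup Y)+\w(W)\mid W\in\cB\}$ followed by two substitutions. One small slip: the equivalence you cite mid-argument should read $X\equiv^1_{\comp{V_b}}X^\star$ (or $\equiv^d_{V_a}$), not $\equiv^1_{V_x}$, though your subsequent conclusion $X^\star\cup Y\equiv^1_{\comp{V_b}}X\cup Y$ shows you intended the right thing.
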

\begin{proof}
	We assume w.l.o.g. that $\opt=\max$; the proof is symmetric for $\opt=\min$.
	Suppose that $\cA'\subseteq \cA$ $(a,A')$-represents $\cA$ and $\cB'\subseteq \cB$ $(b,B')$-represents $\cB$.
	To prove the lemma, it is sufficient to prove that $\best(\cA'\otimes \cB', Y)=\best(\cA\otimes \cB ,Y)$ for every $Y\equi{\comp{V_x}}{1} R'$.
	
	Let $Y\subseteq \comp{V_x}$ such that $Y\equi{\comp{V_x}}{1} R'$.
	We start by proving the following facts: (a)~for every $W\in\cB$, we have $W\cup Y \equi{\comp{V_a}}{1} A'$ and~(b)~for every $X\in \cA$, we have $X\cup Y\equi{\comp{V_b}}{1} B'$.
	
	Let $W\in \cB$. Owing to the $d$-$(R,R')$-compatibility, we have $B\cup R' \equi{\comp{V_a}}{d} A'$.  Since $W\equi{V_b}{d} B$ and $V_b \subseteq \comp{V_a}$, by Fact \ref{fact:equivbiggerset}, we deduce that $W\equi{\comp{V_a}}{d} B$ and thus $W\cup R' \equi{\comp{V_a}}{d} A'$.
	In particular, we have $W\cup R' \equi{\comp{V_a}}{1} A'$ because $d\geq 1$.
	Similarly, we have from Fact \ref{fact:equivbiggerset} that $W\cup Y \equi{\comp{V_a}}{1} A'$ because $Y\equi{\comp{V_x}}{1} R'$ and $\comp{V_x}\subseteq \comp{V_a}$.
	This proves fact (a).
	The proof for fact      (b) is symmetric.
	
	Now observe that, by the definitions of $\best$ and the merging operator $\otimes$, we have (even if $\cA=\emptyset$ or $\cB=\emptyset$)
	{\small
		\begin{align*}
			\best\big( \cA \otimes \cB, Y\big) &= \max\{ \w(X)+ \w(W) \mid X\in \cA \wedge W\in \cB \wedge G[X\cup W\cup Y] \text{ is connected} \}.
	\end{align*}}%
	Since $\best(\cA, W\cup Y)=\max\{ \w(X) \mid X\in \cA \wedge G[X\cup W\cup Y] \text{ is connected} \}$, we deduce that
	\begin{align*}
		\best\big( \cA \otimes \cB, Y\big) &= \max\{ \best(\cA, W\cup Y) + \w(W) \mid W\in \cB \}.
	\end{align*}
	\noindent Since $\cA'$ $(a,A')$-represents $\cA$, by fact (a), we have
	\begin{align*}
		\best\big( \cA \otimes \cB, Y\big) &= \max\{ \best(\cA', W\cup Y) + \w(W) \mid W\in \cB \}\\
		&=\best\big( \cA' \otimes \cB, Y\big).
	\end{align*}
	Symmetrically, we deduce from fact (b) that $\best\big( \cA' \otimes \cB , Y\big) = \best\big( \cA' \otimes \cB', Y\big)$.
	This stands for every $Y\subseteq \comp{V_x}$ such that $Y\equi{\comp{V_x}}{1}  R'$.
	Thus, we conclude that $\cA' \otimes \cB'$ $(x,R')$-represents $\cA \otimes \cB$.
\end{proof}

\section{Connected (Co-)\boldmath$(\sigma,\rho)$-dominating sets}\label{sec:dom}

Let $\sigma$ and $\rho$ be two (nonempty) finite or co-finite subsets of $\bN$.
We say that a subset $D$ of $V(G)$ \emph{$(\sigma, \rho)$-dominates} a subset $U\subseteq V(G)$ if
\begin{align*}
	|N(u)\cap D|\in & \begin{cases} \sigma & \textrm{if $u\in D$},\\
		\rho & \textrm{otherwise (if $u\in U\setminus D$).}
	\end{cases}
\end{align*}

A subset $D$ of $V(G)$ is a \emph{$(\sigma, \rho)$-dominating set} (resp., Co-\emph{$(\sigma, \rho)$-dominating set}) if $D$ (resp., $V(G)\setminus D$) \emph{$(\sigma, \rho)$-dominates} $V(G)$.

The \textsc{Connected $(\sigma,\rho)$-Dominating Set} problem asks, given a weighted graph $G$, a maximum or minimum $(\sigma, \rho)$-dominating set which induces a connected graph.
Similarly, one can define \textsc{Connected Co-$(\sigma,\rho)$-Dominating Set}.
Examples of some \textsc{Connected (Co-)$(\sigma,\rho)$-Dominating Set} problems are shown in Table \ref{tab:dom}.

Let $d(\bN):=0$ and for a finite or co-finite subset $\mu$ of $\bN$, let
\[ d(\mu):=1 + \min(\max(\mu),\max(\bN\setminus \mu)).  \]

The definition of $d$ is motivated by the following observation, which is due to the fact that, for all $\mu\subseteq \bN$, if $d(\mu)\in \mu$, then $\mu$ is co-finite and contains $\bN\setminus\{1,\dots,d(\mu)-1\}$.
\begin{fact}\label{fact:d}
	Let $A\subseteq V(G)$ and let $(\sigma,\rho)$ be a pair of finite or co-finite subsets of $\bN$.
	Let $d:=\max(1,d(\sigma),d(\rho))$.
	For all $X\subseteq A$ and $Y\subseteq \comp{A}$, $X\cup Y$ $(\sigma,\rho)$-dominates $A$ if and only if
	$\min(d,|N(v)\cap X| + |N(v)\cap Y|)$ belongs to $\sigma$ (resp., $\rho$) if $v\in X\cup Y$ (resp., $v\notin X\cup Y$).
\end{fact}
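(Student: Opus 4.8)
The plan is to reduce Fact~\ref{fact:d} to a single elementary claim about finite or co-finite subsets of $\bN$ and then to unwind the definitions.

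First I would establish the core claim: for every finite or co-finite $\mu\subseteq\bN$, every integer $d\geq\max(1,d(\mu))$, and every $m\in\bN$, one has $m\in\mu$ if and only if $\min(d,m)\in\mu$. If $m\leq d$ this is immediate since then $\min(d,m)=m$, so the only case to treat is $m>d$, where $\min(d,m)=d$ and it suffices to show that $m$ and $d$ lie on the same side of $\mu$. Here I would distinguish three cases. If $\mu=\bN$, then $d(\mu)=0$ and everything lies in $\mu$. If $\mu\neq\bN$ is finite, then $\bN\setminus\mu$ is infinite, so $d(\mu)=1+\max(\mu)$, and every integer $\geq d(\mu)$ --- in particular both $d$ and $m$ --- lies outside $\mu$. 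If $\mu$ is co-finite and $\mu\neq\bN$, then $\bN\setminus\mu$ is finite and nonempty, so $d(\mu)=1+\max(\bN\setminus\mu)$, and every integer $\geq d(\mu)$ --- again both $d$ and $m$ --- lies in $\mu$. This is precisely the content of the motivating observation stated just before Fact~\ref{fact:d}.

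Next I would unwind the domination condition. Fixing $X\subseteq A$ and $Y\subseteq\comp{A}$ and a vertex $v\in A$, I would note that since $X\subseteq A$ and $Y\subseteq\comp{A}$ are disjoint, $N(v)\cap(X\cup Y)$ is the disjoint union of $N(v)\cap X$ and $N(v)\cap Y$, so $|N(v)\cap(X\cup Y)|=|N(v)\cap X|+|N(v)\cap Y|$; and since $v\in A$, we have $v\in X\cup Y$ if and only if $v\in X$. Hence, by definition, $X\cup Y$ $(\sigma,\rho)$-dominates $A$ exactly when, for every $v\in A$, the number $|N(v)\cap X|+|N(v)\cap Y|$ belongs to $\sigma$ if $v\in X$ and to $\rho$ otherwise.

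Finally I would conclude: with $d:=\max(1,d(\sigma),d(\rho))$, we have $d\geq\max(1,d(\mu))$ for $\mu\in\{\sigma,\rho\}$, so for each $v\in A$ the core claim (applied with $\mu=\sigma$ when $v\in X$ and $\mu=\rho$ otherwise, and $m=|N(v)\cap X|+|N(v)\cap Y|$) lets me replace ``$m$ belongs to $\mu$'' by ``$\min(d,m)$ belongs to $\mu$'' in the characterization above, which is exactly the right-hand side of the fact. I do not anticipate any real obstacle here: the statement is a bookkeeping lemma, and the only points needing a little care are the conventions for $\max$ on infinite/co-finite sets in the definition of $d(\mu)$ and keeping the degenerate case $\mu=\bN$ separate.
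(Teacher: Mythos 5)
Your proposal is correct and is essentially a careful write-up of the one-line observation the paper gives immediately before the Fact (that for any finite or co-finite $\mu$, all integers at least $d(\mu)$ lie on the same side of $\mu$), combined with the routine bookkeeping that $N(v)\cap X$ and $N(v)\cap Y$ are disjoint and that $v\in X\cup Y$ iff $v\in X$ for $v\in A$. The paper does not supply a formal proof of this Fact, so there is nothing to diverge from; your case analysis on $\mu$ (with $\mu=\bN$, $\mu$ finite, $\mu$ co-finite $\neq\bN$) is exactly the intended argument.
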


As in \cite{Bui-XuanTV13}, we use the \emph{$d$-neighbor equivalence} relation to characterize the $(\sigma,\rho)$-domination of the partial solutions.

\begin{table}[!pb]
	\footnotesize
	\centering
	\renewcommand\arraystretch{1.5}
	\caption{ Examples of \textsc{(Co-)$(\sigma,\rho)$-Dominating Set} problems.
		To solve these problems, we use the $d$-neighbor equivalence with $d=\max\{1,d(\sigma),d(\rho)\}$.
		Column $d$ shows the value of $d$ for each problem.}\label{tab:dom}
	{\small
		\begin{tabular}{c|c|c|c|c}
			\hline
			$\sigma$ & $\rho$ & $d$ & Version & Standard name  \\\hline
			$\bN$ & $\bN^+$ & 1 & \textsc{Connected} & \textsc{Connected Dominating Set}  \\\hline
			$\{q\}$ & $\bN$ & $q$+1  &\textsc{Connected}  & \textsc{Connected Induced $q$-Regular Subgraph}  \\ \hline
			$\bN$ & $\{1\}$ & 2 & \textsc{Connected}  & \textsc{Connected Perfect Dominating Set}  \\ \hline
			$\{0\}$ & $\bN$ & 1 & \textsc{Connected Co}  & \textsc{Connected Vertex Cover}\\ \hline
	\end{tabular} }
	
\end{table}

In the rest of this section, we fix  $\sigma,\rho$ two (nonempty) finite or co-finite subsets of $\bN$, the integer $d:=\max\{1,d(\sigma),d(\rho)\}$, a graph $G$, and a rooted layout $(T,\delta)$ of $G$.
We present an algorithm that \pagebreak computes a maximum (or minimum) connected $(\sigma,\rho)$-dominating set of $G$ by a bottom-up traversal of $(T,\delta)$.
Its running time is $O(\snec_d(T,\delta)^{O(1)}\cdot n^3)$.
The same algorithm, with some small modifications, will be able to find a minimum Steiner tree or a maximum (or minimum) connected Co-$(\sigma,\rho)$-dominating set as well.
We will need the following lemma in our proof.

\begin{lemma}[{see \cite[Lemma~2]{Bui-XuanTV13}}]\label{lem:equivR'}
	Let $A\subseteq V(G)$.  Let $X\subseteq A$ and $Y,Y'\subseteq \comp{A}$ such that $Y\equi{\comp{A}}{d} Y'$. Then $(X\cup Y)$ $(\sigma,\rho)$-dominates $A$ if
	and only if $(X\cup Y')$ $(\sigma,\rho)$-dominates $A$.
\end{lemma}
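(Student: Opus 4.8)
The plan is to reduce everything to the characterization of $(\sigma,\rho)$-domination given in Fact~\ref{fact:d} and then to observe that the quantity appearing there is invariant under replacing $Y$ by a $d$-neighbor equivalent set $Y'$. The only subtlety is keeping track of the direction of the equivalence relation: $Y\equiv_{\comp{A}}^d Y'$ is defined by testing the vertices in $V(G)\setminus\comp{A}=A$, which is exactly the set over which we need control.

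\textbf{Key steps.} First, unfold the definition of $\equiv_{\comp{A}}^d$: since $Y,Y'\subseteq\comp{A}$ and $Y\equiv_{\comp{A}}^d Y'$, for every $v\in A$ we have $\min(d,|N(v)\cap Y|)=\min(d,|N(v)\cap Y'|)$. Second, fix $v\in A$ and set $a:=|N(v)\cap X|$, $b:=|N(v)\cap Y|$, $b':=|N(v)\cap Y'|$; I claim $\min(d,a+b)=\min(d,a+b')$. If $b\geq d$ and $b'\geq d$, then $a+b\geq d$ and $a+b'\geq d$, so both sides equal $d$. Otherwise, w.l.o.g.\ $b<d$; then $\min(d,b)=b=\min(d,b')$ forces $b'<d$ and $b'=b$, so $a+b=a+b'$ and the two minima are trivially equal. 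Third, apply Fact~\ref{fact:d}: with $d:=\max(1,d(\sigma),d(\rho))$, the set $X\cup Y$ $(\sigma,\rho)$-dominates $A$ if and only if for every $v\in A$, $\min(d,|N(v)\cap X|+|N(v)\cap Y|)$ lies in $\sigma$ when $v\in X$ and in $\rho$ when $v\notin X$. By the claim, this condition is literally the same statement for $Y$ and for $Y'$, vertex by vertex. Hence $X\cup Y$ $(\sigma,\rho)$-dominates $A$ iff $X\cup Y'$ does.

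\textbf{Main obstacle.} There is essentially no deep obstacle here; the one place to be careful is the bookkeeping of which vertex set the neighbor equivalence ``looks at'' — one must not confuse $\equiv_{\comp{A}}^d$ (tests vertices of $A$) with $\equiv_{A}^d$ (tests vertices of $\comp{A}$) — together with the routine but necessary case split in the $\min$ computation showing that capping at $d$ commutes with adding the common term $a=|N(v)\cap X|$. Everything else is immediate from Fact~\ref{fact:d}.
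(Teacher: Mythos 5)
Your proof is correct. Note that the paper itself gives no proof of this lemma --- it is stated with a citation to \cite{BuiXuanTV13} --- so there is no ``paper proof'' to compare against; but your argument is exactly the one the surrounding text is set up for: you invoke Fact~\ref{fact:d} (which is stated immediately before the lemma precisely to serve this purpose) and then check the elementary invariance $\min(d,b)=\min(d,b')\Rightarrow\min(d,a+b)=\min(d,a+b')$ for $a\geq 0$, which is the whole content. The case split on the min is complete and correct (either both $b,b'\geq d$, forcing both capped sums to equal $d$, or both $b,b'<d$, forcing $b=b'$), and you correctly track that $\equiv_{\comp{A}}^d$ tests over $A$, which is the vertex set relevant to domination of $A$. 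Nothing is missing.
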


For each node $x$ of $T$ and for each pair $(R,R')\in \Rep{V_x}{d}\times \Rep{\comp{V_x}}{d}$, we will compute a set of partial solutions $\cD_x[R,R']$ \emph{coherent} with $(R,R')$ that $(x,R')$-represents the set of all partial solutions coherent with $(R,R')$.
We say that a set $X\subseteq V_x$ is coherent with $(R,R')$ if $X\equi{V_x}{d} R$ and $X\cup R'$ $(\sigma,\rho)$ dominates $V_x$.
Observe that by Lemma \ref{lem:equivR'}, we have that $X\cup Y$ $(\sigma,\rho)$-dominates $V_x$ for all $Y\equi{\comp{V_x}}{d} R'$ and for all $X\subseteq V_x$ coherent with $(R,R')$.
We compute these sets by a bottom-up dynamic programming algorithm, starting at the leaves of $T$.
The computational steps are trivial for the leaves. For the internal nodes of $T$, we simply use the notion of $d$-$(R,R')$-compatibility and the merging operator.

By calling the function $\reduce$ defined in section \ref{sec:represents}, each set $\cD_x[R,R']$ contains at most $\snec_1(T,\delta)^2$ partial solutions.
If we want to compute a maximum (resp., minimum) connected $(\sigma,\rho)$-dominating set, we use the framework of section \ref{sec:represents} with $\opt=\max$ (resp., $\opt=\min$).
If $G$ admits a connected $(\sigma,\rho)$-dominating set, then a maximum (or minimum) connected $(\sigma,\rho)$-dominating set can be found by looking at the entry $\cD_r[\emptyset,\emptyset]$ with $r$ the root of $T$.

We begin by defining the sets of partial solutions for which we will compute representative sets.

\begin{definition}\label{def:tabdom}
	Let $x\in V(T)$. For all pairs $(R,R')\in \Rep{V_x}{d}\times \Rep{\comp{V_x}}{d}$, we let $\cA_x[R,R']:=\{ X\subseteq V_x \mid X\equi{V_x}{d} R \textrm{ and } X\cup R'\  (\sigma,\rho)\textrm{-dominates } V_x\}.$
\end{definition}

For each node $x$ of $V(T)$, our algorithm will compute a table $\cD_x$ that satisfies the following invariant.

\subsection*{Invariant}
For every $(R, R')\in \Rep{V_x}{d}\times \Rep{\comp{V_x}}{d}$, the set $\cD_x[R,R']$ is a subset of $\cA_x[R,R']$ of size at most $\snec_1(T,\delta)^2$ that $(x,R')$-represents $\cA_x[R,R']$.

Notice that, by the definition of $\cA_r[\emptyset,\emptyset]$ ($r$ being the root of $T$) and the definition of $(x,R')$-representativity, if $G$ admits a connected $(\sigma,\rho)$-dominating set, then $\cD_r[\emptyset,\emptyset]$ must contain a maximum (or minimum) connected $(\sigma,\rho)$-dominating set.

The following lemma provides an equality between the entries of the table $\cA_x$ and the entries of the tables $\cA_a$ and $\cA_b$ for each internal node $x\in V(T)$ with $a$ and $b$ as children.
We use this lemma to prove, by induction, that the entry $\cD_x[R,R']$ $(x,R')$-represents $\cA_x[R,R']$ for every $(R,R')\in \Rep{V_x}{d}\times \Rep{\comp{V_x}}{d}$.
Note that this lemma can be deduced from \cite{Bui-XuanTV13}.

\begin{lemma}\label{lem:basicDymProg}
	For all $(R,R')\in \Rep{V_x}{d} \times \Rep{\comp{V_x}}{d}$, we have
	\[ \cA_x[R,R']= \bigcup_{(A,A') \text{ and } (B,B') \text{ are } d\textrm{-}(R,R')\textrm{-compatible}}  \cA_a[A,A'] \otimes \cA_b[B,B'], \]
	where the union is taken over all $(A,A')\in \Rep{V_a}{d}\times\Rep{\comp{V_a}}{d}$ and $(B,B')\in \Rep{V_b}{d}\times \Rep{\comp{V_b}}{d}$.
\end{lemma}
\begin{proof}
	The lemma is implied by the two following claims.
	\begin{claim}\label{claim:upequi}
		For all $X\in\cA_x[R,R']$, there exist $d$-$(R,R')$-compatible pairs $(A,A')\in \Rep{V_a}{d}\times \Rep{\comp{V_a}}{d}$ and $(B,B')\in \Rep{V_b}{d}\times \Rep{\comp{V_b}}{d}$ such that $X\cap V_a \in \cA_a[A,A']$ and $X\cap V_b \in \cA_b[B,B']$.
	\end{claim}
	\begin{proof}
		Let $X\in \cA_x[R,R']$, $X_a:=X\cap V_a$, and $X_b:=X\cap V_b$.  Let $A:=\rep{V_a}{d}(X_a)$ and $A':=\rep{\comp{V_a}}{d}(X_b \cup R')$. Symmetrically, we define
		$B:=\rep{V_b}{d}(X_b)$ and $B':=\rep{\comp{V_b}}{d}(X_a\cup R')$.
		
		We claim that $X_a\in \cA_a[A,A']$.
		As $X\in\cA_x[R,R']$, we know, by Definition \ref{def:tabdom}, that $X \cup R'=X_a\cup X_b\cup R'$ is a  $(\sigma,\rho)$-dominating set of $V_x$.
		In particular, $X_a\cup (X_b\cup R')$ $(\sigma,\rho)$-dominates $V_a$.
		Since $A'\equi{\comp{V_a}}{d} X_b \cup R'$, by Lemma \ref{lem:equivR'}, we conclude that $X_a\cup A'$ $(\sigma,\rho)$-dominates $V_a$.
		As $A\equi{V_a}{d} X_a$, we have $X_a\in\cA_a[A,A']$.
		By symmetry, we deduce $B\in \cA_b[B,B']$.
		
		It remains to prove that $(A,A')$ and $(B,B')$ are $d$-$(R,R')$-compatible.
		\begin{itemize}
			\item By construction, we have $X_a \cup X_b = X  \equi{V_x}{d} R$. As $A\equi{V_a}{d} X_a$ and from Fact \ref{fact:equivbiggerset}, we have $A \cup X_b \equi{V_x}{d} R$. Since $B\equi{V_b}{d} X_b$, we deduce that $A\cup B \equi{V_x}{d} R$.
			\item By definition, we have $A'\equi{\comp{V_a}}{d} X_b\cup R'$. As $B \equi{V_b}{d} X_b$ and by Fact \ref{fact:equivbiggerset}, we have $A'\equi{\comp{V_a}}{d} B \cup  R'$. Symmetrically, we deduce that $B'\equi{\comp{V_b}}{d} R' \cup A$.
		\end{itemize}
		Thus, $(A,A')$ and $(B,B')$ are $d$-$(R,R')$-compatible.
	\end{proof}
	\begin{claim}\label{claim:downequi}
		For every $X_a\in\cA_a[A,A']$ and $X_b\in\cA_b[B,B']$ such that $(A,A')$ and $(B,B')$ are $d$-$(R,R')$-compatible, we have $X_a\cup X_b \in \cA_x[R,R']$.
	\end{claim}
	\begin{proof}
		Since $X_a\equi{V_a}{d} A$ and $X_b \equi{V_b}{d} B$, by Fact \ref{fact:equivbiggerset}, we deduce that $X_a \cup X_b \equi{V_x}{d} A\cup B$.
		Thus, by the definition of $d$-$(R,R')$-compatibility, we have $X_a\cup X_b \equi{V_x}{d} R$.
		
		It remains to prove that $X_a\cup X_b \cup R'$ $(\sigma,\rho)$-dominates $V_x$.
		Since $X_b\equi{V_b}{d} B$ and $A'\equi{\comp{V_a}}{d} B\cup R'$, we deduce from Fact~\ref{fact:equivbiggerset} that $X_b\cup R' \equi{\comp{V_a}}{d} A'$.
		As  $X_a\in\cA_a[A,A']$, we know that $X_a\cup A'$ $(\sigma,\rho)$-dominates $V_a$.
		Thus, from Lemma~\ref{lem:equivR'}, we conclude that $X_a\cup X_b \cup R'$ $(\sigma,\rho)$-dominates $V_a$.
		Symmetrically, we prove that $X_a\cup X_b \cup R'$ $(\sigma,\rho)$-dominates $V_b$. As $V_x =V_a \cup V_b$, we deduce that $X_a\cup X_b \cup R'$ $(\sigma,\rho)$-dominates $V_x$.
		Hence, we have $X_a\cup X_b \in \cA_x[R,R']$.
	\end{proof}
\end{proof}

We are now ready to prove the main theorem of this section.

\begin{theorem}\label{thm:dom}
	There exists an algorithm that, given an $n$-vertex graph $G$  and a rooted layout $(T,\delta)$ of $G$, computes a maximum (or minimum)  connected $(\sigma,\rho)$-dominating set in time $O(\snec_d(T,\delta)^{3}\cdot  \snec_1(T,\delta)^{ 2(\omega + 1)} \cdot \log(\snec_d(T,\delta)) \cdot n^3)$ with $d:=\max\{1, d(\sigma),d(\rho)\}$.
\end{theorem}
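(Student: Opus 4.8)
The plan is to build a bottom-up dynamic programming algorithm on $(T,\delta)$ that, for every node $x$ of $T$, computes a table $\cD_x$ indexed by $\Rep{V_x}{d}\times\Rep{\comp{V_x}}{d}$ and satisfies the Invariant of this section, and then to read the answer off the unique entry $\cD_r[\emptyset,\emptyset]$ at the root $r$. Indeed, $\Rep{V(G)}{d}=\Rep{\emptyset}{d}=\{\emptyset\}$, so there is exactly one entry at $r$, $\cA_r[\emptyset,\emptyset]$ is the set of all $(\sigma,\rho)$-dominating sets of $G$, and by the Invariant together with Definition~\ref{def:represent} we get $\best(\cD_r[\emptyset,\emptyset],\emptyset)=\best(\cA_r[\emptyset,\emptyset],\emptyset)$, which is exactly the optimum weight of a connected $(\sigma,\rho)$-dominating set of $G$; since $\cD_r[\emptyset,\emptyset]\subseteq\cA_r[\emptyset,\emptyset]$, any element of $\cD_r[\emptyset,\emptyset]$ of optimum weight inducing a connected graph \emph{is} such a set, and if $\best(\cD_r[\emptyset,\emptyset],\emptyset)=\opt(\emptyset)$ then $G$ admits no connected $(\sigma,\rho)$-dominating set. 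Before the traversal I would, once per node and using Lemma~\ref{lem:computenecd}, compute $\Rep{V_x}{d}$, $\Rep{\comp{V_x}}{d}$ and the data structure evaluating $\rep$.

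For a leaf $x$ (so $V_x=\{v\}$ with $\delta(v)=x$), the whole table is filled by brute force: $\cA_x[R,R']$ only involves the at most two subsets of $\{v\}$ and the $(\sigma,\rho)$-domination of the single vertex $v$, so setting $\cD_x[R,R']:=\cA_x[R,R']$ gives sets of size at most $2$ that trivially $(x,R')$-represent themselves. For an internal node $x$ with children $a,b$, I would iterate over all triples $(A,B,R')\in\Rep{V_a}{d}\times\Rep{V_b}{d}\times\Rep{\comp{V_x}}{d}$, and for each one set $A':=\rep_{\comp{V_a}}^d(B\cup R')$, $B':=\rep_{\comp{V_b}}^d(A\cup R')$ and $R:=\rep_{V_x}^d(A\cup B)$. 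The point (already implicit in \cite{BuiXuanTV13}) is that $\bigl((A,A'),(B,B')\bigr)$ is then the \emph{unique} pair that is $d$-$(R,R')$-compatible and yields the target $(R,R')$: in a $d$-$(R,R')$-compatible pair, the entry $A'$ is forced to be the canonical representative of $B\cup R'$, the entry $B'$ that of $A\cup R'$, and $R$ that of $A\cup B$. Hence this enumeration visits every (target, compatible pair) exactly once. For each triple I would add $\cD_a[A,A']\otimes\cD_b[B,B']$ to the accumulator of $\cD_x[R,R']$ and immediately call the function $\reduce$ of Theorem~\ref{thm:reduce1} on that accumulator, invoked with the $1$-neighbor class $\rep_{V_x}^1(R)$ (every set in the accumulator is $\equiv_{V_x}^d R$, hence $\equiv_{V_x}^1 \rep_{V_x}^1(R)$, by the $d$-$(R,R')$-compatibility and Fact~\ref{fact:equivbiggerset}); this keeps every accumulator of size at most $\snec_1(T,\delta)^2$, and of size $O(\snec_1(T,\delta)^4)$ right before each $\reduce$ call.

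For correctness, I would prove by induction on $T$ that $\cD_x$ satisfies the Invariant, the leaves giving the base case. Assuming it for $a$ and $b$: for every $d$-$(R,R')$-compatible pair, Lemma~\ref{lem:bigotimespreserve} shows that $\cD_a[A,A']\otimes\cD_b[B,B']$ $(x,R')$-represents $\cA_a[A,A']\otimes\cA_b[B,B']$ (its hypotheses hold since all sets of $\cA_a[A,A']$ are $d$-neighbor equivalent to $A$ over $V_a$, and all sets of $\cA_b[B,B']$ to $B$ over $V_b$); Fact~\ref{fact:unionpreserve} then shows, by induction on the steps of the accumulation, that after processing all triples with target $(R,R')$ the accumulator $(x,R')$-represents $\bigcup_{\text{compatible}}\cA_a[A,A']\otimes\cA_b[B,B']$, which by Lemma~\ref{lem:basicDymProg} equals $\cA_x[R,R']$; and each $\reduce$ call preserves $(x,R')$-representativity while shrinking the size, its output staying inside $\cA_x[R,R']$ because $\cD_a[A,A']\otimes\cD_b[B,B']\subseteq\cA_a[A,A']\otimes\cA_b[B,B']\subseteq\cA_x[R,R']$ (the inclusion direction of Lemma~\ref{lem:basicDymProg}, i.e.\ Claim~\ref{claim:downequi}). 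Thus $\cD_x$ satisfies the Invariant.

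For the running time: there are $O(n)$ nodes; computing the $\Rep{}{}$ tables costs $O(\snec_d(T,\delta)\cdot n^2\log\snec_d(T,\delta))$ per node by Lemma~\ref{lem:computenecd}; at an internal node there are at most $\nec_d(V_a)\cdot\nec_d(V_b)\cdot\nec_d(\comp{V_x})\leq\snec_d(T,\delta)^3$ triples $(A,B,R')$, and each one costs $O(n^2\log\snec_d(T,\delta))$ to compute the three representatives, $O(\snec_1(T,\delta)^4\cdot n)$ to form the merge, and $O(\snec_1(T,\delta)^4\cdot\snec_1(T,\delta)^{2(\omega-1)}\cdot n^2)=O(\snec_1(T,\delta)^{2(\omega+1)}\cdot n^2)$ for the $\reduce$ call by Theorem~\ref{thm:reduce1}; summing over the nodes yields the stated $O(\snec_d(T,\delta)^{3}\cdot\snec_1(T,\delta)^{2(\omega+1)}\cdot\log(\snec_d(T,\delta))\cdot n^3)$. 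The part that needs the most care is the interplay between the ``unique compatible pair'' observation (which is what keeps the per-node enumeration at $\snec_d^3$ rather than blowing up) and the bookkeeping ensuring that, at every merge, the hypotheses of Theorem~\ref{thm:reduce1} and Lemma~\ref{lem:bigotimespreserve} are met, so that $(x,R')$-representativity genuinely propagates from the leaves to the root; the rest follows the $(\sigma,\rho)$-domination algorithm of \cite{BuiXuanTV13}. Finally, the variants announced after the statement (\textsc{Node Weighted Steiner Tree}, connected co-$(\sigma,\rho)$-dominating set) are handled by the same algorithm after adapting only the local condition in the definition of $\cA_x[R,R']$.
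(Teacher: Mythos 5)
Your proof is correct and follows essentially the same approach as the paper's: same table definition, same Invariant, same use of Lemma~\ref{lem:basicDymProg}, Lemma~\ref{lem:bigotimespreserve}, Fact~\ref{fact:unionpreserve} and Theorem~\ref{thm:reduce1}, and the same ``unique compatible tuple per $(A,B,R')$'' observation to bound the enumeration by $\snec_d(T,\delta)^3$. The only (cosmetic) deviation is that you call $\reduce$ after every merge rather than first assembling the intermediate table $\cB_x[R,R']$ and reducing once per entry; your inductive use of Fact~\ref{fact:unionpreserve} correctly propagates representativity through this incremental accumulation, and the running-time accounting comes out the same.
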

\begin{proof}
	The algorithm is a usual bottom-up dynamic programming algorithm and computes for each node $x$ of $T$ the table $\cD_x$.
	
	The first step of our algorithm is to compute, for each $x\in V(T)$, the sets $\Rep{V_x}{d}$, $\Rep{\comp{V_x}}{d}$ and a data structure to compute $\rep{V_x}{d}(X)$ and $\rep{\comp{V_x}}{d}(Y)$, for any $X\subseteq V_x$ and any $Y\subseteq \comp{V_x}$, in time $O(\log(\snec_d(T,\delta)) \cdot n^2 )$.  As $T$ has $2n-1$ nodes, by Lemma \ref{lem:computenecd}, we can compute these sets and data structures in time $O(\snec_d(T,\delta)\cdot \log(\snec_d(T,\delta)) \cdot n^3)$.
	
	Let $x$ be a leaf of $T$ with $V_x=\{v\}$.
	Observe that, for all $(R,R')\in \Rep{d}{V_x}\times \Rep{d}{\comp{V_x}}$, we have $\cA_x[R,R'] \subseteq 2^{V_x} = \{\emptyset,  \{ v \}\}$.
	Thus, our algorithm can directly compute $\cA_x[R,R']$ and set $\cD_x[R,R']:=\cA_x[R,R']$.
	In this case, the invariant trivially holds.
	
	Now let $x$ be an internal node with $a$ and $b$ as children such that the invariant holds for $a$ and $b$.
	For each $(R,R')\in\Rep{V_x}{d}\times \Rep{\comp{V_x}}{d}$, the algorithm computes $\cD_x[R,R']:=\reduce(\cB_x[R,R'])$, where the set $\cB_x[R,R']$ is defined as follows:
	\[ \cB_x[R,R']:= \bigcup_{(A,A') \text{ and } (B,B') \text{ are } d\textrm{-}(R,R')\textrm{-compatible}}
	\cD_a[A,A'] \otimes \cD_b[B,B'],\]
	where the union is taken over all $(A,A')\in \Rep{V_a}{d}\times\Rep{\comp{V_a}}{d}$, and $(B,B')\in \Rep{V_b}{d}\times \Rep{\comp{V_b}}{d} $.
	
	We claim that the invariant holds for $x$. Let $(R,R')\in\Rep{V_x}{d}\times \Rep{\comp{V_x}}{d}$.
	
	We start by proving that the set $\cB_x[R,R']$ is an $(x,R')$-representative set of $\cA_x[R,R']$.
	By Lemma \ref{lem:bigotimespreserve}, for all $d$-$(R,R')$-compatible pairs $(A,A')$ and $(B,B')$, we have
	\[ \cD_a[A,A'] \otimes \cD_b[B,B']\ (x,R')\textrm{-represents } \cA_a[A,A'] \otimes \cA_b[B,B']. \]
	By Lemma \ref{lem:basicDymProg} and by construction of $\cD_x[R,R']$ and from Fact \ref{fact:unionpreserve}, we conclude that $\cB_x[R,R']$ $(x,R')$-represents $\cA_x[R,R']$.
	
	From the invariant, we have $\cD_a[A,A']\subseteq \cA_a[A,A']$ and $\cD_b[B,B']\subseteq \cA_b[B,B']$ for all $d$-$(R,R')$-compatible pairs $(A,A')$ and $(B,B')$.
	Thus, from Lemma \ref{lem:basicDymProg}, it is clear that by construction, we have $\cB_x[R,R']\subseteq \cA_x[R,R']$. Hence, $\cB_x[R,R']$ is a subset and an $(x,R')$-representative set of $\cA_x[R,R']$.
	
	Notice that, for each $X\in \cB_x[R,R']$, we have $X\equi{V_x}{d} R$. Thus, we can apply Theorem \ref{thm:reduce1} and the function $\reduce$ on $\cB_x[R,R']$.
	By  Theorem \ref{thm:reduce1}, $\cD_x[R,R']$ is a subset and an $(x,R')$-representative set of $\cB_x[R,R']$.
	Thus $\cD_x[R,R']$ is a subset of $\cA_x[R,R']$.
	Notice that the $(x,R')$-representativity is an equivalence relation and in particular it is transitive.
	Consequently, $\cD_x[R,R']$ $(x,R')$-represents $\cA_x[R,R']$.
	From Theorem \ref{thm:reduce1}, the size of $\cD_x[R,R']$ is at most $\nec_1(V_x)^2$ and we have that $\cD_x[R,R']\subseteq \cB_x[R,R']$.
	As $\nec_1(V_x)\leq \snec_1(T,\delta)$ and $\cB_x[R,R']\subseteq \cA_x[R,R']$, we conclude that the invariant holds for $x$.
	
	By induction, the invariant holds for all nodes of $T$.
	The correctness of the algorithm follows from the fact that $\cD_r[\emptyset,\emptyset]$ $(r,\emptyset)$-represents $\cA_r[\emptyset,\emptyset]$.
	
	\subsection*{Running time}
	Let $x$ be a node of $T$.
	Suppose first that $x$ is a leaf of $T$. Then $|\Rep{V_x}{d}|\leq 2$ and $|\Rep{\comp{V_x}}{d}|\leq d$. Thus, $\cD_x$ can be computed in time $O(d\cdot n)$.
	
	Assume now that $x$ is an internal node of $T$ with $a$ and $b$ as children.
	
	Notice that, by Definition \ref{def:compatibility}, for every $(A,B,R')\in\Rep{V_a}{d}\times\Rep{V_b}{d}\times\Rep{\comp{V_x}}{d}$, there exists only one tuple $(A',B',R)\in\Rep{\comp{V_a}}{d}\times\Rep{\comp{V_b}}{d}\times\Rep{V_x}{d}$ such that $(A,A')$ and $(B,B')$ are $d$-$(R,R')$-compatible. More precisely, you have to take $R=\rep{V_x}{d}(A\cup B)$, $A'=\rep{\comp{V_a}}{d}(R'\cup B)$, and $B'=\rep{\comp{V_b}}{d}(R'\cup A)$.
	Thus, there are at most $\snec_d(T,\delta)^{3}$ tuples $(A,A',B,B',R,R')$ such that $(A,A')$ and $(B,B')$ are $d$-$(R,R')$-compatible.
	It follows that we can compute the intermediary table $\cB_x$ by doing the following;
	\begin{itemize}
		\item Initialize each entry of $\cB_x$ to $\emptyset$.
		\item For each $(A,B,R')\in\Rep{V_a}{d}\times\Rep{V_b}{d}\times\Rep{\comp{V_x}}{d}$, compute $R':=\rep{V_x}{d}(A\cup B)$, $A'=\rep{\comp{V_a}}{d}(R'\cup B)$, and $B'=\rep{\comp{V_b}}{d}(R'\cup A)$.
		Then, update $\cB_x[R,R']:=\cB_x[R,R']\cup (\cD_a[A,A']\otimes \cD_b[B,B'])$.
	\end{itemize}
	Each call to the functions $\rep{V_x}{d}$, $\rep{\comp{V_a}}{d}$ and $\rep{\comp{V_b}}{d}$ takes $O(\log(\snec_d(T,\delta))\cdot n^2)$ time.
	We deduce that the running time to compute the entries of $\cB_x$ is
	\[              O\left(\snec_d(T,\delta)^3 \cdot \log(\snec_d(T,\delta))\cdot n^2 + \sum_{(R,R')\in\Rep{V_x}{d}\times \Rep{\comp{V_x}}{d} } |\cB_x[R,R']|\cdot n^2 \right). \]
	Observe that, for each $(R,R')\in \Rep{V_x}{d}\times \Rep{\comp{V_x}}{d}$, by Theorem \ref{thm:reduce1}, the running time to compute $\reduce(\cB_x[R,R'])$ from $\cB_x[R,R']$ is  $O(|\cB_x[R,R']|\cdot \snec_1(T,\delta)^{2(\omega -1)}\cdot n^2)$.
	Thus, the total running time to compute the table $\cD_x$ from the table $\cB_x$ is
	\begin{align}
		O\left(\sum_{(R,R')\in\Rep{V_x}{d}\times \Rep{\comp{V_x}}{d} }|\cB_x[R,R']| \cdot  \log(\snec_d(T,\delta)) \cdot \snec_1(T,\delta)^{2(\omega -1)}\cdot n^2\right).
	\end{align}
	For each $(A,A')$ and $(B,B')$, the size of $\cD_a[A,A'] \otimes \cD_b[B,B']$ is at most $|\cD_a[A,A']|\cdot |\cD_b[B,B']| \leq \snec_1(T,\delta)^4$.
	Since there are at most $\snec_d(T,\delta)^{3}$ pairs that are $d$-$(R,R')$-compatible, we can conclude that
	\[ \sum_{(R,R')\in\Rep{V_x}{d}\times \Rep{\comp{V_x}}{d} }|\cB_x[R,R']|\leq \snec_d(T,\delta)^3\cdot \snec_1(T,\delta)^4. \]
	From (5.1), we deduce that the entries of $\cD_x$ are computable in time,
	\[ O(\snec_d(T,\delta)^{3}\cdot \snec_1(T,\delta)^{ 2(\omega + 1)}\cdot  \log(\snec_d(T,\delta)) \cdot n^2). \]
	Since $T$ has $2 n-1$ nodes, the running time of our algorithm is $O(\snec_d(T,\delta)^{3}\cdot\snec_1(T,\delta)^{ 2(\omega + 1)}\cdot \log(\snec_d(T,\delta)) \cdot n^3)$.
\end{proof}

As a corollary, we can solve in time $\snec_1(T,\delta)^{(2\omega + 5)}\cdot  \log(\snec_1(T,\delta)) \cdot n^{3}$ the \textsc{Node-Weighted Steiner Tree} problem that asks, given a subset of vertices $K\subseteq V(G)$ called \emph{terminals}, a subset $T$ of minimal weight such that $K\subseteq T \subseteq V(G)$ and $G[T]$ is connected.

\begin{corollary}\label{col:steiner}
	There exists an algorithm that, given an $n$-vertex graph $G$, a subset $K\subseteq V(G)$, and  a rooted layout $(T,\delta)$ of $G$,  computes a minimum node-weighted Steiner tree for $(G,K)$ in time $O(\snec_1(T,\delta)^{(2\omega + 5)}\cdot\log(\snec_1(T,\delta)) \cdot n^{3})$.
\end{corollary}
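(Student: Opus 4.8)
The plan is to obtain this as a black‑box consequence of Theorem \ref{thm:dom} via a simple weight‑shifting reduction. The key observation is that a minimum node‑weighted Steiner tree for $(G,K)$, as defined in the statement, is exactly a vertex set of minimum weight that is connected and contains $K$; and a connected vertex set is the same thing as a connected $(\sigma,\rho)$-dominating set when $\sigma=\rho=\bN$, because in that case the $(\sigma,\rho)$-domination condition is satisfied by every subset of $V(G)$. So the only feature of the Steiner problem not directly captured by \textsc{Connected $(\bN,\bN)$-Dominating Set} is the requirement that the solution contain the prescribed terminal set $K$.

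First I would record the bookkeeping: with $\sigma=\rho=\bN$ one has $d(\bN)=0$, hence the constant $d:=\max\{1,d(\sigma),d(\rho)\}$ of Theorem \ref{thm:dom} equals $1$; this is precisely what makes the running time of that theorem collapse from $O(\snec_d(T,\delta)^{3}\cdot\snec_1(T,\delta)^{2(\omega+1)}\cdot\log(\snec_d(T,\delta))\cdot n^{3})$ down to $O(\snec_1(T,\delta)^{2\omega+5}\cdot\log(\snec_1(T,\delta))\cdot n^{3})$, the bound claimed here. I would also dispose of the trivial cases up front: if $K=\emptyset$ the problem is degenerate, and if the vertices of $K$ do not all lie in one connected component of $G$ then no Steiner tree exists; both are detected in $O(n^{2})$ time by computing $\cc(G)$, after which a feasible solution (e.g.\ the connected component containing $K$) is guaranteed to exist.

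The substantive step is to force the terminals into the computed solution by running the algorithm of Theorem \ref{thm:dom} on $G$, the given layout $(T,\delta)$, with $\sigma=\rho=\bN$ and $\opt=\min$, but replacing the weight function $\w$ by $\w'$ obtained from $\w$ by subtracting a large constant $W$ from the weight of every vertex of $K$ (concretely, $W$ strictly larger than twice the sum of the absolute values of all original vertex weights). The routine calculation then shows that omitting even a single terminal costs at least $W$, which dominates any possible saving on the remaining vertices, so every connected vertex set of minimum $\w'$-weight contains all of $K$; and on the family of connected sets containing $K$ the functions $\w'$ and $\w$ differ by the fixed constant $|K|\cdot W$, so a $\w'$-minimizer among them is a $\w$-minimizer, i.e.\ a minimum node‑weighted Steiner tree. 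The algorithm of Theorem \ref{thm:dom} returns such a set (one checks, as a sanity step, that it contains $K$), and reading it off completes the argument. I do not expect any genuine obstacle: all the work is already done in Theorem \ref{thm:dom}, and the only points demanding a little care are the correct size of the penalty $W$ and the convention that the empty graph is not connected (so that the $K=\emptyset$ case, if one insists on treating it, is handled separately by returning a cheapest single vertex).
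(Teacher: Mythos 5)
Your proof is correct, but it takes a genuinely different route from the paper. The paper modifies the algorithm internally: it redefines the table so that
\[ \cA_x[R,R'] := \{\, X\subseteq V_x \mid X\equiv^{1}_{V_x} R,\ K\cap V_x \subseteq X,\ \text{and}\ X\cup R'\ (\bN,\bN)\text{-dominates } V_x \,\},\]
and observes that this only changes the base case at the leaves of $T$ associated with terminals (a leaf $\ell$ with $\delta^{-1}(\ell)\in K$ must put the terminal into every partial solution); the invariant, Lemma~\ref{lem:basicDymProg}, the merging step, and $\reduce$ then go through verbatim with $d=1$. You instead treat Theorem~\ref{thm:dom} entirely as a black box and encode the containment constraint by a big-$M$ weight shift on $K$. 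Both arguments are sound and give the same running time $O(\snec_1(T,\delta)^{2\omega+5}\cdot\log(\snec_1(T,\delta))\cdot n^{3})$ from $d=1$. The paper's version stays self-contained within the framework and does not rely on the weight domain admitting an arbitrarily large penalty constant; your version avoids opening the dynamic program, at the usual cost of choosing and justifying the penalty magnitude. One small imprecision on your side: in the degenerate case $K=\emptyset$ you propose returning a cheapest single vertex, but since $\w$ ranges over $\bQ$ (possibly negative), a larger connected set may be strictly cheaper; in fact the unmodified run of Theorem~\ref{thm:dom} with $\w'=\w$ already handles that case correctly, so no special treatment is needed there.
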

\begin{proof}
	Observe that a Steiner tree is a minimum connected $(\bN,\bN)$-dominating set of $G$ that contains $K$.
	Thus, it is sufficient to change the definition of the table $\cA_x$ as follows.
	Let $x\in V(T)$. For all $(R,R')\in \Rep{V_x}{1}\times \Rep{\comp{V_x}}{1}$, we define $\cA_x[R,R']\subseteq V_x$    as follows:
	\[ \cA_x[R,R']:=\{ X\subseteq V_x \mid X\equi{V_x}{d} R \textrm{, } K\cap V_x \subseteq X \text{ and } X\cup R'\ (\bN,\bN)\text{-dominates } V_x \}.\]
	Notice that this modification will just modify the way we compute the table $\cD_x$ when $x$ is a leaf of $T$ associated with a vertex in $K$.
	With this definition of $\cA_x$ and by Definition \ref{def:represent} of $(x,R')$-representativity, if $G$ contains an optimal solution, then $\cD_r[\emptyset,\emptyset]$ contains an optimal solution of $G$.
	The running time comes from the running time of Theorem \ref{thm:dom} with $d=1$.
\end{proof}

Observe that Corollary \ref{col:steiner} simplifies and generalizes the algorithm from \cite{BodlaenderCKN15} for the \textsc{Edge-Weighted Steiner Tree} problem.
Indeed, the incidence graph of a graph of tree-width $k$ has tree-width at most $k+1$ and one can reduce the computation of an edge-weighted Steiner tree on a graph to the
computation of a node-weighted Steiner tree on its incidence graph \cite{GuhaK98}.

With few modifications, we can easily deduce an algorithm to compute a maximum (or minimum) connected Co-$(\sigma,\rho)$-dominating set.

\begin{corollary}\label{col:codom}
	There exists an algorithm that, given an $n$-vertex graph $G$ and a rooted layout $(T,\delta)$ of $G$, computes a maximum (or minimum) connected Co-$(\sigma,\rho)$-dominating set  in time $O(\snec_d(T,\delta)^{3}\cdot \snec_1(T,\delta)^{(2\omega + 5)}\cdot \log(\snec_d(T,\delta)) \cdot n^{3})$ with $d:=\max\{1, d(\sigma),d(\rho)\}$.
\end{corollary}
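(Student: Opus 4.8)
The plan is to reuse the dynamic programming scheme of Theorem~\ref{thm:dom} almost verbatim, leaving the connectivity layer untouched and only adapting the bookkeeping of the domination constraint. First I would record the elementary fact that $D\subseteq V(G)$ is a connected co-$(\sigma,\rho)$-dominating set if and only if $G[D]$ is connected and $\comp{D}$ $(\sigma,\rho)$-dominates $V(G)$. Since the connectivity requirement still bears on the chosen set $D$ itself, the whole machinery of Section~\ref{sec:represents} --- the relation of $(x,R')$-representativity, the operator $\reduce$ of Theorem~\ref{thm:reduce1}, Fact~\ref{fact:unionpreserve}, and Lemma~\ref{lem:bigotimespreserve} --- applies with no modification once we let a partial solution $X\subseteq V_x$ stand for $D\cap V_x$, let the completions range over the $1$-neighbor equivalence class over $\comp{V_x}$ of a representative $R'$ playing the role of $D\cap\comp{V_x}$, and use $\opt=\max$ (resp.\ $\opt=\min$) for the maximization (resp.\ minimization) version.

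The genuine change concerns only the domination test. For $X\subseteq V_x$ and a completion $Y\subseteq\comp{V_x}$, whether $\comp{(X\cup Y)}$ $(\sigma,\rho)$-dominates $V_x$ is governed, through Fact~\ref{fact:d}, by the capped quantities $\min\bigl(d,|N(v)\cap(V_x\setminus X)|+|N(v)\cap(\comp{V_x}\setminus Y)|\bigr)$ for $v\in V_x$, where $d=\max\{1,d(\sigma),d(\rho)\}$. The second summand is \emph{not} determined by the $d$-neighbor equivalence class of $Y$ over $\comp{V_x}$, only by that of the complement $\comp{V_x}\setminus Y$; this is precisely why one cannot hope for a cheaper parameter than $\snec_d$ here. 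I would therefore enlarge the index of the tables so that, besides the $1$-neighbor class of $X$ over $V_x$ (needed so that $\reduce$ is applicable to each bucket) and the $1$-neighbor class of $Y$ over $\comp{V_x}$ (the ``$R'$'' of the connectivity layer), it also records the $d$-neighbor equivalence classes of $V_x\setminus X$ over $V_x$ and of $\comp{V_x}\setminus Y$ over $\comp{V_x}$. The coherence condition for this augmented index then reads: $X$ lies in the prescribed $1$-neighbor class, $V_x\setminus X$ lies in the prescribed $d$-neighbor class, and $\comp{(X\cup S')}$ $(\sigma,\rho)$-dominates $V_x$, where $S'$ is the representative carrying the prescribed $d$-neighbor class of $\comp{V_x}\setminus Y$. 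The analogue of Lemma~\ref{lem:equivR'} --- that this last condition depends only on the listed classes --- follows from $\min(d,a+b)=\min\bigl(d,\min(d,a)+\min(d,b)\bigr)$ together with the observation that, for $v\in V_x$, the value $|N(v)\cap\comp{V_x}|$ is fixed, so the $d$-neighbor class of $\comp{V_x}\setminus Y$ over $\comp{V_x}$ pins down $\min\bigl(d,|N(v)\cap(\comp{V_x}\setminus Y)|\bigr)$.

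With these definitions in place, I would replay the proof of Theorem~\ref{thm:dom} step by step. The decomposition identity of Lemma~\ref{lem:basicDymProg} has to be re-established for the new tables; its proof goes through by applying Fact~\ref{fact:equivbiggerset} not only to the ``direct'' pieces $X_a,X_b,Y$ but also to their complements, using $V_x\setminus X=(V_a\setminus X_a)\cup(V_b\setminus X_b)$ and $\comp{V_a}\setminus(X_b\cup Y)=(V_b\setminus X_b)\cup(\comp{V_x}\setminus Y)$ to see how the new complement-classes compose across a merge. One then calls $\reduce$ on each resulting bucket exactly as in Theorem~\ref{thm:dom}, which keeps every table of size at most $\snec_1(T,\delta)^2$; correctness follows since $(x,R')$-representativity (with $R'$ the $1$-neighbor coordinate of the completion) is preserved by the union (Fact~\ref{fact:unionpreserve}) and by the merge (the straightforward analogue of Lemma~\ref{lem:bigotimespreserve} for the augmented compatibility notion). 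Each of the two new coordinates is drawn from a set of size at most $\snec_d(T,\delta)$, and --- as in Theorem~\ref{thm:dom} --- the child-side coordinates of a compatible tuple determine the parent-side ones, so the merge still ranges over $\snec_d(T,\delta)^{O(1)}$ compatible tuples; pushing the extra factors through the running-time analysis of Theorem~\ref{thm:dom} yields the claimed $O(\snec_d(T,\delta)^{3}\cdot\snec_1(T,\delta)^{2\omega+5}\cdot\log(\snec_d(T,\delta))\cdot n^{3})$ bound. At the root $r$ one has $V_r=V(G)$, so both complement-coordinates are empty and the domination condition of the relevant entry of $\cD_r$ is literally ``$\comp{D}$ $(\sigma,\rho)$-dominates $V(G)$''; hence that entry contains an optimal solution.

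The part I expect to require the most care is checking that the augmented state is simultaneously sufficient for \emph{both} constraints without inflating the running time beyond $\snec_d(T,\delta)^{O(1)}$: one must verify that within every bucket all partial solutions do share a common $1$-neighbor class over $V_x$ (this is what forces the $1$-neighbor class of $X$ into the index, even though it plays no role in the domination test), and one must check that the $d$-neighbor classes of the complement pieces really do compose correctly under the merge, since in the parent's complement a child's direct piece ($V_b\setminus X_b$, viewed as a subset of $\comp{V_a}$) interacts both with the sibling's complement piece and with the outside completion.
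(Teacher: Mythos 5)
Your proposal matches the paper's proof almost exactly: you introduce the same four-coordinate index (the $d$-neighbor classes of $V_x\setminus X$ over $V_x$ and of the expected complement completion over $\comp{V_x}$ for the domination constraint, plus the $1$-neighbor classes of $X$ over $V_x$ and of the expected completion over $\comp{V_x}$ for the connectivity layer), invoke the same analogue of Lemma~\ref{lem:equivR'} on the complemented pieces, and replay Theorem~\ref{thm:dom} with the augmented compatibility notion to obtain the same running time. The only quibbles are cosmetic: ``$\comp{(X\cup S')}$ $(\sigma,\rho)$-dominates $V_x$'' should read ``$(V_x\setminus X)\cup S'$ $(\sigma,\rho)$-dominates $V_x$,'' and the extra factor $\snec_1(T,\delta)^{3}$ in the running time (taking $2(\omega+1)$ to $2\omega+5$) comes specifically from the additional compatible triples contributed by the two new $1$-neighbor coordinates, a point you gesture at but do not pin down.
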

\begin{proof}
	To find a maximum (or minimum) Co-$(\sigma,\rho)$-dominating set, we need to modify the definition of the table $\cA_x$, the invariant, and the computational steps of the algorithm from Theorem \ref{thm:dom}.
	For each vertex $x\in V(T)$, we define the set of indices of our table $\cD_x$ as $\bI_x:=\Rep{V_x}{d}\times  \Rep{\comp{V_x}}{d} \times \Rep{V_x}{1}\times \Rep{\comp{V_x}}{1}$.
	
	For all $(R,R',\comp{R},\comp{R}')\in\bI_x$, we define $\cA_x[R,R',\comp{R},\comp{R}']\subseteq 2^{V_x}$ as the following set:
	{  \[\{ X\subseteq V_x \mid X\equi{V_x}{1} \comp{R}, \ (V_x\setminus X) \equi{V_x}{d} R \textrm{ and } (V_x\setminus X)\cup R' \ (\sigma,\rho) \textrm{-dominates } V_x  \}.\]}%
	It is worth noticing that the definition of $\cA_x$ does not depend on $\comp{R}'$; it is just more convenient to write the proof this way in order to obtain an algorithm similar to the one from Theorem \ref{thm:dom}.
	
	Similarly to Theorem \ref{thm:dom}, for each node $x$ of $V(T)$, our algorithm will compute a table $\cD_x$ that satisfies the following invariant.
	
	\subsection*{Invariant}
	For every $(R,R',\comp{R},\comp{R}')\in \bI_x$, the set $\cD_x[R,R',\comp{R},\comp{R}']$ is a subset of $\cA_x[R,R',\comp{R},\comp{R}']$ of size at most $\snec_1(T,\delta)^2$ that $(x,\comp{R}')$-represents $\cA_x[R,R',\comp{R},\comp{R}']$.

	Intuitively, we use $\comp{R}$ and $\comp{R}'$ to deal with the connectivity constraint of the Co-$(\sigma,\rho)$-dominating set and $R$ and $R'$ for the $(\sigma,\rho)$-domination.
	
	The following claim adapts Lemma \ref{lem:basicDymProg} to the Co-$(\sigma,\rho)$-dominating set case.
	\begin{claim}\label{claim:co-connected}
		Let $x$ be an internal node of $T$ with $a$ and $b$ as children.
		For all $(R,R',\comp{R},\comp{R}')\in \bI_x$, we have
		\[ \cA_x[R,R',\comp{R},\comp{R}']:=\!\!\! \bigcup_{\substack{(A,A') \text{ and } (B,B') \text{ are } d\textrm{-}(R,R')\textrm{-compatible}\\
				(\comp{A},\comp{A'}) \text{ and } (\comp{B},\comp{B'}) \text{ are }  1\textrm{-}(\comp{R},\comp{R}')\textrm{-compatible}}}
		\!\!\!\!\!\!\!  \cA_a[A,A',\comp{A},\comp{A'}] \otimes \cA_b[B,B',\comp{A},\comp{A'}],\]
		where the union is taken over all $(A,A',\comp{A},\comp{A}')\in\bI_a$ and $(B,B',\comp{B},\comp{B}')\in\bI_b$.
	\end{claim}
	
	The proof of this claim follows from the proof of Lemma \ref{lem:basicDymProg}.  With these modifications, it is straightforward to check that the algorithm of Theorem \ref{thm:dom} can be adapted to compute a minimum or maximum connected Co-$(\sigma,\rho)$-dominating set of $V(G)$.
	With the same analysis as in Theorem \ref{thm:dom}, one easily deduces that the running time of this modified algorithm is
	$O(\snec_d(T,\delta)^{3}\cdot \snec_1(T,\delta)^{(2\omega + 5)}\cdot \log(\snec_d(T,\delta)) \cdot n^{3})$.
\end{proof}

\section{Acyclic variants of {\sc(Connected)} \boldmath$(\sigma,\rho)$-{\sc Dominating Set}}\label{sec:maxinducedtree}

We call \textsc{AC-$(\sigma,\rho)$-Dominating Set} (resp., \textsc{Acyclic $(\sigma,\rho)$-Dominating Set}) the family of problems which consists in finding a subset
$X\subseteq V(G)$ of maximum (or minimum) weight such that $X$ is a $(\sigma,\rho)$-dominating set of $G$ and  $G[X]$ is a tree (resp., a forest).
Some examples of famous problems which belong to these families of problems are presented in Table \ref{tab:exampleACPb}.

\begin{table}[ht]
	\footnotesize
	\centering
	\renewcommand\arraystretch{1.5}
	\caption{Examples of \textsc{AC-$(\sigma,\rho)$-Dominating Set} problems and \textsc{Acyclic $(\sigma,\rho)$-Dominating Set} problems.
		To solve these problems, we use the $d$-neighbor equivalence with $d:=\max \{2,d(\sigma),d(\rho)\}$.
		Column $d$ shows the value of $d$ for each problem.}\label{tab:exampleACPb}
	{\small
		\begin{tabular}{c|c|c|c|c}
			\hline
			$\sigma$ & $\rho$ & $d$ & Version & Standard name  \\\hline
			$\bN$ & $\bN$ & 2 & \sc{AC} & \textsc{Maximum Induced Tree}  \\\hline
			$\bN$ & $\bN$ & 2  & \sc{Acyclic} & \textsc{Maximum Induced Forest}  \\ \hline
			$\{1,2\}$ & $\bN$ & 3 & \sc{AC} & \textsc{Longest Induced Path}  \\ \hline
			$\{1,2\}$ & $\bN$ & 3 & \sc{Acyclic} & \textsc{Maximum Induced Linear Forest}\\
			\hline
	\end{tabular} }
\end{table}

In this section, we present an algorithm that solves any \textsc{AC-$(\sigma,\rho)$-Dominating Set} problem.
Unfortunately, we were not able to obtain an algorithm whose running time is polynomial in $n$ and the $d$-neighbor-width of the given layout (for some constant $d$).
But, for the other parameters, by using their respective properties, we get the running times presented in Table \ref{tab:mit} which are roughly the same as those in the previous section.\
Moreover, we show, via a polynomial reduction, that we can use our algorithm for \textsc{AC-$(\sigma,\rho)$-Dominating Set} problems (with some modifications)  to solve any \textsc{Acyclic $(\sigma,\rho)$-Dominating Set} problem.

\begin{table}[ht]
	\footnotesize
	\centering
	\caption{Upper bounds on the running times of our algorithm for \textsc{AC-$(\sigma,\rho)$-Dominating Set }problems with $\cL=(T,\delta)$ and $d:=\max\{2, d(\sigma), d(\rho)\}$.}
	\label{tab:mit}
	{\small         \begin{tabular}{c|c}
			\hline
			Parameter & Running time  \\
			\hline
			Module-width & $O((d+1)^{3\mw(\cL)}\cdot 2^{ (2\omega + 3) \mw(\cL)} \cdot \mw(\cL)\cdot n^4)$ \\
			\hline
			Rank-width  & $O( 2^{(2\omega +3d + 4) \rw(\cL)^2}\cdot \rw(\cL)\cdot n^4 ) $ \\
			\hline
			$\mathbb{Q}$-rank-width  & $O( (d\cdot\Qrw(\cL))^{(2\omega + 5)\Qrw(\cL)}\cdot \Qrw(\cL)\cdot n^4 ) $  \\
			\hline
			Mim-width& $O(n^{(2\omega + 3d + 4  )  \mim(\cL)  +4}\cdot \mim(\cL))$ \\
			\hline
	\end{tabular}}
\end{table}

Let us first explain why we cannot use the same trick as in \cite{BodlaenderCKN15} on the algorithms of section \ref{sec:dom} to ensure the acyclicity, that is classifying the partial solutions $X$---associated with a node $x\in V(T)$---with respect to $|X|$ and $|E(G[X])|$.
Indeed, for two sets $X,W\subseteq V_x$ with $|X|=|W|$ and $|E(G[X])|=|E(G[W])|$, we have $|E(G[X\cup Y])| =|E(G[W\cup Y])|$, for all $Y\subseteq \comp{V_x}$, if and only if $X\equi{V_x}{n} W$.
Hence, the trick used in \cite{BodlaenderCKN15} would imply to classify the partial solutions with respect to their $n$-neighbor equivalence class.
But, the upper bounds we have on $\nec_n(V_x)$ with respect to module-width, ($\bQ$-)rank-width would lead to an \XP algorithm.
In fact, for every $k\in\bN$ and every $n\geq 2k$, one can construct an $n$-vertex bipartite graph $H_k[A,\comp{A}]$ where $\mw(A)=k$ and $\nec_n(A)=(n/\mw(A))^{\mw(A)}$
(see Figure \ref{fig:H_k} in Appendix \ref{appendix}).
Since both $\rw(A)$ and $\Qrw(A)$ are upper bounded by $\mw(A)$, we deduce that using the trick of \cite{BodlaenderCKN15} would give, for each $\f\in \{\mw,\rw,\Qrw\}$, an $n^{\Omega(\f(T,\delta))}$ time algorithm.

In the following, we introduce some new concepts that extend the framework designed in section \ref{sec:represents} in order to manage acyclicity.
All along, we give intuitions on these concepts through a concrete example: \textsc{Maximum Induced Tree}.
Finally, we present the algorithms for the \textsc{AC-$(\sigma,\rho)$-Dominating Set} problems and the algorithms for \textsc{Acyclic $(\sigma,\rho)$-Dominating Set} problems.

We start by defining a new notion of representativity to deal with the acyclicity constraint.
This new notion of representativity is defined \wrtn the 2-neighbor equivalence class of a set $R'\subseteq \comp{V_x}$.
We consider 2-neighbor equivalence classes instead of 1-neighbor equivalence classes in order to manage the acyclicity (see the following explanations).
Similarly to section \ref{sec:represents}, every concept introduced in this section is defined with respect to a node $x$ of $T$ and a set $R'\subseteq \comp{V_x}$.
To simplify this section, we fix a node $x$ of $T$ and $R'\subseteq \comp{V_x}$.
In our algorithm, $R'$ will always belong to $\Rep{\comp{V_x}}{d}$ for some $d\in\bN^+$ with $d\geq 2$.
For \textsc{Maximum Induced Tree} $d=2$ is enough and, in general, we use $d:=\max\{2,d(\sigma),d(\rho)\}$.

The following definition extends Definition \ref{def:represent} of section \ref{sec:represents} to deal with  the acyclicity.
We let $\opt \in \{\min, \max \}$; if we want to solve a maximization (or minimization) problem, we use $\opt=\max$ (or $\opt=\min$).

\begin{definition}[$(x,R')^\acy$-representativity]\label{def:acy-represent}
	For every $\cA \subseteq 2^{V(G)}$ and $Y\subseteq V(G)$, we define
	\[ \best(\cA,Y)^\acy:=\opt\{\w(X) \mid X\in \cA \text{ and } G[X\cup Y] \text{ is a tree}\}. \]
	
	Let $\cA,\cB\subseteq 2^{V_x}$. We say that $\cB$ $(x,R')^\acy$-represents $\cA$ if, for every $Y\subseteq \comp{V_x}$ such that $Y\equi{\comp{V_x}}{2} R'$, we have $\best^\acy(\cA,Y)=\best^\acy(\cB,Y)$.
\end{definition}

When $\cA=\emptyset$ or there is no $X\in \cA$ such that $G[X\cup Y]$ is a tree, we have $\best(\cA,Y)=\opt(\emptyset)$ and this equals $-\infty$ when $\opt=\max$ and $+\infty$ when $\opt=\min$.

As for the $(x,R')$-representativity, we need to prove that the operations we use in our algorithm preserve the $(x,R')^\acy$-representativity.
The following fact follows from Definition \ref{def:acy-represent} of $(x,R')^\acy$-representativity.

\begin{fact}\label{fact:acyunionpreserve}
	If $\cB$ and $\cD$ $(x,R')^\acy$-represents, respectively, $\cA$ and $\cC$, then $\cB\cup \cD$ $(x,R')^\acy$-represents $\cA\cup \cC$.
\end{fact}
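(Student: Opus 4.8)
The plan is to simply unfold Definition \ref{def:acy-represent} and exploit the fact that $\best^\acy$ turns a union of two families into the $\opt$ of the corresponding two values. Concretely, for any $Y\subseteq V(G)$ and any $\cA,\cC\subseteq 2^{V_x}$ one has
\[ \best^\acy(\cA\cup\cC,Y) = \opt\bigl(\best^\acy(\cA,Y),\,\best^\acy(\cC,Y)\bigr), \]
because the set $\{\w(X)\mid X\in\cA\cup\cC,\ G[X\cup Y]\text{ is a tree}\}$ is the union of the corresponding sets for $\cA$ and for $\cC$, and the $\opt$ (be it $\min$ or $\max$) of a union of two sets is the $\opt$ of the two individual $\opt$'s. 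This identity holds even in the degenerate cases where one family is empty or contains no set inducing a tree with $Y$, thanks to the conventions $\min(\emptyset)=+\infty$ and $\max(\emptyset)=-\infty$ fixed in Section \ref{sec:prelim}.

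With that identity in hand, I would fix an arbitrary $Y\subseteq\comp{V_x}$ with $Y\equiv^{2}_{\comp{V_x}}R'$. The hypothesis that $\cB$ $(x,R')^\acy$-represents $\cA$ gives $\best^\acy(\cB,Y)=\best^\acy(\cA,Y)$, and similarly $\best^\acy(\cD,Y)=\best^\acy(\cC,Y)$. Applying the displayed identity twice then yields
\[ \best^\acy(\cB\cup\cD,Y) = \opt\bigl(\best^\acy(\cB,Y),\best^\acy(\cD,Y)\bigr) = \opt\bigl(\best^\acy(\cA,Y),\best^\acy(\cC,Y)\bigr) = \best^\acy(\cA\cup\cC,Y). \]
Since this holds for every such $Y$, Definition \ref{def:acy-represent} lets us conclude that $\cB\cup\cD$ $(x,R')^\acy$-represents $\cA\cup\cC$.

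There is essentially no obstacle here: the argument is the verbatim analogue of the one behind Fact \ref{fact:unionpreserve}, with ``$G[X\cup Y]$ is connected'' replaced by ``$G[X\cup Y]$ is a tree'' and $\equiv^1_{\comp{V_x}}$ replaced by $\equiv^2_{\comp{V_x}}$. The only point that deserves a word of care is the boundary behaviour of $\opt$ on empty sets, which is why I invoke the $\min/\max$ conventions explicitly.
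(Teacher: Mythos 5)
Your proof is correct and is exactly the routine unfolding of Definition \ref{def:acy-represent} that the paper has in mind when it states the fact without a written proof (mirroring Fact \ref{fact:unionpreserve}). The identity $\best^\acy(\cA\cup\cC,Y)=\opt(\best^\acy(\cA,Y),\best^\acy(\cC,Y))$ and the remark about the $\min/\max$ conventions on empty sets are precisely the right points to make explicit.
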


The following lemma is an adaptation of Lemma \ref{lem:bigotimespreserve} to the notion of $(x,R')^\acy$-representativity.
The proof is almost the same as the one of Lemma \ref{lem:bigotimespreserve}.
We refer to Definition \ref{def:compatibility} for the notion of $d$-$(R,R')$-compatibility.

\begin{lemma}\label{lem:acybigotimespreserve} Let $d\in\bN^+$ such that $d\geq 2$.
	Suppose that $x$ is an internal node of $T$ with $a$ and $b$ as children.
	Let $R\in \Rep{V_x}{d}$. Let $(A,A')\in\Rep{V_a}{d}\times\Rep{\comp{V_a}}{d}$ and $(B,B')\in\Rep{V_b}{d}\times\Rep{\comp{V_b}}{d}$ that are $d$-$(R,R')$-compatible.
	Let $\cA\subseteq 2^{V_a}$ such that, for all $X\in \cA$, we have $X\equi{V_a}{d} A$ and let
	$\cB\subseteq 2^{V_b}$ such that, for all $W\in \cB$, we have $W\equi{V_b}{d} B$.
	
	If $\cA'\subseteq \cA$ $(a,A')^\acy$-represents $\cA$ and $\cB'\subseteq\cB$ $(b,B')^\acy$-represents $\cB$, then  \[ \cA' \otimes \cB' \ (x,R')^\acy\text{-represents } \cA \otimes \cB.  \]
\end{lemma}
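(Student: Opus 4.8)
The plan is to adapt the proof of Lemma~\ref{lem:bigotimespreserve} essentially verbatim, replacing $\equiv^1$ by $\equiv^2$ throughout and ``connected'' by ``is a tree''. As usual, assume $\opt=\max$; the case $\opt=\min$ is symmetric. Since $(x,R')^\acy$-representativity only quantifies over sets $Y\subseteq\comp{V_x}$ with $Y\equiv^2_{\comp{V_x}} R'$, it suffices to fix such a $Y$ and prove $\best^\acy(\cA\otimes\cB,Y)=\best^\acy(\cA'\otimes\cB',Y)$.

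First I would prove the two ``transfer'' facts, the analogues of Facts~(a) and~(b) in the proof of Lemma~\ref{lem:bigotimespreserve}: \textbf{(a)} $W\cup Y\equiv^2_{\comp{V_a}} A'$ for every $W\in\cB$, and \textbf{(b)} $X\cup Y\equiv^2_{\comp{V_b}} B'$ for every $X\in\cA$. For~(a): the $d$-$(R,R')$-compatibility gives $A'\equiv^d_{\comp{V_a}} B\cup R'$, hence, as $d\geq 2$, $A'\equiv^2_{\comp{V_a}} B\cup R'$; from $W\equiv^d_{V_b} B$ and $V_b,\comp{V_x}\subseteq\comp{V_a}$, Fact~\ref{fact:equivbiggerset} yields $W\equiv^2_{\comp{V_a}} B$ and $Y\equiv^2_{\comp{V_a}} R'$. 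Now $W$ and $Y$ are disjoint (they lie in $V_b$ and $\comp{V_x}$), and so are $B$ and $R'$; combining this with the elementary identity $\min(2,p+q)=\min(2,\min(2,p)+\min(2,q))$ for $p,q\in\bN$ gives, for every $v\in V_a$,
\[ \min(2,|N(v)\cap(W\cup Y)|)=\min(2,|N(v)\cap(B\cup R')|)=\min(2,|N(v)\cap A'|), \]
which is exactly~(a); Fact~(b) is obtained symmetrically (using $X\equiv^d_{V_a} A$, which holds for all $X\in\cA$). This is the only step that genuinely uses the hypothesis $d\geq 2$.

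Next I would unfold the definitions. Because $V_a,V_b,\comp{V_x}$ are pairwise disjoint, for every $X\in\cA$ and $W\in\cB$ we have $\w(X\cup W)=\w(X)+\w(W)$ and, crucially, $G[X\cup W\cup Y]$ is a tree if and only if $G[X\cup(W\cup Y)]$ is; this ``tree-ness factors through the merge'' observation (true equally for connectivity) is what makes $\otimes$ compatible with the representativity relation. Consequently, arguing directly when $\cA=\emptyset$ or $\cB=\emptyset$, where all quantities equal $-\infty$,
\[ \best^\acy(\cA\otimes\cB,Y)=\max\{\,\best^\acy(\cA,W\cup Y)+\w(W)\mid W\in\cB\,\}. \]
By Fact~(a), $W\cup Y\equiv^2_{\comp{V_a}} A'$, so the hypothesis that $\cA'$ $(a,A')^\acy$-represents $\cA$ gives $\best^\acy(\cA,W\cup Y)=\best^\acy(\cA',W\cup Y)$; hence the right-hand side equals $\best^\acy(\cA'\otimes\cB,Y)$. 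Rewriting $\best^\acy(\cA'\otimes\cB,Y)$ symmetrically as $\max\{\best^\acy(\cB,X\cup Y)+\w(X)\mid X\in\cA'\}$ and using Fact~(b) (valid for $X\in\cA'\subseteq\cA$) together with the hypothesis that $\cB'$ $(b,B')^\acy$-represents $\cB$, we get $\best^\acy(\cA'\otimes\cB,Y)=\best^\acy(\cA'\otimes\cB',Y)$. Chaining the two equalities proves the claim for this $Y$, and since $Y$ was arbitrary with $Y\equiv^2_{\comp{V_x}} R'$, we conclude that $\cA'\otimes\cB'$ $(x,R')^\acy$-represents $\cA\otimes\cB$.

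I do not anticipate a real obstacle: the argument is structurally identical to Lemma~\ref{lem:bigotimespreserve}. The one place that needs care is checking that $2$-neighbor equivalence, rather than $1$-neighbor equivalence, is the correct granularity in Facts~(a)/(b) — which is exactly what $d\geq 2$ buys us — and spelling out that being a tree is preserved under regrouping $X\cup W\cup Y$ as $X\cup(W\cup Y)$, so that $\best^\acy$ of a merge decomposes verbatim as in the connected case.
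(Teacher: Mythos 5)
Your proof is correct and follows the paper's argument essentially verbatim: the same Facts~(a) and~(b), obtained from the $d$-$(R,R')$-compatibility plus Fact~\ref{fact:equivbiggerset} (where $d\geq 2$ is what lets you pass from $\equiv^d$ to $\equiv^2$), followed by the same unfolding of $\best^\acy(\cA\otimes\cB,Y)$ as a max over $W\in\cB$ and the symmetric step on the other side. The only cosmetic difference is that you spell out the additivity of $2$-neighbor equivalence over disjoint sets via the identity $\min(2,p+q)=\min(2,\min(2,p)+\min(2,q))$, a step the paper passes over tacitly; this is a harmless (and arguably clarifying) elaboration rather than a different approach.
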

\begin{proof}
	Suppose that $\cA'\subseteq \cA$ $(a,A')^\acy$-represents $\cA$ and $\cB'\subseteq\cB$ $(b,B')^\acy$-\break represents $\cB$.
	
	In order to prove this lemma, it is sufficient to prove that, for each $Y\equi{\comp{V_x}}{2} R'$, we have $\best^\acy(\cA'\otimes \cB',Y)=\best^\acy(\cA\otimes \cB,Y)$.
	
	Let $Y\subseteq \comp{V_x}$ such that $Y\equi{\comp{V_x}}{2} R'$.
	We claim the following facts: (a)~for every $W\in\cB$, we have $W\cup Y \equi{\comp{V_a}}{2} A'$, and~(b)~for every $X\in \cA$, we have $X\cup Y\equi{\comp{V_b}}{2} B'$.
	
	Let $W\in \cB$.
	By Fact \ref{fact:equivbiggerset}, we have that $W\equi{\comp{V_a}}{d} B$ because $V_b\subseteq \comp{V_a}$ and $W\equi{V_b}{d} B$.
	Since $d\geq 2$, we have $W\equi{\comp{V_a}}{2} B$.
	By Fact \ref{fact:equivbiggerset}, we deduce also that $Y\equi{\comp{V_a}}{2} R'$.
	Since $(A,A')$ and $(B,B')$ are $d$-$(R,R')$-compatible, we have $A'\equi{\comp{V_a}}{d} R'\cup B$.
	In particular, we have  $A'\equi{\comp{V_a}}{2} R'\cup B$ because $d\geq 2$.
	We can conclude that $W\cup Y\equi{\comp{V_a}}{2} A'$ for every $W\in\cB$.
	The proof for fact (b) is symmetric.
	
	Now observe that, by the definitions of $\best^\acy$ and of the merging operator $\otimes$, we have
	{\small
		\begin{align*}
			\best^\acy\big( \cA \otimes \cB, Y\big) &= \opt\{ \w(X)+ \w(W) \mid X\in \cA \wedge W\in \cB \wedge G[X\cup W\cup Y] \text{ is a tree} \}.
	\end{align*}}%
	Since $\best^\acy(\cA, W\cup Y)=\opt\{ \w(X) \mid X\in \cA \wedge G[X\cup W\cup Y] \text{ is a tree} \}$, we deduce that
	\begin{align*}
		\best^\acy\big( \cA \otimes \cB, Y\big) &= \opt\{ \best^\acy(\cA, W\cup Y) + \w(W) \mid W\in \cB \}.
	\end{align*}
	Since $\cA'$ $(a,A')$-represents $\cA$ and by fact (a), we have
	\begin{align*}
		\best^\acy\big( \cA \otimes \cB, Y\big) &= \opt\{ \best^\acy(\cA', W\cup Y) + \w(W) \mid W\in \cB \}\\
		&=\best^\acy\big( \cA' \otimes \cB, Y\big).
	\end{align*}
	Symmetrically, we deduce from fact (b) that $\best^\acy\big( \cA' \otimes \cB , Y\big)=\best^\acy\big( \cA' \otimes \cB', Y\big)$.
	This stands for every $Y\subseteq \comp{V_x}$ such that $Y\equi{\comp{V_x}}{2}  R'$.
	Thus, we conclude that $\cA' \otimes \cB'$ $(x,R')^\acy$-represents $\cA \otimes \cB$.
\end{proof}

In order to compute a maximum induced tree, we design an algorithm similar to those of section \ref{sec:dom}.  That is, for each
$(R,R')\in \Rep{V_x}{2}\times\Rep{\comp{V_x}}{2}$, our algorithm will compute a set $\cD_x[R,R']\subseteq 2^{V_x}$ that is an $(x,R')^\acy$-representative set of small
size of the set $\cA_x[R]:=\{X\subseteq V_x$ such that $X\equi{V_x}{2} R\}$.  This is sufficient to compute a maximum induced tree of $G$ since we have
$\cA_r[\emptyset]=2^{V(G)}$ with $r$ the root of $T$.  Thus, by Definition \ref{def:acy-represent}, any $(r,\emptyset)^\acy$-representative set of $\cA_r[\emptyset]$
contains a maximum induced tree.

The key to compute the tables of our algorithm is a function that, given $\cA\subseteq 2^{V_x}$, computes a small subset of $\cA$ that $(x,R')^\acy$-represents $\cA$.
This function starts by removing from $\cA$ some sets that will never give a tree with a set $Y\equi{\comp{V_x}}{2} R'$.
For doing so, we characterize the sets $X\in\cA$ such that $G[X\cup Y]$ is a tree for some $Y\equi{\comp{V_x}}{2} R'$.
We call these sets \emph{$R'$-important}.
The following gives a formal definition of these important and unimportant partial solutions.

\begin{definition}[$R'$-important]
	We say that $X\subseteq V_x$ is $R'$-important if there exists $Y\subseteq \comp{V_x}$ such that $Y\equi{\comp{V_x}}{2} R'$ and $G[X\cup Y]$ is a tree; otherwise, we say that $X$ is $R'$-unimportant.
\end{definition}

By definition, any set obtained from a set $\cA$ by removing $R'$-unimportant sets is an $(x,R')^\acy$-representative set of $\cA$.
The following lemma gives some necessary conditions on $R'$-important sets.
It follows that any set which does not respect one of these conditions can safely be removed from $\cA$.
These conditions are the key to obtain the running times of Table \ref{tab:mit}.

At this point, we need to introduce the following notations.
For every $X\subseteq V_x$, we define $X^0:=\{v\in X \mid N(v)\cap R' =\emptyset \}$,  $X^1:=\{v\in X \mid |N(v)\cap R'| = 1 \}$ and $X^{2+}:=\{v\in X \mid |N(v)\cap R' | \geq 2 \}$.
From the definition of the 2-neighbor equivalence, we have the following property, which is the key to manage acyclicity.
\begin{observation}\label{obs:X2+}
	For every  $X\subseteq V_x$ and $Y\subseteq \comp{V_x}$ such that $Y\equi{\comp{V_x}}{2} R'$, the vertices in $X^0$ have no neighbor in $Y$, those in $X^1$ have exactly one neighbor in $Y$, and those in $X^{2+}$ have at least $2$ neighbors in $Y$.
\end{observation}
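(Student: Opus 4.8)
The plan is to simply unfold the definition of the $2$-neighbor equivalence relation. Recall that, by definition of $\equiv_{\comp{V_x}}^2$, the hypothesis $Y\equiv_{\comp{V_x}}^2 R'$ means that for every vertex $u$ lying outside $\comp{V_x}$ --- that is, for every $u\in V_x$ --- we have $\min(2,|N(u)\cap Y|)=\min(2,|N(u)\cap R'|)$. Since every vertex of $X$ belongs to $V_x$, this equality applies to each $v\in X$, and the statement then follows by distinguishing the three cases according to the value of $|N(v)\cap R'|$, which is exactly how $X^0$, $X^1$ and $X^{2+}$ were defined.

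Carrying this out: for $v\in X^0$ we have $|N(v)\cap R'|=0$, so $\min(2,|N(v)\cap Y|)=0$ and hence $|N(v)\cap Y|=0$, that is, $v$ has no neighbor in $Y$. For $v\in X^1$ we have $|N(v)\cap R'|=1$, so $\min(2,|N(v)\cap Y|)=1$ and hence $|N(v)\cap Y|=1$, that is, $v$ has exactly one neighbor in $Y$. For $v\in X^{2+}$ we have $|N(v)\cap R'|\geq 2$, so $\min(2,|N(v)\cap Y|)=\min(2,|N(v)\cap R'|)=2$, which forces $|N(v)\cap Y|\geq 2$, that is, $v$ has at least two neighbors in $Y$. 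These three cases exhaust $X$ and give the observation.

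There is no real obstacle here; the only subtlety worth stating explicitly is the asymmetry of the notation: $\equiv_{\comp{V_x}}^2$ compares neighborhoods as seen from the vertices of the complement $\comp{\comp{V_x}}=V_x$, which is precisely where the vertices of $X$ reside, so the hypothesis is exactly the right tool to invoke. This is also why, in the acyclicity arguments that follow, one may safely replace the unknown completion $Y$ by the fixed representative $R'$ whenever only neighborhoods of size up to $2$ matter --- in particular for controlling the number of edges between $X$ and $Y$.
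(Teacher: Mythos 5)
Your proof is correct and is precisely the unfolding of the definition that the paper has in mind when it introduces the observation with the phrase ``From the definition of the $2$-neighbor equivalence.'' The three-case argument based on $\min(2,|N(v)\cap Y|)=\min(2,|N(v)\cap R'|)$ for $v\in V_x$ is exactly the intended reasoning.
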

\begin{lemma}\label{lem:important}
	If $X\subseteq V_x$ is $R'$-important, then $G[X]$ is a forest and the following properties are satisfied:
	\begin{enumerate}
		\item for every pair of distinct vertices $a$ and $b$ in $X^{2+}$, we have $N(a)\cap \comp{V_x}\neq N(b)\cap \comp{V_x}$,
		
		\item $|X^{2+}|$ is upper bounded by $2\mim(V_x)$, $2\rw(V_x)$, $2\Qrw(V_x)$, and $2\log_2(\nec_1(V_x))$.
	\end{enumerate}
\end{lemma}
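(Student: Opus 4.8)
The plan is to establish the three assertions in turn, all by exploiting a $Y$ witnessing $R'$-importance of $X$, i.e. a set $Y\equiv^2_{\comp{V_x}}R'$ with $G[X\cup Y]$ a tree. First, since $G[X\cup Y]$ is a tree, every induced subgraph is a forest; in particular $G[X]$ is a forest, which gives the first sentence of the statement. For the rest, I would use Observation~\ref{obs:X2+}: the partition $X=X^0\cup X^1\cup X^{2+}$ depends only on the 2-neighbor equivalence class of $R'$, and for any $Y\equiv^2_{\comp{V_x}}R'$ a vertex of $X^{2+}$ has at least two neighbors in $Y$, a vertex of $X^1$ exactly one, and a vertex of $X^0$ none.

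For property (1): suppose $a,b$ are distinct vertices of $X^{2+}$ with $N(a)\cap\comp{V_x}=N(b)\cap\comp{V_x}$. Since $a,b\in X^{2+}$, this common neighborhood, intersected with $Y$, has size at least $2$; pick two distinct vertices $y_1,y_2\in Y$ adjacent to both $a$ and $b$. Then $a,y_1,b,y_2,a$ is a cycle in $G[X\cup Y]$ (the edges $ay_1,y_1b,by_2,y_2a$ all exist, and $a\neq b$, $y_1\neq y_2$), contradicting that $G[X\cup Y]$ is a tree. Hence distinct vertices of $X^{2+}$ have distinct neighborhoods in $\comp{V_x}$.

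For property (2), the idea is to bound $|X^{2+}|$ using the fact that it embeds an induced matching–like structure into $G[V_x,\comp{V_x}]$. Concretely: for each $a\in X^{2+}$ choose two distinct neighbors $y_a^1,y_a^2\in Y$ of $a$. I would argue that these pairs can be used to locate, for each $a$, a private vertex $p(a)\in Y$ — one that is a neighbor of $a$ but (by going to a subtree/component argument on the tree $G[X\cup Y]$) is not shared in the way that would create a cycle — and that the map $a\mapsto p(a)$ together with the edges $a\,p(a)$ forms an induced matching in $G[V_x,\comp{V_x}]$ of size $|X^{2+}|$; actually, since $|X^{2+}|$ vertices each need two neighbors in the tree $Y\cap N(X^{2+})$ and the tree has no cycles, a counting/acyclicity argument forces $|Y\cap N_G(X^{2+})|\geq |X^{2+}|+(\text{something})$ is not quite enough, so instead the clean route is: in the forest $G[X^{2+}\cup Y']$ where $Y'=N(X^{2+})\cap Y$, every vertex of $X^{2+}$ has degree $\geq 2$, so the number of leaves — which all lie in $Y'$ — together with an Euler-type bound forces $|Y'|\geq |X^{2+}|$, and moreover one extracts an induced matching of size $|X^{2+}|$ between $X^{2+}$ and $Y'$ inside this forest by repeatedly peeling off a leaf of $Y'$ and its unique neighbor in $X^{2+}$ (using property (1) to guarantee we do not re-use the same $X$-vertex, and pruning to keep the degree-$\geq 2$ invariant). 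This induced matching in $G[V_x,\comp{V_x}]$ certifies $|X^{2+}|\leq \mim(V_x)$; combined with Lemma~\ref{lem:comparemim} (which bounds $\mim(V_x)$ by $\rw(V_x)$, $\Qrw(V_x)$ and $\mw(V_x)\leq \nec_1(V_x)$, hence by $\log_2(\nec_1(V_x))$ after the appropriate inequality) this yields the stated bounds — though to land on the factor $2$ (and the $2\log_2(\nec_1(V_x))$ bound) one likely uses a slightly weaker extraction giving a matching of size $\lceil |X^{2+}|/2\rceil$, or applies the induced-matching bound to $X^{2+}$ split into two halves.

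The main obstacle I anticipate is the combinatorial extraction in property (2): turning "every vertex of $X^{2+}$ has $\geq 2$ neighbors in the tree-part $Y$" into an actual \emph{induced} matching in the bipartite graph $G[V_x,\comp{V_x}]$, rather than just a lower bound on $|N(X^{2+})\cap Y|$. The acyclicity of $G[X\cup Y]$ is doing real work here — without it one could have all of $X^{2+}$ adjacent to just two common vertices of $Y$ — so the argument must extract the matching greedily from leaves of the forest $G[X^{2+}\cup(N(X^{2+})\cap Y)]$ while maintaining minimum degree $\geq 2$ on the $X$-side, and must invoke property (1) to avoid collisions; getting the constant exactly $2$ and simultaneously covering the $\nec_1$ bound is the fiddly part. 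Everything else (the cycle argument for (1), and $G[X]$ being a forest) is immediate from $G[X\cup Y]$ being a tree.
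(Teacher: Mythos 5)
Your treatment of the easy parts is correct and agrees with the paper: $G[X]$ is an induced subgraph of the tree $G[X\cup Y]$, hence a forest; and for property (1), two vertices of $X^{2+}$ with the same neighborhood in $\comp{V_x}$ have at least two common neighbors in $Y$ (via Observation~\ref{obs:X2+}), which yields a $4$-cycle inside the tree, a contradiction.

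The gap is in property (2). Your ``clean route'' claims that greedily peeling a leaf $y\in Y'$ and pairing it with its unique $X^{2+}$-neighbor yields an induced matching of size $|X^{2+}|$, hence $|X^{2+}|\leq\mim(V_x)$. This is false, and the peeling argument does not produce an \emph{induced} matching. Take the tree $G[X\cup Y]$ to be the path $y_0 - x_1 - y_1 - x_2 - y_2 - x_3 - y_3$ with $X^{2+}=\{x_1,x_2,x_3\}$ (each $x_i$ has exactly two neighbors in $Y$). Peeling first gives the edge $y_0x_1$; peeling next gives $y_1x_2$; but $x_1y_1\in E(G)$, so the two selected edges are joined by an edge and do not form an induced matching. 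In fact the maximum induced matching of this $P_7$ is $2$, while $|X^{2+}|=3$, so the asserted bound $|X^{2+}|\leq\mim(V_x)$ is simply not true. Only the weaker $|X^{2+}|\leq 2\mim(V_x)$ holds, and proving it requires more than peeling. You flag at the end that one might need to ``split $X^{2+}$ into two halves,'' which is indeed the right instinct, but you give no rule for splitting and no reason each half would yield an induced matching, so the proof is not there.

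For comparison, the paper's proof partitions each connected component $C$ of the forest $F=G[X^{2+},Y]$ by the parity of $\ell$ where $2\ell$ is the BFS-distance from a fixed root $u\in C\cap X^{2+}$ (so effectively by distance $\bmod 4$). In a tree, any $v'\in X^{2+}$ sharing a neighbor with $v$ is at distance $2\ell-2$, $2\ell$, or $2\ell+2$; the first and last land in the other parity class, and the unique neighbor of $v$ towards $u$ is the only $Y$-vertex it can share with a same-class same-distance vertex. Since $\deg_F(v)\geq 2$, $v$ has a private $Y$-neighbor within its class, so each class supports an induced matching of size equal to the class, giving $|X^{2+}|\leq 2\mim(V_x)$. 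This is the structured version of the ``two halves'' idea you gestured at; without it (or some equivalent), the argument does not close.
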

\begin{proof}
	Obviously, any $R'$-important set must induce a forest.
	Let $X\subseteq V_x$ be an $R'$-important set.
	Since $X$ is $R'$-important, there exists $Y\subseteq \comp{V_x}$ such that $Y\equi{\comp{V_x}}{2} R'$ and $G[X\cup Y]$ is a tree.
	
	Assume toward a contradiction that there exist  two distinct vertices $a$ and $b$ in $X^{2+}$ such that $N(a)\cap \comp{V_x}= N(b)\cap \comp{V_x}$.
	Since $a$ and $b$ belong to $X^{2+}$ and $Y \equi{\comp{V_x}}{2} R'$, both $a$ and $b$ have at least two neighbors in $Y$ according to Observation~\ref{obs:X2+}.
	Thus, $a$ and $b$ have at least two common neighbors in $Y$.
	We conclude that $G[X\cup Y]$ admits a cycle of length four, yielding a contradiction.
	We conclude that property 1 holds for every $R'$-important set.
	
	Now, we prove that property 2 holds for $X$.
	Observe that, by Lemma \ref{lem:comparemim}, $\mim(V_x)$ is upper bounded by $\rw(V_x)$, $\Qrw(V_x)$, and $\log_2(\nec_1(V_x))$.
	Thus, in order to prove property 2, it is sufficient to prove that $|X^{2+}|\leq 2\mim(V_x)$.
	
	We claim that $|X^{2+}|\leq 2k$  where $k$ is the size of a maximum induced matching of $F:=G[X^{2+}, Y]$.
	Since $F$ is an induced subgraph of $G[V_x,\comp{V_x}]$, we have $k\leq \mim(V_x)$ and this is enough to prove property 2.
	Notice that $F$ is a forest because $F$ is a subgraph of $G[X\cup Y]$, which is a tree.

	In the following, we prove that $F$ admits a \emph{good bipartition}, that is, a bipartition $\{X_1,X_2\}$ of $X^{2+}\cap V(F)$ such that, for each $i\in\{1,2\}$ and for each $v\in X_i$, there exists $y_v\in Y\cap V(F)$ such that $N_{F}(y_v)\cap X_i = \{v\}$.
	Observe that this is enough to prove property 2 since if $F$ admits a good bipartition $\{X_1,X_2\}$, then $|X_1|\leq k$ and $|X_2|\leq k$.
	Indeed, if $F$ admits a good bipartition $\{X_1,X_2\}$, then, for each $i\in\{1,2\}$, the set of edges $M_i=\{ vy_v \mid v\in X_i \}$ is an induced matching of $F$.
	In order to prove that $F$ admits a good bipartition it is sufficient to prove that each connected component of $F$ admits a good bipartition.
	
	Let $C\in\cc(F)$ and $u\in C\cap X^{2+}$. As $F$ is a forest, $F[C]$ is a tree.
	Observe that the distance in $F$ between each vertex $v\in C\cap X^{2+}$ and $u$ is even because $F:=G[X^{2+},Y]$.
	Let $C_1$ (resp., $C_2$) be the set of all vertices $v\in C\cap X^{2+}$ such that there exists an odd (resp., even) integer $\ell\in \bN$ so that the distance between $v$ and $u$ in $F$ is $2\ell$.
	We claim that $\{C_1,C_2\}$ is a good bipartition of $F[C]$.
	
	Let $i \in \{1,2\}$, $v\in C_i$, and $\ell\in \bN$ such that the distance between $v$ and $u$ in $F$ is $2\ell$.
	Let $P$ be the set of vertices in $V(F)\setminus \{v\}$ that share a common neighbor with $v$ in $F$.
	We want to prove that there exists $y\in Y$ such that $N_{F}(y)\cap C_i = \{v\}$.
	For doing so, it is sufficient to prove that $N_F(v)\setminus N_F(C_i \setminus \{v\})=N_F(v)\setminus N_F(P\cap C_i)\neq \emptyset$.
	Observe that, for every $v'\in P$, the distance between $v'$ and $u$ in $F$ is either $2\ell-2$, $2\ell$ or $2\ell+2$ because $F[C]$ is a tree and the distance between $v$ and $u$ is $2\ell$.
	By construction of $\{C_1,C_2\}$, every vertex at distance $2\ell-2$ and $2\ell+2$ from $u$ must belong to $C_{3-i}$.
	Thus, every vertex in $P\cap C_i$ is at distance $2\ell$ from $u$.
	If $\ell=0$, then we are done because $v=u$ and $P\cap C_i=\emptyset$.
	Assume that $\ell\neq 0$.
	As $F[C]$ is a tree, $v$ has only one neighbor $w$ at distance $2\ell-1$ from $u$ in $F$.
	Because $F[C]$ is a tree, we deduce that $N_F(v)\cap N_F(P\cap C_i)=\{w\}$.
	Since $v\in X^{2+}$, $v$ has at least two neighbors in $F=G[X^{2+},Y]$ (because $Y\equi{\comp{V_x}}{2} R'$), we conclude that $N_F(v)\setminus N_F(P\cap C_i) \neq \emptyset$.
	Hence, we deduce that $\{C_1,C_2\}$ is a good bipartition of $F[C]$.
	
	We deduce that every connected component of $F$ admits a good bipartition and thus $F$ admits a good bipartition.
	Thus, $|X^{2+}|\leq 2\mim(V_x)$.
\end{proof}

These vertices in $X^{2+}$ play a major role in the acyclicity and the computation of representatives in the following sense.
By removing from $\cA$ the sets that do not respect the two above properties, we are able to decompose $\cA$ into a small number of sets $\cA_1,\dots,\cA_t$ such that an $(x,R')$-representative set of $\cA_i$ is an $(x,R')^\acy$-representative set of $\cA_i$ for each $i\in\{1,\dots,t\}$.
We find an $(x,R')^\acy$-representative set of $\cA$ by computing an $(x,R')$-representative set $\cB_i$ for each $\cA_i$ with the function $\reduce$.
This is sufficient because $\cB_1\cup \dots\cup \cB_t$ is an $(x,R')^\acy$-representative set of $\cA$ thanks to Fact \ref{fact:acyunionpreserve}.

The following definition characterizes the sets $\cA\subseteq 2^{V_x}$ for which an $(x,R')$-representative set is also an $(x,R')^\acy$-representative set.

\begin{definition}\label{def:consistent}
	We say that $\cA\subseteq 2^{V_x}$ is $R'$-consistent if, for each $Y\subseteq \comp{V_x}$ such that  $Y \equi{\comp{V_x}}{2} R'$ and there is $W\in
	\cA$ with $G[W\cup Y]$ a tree, we have
	\[ \{X\in \cA\mid G[X\cup Y]\ \textrm{is connected}\}=\{X\in \cA\mid G[X\cup Y]\ \textrm{is a tree}\}.\]
\end{definition}

The following lemma proves that an $(x,R')$-representative set of an $R'$-consistent  set $\cA$ is also an $(x,R')^\acy$-representative set of $\cA$.

\begin{lemma}\label{lem:consistent}
	Let $ \cA\subseteq 2^{V_x}$.
	For all $\cD\subseteq \cA$, if  $\cA$ is $R'$-consistent and $\cD$ $(x,R')$-represents $\cA$, then $\cD$ $(x,R')^\acy$-represents $\cA$.
\end{lemma}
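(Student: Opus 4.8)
The plan is to fix an arbitrary $Y\subseteq \comp{V_x}$ with $Y\equiv^2_{\comp{V_x}}R'$ and prove $\best^\acy(\cA,Y)=\best^\acy(\cD,Y)$; this suffices by Definition \ref{def:acy-represent}. We may assume $\opt=\max$, the case $\opt=\min$ being symmetric (all inequalities below reverse). First I would record the easy half: since $\cD\subseteq\cA$, the family $\{X\in\cD\mid G[X\cup Y]\text{ is a tree}\}$ is contained in $\{X\in\cA\mid G[X\cup Y]\text{ is a tree}\}$, so $\best^\acy(\cD,Y)\leq\best^\acy(\cA,Y)$. This also disposes of the degenerate case where no member of $\cA$ forms a tree with $Y$, since then both sides equal $\max(\emptyset)=-\infty$.

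Next I would exploit that $\equiv^2_{\comp{V_x}}$ refines $\equiv^1_{\comp{V_x}}$: from $Y\equiv^2_{\comp{V_x}}R'$ we get $Y\equiv^1_{\comp{V_x}}R'$, so the hypothesis that $\cD$ $(x,R')$-represents $\cA$ gives $\best(\cD,Y)=\best(\cA,Y)$ via Definition \ref{def:represent}.

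The core step is then the reverse inequality $\best^\acy(\cD,Y)\geq\best^\acy(\cA,Y)$, and here is where $R'$-consistency enters. Pick $W^\star\in\cA$ witnessing $\best^\acy(\cA,Y)$, so $G[W^\star\cup Y]$ is a tree and $\w(W^\star)=\best^\acy(\cA,Y)$. A tree is connected, hence $\w(W^\star)\leq\best(\cA,Y)=\best(\cD,Y)$. Pick $X^\star\in\cD$ witnessing $\best(\cD,Y)$, so $G[X^\star\cup Y]$ is connected and $\w(X^\star)=\best(\cD,Y)\geq\w(W^\star)$. Now apply Definition \ref{def:consistent} to $\cA$: some member of $\cA$ (namely $W^\star$) forms a tree with $Y$, and $X^\star\in\cD\subseteq\cA$, so $G[X^\star\cup Y]$ is either a tree or disconnected; being connected, it must be a tree. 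Therefore $X^\star$ is an admissible witness for $\best^\acy(\cD,Y)$, giving $\best^\acy(\cD,Y)\geq\w(X^\star)\geq\w(W^\star)=\best^\acy(\cA,Y)$, and combining with the first half we are done.

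I do not anticipate a real obstacle here; the argument is just a short chain of inequalities. The only points needing care are (i) invoking $(x,R')$-representativity only through $\equiv^1$, which is legitimate precisely because $\equiv^2$ is finer, and (ii) the bookkeeping with the conventions $\max(\emptyset)=-\infty$, $\min(\emptyset)=+\infty$ in the edge case, which is handled by the first inequality before the main argument even starts.
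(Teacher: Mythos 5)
Your proof is correct and uses the same three ingredients as the paper's: the containment $\cD\subseteq\cA$ for the easy direction and the degenerate case, the fact that $\equiv^2_{\comp{V_x}}$ refines $\equiv^1_{\comp{V_x}}$ so that $(x,R')$-representativity applies to $Y$, and $R'$-consistency to upgrade a connected witness to a tree witness. The only difference is presentational: the paper observes directly that once some member of $\cA$ forms a tree with $Y$, $R'$-consistency forces $\best^\acy(\cA,Y)=\best(\cA,Y)$ and $\best^\acy(\cD,Y)=\best(\cD,Y)$, then finishes by one invocation of representativity, whereas you chase explicit witnesses $W^\star$ and $X^\star$ and chain inequalities; both amount to the identical chain $\best^\acy(\cD,Y)=\best(\cD,Y)=\best(\cA,Y)=\best^\acy(\cA,Y)$.
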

\begin{proof}
	We assume that $\opt=\max$, the proof for $\opt=\min$ is similar. Let $Y\equi{\comp{V_x}}{2} R'$.
	If $\best^\acy(\cA,Y)=-\infty$, then we also have $\best^\acy(\cD,Y)=-\infty$ because $\cD\subseteq \cA$.
	
	Assume now that $\best^\acy(\cA,Y)\neq-\infty$. Thus, there exists $W\in \cA$ such that $G[W\cup Y]$ is a tree.
	Since $\cA$ is $R'$-consistent, for all $X\in\cA$, either the graph $G[X\cup Y]$ is a tree or it is not connected.
	Thus, by Definition \ref{def:represent} of $\best$, we have $\best^\acy(\cA,Y) = \best(\cA,Y)$.
	As $\cD\subseteq \cA$, we have also $\best^\acy(\cD,Y) = \best(\cD,Y)$.
	We conclude by observing that if $\cD$ $(x,R')$-represents $\cA$, then $\best^\acy(\cD,Y)=\best^\acy(\cA,Y)$.
\end{proof}

The next lemma proves that, for each $\f\in\{\mw,\rw,\Qrw,\mim\}$, we can decompose a set $\cA\subseteq 2^{V_x}$ into a small collection $\{\cA_1,\dots,\cA_t\}$ of  pairwise disjoint subsets of $\cA$ such that each $\cA_i$ is $R'$-consistent.
Even though some parts of the proof are specific to each parameter, the ideas are roughly the same.
First, we remove the sets $X$ in $\cA$ that do not induce a forest. If $\f=\mw$, we remove the sets in $\cA$ that do not respect condition 1 of Lemma \ref{lem:important}; otherwise, we remove the sets that do not respect the upper bound associated with $\f$ from condition 2 of Lemma \ref{lem:important}.
These sets can be safely removed as, by Lemma \ref{lem:important}, they are $R'$-unimportant.
After removing these sets, we obtain the decomposition of $\cA$ by taking the equivalence classes of some equivalence relation that is roughly the $n$-neighbor
equivalence relation.
Owing to the set of $R'$-unimportant sets we have removed from $\cA$, we prove that the number of equivalence classes of this latter equivalence relation respects the upper bound associated with $\f$ that is described in Table \ref{tab:cN}.

\begin{lemma}\label{lem:consistent-partition}
	Let $\cA\subseteq 2^{V_x}$. For each $\f\in \{\mw,\rw,\Qrw,\mim\}$, there exist pairwise disjoint subsets $\cA_1,\dots,\cA_t$ of $\cA$ computable in time $O(|\cA|\cdot \cN_\f(T,\delta)\cdot n^2)$  such that
	\begin{itemize}
		\item $\cA_1\cup\dots\cup \cA_t$ $(x,R')^\acy$-represents $\cA$,
		\item $\cA_i$ is $R'$-consistent for each $i\in\{1,\dots,t\}$, and
		\item $t\leq \cN_\f(T,\delta)$,
	\end{itemize}
	where $\cN_\f(T,\delta)$ is the term defined in Table \emph{\ref{tab:cN}}.
\end{lemma}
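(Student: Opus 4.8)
The plan is to build the decomposition $\cA_1,\dots,\cA_t$ in two stages: first a cleaning stage that discards $R'$-unimportant sets using Lemma~\ref{lem:important}, then a classification stage where the surviving sets are grouped by a suitable ``truncated $n$-neighbor'' equivalence relation on $V_x$. The first stage is justified immediately: any set removed because $G[X]$ is not a forest, or because it violates Property~(1) or the relevant bound in Property~(2) of Lemma~\ref{lem:important}, is $R'$-unimportant, hence deleting it from $\cA$ yields a set that still $(x,R')^\acy$-represents $\cA$ (there is simply no $Y\equiv^2_{\comp{V_x}}R'$ with $G[X\cup Y]$ a tree, so $X$ never contributes to $\best^\acy(\cA,Y)$). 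This already guarantees the first bullet. I would split into four cases according to $\f\in\{\mw,\rw,\Qrw,\mim\}$: for $\f=\mw$ delete the sets violating Property~(1) (distinct vertices of $X^{2+}$ with the same neighborhood in $\comp{V_x}$), and for $\f\in\{\rw,\Qrw,\mim\}$ delete those with $|X^{2+}|$ exceeding $2\rw(V_x)$, $2\Qrw(V_x)$, $2\mim(V_x)$ respectively.

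For the classification stage, after cleaning I would declare two surviving sets $X,W$ equivalent when (i) $X^0\equiv^1_{V_x}W^0$, i.e.\ the vertices with no $R'$-neighbor have the same neighborhood in $\comp{V_x}$ --- actually more precisely I'd want $X\equiv^1_{V_x}W$ together with some finer data controlled by $X^{2+}$, and (ii) there is a natural bijection between $X^{2+}$ and $W^{2+}$ matching each vertex to one with the same neighborhood in $\comp{V_x}$, and moreover the ``edge counts'' between the two sides agree in the way needed to guarantee $R'$-consistency. The essential point I want to extract is: if $X,W$ are in the same class and $G[X\cup Y]$ is a tree for some $Y\equiv^2_{\comp{V_x}}R'$, then for every $Z$ in the class, $G[Z\cup Y]$ is either a tree or disconnected --- which is exactly $R'$-consistency. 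This works because once $G[X\cup Y]$ is a tree, Observation~\ref{obs:X2+} pins down the number of $X$--$Y$ edges through $|X^1|$ and the behaviour of $X^{2+}$; if $Z$ is in the same class and $G[Z\cup Y]$ is connected, it has $|V(Z)|+|V(Y)|-1$ vertices' worth of structure forcing the same edge count, hence no cycle, hence a tree. The counting bound $t\le\cN_\f(T,\delta)$ then follows by bounding the number of classes: for $\mw$ it is governed by $\nec_1$ raised to a power coming from the at-most-$2\mim(V_x)\le 2\mw(V_x)$ matched vertices of $X^{2+}$ (giving $(d+1)$-type or $\nec_1$-type factors), and for the other three by combining Lemma~\ref{lem:important}(2) with Lemma~\ref{lem:comparen_sizet} applied to the size-$\le 2\f(V_x)$ restriction of the $n$-neighbor equivalence, exactly matching the entries of Table~\ref{tab:cN}.

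The running time $O(|\cA|\cdot\cN_\f(T,\delta)\cdot n^2)$ comes from processing each $X\in\cA$ once: checking forestness, computing $X^0,X^1,X^{2+}$, testing Property~(1)/(2), and computing the representative label of $X$'s class (a $\rep$-style computation, costing $O(n^2)$ per set by Lemma~\ref{lem:computenecd}-type bookkeeping), then bucketing; since there are at most $\cN_\f(T,\delta)$ buckets the total is as claimed.

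\textbf{Main obstacle.} The hard part is the precise definition of the classification relation and the verification that same-class sets are genuinely $R'$-consistent. The subtlety is that $R'$-consistency is a statement about \emph{all} $Y\equiv^2_{\comp{V_x}}R'$ simultaneously, and a single $Y$ can interact with $X^1$ and $X^{2+}$ in ways that depend on more than the $1$-neighbor class of $X$; one must argue that, conditioned on \emph{some} completion being a tree, the cleaned-up structure of $X^{2+}$ (Property~(1) plus the size bound) rigidly determines the edge count $|E(X,Y)|$ across the whole class, so that connectivity of a sibling completion forces acyclicity. Getting the relation fine enough to force this, yet coarse enough to keep the class count within $\cN_\f(T,\delta)$ for each of the four width measures, is where the real work lies --- and it is presumably where the per-parameter case analysis (especially the $\mw$ case, which uses Property~(1) rather than a size bound) becomes unavoidable.
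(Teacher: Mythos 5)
Your outline has the right shape—clean first with Lemma~\ref{lem:important}, then classify by a truncated $n$-neighbor relation—and this is indeed the approach the paper takes. But you have not actually defined the classification relation, and you candidly flag this as the unresolved obstacle. That gap is real, and the candidate pieces you float in its place do not assemble into a working relation.

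The paper's relation $\sim$ is \emph{exactly} two conditions: $X^{2+}\equiv_{V_x}^{n}W^{2+}$, and $|E(G[X])|-|X\setminus X^1|=|E(G[W])|-|W\setminus W^1|$. Neither $X^0\equiv^1_{V_x}W^0$ nor $X\equiv^1_{V_x}W$ nor a ``natural bijection between $X^{2+}$ and $W^{2+}$ matching neighborhoods'' appears; instead, the $n$-neighbor class of $X^{2+}$ alone controls $|E(X^{2+},Y)|$ via Lemma~\ref{lem:property_nec_n} for \emph{every} $Y\equiv^2_{\comp{V_x}}R'$ simultaneously, and each $X^1$ vertex contributes exactly one edge to $Y$, so $|E(X^1,Y)|=|X^1|$. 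Rewriting ``$G[X\cup Y]$ is a tree'' as $|E(G[Y])|+|E(X^{2+},Y)|+|E(G[X])|=|X\setminus X^1|+|Y|-1$ then makes it manifest that the left side's $X$-dependent terms and the right side's $X$-dependent term are both $\sim$-invariants, so the truth of the equation is constant across a class given $Y$—which, combined with assumed connectivity, yields $R'$-consistency. This is the specific invariant ``$|E(G[X])|-|X\setminus X^1|$'' you were groping for with ``edge counts agree in the way needed''; without naming it, the consistency argument cannot be completed. Likewise, the counting bound $t\le\cN_\f(T,\delta)$ needs two ingredients you do not supply in this form: that after discarding non-forest inducers the integer $|E(G[X])|-|X\setminus X^1|$ ranges in $\{-n,\ldots,n\}$ (Observation~\ref{obs:nbedges}, giving the $2n$ factor in every entry of Table~\ref{tab:cN}); and, for $\mw$, that Property~(1) of Lemma~\ref{lem:important} forces the $n$-neighbor class of $X^{2+}$ to be a function of $\{N(v)\cap\comp{V_x}\mid v\in X^{2+}\}\subseteq$ the $\mw(V_x)$ possible neighborhoods, hence at most $2^{\mw(V_x)}$ classes. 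Your stated $\mw$-count (``$\nec_1$ raised to a power from $2\mim(V_x)$ matched vertices'') does not match this and would not produce $2^{\mw(T,\delta)}\cdot 2n$.
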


\begin{table}[tbh]
	\footnotesize
	\caption{Upper bounds $\cN_\f(T,\delta)$ on the size of the decomposition of Lemma {\rm \ref{lem:consistent-partition}} for each $\f\in \{\mw,\rw,\Qrw,\mim\}$.}
	\label{tab:cN}
	\centering
	{       \begin{tabular}{c|c|c|c|c}\hline
			$\f$  & $\mw$  & $\Qrw$ & $\rw$ & $\mim$  \\
			\hline
			$\cN_\f(T,\delta)$ &  $2^{\mw(T,\delta)}\cdot 2n$ &  $(2\Qrw(T,\delta) +1)^{\Qrw(T,\delta)}\cdot 2n$ &
			$2^{2 \rw(T,\delta)^2}\cdot 2n$ & $2 n^{2 \mim(T,\delta) +1}$\\ \hline
	\end{tabular}}
	
\end{table}

\begin{proof}
	We define the equivalence relation $\sim$ on $2^{V_x}$ such that $X\sim W$ if we have $X^{2+}\equi{V_x}{n} W^{2+}$ and
	$ |E(G[X])| - |X\setminus X^1|= |E(G[W])| - |W\setminus W^1|$.
	
	The following claim proves that an equivalence class of $\sim$ is an $R'$-consistent set.
	Intuitively, we use the property of the $n$-neighbor equivalence to prove that, for every $X,W\subseteq V_x$ such that $X\sim W$ and for all $Y\equi{\comp{V_x}}{2} R'$, we have $|E(G[X\cup Y])|= |X\cup Y| -1$ if and only if $|E(G[W\cup Y])|= |W\cup Y| -1$.
	Consequently, if $X\sim W$ and both sets induce with $Y$ a connected graph, then both sets induce with $Y$ a tree (because a graph $F$ is a tree if and only if $F$ is connected and $|E(F)|=|V(F)|-1$). We conclude from these observations the following claim.

	\begin{claim}\label{claim:equiconsistent}
		Let $\cB\subseteq\cA$. If, for all $X,W\in\cB$, we have $X\sim W$, then $\cB$ is $R'$-consistent.
	\end{claim}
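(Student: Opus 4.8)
The plan is to reduce $R'$-consistency to a single edge-count identity, exploiting the elementary fact that a \emph{connected} graph is a tree if and only if its number of edges is exactly its number of vertices minus one. So I would fix $Y\subseteq\comp{V_x}$ with $Y\equiv_{\comp{V_x}}^2 R'$ and two sets $X,W\in\cB$, and aim to show that the quantity $|E(G[X\cup Y])| - |V(G[X\cup Y])|$ depends on $X$ (with $Y$ fixed) only through the $\sim$-class of $X$. Granting this, if $G[W\cup Y]$ is a tree then this quantity equals $-1$ for $W$, hence also for $X$; and if in addition $G[X\cup Y]$ is connected, then $G[X\cup Y]$ is a tree, which is exactly what the definition of $R'$-consistency of $\cB$ demands.

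To evaluate $|E(G[X\cup Y])|$, I would split it as $|E(G[X])| + |E(G[Y])| + |E(X,Y)|$ (legitimate since $X\subseteq V_x$ and $Y\subseteq\comp{V_x}$ are disjoint) and then invoke Observation \ref{obs:X2+}: because $Y\equiv_{\comp{V_x}}^2 R'$, every vertex of $X^0$ has no neighbor in $Y$, every vertex of $X^1$ has exactly one, and every vertex of $X^{2+}$ has at least two, so $|E(X,Y)| = |X^1| + |E(X^{2+},Y)|$. Combining this with $|V(G[X\cup Y])| = |X^0| + |X^1| + |X^{2+}| + |Y|$ and with $|X^0| + |X^{2+}| = |X\setminus X^1|$, a one-line cancellation yields
\[ |E(G[X\cup Y])| - |V(G[X\cup Y])| = \bigl(|E(G[X])| - |X\setminus X^1|\bigr) + |E(X^{2+},Y)| + \bigl(|E(G[Y])| - |Y|\bigr). \]
The last bracket does not involve $X$ at all; the first bracket is $\sim$-invariant by the very definition of $\sim$; and $|E(X^{2+},Y)| = |E(W^{2+},Y)|$ follows from $X^{2+}\equiv_{V_x}^n W^{2+}$ together with Lemma \ref{lem:property_nec_n} applied with $A = V_x$ (valid since $X^{2+},W^{2+}\subseteq V_x$ and $Y\subseteq\comp{V_x}$). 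Hence the displayed quantity takes the same value for $X$ and for $W$, which finishes the proof of the claim.

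I do not expect a genuine obstacle here; the two points deserving care are (i) that Observation \ref{obs:X2+} is invoked correctly, so that the partition $\{X^0, X^1, X^{2+}\}$ of $X$ — defined through $R'$ — really does record how $X$ attaches to $Y$ (this is where having $d\geq 2$, so that $2$-neighbor equivalence is available, is used), and (ii) that the two ``error terms'' $|E(X^{2+},Y)|$ and $|E(G[Y])| - |Y|$ are genuinely independent of the choice of $X$ inside the $\sim$-class — the first by Lemma \ref{lem:property_nec_n}, the second trivially. Everything else is bookkeeping.
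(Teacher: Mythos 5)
Your proof is correct and follows essentially the same route as the paper's: both decompose the edge count of $G[X\cup Y]$ via the tripartition $\{X^0,X^1,X^{2+}\}$ and Observation~\ref{obs:X2+}, then use the definition of $\sim$ and Lemma~\ref{lem:property_nec_n} to transfer the identity $|E|=|V|-1$ from $W\cup Y$ to $X\cup Y$. Your reformulation as a single $\sim$-invariant quantity $|E(G[X\cup Y])|-|V(G[X\cup Y])|$ is just a tidier packaging of the paper's chain of equivalent equations; there is no substantive difference.
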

	\begin{proof}
		Suppose that $X\sim W$ for all $X,W\in\cB$.
		In order to prove that $\cB$ is $R'$-consistent, it is enough to prove that, for each $X,W\in\cB$ and $Y\equi{\comp{V_x}}{2} R'$, if $G[X\cup Y]$ is a tree and $G[W\cup Y]$ is connected, then $G[W\cup Y]$ is a tree.
		
		Let $Y\equi{\comp{V_x}}{2} R'$ and $X,W\in\cB$.
		Assume that $G[X\cup Y]$ is a tree and that $G[W\cup Y]$ is connected.
		We want to prove that $G[W\cup Y]$ is a tree.
		
		Since $G[X\cup Y]$ is a tree, we have $|E(G[X\cup Y])| = |X\cup Y| -1$.
		Since the vertices in $X^0$ have no neighbors in $Y$, we can decompose $|E(G[X\cup Y])| = |X\cup Y| -1$ to obtain the following equality:
		\begin{align}\label{eq6.1}
			|E(G[Y])| + |E(X^{2+}, Y)| + |E(X^1, Y) | + |E(G[X])| =  |X\cup Y| -1.
		\end{align}
		Since every vertex in $X^1$ has exactly one neighbor in $Y$ (because $Y\equi{\comp{V_x}}{2} R'$), we deduce that $|E(X^1, Y) |= |X^1|$.
		Thus, (\ref{eq6.1}) is equivalent to
		\begin{align}\label{eq6.2}
			|E(G[Y])| + |E(X^{2+}, Y)| + |E(G[X])|  =  |X\setminus X^1| + |Y|  -1 .
		\end{align}
		Since $X\sim W$, we have $ |E(G[X])| - |X\setminus X^1|= |E(G[W])| - |W\setminus W^1|$.
		Moreover, due to $X^{^2+}\equi{V_x}{n} W^{2+}$ and Lemma \ref{lem:property_nec_n}, we have $|E(G(X^{2+}, Y))|=|E(G(W^{2+}, Y))|$.
		We conclude that (\ref{eq6.2}) is equivalent to
		\begin{align}\label{eq6.3}
			|E(G[Y])| + |E(W^{2+}, Y)| + |E(G[W])|  =  |W\setminus W^1 | +|Y|   -1 .
		\end{align}
		With the same arguments to prove that (\ref{eq6.3}) is equivalent to $|E(G[X\cup Y])| = |X\cup Y| -1$, we can show that (\ref{eq6.3}) is equivalent to $|E(G(W\cup Y))|=|W\cup Y | -1$.
		By assumption, $G[W\cup Y]$ is connected and thus we conclude that $G[W\cup Y]$ is a tree.
	\end{proof}
	
	We are now ready to decompose $\cA$. We start by removing from $\cA$ all the sets that do not induce a forest.
	Trivially, this can be done in time $O(|\cA|\cdot n)$.
	Moreover, these sets are $R'$-unimportant and thus we keep an $(x,R')^\acy$-representative set of $\cA$.
	Before explaining how we proceed separately for each parameter, we need the following observation, which follows from the removal of all the sets in $\cA$ that do not induce a forest.
	
	\begin{observation}\label{obs:nbedges}
		For all $X\in \cA$, we have  $|E(G[X])| - |X\setminus X^1| \in \{-n,\dots,n\}$.
	\end{observation}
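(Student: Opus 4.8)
The plan is to read both bounds straight off the fact, established immediately before this statement, that every set remaining in $\cA$ induces a forest. First I would dispose of the degenerate case $X=\emptyset$: then $|E(G[X])|=0$ and $X\setminus X^1=\emptyset$, so the quantity equals $0\in\{-n,\dots,n\}$, and there is nothing to check.

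For a nonempty $X\in\cA$, I would use that $G[X]$ is a forest on $|X|\ge 1$ vertices, which gives $0\le |E(G[X])|\le |X|-1$. Since $X$ is partitioned into $X^0$, $X^1$ and $X^{2+}$, we have $X\setminus X^1=X^0\cup X^{2+}$, hence $|X\setminus X^1|=|X|-|X^1|$. Substituting, for the upper bound (and using $|X^1|\le |X|\le |V(G)|=n$)
\[ |E(G[X])|-|X\setminus X^1| = |E(G[X])|-|X|+|X^1| \le (|X|-1)-|X|+|X^1| = |X^1|-1 \le n-1 \le n, \]
and for the lower bound
\[ |E(G[X])|-|X\setminus X^1| \ge 0 - |X| \ge -n . \]
Combining the two inequalities yields $|E(G[X])|-|X\setminus X^1|\in\{-n,\dots,n\}$, as claimed.

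There is essentially no obstacle here; the one point worth flagging is that the upper bound genuinely relies on the preceding removal of all non-forest sets from $\cA$ — otherwise $|E(G[X])|$ could be as large as $\binom{n}{2}$ and the bound would fail — whereas the lower bound uses only $|E(G[X])|\ge 0$ and $|X|\le n$ and is therefore unconditional.
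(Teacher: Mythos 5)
Your proof is correct and follows the same reasoning the paper leaves implicit: the observation is stated without a written proof, justified only by the preceding removal of all non-forest sets from $\cA$, and your argument (forest gives $|E(G[X])|\le |X|-1\le n$, while $|X\setminus X^1|\le |X|\le n$ gives the lower bound) is exactly the intended one. Your remark that the upper bound genuinely depends on the forest condition while the lower bound does not is accurate.
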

	
	\subsection*{Concerning module-width}
	We remove all the sets $X$ in $\cA$ that do not respect condition 1 of Lemma \ref{lem:important}.
	By Lemma \ref{lem:important}, these sets are $R'$-unimportant and thus we keep an $(x,R')^\acy$-representative set of $\cA$.
	After removing these sets, for each $X\in\cA$, every pair $(a,b)$ of distinct vertices in $X^{2+}$ have a different neighborhood in $\comp{V_x}$.
	By definition of module-width, we have $\mw(V_x)=|\{ N(v)\cap \comp{V_x}\mid v\in V_x\}|$.
	Moreover, observe that, for every $X,W\in \cA$, if $\{N(v)\cap \comp{V_x} \mid v\in X^{2+}\}=\{N(v)\cap \comp{V_x} \mid v\in  W^{2+}\}$, then $X^{2+}\equi{V_x}{n} W^{2+}$.
	
	We deduce that the number of $n$-neighbor equivalence classes over the set $\{X^{2+}\mid X\in \cA \}$ is at most $2^{\mw(V_x)}$.
	Thus, the number of equivalence classes of $\sim$ over $\cA$  is at most $2^{\mw(V_x)} \cdot 2n \leq \cN_{\mw}(T,\delta)$.
	The factor $2n$ comes from Observation \ref{obs:nbedges} and appears also in all subsequent upper bounds.

	\subsection*{Concerning mim-width}
	We remove from $\cA$ all the sets $X$ such that $|X^{2+}|>2\mim(V_x)$.
	By Lemma \ref{lem:important}, these sets are $R'$-unimportant and thus we keep an $(x,R')^\acy$-representative set of $\cA$.
	Observe that this can be done in time $O(n^{\mim(V_x)+1} + |\cA|\cdot n^2)$ because $\mim(V_x)$ can be computed in time $O(n^{\mim(V_x)}+1)$.
	Since $|X^{2+}|\leq 2  \mim(V_x)$ for every $X\in\cA$, we deduce that the number of equivalence classes of $\sim$ over $\cA$ is at most $2 n^{2 \mim(V_x) +1}\leq\cN_{\mim}(T,\delta)$.

	\subsection*{Concerning rank-width}
	We remove from $\cA$ all the sets $X$ such that $|X^{2+}|> 2\rw(V_x)$ because they are $R'$-unimportant by Lemma \ref{lem:important}.
	By Lemma \ref{lem:comparen_sizet}, we know that $\nec_n^{\leq 2\rw(V_x)}(V_x)$ is upper bounded by $2^{2\rw(V_x)^2}$.
	We can therefore conclude that the number of equivalence classes of $\sim$ over $\cA$ is at most $2^{2 \rw(V_x)^2}\cdot 2n \leq\cN_\rw(T,\delta)$.

	\subsection*{Concerning \boldmath$\mathbb{Q}$-rank-width}
	We remove all the sets $X\in\cA$ such that $|X^{2+}|> 2\Qrw(V_x)$.
	By Lemma \ref{lem:important}, we keep an $(x,R')^\acy$-representative set of $\cA$.
	By Lemma \ref{lem:comparen_sizet}, we know that $\nec_n^{\leq 2\Qrw(V_x)}(V_x)$ is upper bounded by $(2\Qrw(V_x)+1)^{\Qrw(V_x)}$.
	We conclude that the number of equivalence classes of $\sim$ over $\cA$ is at most $(2\Qrw(V_x)+1)^{\Qrw(V_x)}\cdot 2n\leq\cN_{\Qrw}(T,\delta)$.
	
	It remains to prove the running time.
	Observe that, for module-width and\break ($\bQ$-)rank-width, the removal of $R'$-unimportant sets can be done in time $O(|\cA|\cdot n^2)$. Indeed, $\mw(V_x),\rw(V_x)$, and $\Qrw(V_x)$ can be computed in time $O(n^2)$.
	For 1-neighbor-width, we can assume that the size of $\nec_1(V_x)$ is given because the first step of our algorithm for \textsc{AC-$(\sigma,\rho)$-Dominating Set} problems is to compute $\Rep{V_x}{d}$ for some $d\in \bN^+$ and one can easily compute $\nec_1(V_x)$ while computing $\Rep{V_x}{d}$.
	Notice that we can decide whether $X\sim W$ in time $O(n^2)$.
	Therefore, for each $\f\in\{\mw,\rw,\Qrw,\mim\}$, we can compute the equivalence classes of $\cA$ in time $O(|\cA|\cdot \cN_\f(T,\delta) \cdot n^2)$.
\end{proof}

We are now ready to give an analogue of Theorem \ref{thm:reduce1} for the $(x,R')^\acy$-\break representativity.

\begin{theorem}\label{thm:reduceacy}
	Let $R\in\Rep{V_x}{2}$.
	For each $\f\in\{\mw,\rw,\Qrw,\mim\}$, there exists an algorithm $\reduce_\f^\acy$ that, given a set $\cA\subseteq 2^{V_x}$ such that $X\equi{V_x}{2} R$ for every $X\in\cA$, outputs in time $O(|\cA| \cdot  (\nec_1(V_x)^{2 (\omega -1)}+ \cN_\f(T,\delta)) \cdot n^2)$, a subset $\cB\subseteq \cA$ such that $\cB$ $(x,R')^\acy$-represents $\cA$ and  $|\cB| \leq \cN_\f(T,\delta) \cdot\nec_1(V_x)^{2}$.
\end{theorem}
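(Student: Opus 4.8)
The plan is to combine the decomposition of Lemma~\ref{lem:consistent-partition} with the function $\reduce$ from Theorem~\ref{thm:reduce1}, using Lemma~\ref{lem:consistent} as the bridge. First I would apply Lemma~\ref{lem:consistent-partition} to $\cA$: since every $X\in\cA$ satisfies $X\equiv_{V_x}^2 R$, and since being $2$-neighbor equivalent refines being $1$-neighbor equivalent (Fact~\ref{fact:equivbiggerset} applied with $d=2$, or just the definition), the pairwise disjoint sets $\cA_1,\dots,\cA_t$ produced by the lemma are each subsets of $\cA$ with $t\leq\cN_\f(T,\delta)$, each $\cA_i$ is $R'$-consistent, and $\cA_1\cup\dots\cup\cA_t$ $(x,R')^\acy$-represents $\cA$; moreover this step costs $O(|\cA|\cdot\cN_\f(T,\delta)\cdot n^2)$.

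Next, for each $i\in\{1,\dots,t\}$ I would call the function $\reduce$ of Theorem~\ref{thm:reduce1} on $\cA_i$. To do so I must check its hypothesis: all elements of $\cA_i$ must be $\equiv_{V_x}^1$-equivalent to a fixed $R_i\in\Rep{V_x}{1}$. This holds because all elements of $\cA\supseteq\cA_i$ are $\equiv_{V_x}^2 R$, hence $\equiv_{V_x}^1 \rep_{V_x}^1(R)$; so I set $R_i:=\rep_{V_x}^1(R)$ for every $i$. Theorem~\ref{thm:reduce1} then yields $\cB_i\subseteq\cA_i$ with $\cB_i$ $(x,R')$-representing $\cA_i$ and $|\cB_i|\leq\nec_1(V_x)^2$, in time $O(|\cA_i|\cdot\nec_1(V_x)^{2(\omega-1)}\cdot n^2)$. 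By Lemma~\ref{lem:consistent}, since $\cA_i$ is $R'$-consistent and $\cB_i$ $(x,R')$-represents $\cA_i$, the set $\cB_i$ also $(x,R')^\acy$-represents $\cA_i$. Setting $\cB:=\cB_1\cup\dots\cup\cB_t$, Fact~\ref{fact:acyunionpreserve} gives that $\cB$ $(x,R')^\acy$-represents $\cA_1\cup\dots\cup\cA_t$, which in turn $(x,R')^\acy$-represents $\cA$; since $(x,R')^\acy$-representativity is an equivalence relation (hence transitive), $\cB$ $(x,R')^\acy$-represents $\cA$. Also $\cB\subseteq\cA_1\cup\dots\cup\cA_t\subseteq\cA$.

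For the size bound, $|\cB|\leq\sum_{i=1}^t|\cB_i|\leq t\cdot\nec_1(V_x)^2\leq\cN_\f(T,\delta)\cdot\nec_1(V_x)^2$, as claimed. For the running time, the decomposition costs $O(|\cA|\cdot\cN_\f(T,\delta)\cdot n^2)$, and since the $\cA_i$ are pairwise disjoint subsets of $\cA$ we have $\sum_i|\cA_i|\leq|\cA|$, so the total cost of the $\reduce$ calls is $O(|\cA|\cdot\nec_1(V_x)^{2(\omega-1)}\cdot n^2)$ (the extra work within each call that is linear in $n^2$ per element being absorbed similarly). Summing the two contributions gives $O(|\cA|\cdot(\nec_1(V_x)^{2(\omega-1)}+\cN_\f(T,\delta))\cdot n^2)$, matching the statement; note $\nec_1(V_x)\leq\snec_1(T,\delta)$ so one may also phrase the bounds in terms of the layout.

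I do not expect a serious obstacle here: the theorem is essentially an assembly of three previously established results. The one point requiring a little care is the verification that the hypotheses of Theorem~\ref{thm:reduce1} and of Lemma~\ref{lem:consistent} transfer to each piece $\cA_i$ of the decomposition — specifically that $2$-neighbor equivalence of all sets in $\cA$ implies the $1$-neighbor equivalence needed by $\reduce$, and that the $R'$-consistency granted by Lemma~\ref{lem:consistent-partition} is exactly the hypothesis Lemma~\ref{lem:consistent} needs. A secondary bookkeeping point is to make sure the cost of computing a representative $R_i\in\Rep{V_x}{1}$ and of partitioning is not dominating; this is fine given the data structures from Lemma~\ref{lem:computenecd}.
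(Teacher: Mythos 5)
Your proposal is correct and follows essentially the same route as the paper: decompose $\cA$ into $R'$-consistent pieces via Lemma~\ref{lem:consistent-partition}, reduce each piece with Theorem~\ref{thm:reduce1}, upgrade each $(x,R')$-representative to an $(x,R')^\acy$-representative via Lemma~\ref{lem:consistent}, and combine with Fact~\ref{fact:acyunionpreserve} and transitivity. Your explicit verification that $2$-neighbor equivalence to $R$ implies $1$-neighbor equivalence (enabling the call to $\reduce$) and your use of pairwise disjointness to bound $\sum_i|\cA_i|\leq|\cA|$ for the running time match the paper's argument.
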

\begin{proof}
	Let $\f\in\{\mw,\rw,\Qrw,\mim\}$.
	By Lemma \ref{lem:consistent-partition}, we can compute in time $O(|\cA|\cdot \cN_\f(T,\delta)\cdot n^2)$ a collection $\{\cA_1,\dots,\cA_t\}$ of pairwise disjoint subsets of $\cA$ such that
	\begin{itemize}
		\item $\cA_1\cup\dots\cup \cA_t$ $(x,R')^\acy$-represents $\cA$,
		\item $\cA_i$ is $R'$-consistent for each $i\in\{1,\dots,t\}$,
		\item $t\leq \cN_\f(T,\delta)$.
	\end{itemize}
	
	For each $X\in \cA$, we have $X\equi{V_x}{1} R$ because $X\equi{V_x}{2} R$.
	Since $\cA_1,\dots,\cA_t\subseteq \cA$, we can apply Theorem \ref{thm:reduce1} to compute, for each $i\in\{1,\ldots, t\}$, the set $\cB_i:=\reduce(\cA_i)$.
	By Theorem \ref{thm:reduce1}, for each $i\in \{1,\ldots, t\}$, the set $\cB_i$ is a subset and an $(x,R')$-representative set of $\cA_i$ whose size is bounded by $\nec_1(V_x)^2$.
	Moreover, as $\cA_i$ is $R'$-consistent, we have $\cB_i$ $(x,R')^\acy$-represents $\cA_i$ by Lemma \ref{lem:consistent}.
	
	Let $\cB:=\cB_1\cup \dots\cup \cB_t$.
	Since $\cA_1\cup \dots\cup \cA_t$ $(x,R')^\acy$-represents $\cA$, we deduce from Fact \ref{fact:acyunionpreserve} that $\cB$ $(x,R')^\acy$-represents $\cA$.
	Furthermore, we have $|\cB|\leq \cN_\f(T,\delta)\cdot \nec_1(V_x)^2$ owing to $t\leq \cN_\f(T,\delta)$ and $|\cB_i|\leq \nec_1(V_x)^2$ for all $i\in\{1,\dots,t\}$.
	
	It remains to prove the running time.  By Theorem \ref{thm:reduce1}, we can compute $\cB_1,\dots,\cB_t$ in time
	$O(|\cA_1\cup \dots\cup \cA_t|\cdot \nec_1(V_x)^{2 (\omega -1)} \cdot n^2)$.
	Since the sets $\cA_1,\dots,\cA_t$ are subsets of $\cA$ and pairwise disjoint, we have $|\cA_1\cup\dots\cup \cA_t| \leq |\cA|$.
	That proves the running time and concludes the theorem.
\end{proof}

We are now ready to present an algorithm that solves any \textsc{AC-$(\sigma,\rho)$-Dominating Set} problem.
This algorithm follows the same ideas as the algorithm from Theorem \ref{thm:dom}, except that we use $\reduce^\acy_\f$ instead
of $\reduce$.

\begin{theorem}\label{thm:maxinducedtree}
	For each $\f\in\{\mw,\rw,\Qrw,\mim\}$, there exists an algorithm that, given an $n$-vertex graph $G$ and a rooted layout $(T,\delta)$  of $G$, solves any \textsc{AC-$(\sigma,\rho)$-Dominating Set} problem, in time
	\[ O(\snec_d(T,\delta)^{3}\cdot\snec_1(T,\delta)^{2(\omega + 1 )}\cdot \cN_\f(T,\delta)^2 \cdot \log(\snec_d(T,\delta)) \cdot n^{3}), \]
	with $d:=\max\{2,d(\sigma),d(\rho)\}$.
\end{theorem}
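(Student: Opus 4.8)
The plan is to transcribe the dynamic-programming scheme of Theorem~\ref{thm:dom} almost verbatim, replacing the function $\reduce$ by $\reduce_\f^\acy$ and the notion of $(x,R')$-representativity by $(x,R')^\acy$-representativity. Fix $d:=\max\{2,d(\sigma),d(\rho)\}$ and keep the tables $\cA_x$ of Definition~\ref{def:tabdom} defined with this value of $d$. Since $d\ge\max\{1,d(\sigma),d(\rho)\}$, Lemma~\ref{lem:equivR'} still applies to $\equiv_{\comp{V_x}}^d$, so the recurrence of Lemma~\ref{lem:basicDymProg}, namely $\cA_x[R,R']=\bigcup_{(A,A'),(B,B')\ d\text{-}(R,R')\text{-compatible}}\cA_a[A,A']\otimes\cA_b[B,B']$, remains valid. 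The algorithm performs a bottom-up traversal of $T$ and computes a table $\cD_x$ satisfying the following invariant: for every $(R,R')\in\Rep{V_x}{d}\times\Rep{\comp{V_x}}{d}$, the set $\cD_x[R,R']$ is a subset of $\cA_x[R,R']$ of size at most $\cN_\f(T,\delta)\cdot\snec_1(T,\delta)^2$ that $(x,R')^\acy$-represents $\cA_x[R,R']$.

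For a leaf $x$ with $V_x=\{v\}$ one has $\cA_x[R,R']\subseteq\{\emptyset,\{v\}\}$, so $\cD_x[R,R']:=\cA_x[R,R']$ is computed directly and the invariant holds. For an internal node $x$ with children $a,b$ for which the invariant holds, I would set, exactly as in Theorem~\ref{thm:dom}, $\cB_x[R,R']:=\bigcup_{(A,A'),(B,B')\ d\text{-}(R,R')\text{-compatible}}\cD_a[A,A']\otimes\cD_b[B,B']$ and $\cD_x[R,R']:=\reduce_\f^\acy(\cB_x[R,R'])$; this last call is legal because every $X\in\cB_x[R,R']\subseteq\cA_x[R,R']$ satisfies $X\equiv_{V_x}^d R$, hence $X\equiv_{V_x}^2 R$ as $d\ge2$. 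The invariant at $x$ follows from three ingredients: (i)~Lemma~\ref{lem:acybigotimespreserve} (applicable since $d\ge2$), which gives that $\cD_a[A,A']\otimes\cD_b[B,B']$ $(x,R')^\acy$-represents $\cA_a[A,A']\otimes\cA_b[B,B']$ for all $d$-$(R,R')$-compatible pairs; (ii)~Lemma~\ref{lem:basicDymProg} together with Fact~\ref{fact:acyunionpreserve}, which turn this into ``$\cB_x[R,R']$ $(x,R')^\acy$-represents $\cA_x[R,R']$'' and also yield $\cB_x[R,R']\subseteq\cA_x[R,R']$; and (iii)~Theorem~\ref{thm:reduceacy}, which gives $\cD_x[R,R']\subseteq\cB_x[R,R']$, the size bound $|\cD_x[R,R']|\le\cN_\f(T,\delta)\cdot\nec_1(V_x)^2\le\cN_\f(T,\delta)\cdot\snec_1(T,\delta)^2$, and that $\cD_x[R,R']$ $(x,R')^\acy$-represents $\cB_x[R,R']$. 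Transitivity of $(x,R')^\acy$-representativity closes the induction. At the root $r$ one has $\comp{V_r}=\emptyset$, so $\cA_r[\emptyset,\emptyset]$ is exactly the family of $(\sigma,\rho)$-dominating sets of $G$, and $\best^\acy(\cD_r[\emptyset,\emptyset],\emptyset)=\best^\acy(\cA_r[\emptyset,\emptyset],\emptyset)$ equals the optimum of the \textsc{AC-$(\sigma,\rho)$-Dominating Set} problem; the algorithm outputs an optimal $X\in\cD_r[\emptyset,\emptyset]$ with $G[X]$ a tree, or reports that none exists.

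The running-time analysis follows the bookkeeping of Theorem~\ref{thm:dom}, now carrying the extra factor $\cN_\f(T,\delta)$ coming from the size bound of Theorem~\ref{thm:reduceacy}. As there, each $(A,B,R')$ determines the unique compatible completion $R=\rep_{V_x}^d(A\cup B)$, $A'=\rep_{\comp{V_a}}^d(R'\cup B)$, $B'=\rep_{\comp{V_b}}^d(R'\cup A)$, so there are at most $\snec_d(T,\delta)^3$ compatible tuples; since $|\cD_a[A,A']\otimes\cD_b[B,B']|\le(\cN_\f(T,\delta)\cdot\snec_1(T,\delta)^2)^2$, we get $\sum_{(R,R')}|\cB_x[R,R']|\le\snec_d(T,\delta)^3\cdot\cN_\f(T,\delta)^2\cdot\snec_1(T,\delta)^4$. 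Evaluating all the $\rep^d$ values over one node costs $O(\snec_d(T,\delta)^3\cdot\log(\snec_d(T,\delta))\cdot n^2)$, and by Theorem~\ref{thm:reduceacy} the total cost of the $\reduce_\f^\acy$ calls over one node is $O\bigl(\sum_{(R,R')}|\cB_x[R,R']|\cdot(\snec_1(T,\delta)^{2(\omega-1)}+\cN_\f(T,\delta))\cdot n^2\bigr)$; using $\snec_1(T,\delta)^4\cdot\snec_1(T,\delta)^{2(\omega-1)}=\snec_1(T,\delta)^{2(\omega+1)}$, summing over the $2n-1$ nodes of $T$ and adding the $O(\snec_d(T,\delta)\cdot\log(\snec_d(T,\delta))\cdot n^3)$ preprocessing of Lemma~\ref{lem:computenecd} yields the stated bound. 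The concrete per-parameter running times of Table~\ref{tab:mit} then follow by substituting the bounds of Lemmas~\ref{lem:compare} and~\ref{lem:comparen_sizet} for $\snec_d$, $\snec_1$ and $\cN_\f$.

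Everything structural is an immediate transcription of the proof of Theorem~\ref{thm:dom}, the new content being entirely concentrated in Lemmas~\ref{lem:important}, \ref{lem:consistent}, \ref{lem:consistent-partition} and Theorem~\ref{thm:reduceacy}, all already established. I expect the only genuinely delicate point to be the running-time accounting: one must be careful that the $\cN_\f(T,\delta)$ factor hidden inside the cost of $\reduce_\f^\acy$ (it appears both in the size of the output and in the running time, through the $R'$-consistent decomposition of Lemma~\ref{lem:consistent-partition}) does not inflate the exponent of $\cN_\f$ or the polynomial factor in $n$ beyond what is claimed; if needed, one applies $\reduce_\f^\acy$ incrementally, inserting the merges $\cD_a[A,A']\otimes\cD_b[B,B']$ one at a time and reducing after each insertion so that the working set never exceeds $\cN_\f(T,\delta)\cdot\snec_1(T,\delta)^2$.
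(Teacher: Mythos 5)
Your proof matches the paper's proof of Theorem~\ref{thm:maxinducedtree} essentially verbatim: same tables $\cA_x$ and $\cD_x$, same invariant, same use of Claim~\ref{claim:basicprogdym} (the analogue of Lemma~\ref{lem:basicDymProg}), Fact~\ref{fact:acyunionpreserve}, Lemma~\ref{lem:acybigotimespreserve}, and Theorem~\ref{thm:reduceacy}, and the same running-time bookkeeping. The caveat you raise at the end about the extra $\cN_\f$ factor inside the cost of $\reduce_\f^\acy$ potentially yielding a third power of $\cN_\f$ in the total is a legitimate observation that the paper itself does not explicitly resolve, but since the paper's proof is the same transcription of Theorem~\ref{thm:dom}, your write-up is faithful to it.
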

\begin{proof}
	Let  $\f\in\{\mw,\rw,\Qrw,\mim\}$.
	If we want to compute a solution of maximum (resp., minimum) weight, then we use the framework of section \ref{sec:represents} with $\opt=\max$ (resp., $\opt=\min$).
	
	The first step of our algorithm is to compute, for each $x\in V(T)$, the sets $\Rep{V_x}{d}$, $\Rep{\comp{V_x}}{d}$ and a data structure to compute $\rep{V_x}{d}(X)$ and $\rep{\comp{V_x}}{d}(Y)$, for any $X\subseteq V_x$ and any $Y\subseteq \comp{V_x}$, in time $O(\log(\snec_d(T,\delta)) \cdot n^2 )$.
	As $T$ has $2n-1$ nodes, by Lemma \ref{lem:computenecd}, we can compute these sets and data structures in time $O(\snec_d(T,\delta)\cdot \log(\snec_d(T,\delta)) \cdot n^3)$.
	
	For each node $x\in T$ and, for each $(R,R')\in \Rep{V_x}{d}\times \Rep{\comp{V_x}}{d}$, we define $\cA_x[R,R']\subseteq 2^{V_x}$ as follows:
	\[ \cA_x[R,R']:=\{ X\subseteq V_x \mid X\equi{V_x}{d} R \text{ and } X\cup R' \text{ $(\sigma,\rho)$-dominates } V_x\}.\]

	We deduce the following claim from the proof of Claim \ref{lem:basicDymProg}.
	
	\begin{claim}\label{claim:basicprogdym}
		For every internal node $x\in V(T)$ with $a$ and $b$ as children and $(R,R')\in \Rep{V_x}{d}\times \Rep{\comp{V_x}}{d}$, we have
		\[ \cA_x[R,R']= \bigcup_{(A,A') \text{ and } (B,B') \text{ are } d\textrm{-}(R,R')\textrm{-compatible}}    \cA_a[A,A']\otimes \cA_b[B,B'],\]
		where the union is taken over all $(A,A')\in \Rep{V_a}{d}\times\Rep{\comp{V_a}}{d}$ and $(B,B')\in \Rep{V_b}{d}\times \Rep{\comp{V_b}}{d}$.
	\end{claim}
	
	For each node $x$ of $V(T)$, our algorithm will compute a table $\cD_x$ that satisfies the following invariant.
	
	\subsection*{Invariant}
	For every $(R, R')\in \Rep{V_x}{d}\times \Rep{\comp{V_x}}{d}$, the set $\cD_x[R,R']$ is a subset of $\cA_x[R,R']$ of size at most $\cN_\f(T,\delta)\cdot \nec_1(V_x)^{2}$ that $(x,R')^\acy$-represents $\cA_x[R,R']$.

	Notice that by Definition of $(x,R')^\acy$-representativity, if the invariant holds for $r$, then $\cD_r[\emptyset,\emptyset]$ contains a set $X$ of maximum (or minimum) weight such that $X$ is a $(\sigma,\rho)$-dominating set of $G$ and $G[X]$ is a tree.
	
	The algorithm is a usual bottom-up dynamic programming algorithm and computes for each node $x$ of $T$ the table $\cD_x$.
	
	Let $x$ be a leaf of $T$ with $V_x=\{v\}$.
	Observe that $\cA_x[R,R'] \subseteq 2^{V_x} = \{\emptyset,  \{ v \}\}$.
	Thus, our algorithm can directly compute $\cA_x[R,R']$ and set $\cD_x[R,R']:=\cA_x[R,R']$.
	In this case, the invariant trivially holds.
	
	Now, take $x$ an internal node of $T$ with $a$ and $b$ as children such that the invariant holds for $a$ and $b$.
	For each $(R,R')\in\Rep{V_x}{d}\times \Rep{\comp{V_x}}{d}$, the algorithm computes $\cD_x[R,R']:=\reduce^\acy_\f(\cB_x[R,R'])$, where the set $\cB_x[R,R']$ is defined as follows:
	\[ \cB_x[R,R']:= \bigcup_{(A,A') \text{ and } (B,B') \text{ are }  d\textrm{-}(R,R')\textrm{-compatible}} \cD_a[A,A']\otimes \cD_b[B,B'],\]
	where the union is taken over all $(A,A')\in \Rep{V_a}{d}\times\Rep{\comp{V_a}}{d}$ and $(B,B')\in \Rep{V_b}{d}\times \Rep{\comp{V_b}}{d}$.
	
	Similarly to the proof of Theorem \ref{thm:dom}, we deduce from Fact \ref{fact:acyunionpreserve}, Lemma \ref{lem:acybigotimespreserve}, Claim \ref{claim:basicprogdym}, and Theorem \ref{thm:reduceacy} that $\cD_x[R,R']$ is a subset and an $(x,R')^\acy$-representative set of $\cA_x[R,R']$.
	By Theorem \ref{thm:reduceacy}, we have $|\cD_x[R,R']|\leq \cN_\f(T,\delta)\cdot \snec_1(T,\delta)^2$.
	
	Consequently, the invariant holds for $x$, and by induction, it holds for all the nodes of $T$.
	The correctness of the algorithm follows.
	
	\subsection*{Running time}
	The running time of our algorithm is almost the same as the running time given in Theorem \ref{thm:maxinducedtree}.
	The only difference is the factor $\cN_\f(T,\delta)^2$ which is due to the following fact: by the invariant condition, for each $(A,A')$ and $(B,B')$, the size of $\cD_a[A,A']\otimes\cD_b[B,B']$ is at most $\cN_\f(T,\delta)^2\cdot\snec_1(T,\delta)^4$.
\end{proof}

By constructing for any graph $G$ a graph $G^\star$ such that the width measure of $G^\star$ is linear in the width measure of $G$ and any optimum acyclic
$(\sigma,\rho)$-dominating set of $G$ corresponds to an optimum \textsc{AC-$(\sigma,\rho)$}-dominating set of $G^\star$ and vice versa, we obtain the following which allows us,
for instance, to compute a feedback vertex set in time $n^{O(c)}$, $c$ the mim-width.

\begin{theorem}\label{thm:fvs}
	For each $\f\in\{\mw,\rw,\Qrw,\mim\}$, there exists an algorithm that, given an $n$-vertex graph  $G$ and  a rooted layout $(T,\delta)$ of $G$, solves any \textsc{Acyclic $(\sigma,\rho)$-Dominating Set} problem in time
	\[ O(\snec_d(T,\delta)^{3}\cdot\snec_1(T,\delta)^{2(\omega + 1 )}\cdot \cN_\f(T,\delta)^{O(1)} \cdot n^3) \]
	with $d:=\max\{2,d(\sigma),d(\rho)\}$.
\end{theorem}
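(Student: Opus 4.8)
The plan is to reduce, in polynomial time, any \textsc{Acyclic $(\sigma,\rho)$-Dominating Set} instance $(G,\w)$ on $n$ vertices to an \textsc{AC-$(\sigma',\rho')$-Dominating Set} instance $(G^\star,\w^\star)$ equipped with a layout $\cL^\star$, where $(\sigma',\rho')$ is obtained from $(\sigma,\rho)$ by shifting by a constant (so $d(\sigma'),d(\rho')\leq\max\{d(\sigma),d(\rho)\}+O(1)$), $|V(G^\star)|=O(n)$, and $\mw(\cL^\star)\leq\mw(T,\delta)+O(1)$, $\rw(\cL^\star)\leq\rw(T,\delta)+O(1)$, $\Qrw(\cL^\star)\leq\Qrw(T,\delta)+O(1)$, $\mim(\cL^\star)\leq\mim(T,\delta)+O(1)$, together with an affine correspondence between the optimum values of the two instances (up to a precomputable additive offset). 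Granting this, $\cN_\f(\cL^\star)=\cN_\f(T,\delta)^{O(1)}$, $\snec_1(\cL^\star)=O(\snec_1(T,\delta))$ and $\snec_{d'}(\cL^\star)=\snec_d(T,\delta)^{O(1)}$ with $d=\max\{2,d(\sigma),d(\rho)\}$ and $d'$ the analogue for $(\sigma',\rho')$, so running the algorithm of Theorem~\ref{thm:maxinducedtree} on $(G^\star,\cL^\star)$ and translating the returned solution back yields an optimum acyclic $(\sigma,\rho)$-dominating set of $G$ within the stated running time.

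The graph $G^\star$ is obtained from $G$ by adding a \emph{hub} $z$ and, for every $v\in V(G)$, a \emph{connector} $q_v$ with $N_{G^\star}(q_v)=\{v,z\}$; no edge is added inside $V(G)$, so $G^\star[X]=G[X]$ for $X\subseteq V(G)$ and for every $X^\star\subseteq V(G^\star)$ the set $X^\star\cap V(G)$ induces in $G$ an induced subgraph of $G^\star[X^\star]$. For \textsc{Feedback Vertex Set} / \textsc{Maximum Induced Forest}, i.e. $(\sigma,\rho)=(\bN,\bN)$ where the domination constraint is vacuous, this is already sufficient with $\w^\star(z)=\w^\star(q_v)=0$: if $G^\star[X^\star]$ is a tree then $X^\star\cap V(G)$ induces a forest in $G$ of the same weight, and conversely, if $G[X]$ is a forest with components $C_1,\dots,C_\ell$, picking $v_i\in C_i$ and setting $X^\star:=X\cup\{z\}\cup\{q_{v_1},\dots,q_{v_\ell}\}$ produces a tree of the same weight (it is connected and acyclic, $z$ reaching the $\ell$ components through the $\ell$ distinct pendants $q_{v_i}$); hence the two optima coincide. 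For a general non-empty finite or co-finite pair $(\sigma,\rho)$ the connectors perturb the in-solution neighbourhood of each original vertex, and the vertices $z$ and $q_v$ must themselves be $(\sigma,\rho)$-dominated. I would handle this by attaching to $z$ and to each $q_v$ pendant gadgets whose size depends only on $d(\sigma)$ and $d(\rho)$, and by giving all auxiliary vertices weights so extreme (positive or negative according to $\opt$ and the intended status) that every optimal solution of $(G^\star,\w^\star)$ is forced into a prescribed configuration of auxiliary vertices, at the cost of a precomputable additive offset. A case analysis on whether $\sigma$ and $\rho$ are finite or co-finite shows that these gadgets can be chosen so that, in any forced solution, every auxiliary vertex is correctly $(\sigma,\rho)$-dominated while each original vertex $v$ acquires a \emph{fixed} number $c$ of forced auxiliary neighbours; replacing $(\sigma,\rho)$ by $(\sigma',\rho'):=(\{s+c:s\in\sigma\},\{r+c:r\in\rho\})$ then makes the constraint on $v$ in $G^\star$ coincide with the original $(\sigma,\rho)$-constraint in $G$, with $d(\sigma'),d(\rho')\leq\max\{d(\sigma),d(\rho)\}+c$. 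Adding these pendant-tree gadgets does not affect the forest$\leftrightarrow$tree correspondence.

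For the layout $\cL^\star$, I would replace each leaf $\delta(v)$ of $T$ by a constant-size subtree whose leaves are $v$, $q_v$ and the gadget vertices attached to $q_v$, and graft $z$ with its gadget just below the root. Then, on every cut $(A,\comp{A})$ of $G^\star$ other than the constantly many ``local'' cuts inside these small subtrees (whose smaller side has constant size, hence trivial width), each $q_v$ and each attached gadget vertex lies on the same side as $v$, so the only new cross edges are the edges $q_v z$; as they all share $z$, they increase the size of a maximum induced matching of $G^\star[A,\comp{A}]$ by at most one, the number of distinct rows of $M_{A,\comp{A}}$ by at most one, and its ranks over $GF(2)$ and over $\bQ$ by $O(1)$. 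Hence $\mw(\cL^\star),\rw(\cL^\star),\Qrw(\cL^\star),\mim(\cL^\star)$ exceed the corresponding widths of $(T,\delta)$ by at most an additive constant, which, with $|V(G^\star)|=O(n)$, gives the bounds on $\cN_\f$, $\snec_1$ and $\snec_{d'}$ claimed above and completes the argument once Theorem~\ref{thm:maxinducedtree} is invoked.

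I expect the main obstacle to be precisely the gadget design for arbitrary finite or co-finite $(\sigma,\rho)$: guaranteeing that \emph{every} auxiliary vertex is automatically $(\sigma,\rho)$-dominated in \emph{every} forced solution while leaving the constraints of the original vertices untouched requires a somewhat delicate (but by now standard) case distinction according to the shape of $\sigma$ and of $\rho$ and their extremal elements. A second, more routine, point needing care is checking that inserting $\Theta(n)$ connectors does not inflate the mim-width — it does not, exactly because all connectors are attached to the single hub $z$.
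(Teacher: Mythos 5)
Your hub-and-connector construction of $G^\star$ (the vertex $z$ and the pendants $q_v$ adjacent to $v$ and $z$) is exactly the paper's construction (their $v_0$ and $\beta(v)$), and your argument that each width measure increases by at most an additive constant, and hence $\cN_\f$, $\snec_1$, $\snec_d$ are only polynomially inflated, matches the paper's Inequalities (1)--(2). The divergence is in how the $(\sigma,\rho)$-domination of the auxiliary vertices is handled: the paper does \emph{not} try to express this as a new \textsc{AC-$(\sigma',\rho')$-Dominating Set} instance. Instead it modifies the dynamic-programming table of Theorem~\ref{thm:maxinducedtree} so that the entry $\cA_x[R,R']$ only requires that $(X\cup R')\cap V(G)$ $(\sigma,\rho)$-dominates $V_x^{\sL}\cap V(G)$. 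Since $N_{G^\star}(u)\cap V(G)=N_G(u)$ for every $u\in V(G)$, the auxiliary vertices are excluded from the domination count on both sides, and no alteration of $(\sigma,\rho)$ is needed; the auxiliary vertices themselves have no domination constraint at all.

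Your alternative — keeping Theorem~\ref{thm:maxinducedtree} black-box and instead attaching pendant gadgets plus shifting to $(\sigma',\rho')$ — has a genuine gap, and it is not the sort of gap a ``standard case distinction'' can close. Your argument needs each original vertex $v$ to acquire a \emph{fixed} number $c$ of forced auxiliary in-solution neighbours, but this is impossible with the hub-and-connector graph: in the forest-to-tree correspondence, $q_v$ is placed in the tree solution exactly when $v$ is the chosen representative of its connected component of $G[X]$, and this choice depends on $X$, not on $v$. Since $q_v$ must be adjacent to $v$ (otherwise it cannot attach $v$'s component to $z$) and must be in the solution when it is the chosen connector, some original vertices get $+1$ from their connector and others get $+0$; there is no uniform shift. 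You cannot force a fixed status on the $q_v$'s by weights either: forcing all of them into the solution creates cycles through $z$ (whenever $G[X]$ has an edge $uv$, the cycle $z\text{-}q_u\text{-}u\text{-}\dots\text{-}v\text{-}q_v\text{-}z$ appears), and forcing all of them out disconnects the hub. More elaborate ``switch'' gadgets (e.g.\ a partner pendant $q'_v$ with exactly one of $q_v,q'_v$ selected) would have to encode an exclusive-or constraint between two vertices, which is not a local degree constraint and is not enforced by acyclicity alone when $v\notin X$. The cases where your shift argument works are exactly those where $\sigma$ and $\rho$ are insensitive to a $\pm1$ perturbation in the relevant range (e.g.\ $(\bN,\bN)$ and hence \textsc{Feedback Vertex Set}), but the theorem is stated for arbitrary finite or co-finite $(\sigma,\rho)$. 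The paper's modification of the table definition is the substitute for this non-existent reduction and is where the actual work of the proof lies.
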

\begin{proof}
	Let  $\f\in\{\mw,\rw,\Qrw,\mim\}$.
	Suppose that we want to compute a maximum acyclic $(\sigma,\rho)$-dominating set.
	The proof for computing a minimum  acyclic $(\sigma,\rho)$-dominating set is symmetric.

	The first step of this proof is to construct a $2n+1$-vertex graph $G^\star$ from $G$ and a layout $(T^\star,\delta^\star)$ of $G^\star$ from $(T,\delta)$ in time $O(n^2)$ such that $(T^\star,\delta^\star)$ respects the following inequalities:
	\begin{enumerate}
		\item for every $d\in \bN^+$, $\snec_d(T^\star,\delta^\star)\leq (d+1)\cdot\snec_d(T,\delta)$,
		\item for every $\f\in \{\mim, \mw, \rw,\Qrw\}$, $\f(T^\star,\delta^\star)\leq \f(T,\delta) + 1$.
	\end{enumerate}

	The second step of this proof consists in showing how the algorithm of Theorem \ref{thm:maxinducedtree} can be modified to find a maximum acyclic $(\sigma,\rho)$-dominating set of $G$ by running this modified algorithm on $G^\star$ and $(T^\star,\delta^\star)$.
	
	We construct $G^\star$ as follows. Let $\beta$ be a bijection from $V(G)$ to a set $V^+$ disjoint from $V(G)$.
	The vertex set of $G^\star$ is $V(G)\cup V^+ \cup \{v_0\} $ with $v_0$ a vertex distinct from those in $V(G)\cup V^+$.
	We extend the weight function $\w$ of $G$ to $G^\star$ such that the vertices of $V(G)$ have the same weight as in $G$ and the weight of the vertices in $V^+\cup \{v_0\}$ is 0.
	Finally, the edge set of $G^\star$ is defined as follows:
	\[ E(G^\star):= E(G)\cup \{ \{ v,\beta(v)\},\{ v_0, \beta(v)\} \mid v\in V(G) \} .\]
	
	We now construct $\sL=(T^\star,\delta^\star)$ from $\cL:=(T,\delta)$.
	We obtain $T^\star$ and $\delta^\star$ by doing the following transformations on $T$ and $\delta$:
	\begin{itemize}
		\item For each leaf $\ell$ of $T$ with $\delta(\ell)=\{v\}$, we transform $\ell$ into an internal node by adding two new nodes $a_\ell$ and $b_\ell$ as its children such that $\delta^\star(a_\ell)=v$ and $\delta^\star(b_\ell)=\beta(v)$.
		\item The root of $T^\star$ is a new node $r$ whose children are the root of $T$ and a new node $a_r$ with $\delta^\star(a_r)=v_0$.
	\end{itemize}
	In order to simplify the proof, we use the following notations.
	
	For each node $x\in V(T^\star)$, we let $\comp{V_x^\sL}:=V(G^\star)\setminus V_x^\sL$, and for each node $x\in V(T)$, we let $\comp{V_x^\cL}:=V(G)\setminus V_x^\cL$.

	Now, we prove that $(T^\star,\delta^\star)$ respects inequalities 1 and 2.
	Let $x$ be a node of $T^\star$.
	Observe that if $x\in V(T^\star)\setminus V(T)$, then the set $V_x^{\sL}$ either contains one vertex or equals $V(G^\star)$.
	Hence, in this case, the inequalities hold because $\nec_d(V_x^{\sL})\leq d$ for each $d\in\bN^+$ and $\f(V_x^{\sL})\leq 1$ for each $\f\in \{\mim, \mw, \rw,\Qrw\}$.
	
	Now, assume that $x$ is also a node of $T$.
	Hence, by construction, we have
	\begin{align*}
		V_{x}^{\sL} &=V_{x}^\cL \cup \{\beta(v) \mid v \in V_{x}^\cL \},\\
		\comp{V_x^\sL}&= \comp{V_x^\cL} \cup \{\beta(v) \mid v \in \comp{V_{x}^\cL} \} \cup \{v_0\}.
	\end{align*}
	Now, we prove inequality 1. Let $d\in\bN^+$.
	By construction of $G^\star$ and $\sL$, for each vertex $v\in V_{x}^{\cL}$, we have $\beta(v)\in V_{x}^{\sL}$ and
	\begin{align}
		N_{G^\star}(v)\cap \comp{V_{x}^{\sL}}&= N_{G}(v)\cap \comp{V_{x}^{\cL}}, \\
		N_{G^\star}(\beta(v))\cap \comp{V_{x}^{\sL}}&=\{v_0\}.
	\end{align}
	We deduce that, for every $X,Y\subseteq V_{x}^{\sL}$, we have $X\equi{V_{x}^{\sL}}{d} Y$ if and only
	\begin{itemize}
		\item $X\cap V(G) \equi{V_{x}^{\cL}}{d} Y\cap V(G)$ and
		\item $\min(d,|N(v_0)\cap X|)=\min(d,|N(v_0)\cap Y|)$.
	\end{itemize}
	Similarly, we deduce that, for every $X,Y\subseteq \comp{V_{x}^{\sL}}$, we have $X\equi{\comp{V_{x}^{\sL}}}{d} Y$ if and only if
	\begin{itemize}
		\item $X\cap V(G)\equi{\comp{V_{x}^{\cL}}}{d} Y\cap V(G)$ and
		\item $X\cap \{v_0\}= Y\cap \{v_0\}$.
	\end{itemize}
	Thus, we can conclude that $\snec_d(V_{x}^{\sL})\leq (d+1)\cdot \snec_d(V_{x}^{\cL})$.
	Consequently, inequality 1 holds.
	
	We deduce inequality 2 from Figure \ref{fig:matrix} describing the adjacency matrix between $V_{x}^{\sL}$ and $\comp{V_{x}^{\sL}}$.
	
	\begin{figure}[!pb]
		\centering
		\includegraphics[scale=0.95]{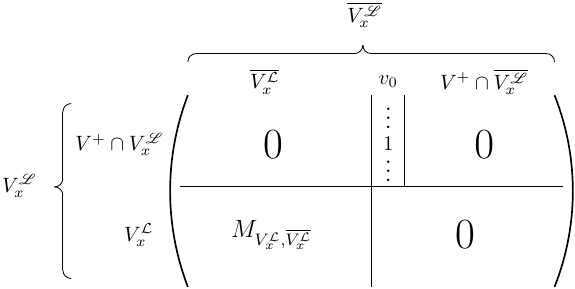}
		\caption{The adjacency matrix between $V_{x}^{\sL}$ and $\comp{V_{x}^{\sL}}$.}
		\label{fig:matrix}
	\end{figure}

	Now, we explain how we modify the algorithm of Theorem \ref{thm:maxinducedtree} in order to find a maximum acyclic $(\sigma,\rho)$-dominating set of $G$ by calling this algorithm on $G^\star$.
	For doing so, we modify the definition of the table $\cA_x$, the invariant, and the computational steps of the algorithm of Theorem \ref{thm:maxinducedtree}.
	The purpose of these modifications is to restrict the $(\sigma,\rho)$-domination to the vertices of $V(G)$.
	For doing so, we consider the set of nodes $S:= V(T)\cup \{r,a_r\}$.
	Observe that, for every node $x$ in $S$, there are no edges in $G[V_x^\sL, \comp{V_x^\sL}]$ between a vertex in $V(G)$ and a vertex in $V(G^\star)\setminus V(G)$.
	This is not true for the nodes of $V(T^\star)\setminus S$.
	For this reason, our algorithm ignores the nodes in $V(T^\star)\setminus S$ and computes a table only for the nodes in $S$.
	
	For every $x\in S$ and every $(R,R')\in \Rep{V_x^{\sL}}{d}\times \Rep{\comp{V_{x}^{\sL}}}{d}$ we define $\cA_x[R,R']\subseteq 2^{V_x^{\sL}}$ as follows:
	\begin{align*}
		\cA_x[R,R']&:= \{ X\subseteq V_x^{\sL} \mid X\equi{V_x^{\sL}}{d} R \text{ and } (X\cup R')\\ &\quad\cap V(G)\text{ $(\sigma,\rho)$-dominates } V_x^{\sL}\cap V(G)   \}.
	\end{align*}
	
	We claim that if $G$ admits an acyclic $(\sigma,\rho)$-dominating set $D$, then there exists $D'\in\cA_r[\emptyset,\emptyset]$ such that $D'\cap V(G)=D$ and $G^\star[D']$ is a tree.
	Let $D$ be an acyclic $(\sigma,\rho)$-dominating set of $G$ with $\cc(G[D])=\{C_1,\dots,C_t\}$.
	For each $i\in\{1,\dots,t\}$, let $v_i$ be a vertex in $C_i$.
	One easily checks that $G^\star[D\cup \{\beta(v_i) \mid 1\leq i\leq t\} \cup v_0]$ is a tree.
	Moreover, by definition of $\cA_r[\emptyset,\emptyset]$, for every $X\in \cA_r[\emptyset,\emptyset]$, if $G[X]$ is a tree, then $X\cap V(G)$ is an acyclic $(\sigma,\rho)$-dominating set of $G$.
	Hence, if $G$ admits an acyclic $(\sigma,\rho)$-dominating set, any $(r,\emptyset)^\acy$-representative set of $\cA_r[\emptyset,\emptyset]$ contains a set $X$ such that $X\cap V(G)$ is a maximum acyclic $(\sigma,\rho)$-dominating set of $G$.

	For every node $x\in S$, we compute a table $\cD_x$ satisfying the following invariant.
	
	\subsection*{Invariant} For each node $x\in S$ and each $(R,R')\in \Rep{V_x^{\sL}}{d}\times \Rep{\comp{V_x^{\sL}}}{d}$, the set $\cD_x[R,R']$ is a subset of $\cA_x[R,R']$ of size at most $\cN_\f(T^\star,\delta^\star)\cdot \nec_1(V_x^\sL)^{2}$ that $(x,R')^\acy$-represents $\cA_x[R,R']$.
	
	Before we explain how to compute the table $\cD_x$, for each $x\in S$, we need the following fact and claim.
	We deduce the following fact from Lemma \ref{lem:equivR'} and the fact that, for every node $x$ in $S$, there are no edges in $G[V_x^\sL, \comp{V_x^\sL}]$ between a vertex in $V(G)$ and a vertex in $V(G^\star)\setminus V(G)$.
	\begin{fact}\label{fact:FVSequivR'}
		Let $x\in S$.
		Let $X\subseteq V_x^\sL$ and $Y,R'\subseteq \comp{V_x^\sL}$ such that $Y\equi{\comp{V_x^\sL}}{d} R'$. Then $(X\cup R')\cap V(G)$
		$(\sigma,\rho)$-dominates $V_x^\sL\cap V(G)$ if and only if $( X\cup Y)\cap V(G)$ $(\sigma,\rho)$-dominates $V_x^\sL\cap V(G)$.
	\end{fact}
	
	We deduce the following claim from Fact \ref{fact:FVSequivR'} and Lemma \ref{lem:basicDymProg}.
	\begin{claim}\label{claim:basicDymProg}
		Let $x\in S\setminus \{a_r\}$ such that $x$ is not a leaf in $T$.
		Let $a$ and $b$ be the children of $x$ in $T^\star$.
		For every $(R,R')\in \Rep{V_x^{\sL}}{d}\times \Rep{\comp{V_x^{\sL}}}{d}$, we have
		\[ \cA_x[R,R'] = \bigcup_{(A,A') \text{ and } (B,B')  d\textrm{-}(R,R')\textrm{-compatible}} \cA_a[A,A']\otimes \cA_b[B,B'],\]
		where the union is taken over all $(A,A')\in \Rep{V_a^\sL}{d}\times\Rep{\comp{V_a^\sL}}{d}$ and $(B,B')\in \Rep{V_b^\sL}{d}\times \Rep{\comp{V_b^\sL}}{d}$.
	\end{claim}

	The algorithm starts by computing the table $\cD_x$ for each node $x\in S$ such that $x=a_r$ or $x$ is a leaf of $T$.
	Since $|V_x^\sL|\leq 2$, our algorithm directly computes $\cA_x[R,R']$ and sets $\cD_x[R,R']:=\cA_x[R,R']$ for every $(R,R')\in \Rep{V_x^{\sL}}{d}\times \Rep{\comp{V_x^{\sL}}}{d}$.
	
	For the other nodes our algorithm computes the table $\cD_x$ exactly as the algorithm of Theorem \ref{thm:maxinducedtree}.

	The correctness of this algorithm follows from Theorem \ref{thm:maxinducedtree} and Claim \ref{claim:basicDymProg}.
	By Theorem \ref{thm:maxinducedtree}, the running time of this algorithm is
	\[  O(\snec_2(\sL)^{3}\cdot\snec_1(\sL)^{2(\omega + 1 )}\cdot \cN_\f(\sL)^2 \cdot n^3).\]
	We deduce the running time in function of $\cL$ from inequalities 1 and 2.
\end{proof}

\section{{\sc Max Cut}}\label{sec:maxcut}

Prior to this work, the $d$-neighbor-equivalence relation was used only for problems with a locally checkable property like \textsc{Dominating Set} \cite{Bui-XuanTV13,GolovachHKKSV18,OumSV13}.
We prove in this paper that the $d$-neighbor-equivalence relation can also be useful for problems with a connectivity constraint and an acyclicity constraint.
Is this notion also useful for other kinds of problems?
Can we use it as a parameter to propose fast algorithms for problems which are $W[1]$-hard parameterized by clique-width, $\bQ$-rank-width, and rank-width such as \textsc{Hamiltonian Cycle}, \textsc{Edge Dominating Set}, and \textsc{Max Cut}?
The complexity of these problems parameterized by clique-width is well-known.
Indeed, for each of these problems, we have an ad hoc $n^{O(k)}$ time algorithm with $k$ the clique-width of a given $k$-expression \cite{BergougnouxKK20,FominGLS14}.
On the other hand, little is known concerning rank-width and $\bQ$-rank-width.
For mim-width, we know that \textsc{Hamiltonian Cycle} is para-\NP-hard parameterized by the mim-width of a given rooted layout \cite{JaffkeKT20a}.
As these problems are $\W[1]$-hard parameterized by clique-width, we cannot expect to rely only on the $d$-neighbor equivalence relation for $d$ a constant.

In this section, we prove that, given an $n$-vertex graph and a rooted layout $\cL$, we can use the $n$-neighbor equivalence to solve the \textsc{Max Cut} problem in time $\snec_n(\cL)^{O(1)}\cdot n^{O(1)}$.
The \textsc{Max Cut} problem asks, given a graph $G$, for the maximum $w\in\bN$ such that there exists a subset $X\subseteq V(G)$ with $w=|E(X,\comp{X  })|$.

We also prove that we can solve a connectivity variant of \textsc{Max Cut} called  \textsc{Maximum Minimal Cut} (a.k.a. \textsc{Largest Bond}) in time $\snec_n(\cL)^{O(1)}\cdot n^{O(1)}$.
A \emph{minimal cut} of a graph $G$ is a subset of vertices $X$ such that $G[X]$ and $G[\comp{X}]$ are connected.
The problem \textsc{Maximum Minimal Cut} asks, given a graph $G$, for the computation of a minimal cut $X\subseteq V(G)$ such that $|E(X,\comp{X})|$ is maximum.
The parameterized complexity of this problem was studied recently in \cite{DuarteLPSS19,EtoHKK19}, and in both papers, the authors proved that  \textsc{Maximum Minimal Cut} is solvable in time $n^{O(\cw)}$ with $\cw$ the clique-width of a given decomposition.

It is worth mentioning that to deal with the two connectivity constraints of \textsc{Maximum Minimal Cut}, we modify the framework developed in section \ref{sec:represents} and in particular the notion of representative set and the way we compute one.
For doing so, we use a nontrivial trick which could be easily generalized to deal with any fixed number of connectivity constraints.

By Corollary \ref{cor:necn}, our results imply that \textsc{Max Cut} and \textsc{Maximum Minimal Cut} are solvable in time $n^{O(\mw(G))}$, $n^{O(\Qrw(G))}$, and $n^{2^{O(\rw(G))}}$.

Let us give some explanations. Let $G$ be an $n$-vertex graph and $(T,\delta)$ a rooted layout of $G$.
Suppose that we want to solve \textsc{Max Cut}.
Let $x$ be a node of $T$ and $X,W\subseteq V_x$ such that $X\equi{V_x}{n} W$.
From Lemma \ref{lem:internal_node}, we can show that if the number of edges between $X$ and $V_x\setminus X$ is bigger than the number of edges between $W$ and $V_x \setminus W$, then $X$ is a better partial solution than $W$.
That is, for every $Y\subseteq \comp{V_x}$, the set $X\cup Y$ is a better solution than $W\cup Y$.
This follows from the fact that the number of edges of $E(V_x,\comp{V_x})$ between $X\cup Y$ and $\comp{X\cup Y}$ is the same as the number of edges of $E(V_x,\comp{V_x})$ between $W\cup Y$ and $\comp{W\cup Y}$.

It follows that it is enough to compute, for each node $x$ and each $R\in\Rep{V_x}{n}$, the maximum $k\in\bN$ such that $k=|E(X,V_x\setminus X)|$ for some $X\equi{V_x}{n} R$.
The integer $k$ computed for the root of $T$ and $R=\emptyset$ corresponds to the solution of \textsc{Max Cut}.

Before we present this algorithm, we need the following lemma, which we use to compute the tables associated with an internal node.

\begin{lemma}\label{lem:internal_node}
	Let $a$ and $b$ be the children of some internal node of $T$.
	Let $A\in \Rep{V_a}{n}$ and $B\in \Rep{V_b}{n}$.
	For every $X\subseteq V_a$ and $W\subseteq V_b$ such that $A\equi{V_a}{n} X$ and $B\equi{V_b}{n} W$, the number of edges between $X\cup W$ and $V_x\setminus(X\cup W)$ equals
	\[ |E(X,V_a\setminus X) | + |E(W,V_b\setminus W) | + |E(A,V_b\setminus B)| + |E(B,V_a\setminus A) |. \]
\end{lemma}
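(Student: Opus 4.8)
The plan is to reduce the statement to a bookkeeping identity and then invoke the two properties of the $n$-neighbor equivalence established earlier. First I would record that $V_x=V_a\cup V_b$ is a disjoint union — the two children of a node of the binary tree $T$ partition its leaf set — so that $V_x\setminus(X\cup W)=(V_a\setminus X)\cup(V_b\setminus W)$ is again a disjoint union, with $X,V_a\setminus X\subseteq V_a$ and $W,V_b\setminus W\subseteq V_b$. Splitting the edges between $X\cup W$ and $V_x\setminus(X\cup W)$ according to which of the two sides $V_a,V_b$ each endpoint lies in, every such edge is of exactly one of four pairwise-disjoint types (both ends in $V_a$; both ends in $V_b$; one end in $X$ and one in $V_b\setminus W$; one end in $W$ and one in $V_a\setminus X$), which yields
\[
|E(X\cup W,\ V_x\setminus(X\cup W))| = |E(X,V_a\setminus X)| + |E(W,V_b\setminus W)| + |E(X,V_b\setminus W)| + |E(W,V_a\setminus X)|.
\]

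It then remains to rewrite the two ``crossing'' terms. For $|E(X,V_b\setminus W)|$, since $X\equiv_{V_a}^n A$ and $V_b\setminus W\subseteq\comp{V_a}$, Lemma~\ref{lem:property_nec_n} gives $|E(X,V_b\setminus W)|=|E(A,V_b\setminus W)|$; since $W\equiv_{V_b}^n B$, Fact~\ref{fact:necncomplement} gives $V_b\setminus W\equiv_{V_b}^n V_b\setminus B$, and as $A\subseteq V_a\subseteq\comp{V_b}$, a second application of Lemma~\ref{lem:property_nec_n} gives $|E(A,V_b\setminus W)|=|E(A,V_b\setminus B)|$. Symmetrically, using $W\equiv_{V_b}^n B$ and $V_a\setminus X\subseteq\comp{V_b}$, then Fact~\ref{fact:necncomplement} on $X\equiv_{V_a}^n A$ together with $B\subseteq\comp{V_a}$, one obtains $|E(W,V_a\setminus X)|=|E(B,V_a\setminus X)|=|E(B,V_a\setminus A)|$. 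Substituting these two identities into the displayed equation produces exactly the claimed formula.

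I do not expect any real obstacle: the statement is an edge-counting identity, and the only points requiring a little care are applying Lemma~\ref{lem:property_nec_n} in the correct direction — over $V_a$ versus over $V_b$, i.e.\ keeping track of which set plays the role of the ``$A$'' and which of the ``$Y\subseteq\comp{A}$'' in that lemma — and inserting Fact~\ref{fact:necncomplement} to pass to complements between the two applications. Throughout one should keep in mind that $n=|V(G)|$, so that $\equiv^n$ over a vertex set records the exact number of neighbors of each outside vertex, which is what makes Lemma~\ref{lem:property_nec_n} and Fact~\ref{fact:necncomplement} applicable.
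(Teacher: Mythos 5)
Your proposal is correct and follows essentially the same argument as the paper: the same four-way decomposition of the edge set followed by two applications of Lemma~\ref{lem:property_nec_n} with Fact~\ref{fact:necncomplement} used once to pass to complements. The only cosmetic difference is that you replace $X$ by $A$ first and then $V_b\setminus W$ by $V_b\setminus B$, whereas the paper performs the two substitutions in the opposite order; this is immaterial.
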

\begin{proof}
	Let $X\subseteq V_a$ and $W\subseteq V_b$ such that $A\equi{V_a}{n} X$ and $B\equi{V_b}{n} W$.
	First, observe that the number of edges between $X\cup W$ and $V_x\setminus(X\cup W)$ equals
	\[ |E(X,V_a\setminus X) | + |E(W,V_b\setminus W) | + |E(X,V_b\setminus W)| + |E(W,V_a\setminus X) | .\]
	So, we only need to prove that $|E(X,V_b\setminus W)|=|E(A,V_b\setminus B)|$ and $|E(W,V_a\setminus X) |= |E(B,V_a\setminus A) |$.
	As it is symmetrical, it is sufficient to prove the first equality.

	Since $B\equi{V_b}{n} W$, by Fact \ref{fact:necncomplement}, we have $V_b\setminus B \equi{V_b}{n} V_b \setminus W$.
	Thus, by Lemma \ref{lem:property_nec_n}, we have $|E(X,V_b\setminus W)|=|E(X,V_b\setminus B)|$.
	Moreover, as $A\equi{V_a}{n} X$ by applying Lemma \ref{lem:property_nec_n} again, we deduce that $|E(X,V_b\setminus W)|=|E(X,V_b\setminus B)|=|E(A,V_b\setminus B)|$.
\end{proof}

\begin{theorem}\label{thm:maxcut}
	There exists an algorithm that, given an $n$-vertex graph $G$ and a rooted layout $(T,\delta)$, solves \textsc{Max Cut} in time  $O(\snec_n(T,\delta)^{2}\cdot \log(\snec_n(T,\delta))\cdot n^3)$.
\end{theorem}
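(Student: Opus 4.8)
The plan is to run a bottom-up dynamic programming over $(T,\delta)$ in which, for each node $x$ and each $R\in\Rep{V_x}{n}$, we store a single number $\cT_x[R]$ equal to $\max\{|E(X,V_x\setminus X)| \mid X\subseteq V_x,\ X\equiv_{V_x}^n R\}$. First I would preprocess, for every node $x$ of $T$, the set $\Rep{V_x}{n}$ together with a data structure evaluating $\rep_{V_x}^n$; by Lemma~\ref{lem:computenecd} and since $T$ has $2n-1$ nodes this costs $O(\snec_n(T,\delta)\cdot\log(\snec_n(T,\delta))\cdot n^3)$ in total. For a leaf $x$ with $V_x=\{v\}$ there is no edge inside $G[V_x]$, so $\cT_x[R]=0$ for every $R$; for the root $r$ we have $V_r=V(G)$ and hence $\Rep{V_r}{n}=\{\emptyset\}$, so $\cT_r[\emptyset]=\max_{X\subseteq V(G)}|E(X,\comp{X})|$ is exactly the quantity asked by \textsc{Max Cut}.

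The core of the proof is the recurrence at an internal node $x$ with children $a$ and $b$: for every $R\in\Rep{V_x}{n}$,
\[ \cT_x[R]=\max\big\{\cT_a[A]+\cT_b[B]+|E(A,V_b\setminus B)|+|E(B,V_a\setminus A)| \,:\, (A,B)\in\Rep{V_a}{n}\times\Rep{V_b}{n},\ \rep_{V_x}^n(A\cup B)=R\big\}. \]
To prove ``$\leq$'', I would take any $Z\equiv_{V_x}^n R$, write $Z=X\cup W$ with $X=Z\cap V_a$ and $W=Z\cap V_b$, set $A=\rep_{V_a}^n(X)$ and $B=\rep_{V_b}^n(W)$, note (using Fact~\ref{fact:equivbiggerset} and that $\min(n,\cdot)$ acts as the identity on the cardinalities $|N(u)\cap X|$, $|N(u)\cap W|$) that $(X\cup W)\equiv_{V_x}^n (A\cup B)$, so $\rep_{V_x}^n(A\cup B)=R$, and then apply Lemma~\ref{lem:internal_node} to rewrite $|E(Z,V_x\setminus Z)|$ as the displayed sum with $|E(X,V_a\setminus X)|\leq\cT_a[A]$ and $|E(W,V_b\setminus W)|\leq\cT_b[B]$. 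For ``$\geq$'', given $(A,B)$ with $\rep_{V_x}^n(A\cup B)=R$, I would pick $X^\star\equiv_{V_a}^n A$ and $W^\star\equiv_{V_b}^n B$ attaining $\cT_a[A]$ and $\cT_b[B]$; then $(X^\star\cup W^\star)\equiv_{V_x}^n R$ and Lemma~\ref{lem:internal_node} shows that $|E(X^\star\cup W^\star,V_x\setminus(X^\star\cup W^\star))|$ equals the candidate value. Since the recurrence refers only to the stored numbers $\cT_a,\cT_b$, an induction from the leaves shows the algorithm computes exactly the values $\cT_x[R]$, and correctness at the root follows; the reason a single number per $n$-neighbor class suffices is precisely that the two cross terms $|E(A,V_b\setminus B)|$ and $|E(B,V_a\setminus A)|$ do not depend on the chosen witnesses inside the classes, which is Lemma~\ref{lem:internal_node} (built on Lemma~\ref{lem:property_nec_n} and Fact~\ref{fact:necncomplement}).

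For the running time, at an internal node $x$ with children $a,b$ I would initialise every entry of $\cT_x$ to $-\infty$ and loop over the at most $\nec_n(V_a)\cdot\nec_n(V_b)\leq\snec_n(T,\delta)^2$ pairs $(A,B)\in\Rep{V_a}{n}\times\Rep{V_b}{n}$, computing for each pair the index $R=\rep_{V_x}^n(A\cup B)$ in time $O(n^2\log(\snec_n(T,\delta)))$ by Lemma~\ref{lem:computenecd}, the two cross terms in time $O(n^2)$, and updating $\cT_x[R]$. This is $O(\snec_n(T,\delta)^2\cdot\log(\snec_n(T,\delta))\cdot n^2)$ per internal node, hence $O(\snec_n(T,\delta)^2\cdot\log(\snec_n(T,\delta))\cdot n^3)$ over all $2n-1$ nodes, which dominates the preprocessing cost. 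Since every $R\in\Rep{V_x}{n}$ equals $\rep_{V_x}^n(Z)$ for some $Z\subseteq V_x$, and $Z$ decomposes as above, every entry of $\cT_x$ is indeed reached. I do not anticipate a genuine obstacle: the only points requiring care are the indexing identity $(X\cup W)\equiv_{V_x}^n (A\cup B)$ and the running-time bookkeeping, since all the substantive combinatorics is already packaged in Lemma~\ref{lem:internal_node}.
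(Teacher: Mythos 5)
Your proposal is correct and follows essentially the same route as the paper: the same table $\cT_x[R]$, the same recurrence built on Lemma~\ref{lem:internal_node} (via Lemma~\ref{lem:property_nec_n} and Fact~\ref{fact:necncomplement}), and the same running-time bookkeeping. One small note: you correctly write the recurrence with $+|E(B,V_a\setminus A)|$, matching Lemma~\ref{lem:internal_node} and the proofs of Facts~\ref{fact:max_cut_oneway} and~\ref{fact:max_cut_secondway}; the minus sign appearing in the statement of Claim~\ref{claim:internal_node} in the paper is a typo.
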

\begin{proof}
	The first step of our algorithm is to compute, for each $x\in V(T)$ and $X\subseteq V_x$, the sets $\Rep{V_x}{n}$ and a data structure to compute $\rep{V_x}{n}(X)$ in time $O(\log(\snec_n(T,\delta)) \cdot n^2 )$.
	As $T$ has $2n-1$ nodes, by Lemma \ref{lem:computenecd}, we can compute these sets and data structures in time $O(\snec_n(T,\delta)\cdot \log(\snec_n(T,\delta)) \cdot n^3)$.
	\smallskip
	
	For every node $x\in V(T)$ and every $R\in \Rep{V_x}{n}$, we define $\cT_x[R]$ as follows:
	\[ \cT_x[R]:=\max\{ |E(X,V_x\setminus X) | \mid X\subseteq V_x \text{ and } X\equi{V_x}{n} R \}. \]
	Observe that, for $r$ the root of $T$, the entry of $\cT_r[\emptyset]$ is the size of a maximum cut of $G$.
	
	The algorithm is a usual bottom-up dynamic programming algorithm and computes for each node $x$ of $T$ the table $\cT_x$.
	For the leaves $x$ of $T$, we simply set $\cT_x[R]:=0$ for every $R\in\Rep{V_x}{n}$.
	This is correct because the graph $G[V_x]$ contains only one vertex.
	
	Let $x$ be an internal node of $T$ with $a$ and $b$ as children.
	To compute the tables for the internal nodes we need the following claim.
	
	\begin{claim}\label{claim:internal_node}
		For every $R\in\Rep{V_x}{n}$, we have $\cT_x[R]$ equals
		\begin{align*}\max\{ \cT_a[A] \,{+}\, \cT_b[B] \,{+}\, |E(A,V_b\setminus B)| - |E(B,V_a\setminus A) | \,{ \mid}\,
			(A,B)\,{\in}\,\Rep{V_a}{n} \,{\times}\, \Rep{V_b}{n}\\ \text{ and } A\cup B\equi{V_x}{n} R\}.\end{align*}
	\end{claim}

	Claim \ref{claim:internal_node} is implied by the two following facts.
	
	\begin{fact}\label{fact:max_cut_oneway}
		Let $R\in\Rep{V_x}{n}$. For every $(A,B)\in\Rep{V_a}{n}\times\Rep{V_b}{n}$ such that $A\cup B\equi{V_x}{n} R$, we have
		\[ \cT_x[R]\geq \cT_a[A] + \cT_b[B]   + |E(A,V_b\setminus B)| + |E(B,V_a\setminus A)| . \]
	\end{fact}
	\begin{proof}
		Let $(A,B)\in\Rep{V_a}{n}\times\Rep{V_b}{n}$ such that $A\cup B\equi{V_x}{n} R$.
		By definition of $\cT_a[A]$, there exists $X_a\subseteq V_a$ such that $X\equi{V_a}{n} A$ and $\cT_a[A]=|E(X_a,V_a\setminus X_a)|$.
		Symmetrically, there exists $X_b\subseteq V_b$ such that $X_b\equi{V_b}{n} B$ and $\cT_b[B]=|E(X_b,V_b\setminus X_b)|$.
		
		From Lemma \ref{lem:internal_node}, the number of edges between $X_a\cup X_b$ and $V_x\setminus(X_a\cup X_b)$ equals $\cT_a[A]+\cT_b[B]  + |E(A,V_b\setminus B)| + |E(B,V_a\setminus A)| $.
		By Fact \ref{fact:equivbiggerset}, we deduce that $X_a\cup X_b\equi{V_x}{n} A\cup B$.
		Thus, $X_a\cup X_b\equi{V_x}{n} R$, and by definition, $\cT_x[R]$ is bigger than the number of edges between $X_a\cup X_b$ and $V_x\setminus(X_a\cup X_b)$. This proves the fact.
	\end{proof}
	\begin{fact}\label{fact:max_cut_secondway}
		For every $R\in\Rep{V_x}{n}$, there exists $(A,B)\in \Rep{V_a}{n}\times \Rep{V_b}{n}$ such that $A\cup B\equi{V_x}{n} R$ and
		\[ \cT_x[R]= \cT_a[A] + \cT_b[B] + |E(A,V_b\setminus B)| + |E(B,V_a\setminus A)|. \]
	\end{fact}
	\begin{proof}
		By definition of $\cT_x[R]$, there exists a set $X\subseteq V_x$ such that $X\equi{V_x}{n} R$ and $\cT_x[R]=|E(X,V_x\setminus V_x)|$.
		For every $i\in\{a,b\}$, let $X_i=X\cap V_i$.
		
		Let $A:=\rep{V_a}{n}(X_a)$ and let $B:=\rep{V_b}{n}(X_b)$.
		By definition, we have $X\equi{V_x}{n} A\cup B$, and by Claim \ref{fact:max_cut_oneway}, we have
		\begin{align}\label{eq7}
			\cT_x[R]\geq& \cT_a[A] + \cT_b[B] + |E(A,V_b\setminus B)| + |E(B,V_a\setminus A) |.
		\end{align}
		Moreover, as $\cT_x[R]=|E(X,V_x\setminus V_x)|$ and from Lemma \ref{lem:internal_node}, we deduce that
		\begin{align}\label{eq7.1}
			\cT_x[R]=& |E(X_a,V_a\setminus X_a)| + |E(X_b, V_b\setminus X_b)| + |E(A,V_b\setminus B)| + |E(B,V_a\setminus A) |.
		\end{align}
		By definition of $\cT_a[A]$, we know that $\cT_a[A]\geq |E(X_a,V_a\setminus X_a)|$.
		Symmetrically, we have $\cT_b[B]\geq |E(X_b,V_b\setminus X_b)|$.
		Hence, we conclude from inequality (\ref{eq7}) and (\ref{eq7.1}) that $\cT_x[R]= \cT_a[A] + \cT_b[B] + |E(V_a,V_b)| - |E(A,B)|$.
	\end{proof}
	
	We deduce that we can compute the entries of $\cT_x$ by doing the following:
	\begin{itemize}
		\item For every $R\in\Rep{V_x}{n}$, initialize some temporary variable  $w_R$ to 0.
		\item For every $(A,B)\in \Rep{V_a}{n}\times \Rep{V_b}{n}$, compute $R=\rep{V_x}{n}(A\cup B)$ and update $w_{R}$ as follows:
		\[ w_{R}:=\max\{w_{R}, \cT_a[A] + \cT_b[B] + |E(A,V_b\setminus B)| + |E(B,V_a\setminus A) |\}. \]
	\end{itemize}
	From Claim \ref{claim:internal_node}, at the end of this subroutine, we have correctly updated $\cT_x[R]$ which is precisely equal to $w_R$.
	Recall that each call to the functions $\rep{V_x}{n}$ takes $\log(\snec_n(T,\delta))\cdot n^2)$ time.
	Thus, the running time to compute the entries of $\cT_x$ is $O(\snec_n(T,\delta)^2\cdot \log(\snec_n(T,\delta))\cdot n^2)$.
	The total running time of our algorithm follows from the fact that $T$ has $2n-1$ nodes.
\end{proof}

\begin{theorem}\label{thm:maxMinCut}
	There exists an algorithm that, given an $n$-vertex graph $G$ and a rooted layout $(T,\delta)$, solves \textsc{Maximum Minimal Cut} in time  $O(\snec_n(T,\delta)^2\cdot \snec_1(T,\delta)^{4(\omega + 1.5)}\cdot n^{3})$.
\end{theorem}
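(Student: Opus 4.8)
The plan is to combine the $n$-neighbor-equivalence dynamic programming of Theorem~\ref{thm:maxcut} with the representative-set machinery of Section~\ref{sec:represents}, the latter adapted to enforce \emph{two} connectivity constraints: a minimal cut $X$ needs both $G[X]$ and $G[\comp{X}]$ connected. For a node $x$ of $T$ and a partial solution $X\subseteq V_x$, I track its $n$-neighbor equivalence class $R$ over $V_x$ --- which by Fact~\ref{fact:necncomplement} also fixes that of $V_x\setminus X$ and hence the $1$-neighbor classes of both $X$ and $V_x\setminus X$ over $V_x$ --- together with its partial cut value $\w(X):=|E(X,V_x\setminus X)|$, used as its weight. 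A completion $Y\subseteq\comp{V_x}$ then imposes ``$G[X\cup Y]$ connected'', for which the $1$-neighbor class of $Y$ over $\comp{V_x}$ is what matters, and ``$G[(V_x\setminus X)\cup(\comp{V_x}\setminus Y)]$ connected'', for which it is the $1$-neighbor class of $\comp{V_x}\setminus Y$ over $\comp{V_x}$ (not determined by that of $Y$). Accordingly I index the table $\cD_x$ by triples $(R,R_1',R_2')\in\Rep{V_x}{n}\times\Rep{\comp{V_x}}{1}\times\Rep{\comp{V_x}}{1}$, with $\cD_x[R,R_1',R_2']$ a subset of $\cA_x[R]:=\{X\subseteq V_x\mid X\equiv_{V_x}^n R\}$ of size at most $\snec_1(T,\delta)^4$ that $(x,R_1',R_2')$-represents $\cA_x[R]$: for every $Y$ with $Y\equiv_{\comp{V_x}}^1 R_1'$ and $\comp{V_x}\setminus Y\equiv_{\comp{V_x}}^1 R_2'$, the optimum weight of an $X\in\cA_x[R]$ making both $G[X\cup Y]$ and $G[(V_x\setminus X)\cup(\comp{V_x}\setminus Y)]$ connected is the same whether $X$ ranges over $\cA_x[R]$ or over $\cD_x[R,R_1',R_2']$.

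The recurrence is driven, as in Theorem~\ref{thm:maxcut}, by iterating at an internal node $x$ with children $a,b$ over $(A,B)\in\Rep{V_a}{n}\times\Rep{V_b}{n}$ and $(R_1',R_2')\in\Rep{\comp{V_x}}{1}\times\Rep{\comp{V_x}}{1}$; this fixes $R:=\rep_{V_x}^n(A\cup B)$ and, via a compatibility relation generalizing Definition~\ref{def:compatibility}, all the children's complement-side $1$-neighbor indices, so there are at most $\snec_n(T,\delta)^2\cdot\snec_1(T,\delta)^2$ compatible tuples. The merge $\otimes$ is used as in Section~\ref{sec:dom}, except that the weight of $X_a\cup X_b$ is computed as $\w(X_a)+\w(X_b)+|E(A,V_b\setminus B)|+|E(B,V_a\setminus A)|$, which by Lemma~\ref{lem:internal_node} is exactly $|E(X_a\cup X_b,V_x\setminus(X_a\cup X_b))|$; since the extra term is constant over each product $\cD_a[A,\cdot,\cdot]\otimes\cD_b[B,\cdot,\cdot]$, the analogues of Fact~\ref{fact:unionpreserve} and Lemma~\ref{lem:bigotimespreserve} go through verbatim, as does the analogue of Lemma~\ref{lem:basicDymProg} writing $\cA_x[R]$ as the union of $\cA_a[A]\otimes\cA_b[B]$ over compatible $(A,B)$.

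The crux is a $\reduce$ function shrinking each cell to size $\nec_1(V_x)^4$, obtained by running the construction of Theorem~\ref{thm:reduce1} twice in parallel. After discarding the $X$'s having a component $C$ of $G[X]$ with $N(C)\cap R_1'=\emptyset$ or a component $C$ of $G[V_x\setminus X]$ with $N(C)\cap R_2'=\emptyset$ (these never complete to a minimal cut), and restricting the relevant completions $Y$ accordingly, let $\cC_1,\comp{\cC}_1$ be the matrices of Theorem~\ref{thm:reduce1} for ``$G[X\cup Y]$ is connected'', so $(\cC_1\cdot\comp{\cC}_1)[X,Y]=2^{|\cc(G[X\cup Y])|-1}$, and let $\cC_2,\comp{\cC}_2$ be the same construction applied to $V_x\setminus X$ and $\comp{V_x}\setminus Y$ for ``$G[(V_x\setminus X)\cup(\comp{V_x}\setminus Y)]$ is connected'', so $(\cC_2\cdot\comp{\cC}_2)[X,Y]=2^{|\cc(G[(V_x\setminus X)\cup(\comp{V_x}\setminus Y)])|-1}$; each has $\nec_1(V_x)^2$ columns because $\nec_1$ is symmetric. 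Forming the column-wise tensors $\cC:=\cC_1\odot\cC_2$ (\ie $\cC[X,(c_1,c_2)]:=\cC_1[X,c_1]\cdot\cC_2[X,c_2]$) and $\comp{\cC}:=\comp{\cC}_1\odot\comp{\cC}_2$, distributivity gives
\[(\cC\cdot\comp{\cC})[X,Y]=(\cC_1\cdot\comp{\cC}_1)[X,Y]\cdot(\cC_2\cdot\comp{\cC}_2)[X,Y]=2^{\,|\cc(G[X\cup Y])|+|\cc(G[(V_x\setminus X)\cup(\comp{V_x}\setminus Y)])|-2},\]
which is odd exactly when both component counts are $1$, \ie exactly when $X\cup Y$ is a minimal cut. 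So $\cM=_2\cC\cdot\comp{\cC}$ with $\cM$ the ``minimal cut'' indicator matrix, and --- as in Theorem~\ref{thm:reduce1} --- a maximum-weight row basis of $\cC$ over $GF(2)$ $(x,R_1',R_2')$-represents $\cA_x[R]$; it has size at most $\nec_1(V_x)^2\cdot\nec_1(V_x)^2\le\snec_1(T,\delta)^4$. Leaves, and the degenerate cases where $R_1'$ or $R_2'$ is the empty class --- in particular the root, where $\comp{V_r}=\emptyset$ and the constraint is literally ``$G[X]$ and $G[\comp{X}]$ connected'' --- are handled directly as in Theorems~\ref{thm:maxcut} and~\ref{thm:reduce1}, so $\cD_r$ is a single cell and its $\best$ at $Y=\emptyset$ is the value of a maximum minimal cut.

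For the running time, each of the at most $\snec_n(T,\delta)^2\cdot\snec_1(T,\delta)^2$ compatible tuples at $x$ contributes a $\otimes$ of two sets of size at most $\snec_1(T,\delta)^4$, so the intermediate cells at $x$ have combined size $O(\snec_n(T,\delta)^2\cdot\snec_1(T,\delta)^{10})$; computing the $\nec_1(V_x)^4$-column matrix $\cC$ (its entries are products of $\cC_1$- and $\cC_2$-entries, each computable in $O(n^2)$) and a maximum-weight basis by Lemma~\ref{lem:optG} costs $O(\snec_1(T,\delta)^{4(\omega-1)}\cdot n^2)$ per row, so $\reduce$ over all cells at $x$ costs $O(\snec_n(T,\delta)^2\cdot\snec_1(T,\delta)^{10+4(\omega-1)}\cdot n^2)=O(\snec_n(T,\delta)^2\cdot\snec_1(T,\delta)^{4(\omega+1.5)}\cdot n^2)$; summing over the $2n-1$ nodes of $T$ and absorbing the preprocessing of Lemma~\ref{lem:computenecd} yields the claimed bound. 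I expect the main obstacle to be making $\reduce$ fully correct: the two-constraint matrix identity needs the analogues of Claims~\ref{claim:secondwayreduce} and~\ref{claim:onewayreduce} with the ``every connected component meets the other side'' hypotheses now also imposed on $G[V_x\setminus X]$ and $G[\comp{V_x}\setminus Y]$, and all the empty-set and empty-class corner cases must be dealt with simultaneously for both cuts; by contrast, layering the $n$-neighbor \textsc{Max Cut} recurrence on top, with its cross-term weight correction, is a routine merge of Theorems~\ref{thm:maxcut} and~\ref{thm:reduce1}.
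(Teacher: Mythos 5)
Your proposal is correct and matches the paper's proof in all essentials: the paper also indexes its table by $\Rep{V_x}{n}\times\Rep{\comp{V_x}}{1}\times\Rep{\comp{V_x}}{1}$, defines the same $\best^\star/(x,R_Y,R_{\comp{Y}})$-representativity, and builds its $\reduce^\star$ from a matrix $\cC^\star$ indexed by quadruples $(R_Y^1,R_Y^2,R_{\comp{Y}}^1,R_{\comp{Y}}^2)$ that is exactly your column-wise tensor $\cC_1\odot\cC_2$, so that $\cC^\star\cdot\comp{\cC}^\star=_2\cM^\star$ and a maximum-weight $GF(2)$ row basis of size $\le\nec_1(V_x)^4$ works. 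The only cosmetic difference is that you phrase the tensor and the cross-term weight correction explicitly, whereas the paper writes out the quadruple-indexed predicates and cites ``the same arguments as Theorems~\ref{thm:reduce1} and~\ref{thm:dom}''.
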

\begin{proof}
	As we have two connectivity constraints, \ien, we ask both the solution and its complement to induce connected graphs, we have to modify our notion of representative set and how we compute one.
	Let $x\in V(T)$. For $\cA\subseteq 2^{V_x}$ and $Y\subseteq \comp{V_x}$, we define
	\[      \best^\star(\cA,Y):= \max\{ |E(X, V_x\setminus X)| \mid X\in \cA \text{ and } X\cup Y \text{ is a minimal cut} \}. \]
	
	For $R_Y,R_{\comp{Y}} \in \Rep{\comp{V_x}}{1}$, and $\cA,\cB\subseteq 2^{V_x}$, we say that $\cB$ \emph{$(x,R_Y,R_{\comp{Y}})$-represents} $\cA$ if for every $Y\subseteq \comp{V_x}$ such that $R_Y\equi{\comp{V_x}}{1} Y$ and $R_{\comp{Y}}\equi{\comp{V_x}}{1} \comp{V_x}\setminus Y$, we have $\best^\star(\cB,Y)=\best^\star(\cA,Y)$.
	
	Observe that, when $\cA=\emptyset$ or there is no $X$ in $\cA$ such that $X\cup Y$ is a minimal cut, then $\best^\star(\cA,Y)=-\infty$.
	
	For all $x\in V(T)$, we define $\bI_x:= \Rep{V_x}{n}\times\Rep{\comp{V_x}}{1}\times\Rep{\comp{V_x}}{1}$.
	Moreover, for each $R\in\Rep{V_x}{n}$, we define $\cA_x[R]:=\{X\subseteq V_x \mid X\equi{V_x}{n} R\}$.
	For every node $x\in V(T)$, our algorithm will compute a table $\cT_x$ satisfying the following invariant.

	\subsection*{Invariant}
	For every $(R,R_Y,R_{\comp{Y}})\in \bI_x$, the entry $\cT_x[R,R_Y,R_{\comp{Y}}]$ is a subset of $\cA_x[R]$ of size at most $\snec_1(T,\delta)^4$ that $(x,R_Y,R_{\comp{Y}})$-represents $\cA_x[R]$.

	The following claim proves that we can compute a small representative set for this new notion of representativity.
	
	\begin{claim}\label{claim:reduceMaxMinCut}
		Let $x\in V(T)$ and $(R,R_Y,R_{\comp{Y}})\in \bI_x$. There exists an algorithm $\reduce^\star$ that, given $\cA\subseteq 2^{V_x}$ such that, for all $X\in\cA$, we have $X\equi{V_x}{n} R$, outputs in time $O(|\cA|\cdot \snec_1(V_x)^{4(\omega-1)}\cdot n^2)$ a subset $\cB\subseteq \cA$ such that $\cB$ $(x,R_Y,R_{\comp{Y}})$-represents $\cA$ and $|\cB|\leq \snec_1(V_x)^4$.
	\end{claim}
	\begin{proof}
		Let $\cD$ be the set that contains all $Y\subseteq \comp{V_x}$ such that $Y\equi{\comp{V_x}}{1} R_Y$ and $\comp{V_x}\setminus Y \equi{\comp{V_x}}{1} R_{\comp{Y}}$.
		
		First assume that $R_Y=\emptyset$ or $R_{\comp{Y}}=\emptyset$.
		We distinguish the following cases:
		\begin{itemize}
			\item Suppose that $R_Y=\emptyset$ and $R_{\comp{Y}}\equi{\comp{V_x}}{1} \comp{V_x}$.  If $Y=\emptyset$, then $\best^\star(\cA,\emptyset)$ is
			the maximum $|E(X,\comp{X})|$ over of the sets $X\in\cA$ which are minimal cuts, and if no $X\in \cA$ is a minimal cut, then
			$\best(\cA,\emptyset)=-\infty$.  On the other hand, when $Y\neq \emptyset$, we have $\best^\star(\cA,Y)=0$ if $\emptyset\in \cA$ and $Y$ is a
			minimal cut or $-\infty$ otherwise.  Hence, it is sufficient to return a set $\cB$ constructed as follows: if $\cA$ contains a minimal cut, we
			add to $\cB$ a minimal cut $X$ in $\cA$ such that $|E(X,\comp{X})|$ is maximum and if $\emptyset\in \cA$, we add the empty set to $\cB$.
			
			\item If $R_Y\equi{\comp{V_x}}{1} \comp{V_x}$ and $R_{\comp{Y}}=\emptyset$, then return $\cB$ that contains---if there exists one---a set $X\in \cA$ such that $X\cup \comp{V_x}$ is a minimal cut and $|E(V_x\setminus X, \comp{V_x\setminus X})|$ is maximum and the empty set if $\emptyset \in \cA$. This case is symmetrical to the previous one.
			
			\item Otherwise, we return $\emptyset$. This is correct because in this case $\cD=\emptyset$. Indeed, for every $W\subseteq \comp{V_x}$, if $W\equi{\comp{V_x}}{1} \emptyset$, then $\comp{V_x}\setminus W \equi{\comp{V_x}}{1} \comp{V_x}$.
		\end{itemize}

		Assume from now on that $R_Y\neq \emptyset$ and $R_{\comp{Y}}\neq \emptyset$.
		We start by removing from $\cA$ the sets $X$ that satisfy at least one of the following properties:
		\begin{itemize}
			\item there exists $C\in\cc(G[X])$ such that $N(C)\cap R_Y= \emptyset$, or
			\item there exists $C\in\cc(G[V_x\setminus X])$ such that $N(C)\cap R_{\comp{Y}}= \emptyset$.
		\end{itemize}
		As explained in the proof of Theorem \ref{thm:reduce1}, we can safely remove all such sets as they never form a minimal cut with a set $Y\in \cD$.
		Moreover, we remove from $\cD$ the sets $Y$ that satisfy at least one of the following properties:
		\begin{itemize}
			\item there exists $C\in \cc(G[Y])$ such that $N(C)\cap R= \emptyset$, or
			
			\item there exists $C\in \cc(G[\comp{V_x}\setminus Y])$ with $N(C)\cap (V_x\setminus R) = \emptyset$.\enlargethispage{-12pt}
		\end{itemize}
		One can check that, for all sets $Y$ removed from $\cD$ and all $X\in \cA$, we have $\best^\star(\cA,Y)\break =-\infty$.
		In particular, if a set $Y$ satisfies the second property, then for every $X\in \cA$, the graph $G[\comp{X\cup Y}]$ is not connected.
		Indeed, by Fact \ref{fact:necncomplement}, we have $V_x\setminus X\equi{V_x}{n} V_x\setminus R$ for every $X\in \cA$, so the second property implies that a connected component of $\comp{V_x}\setminus Y$ has no neighbor in $V_x\setminus X$.
		
		For every $Y\in \cD$, we let $v_Y$ be one fixed vertex of $Y$.
		For every $R_1,R_2\in \Rep{\comp{V_x}}{1}$, $X\subseteq V_x$, and $Y\subseteq \comp{V_x}$, we define the following predicates:
		\begin{align*}
			&\cC(X,(R_1,R_2))\\&\quad:=
			\begin{cases} 1 & \textrm{if $\exists (X_1,X_2)\in \cuts(X)$ such that $N(X_1)\cap R_2=\emptyset$ and $N(X_2)\cap R_1 =\emptyset$,}\\
				0 & \textrm{otherwise}. \end{cases}\\
			&\comp{\cC}((R_1,R_2),Y)\\&\quad:=
			\begin{cases} 1 & \textrm{if $\exists (Y_1,Y_2)\in \cuts(Y)$ such that $v_Y \in Y_1$, $Y_1\equi{\comp{V_x}}{1}R_1$, and $Y_2 \equi{\comp{V_x}}{1}R_2$},\\
				0 & \textrm{otherwise}. \end{cases}
		\end{align*}
		These predicates are similar to the eponymous matrices in the proof of Theorem \ref{thm:reduce1}.
		In the following, we denote by $\mathfrak{R}$ the set of all quadruples of elements in $\Rep{\comp{V_x}}{1}$.
		Let $\cM^\star$, $\cC^\star$, and $\comp{\cC}^\star$ be, respectively, an $(\cA, \cD)$-matrix, an $(\cA,\mathfrak{R} )$-matrix, and an $(\mathfrak{R}, \cD)$-matrix such that
		\begin{align*}
			\cM^\star[X,Y] & :=
			\begin{cases} 1 & \textrm{if $X\cup Y$ is a minimal cut},\\
				0 & \textrm{otherwise}, \end{cases}\\
			\cC^\star[X,(R_Y^1,R_Y^2,R_{\comp{Y}}^1,R_{\comp{Y}}^2)]&:=
			\begin{cases} 1 & \textrm{if $\cC(X,(R_Y^1,R_Y^2))$ and $\cC(V_x\setminus X, (R_{\comp{Y}}^1,R_{\comp{Y}}^2))$ hold,}\\
				0 & \textrm{otherwise}, \end{cases}\\
			\comp{\cC}^\star[(R_Y^1,R_Y^2,R_{\comp{Y}}^1,R_{\comp{Y}}^2),Y]&:=
			\begin{cases} 1 & \textrm{if $\comp{\cC}((R_Y^1,R_Y^2),Y)$ and $\comp{\cC}((R_{\comp{Y}}^1,R_{\comp{Y}}^2),\comp{V_x}\setminus Y)$ hold},\\
				0 & \textrm{otherwise}. \end{cases}
		\end{align*}
		With the same arguments used in the proofs of Theorems \ref{thm:reduce1} and  \ref{thm:maxcut}, one can easily prove the following:
		\begin{itemize}
			\item For every $X\in\cA$ and $Y\in\cD$, we have
			\[ (\cC^\star\cdot \comp{\cC}^\star)[X,Y]= 2^{|\cc(G[X\cup Y])|-1}\cdot 2^{|\cc(G[\comp{X\cup Y}])|-1}. \]
			Consequently, $(\cC^\star\cdot \comp{\cC}^\star)[X,Y]$ is odd if and only if $X\cup Y$ is a minimal cut.
			Hence, $\cM^\star=_2 \cC^\star\cdot \comp{\cC}^\star$, where $=_2$ denotes the equality in $GF(2)$.

			\item The matrix $\cC^\star$ can be computed in time $O(|\cA|\cdot \snec_1(V_x)^4\cdot n^2)$.
			
			\item Let $\cB\subseteq \cA$ be a basis of maximum weight of the row space of $\cC^\star$ \wrtn to the weight function $\w(X):=|E(X,V_x\setminus
			X)|$.  Then, one can prove with the same arguments as in Theorem \ref{thm:reduce1} that $\cB$ $(x,R_Y,R_{\comp{Y}})$-represents $\cA$ and such a
			set can be computed in time $O(|\cA|\cdot \snec_1(V_x)^{4(\omega-1)}\cdot n^2)$.
		\end{itemize}
	\end{proof}
	
	As for the other algorithms of this paper, the computation of $\cT_x[R,R_Y,R_{\comp{Y}}]$ is trivial for the leaves $x$ of $T$.
	Take $x$ an internal node of $T$ with $a$ and $b$ as children such that the invariant holds for $a$ and $b$.
	For $(R,R_Y,R_{\comp{Y}})\in \bI_x$, $(A,A_Y,A_{\comp{Y}})\in\bI_a$, and $(B,B_Y,B_{\comp{Y}})\in\bI_b$, we say that $(A,A_Y,A_{\comp{Y}})$ and $(B,B_Y,B_{\comp{Y}})$ are $(R,R_Y,R_{\comp{Y}})$-\emph{compatible} if the following conditions are satisfied:
	\begin{itemize}
		\item $A\cup B \equi{V_x}{n} R$,
		\item $A\cup R_Y\equi{\comp{V_b}}{1} B_Y$ and $B\cup R_Y\equi{\comp{V_a}}{1} A_Y$,
		\item $(V_a\setminus A)\cup R_{\comp{Y}}\equi{\comp{V_b}}{1} B_{\comp{Y}}$ and $(V_b\setminus B)\cup R_{\comp{Y}} \equi{\comp{V_a}}{1} A_{\comp{Y}}$.
	\end{itemize}
	For each $(R,R_Y,R_{\comp{Y}})\in \bI_x$, the algorithm computes the entry $\cT_x[R,R_Y,R_{\comp{Y}}]:=\break\reduce^\star(\cB_x[R, R_Y,R_{\comp{Y}}])$ where $\cB_x[R,R_Y,R_{\comp{Y}}]$ is the set
	\bgroup
	\makeatletter
	\def\tagform@#1{\maketag@@@{\ignorespaces#1\unskip\@@italiccorr}}
	\makeatother
	{\def\theequation{{\normalsize(\thesection.\arabic{equation})}}\fontsize{8}{10}\selectfont{\begin{align*}
				\cB_x[R,R_Y,R_{\comp{Y}}]:=\!\!\!\!\bigcup_{(A,A_Y,A_{\comp{Y}}) \text{ and } (B,B_Y,B_{\comp{Y}}) \text{ are } (R,R_Y,R_{\comp{Y}})\text{-compatible}} \cT_a[A,A_Y,A_{\comp{Y}}]\otimes \cT_b[B,B_Y,B_{\comp{Y}}],
	\end{align*}}}\egroup
	where the union is taken over all $(A,A_Y,A_{\comp{Y}})\in \bI_a$ and $(B,B_Y,B_{\comp{Y}})\in \bI_b$.
	Observe that  $\cB[R,R_Y,R_{\comp{Y}}]$ is a valid input for  $\reduce^\star$ because, for all $X\in \cB[R,R_Y,R_{\comp{Y}}]$, we have $X\equi{V_x}{n} R$.
	Thus, the conditions of Claim \ref{claim:reduceMaxMinCut} on the input of $\reduce^\star$ are satisfied.
	
	Finally, with the same arguments used to prove Theorem \ref{thm:dom}, one can easily prove that the invariant holds for $x$ and the table $\cT_x$ can be computed in time
	\[ O(\snec_n(V_x)^2\cdot \snec_1(V_x)^{4(\omega + 1.5)}\cdot n^{2}). \]
	The total running time follows from the fact that $T$ has $2n-1$ nodes.
\end{proof}

\section{\boldmath$(A,C)$-$M$-{\sc partition} problem}\label{sec:problem}

In this section, we explain how the methods used here can be adapted to solve any locally checkable partitioning problem with connectivity and/or acyclicity
constraints.  Given a $(q\times q)$-matrix $M$ with entries being finite or co-finite subsets of $\bN$, an $M$-partition of a graph $G$ is a partition $\{V_1,\ldots, V_q\}$ of $V(G)$ such that for each $1\leq i \leq q$ and each $v\in V_i$ we have $|N(v)\cap V_j|\in M[i,j]$ for each $1\leq j\leq q$.

The \textsc{$M$-partition} problem asks, for a given graph $G$, whether $G$ admits an $M$-partition.
\NP-hard problems fitting into this framework include some homomorphism problems  \cite[Table~2]{Bui-XuanTV13} and the question of deciding whether an input graph has a partition into $q$ $(\sigma,\rho)$-dominating sets, which is in most cases NP-complete for small values of $q$ \cite[Table 1]{Bui-XuanTV13}.
It is proved in~\cite{Bui-XuanTV13} that the \textsc{$M$-partition} problem can be solved in time $O(\snec_d(T,\delta)^{3q}\cdot q\cdot n^4)$, with $d:=\max_{1\leq i,j\leq q}(d(M[i,j]))$, for any graph $G$ given with a rooted layout $(T,\delta)$. The algorithm is of the same flavor as for computing a $(\sigma,\rho)$-dominating set, except that, for each $x\in V(T)$, the table $\cT_x$ associated with $x$ is indexed by $(\Rep{V_x}{d})^q \times (\Rep{\comp{V_x}}{d})^q$ where $\cT_x[(R_1,\ldots,R_q),(R'_1,\ldots,R'_q)]$ is set to true if there is a partition $(S_1,\ldots,S_q)$ of $V_x$ such that for each $1\leq i,j \leq q$, $S_i\equi{V_x}{d} R_i$ and $S_i\cup R'_i$ $M[i,j]$-dominates $S_j$.

Let \textsc{$(A,C)$-$M$-partition} be the problem that asks, given a graph $G$ and two collections $A,C$ of subsets of $\{1,\dots,q\}$, whether $G$ admits an $M$-partition $\{V_1,\ldots, V_q\}$ such that, for every $\{i_1,\dots,i_t\}\in A$, the graph $G[V_{i_1}\cup\dots\cup V_{i_t}]$ is a forest and, for every $\{i_1,\dots,i_t\}\in C$,  the graph $G[V_{i_1}\cup\dots\cup V_{i_t}]$ is connected.
It is a routine verification to check that our framework can be extended to deal with \textsc{$(A,C)$-$M$-partition} problems:
\begin{itemize}
	\item The trick used in our algorithm for \textsc{Maximum Minimal Cut} to deal with two connectivity constraints can be generalized to solve any \textsc{$(\emptyset,C)$-$M$-partition}.
	
	\item We can generalize the concepts used in section \ref{sec:maxinducedtree} and in particular the decomposition into $R'$-consistent sets to solve any \textsc{$(A,C)$-$M$-partition} with $A\subseteq C$.
	
	\item We can use the idea of the reduction for the \textsc{Acyclic $(\sigma,\rho)$-Dominating Set} problems to solve any \textsc{$(A,C)$-$M$-partition} problem with the algorithm solving the special case when $A\subseteq C$.
\end{itemize}

We can therefore state the following without proof.

\begin{theorem}\label{thm:lcvpp} For a $(q\times q)$-matrix $M$ with entries being finite or co-finite subsets of $\bN$, let $d(M):=\max_{1\leq i,j\leq q}(d(M[i,j]))$.
	\begin{enumerate}
		\item There exists an algorithm that, given an $n$-vertex graph $G$ with a rooted layout $(T,\delta)$ of $G$, solves any \textsc{$(\emptyset,C)$-$M$-partition} problem in time $O(\snec_{d(M)}\break (T,\delta)^{3q}\cdot\snec_1(T,\delta)^{2(\omega + 1 )\cdot \ell} \cdot q \cdot n^{O(1)})$ with $\ell=|C|$.
		
		\item For each $\f\in\{\mw,\rw,\Qrw,\mim\}$, there exists an algorithm that, given an $n$-vertex graph $G$ and a rooted layout $(T,\delta)$  of $G$, solves any     \textsc{$(A,C)$-$M$-partition} problem, in time
		$O(\snec_{d(M)}(T,\delta)^{3q}\cdot\snec_1(T,\delta)^{O(\ell) }\cdot \cN_\f(T,\delta)^{O(\ell)} \cdot n^{O(1)})$ with $\ell=|A|+|C|$.
	\end{enumerate}
\end{theorem}

Observe that \textsc{$(A,C)$-$M$-partition} subsumes the problem \textsc{$\cF$-partition} of partitioning $V(G)$ into $q$ sets inducing graphs that belong to a
given graph class $\cF$, for several graph classes $\cF$ such as trees, forests, paths, and cycles.
For trees, forests or paths, Theorem~\ref{thm:lcvpp} provides an algorithm that, given a rooted layout $\cL$, solves \textsc{$\cF$-partition} and whose running time is upper bounded by $2^{O(\cw(\cL)q)}\cdot n^{O(q)}$, $2^{O(\Qrw(\cL)\log(\Qrw(\cL)q))}\cdot n^{O(q)}$, $2^{O(\rw(\cL)^2q)}\cdot n^{O(q)}$  and $n^{O(\mim(\cL)q)}$.

Observe that, for clique-width, $\bQ$-rank-width and rank-width, the factor $n^{O(q)}$ comes from the dependence in $n$ of the upper bounds on $\cN_\f(T,\delta)$ (see Table~5). 
Recently, Bergougnoux et al. \cite{BergougnouxDJ23} improve the techniques of this paper for handling acyclicity and remove the dependence in $n$ on these upper bounds for clique-width and rank-width (they ignore $\bQ$-rank-width in their analysis, but it should be doable for this parameter as well).
The authors of \cite{BergougnouxDJ23} obtain a model-checking algorithm for a logic which can express \textsc{$(A,C)$-$M$-partition}.
Their results imply that \textsc{$\cF$-partition}, for trees, paths or forests, can be solved in time $2^{O(\cw(\cL)q)}\cdot n^{O(1)}$ and $2^{O(\rw(\cL)^2q)}\cdot n^{O(1)}$.

For instance, with $A=\{\{1\},\dots,\{q\}\}$, $C=\emptyset$, and $M$ the matrix with all entries being $\bN$, a solution of \textsc{$(A,C)$-$M$-partition} is a
partition of a graph into $q$ induced forests. Such a partition problem is known as the \textit{tree-$q$-coloring} problem \cite{LiZ20}.  Li and Zhang
investigated the problem of finding the minimum $q$ such that the input graph has a tree-$q$-coloring and have proved in \cite{LiZ20} that this problem is FPT,
parameterized by the tree-width of the input graph. 
It would be interesting to study the parameterized complexity of this optimization problem with other
width measures. For instance, finding the minimum $q$ such that the input $n$-vertex graph has a tree-$q$-coloring can be solved in time $n^{f(k)}$ for some
function $f$, $k$ the clique-width of the input \cite[Theorem 9]{Rao07}. The problem seems to be $W[1]$-hard, parameterized by clique-width, what is the optimal
time complexity, under say (S)ETH?

With $A=C=\{\{1\},\dots,\{q\}\}$ and the same matrix $M$, a solution is a partition of a graph into $q$ induced trees.
The minimum number of induced trees covering the vertices of the graph is known as the \textit{tree cover number} from the algebra community~\cite{Barioli11}.
To the best of our knowledge, the computation of the tree cover number has not been studied.

Finally, observe that Theorem~\ref{thm:lcvpp} holds for the following variants of \textsc{$(A,C)$-$M$-partition}:
\begin{itemize}
	\item The optimization variant asking for a solution that maximizes any function $f(X_1,\dots,X_q)$ such that for every $X_1,\dots,X_q\subseteq V(G)$, we have
	\[ f(X_1,\dots,X_q)= \sum_{v\in V(G)} f(X_1\cap \{v\},\dots, X_q\cap \{v\}).  \] This property guarantees that the best partial solutions associated
	with $\mathcal{A}$ are those that can maximize $f$. Consequently, when we compute the basis of the matrix to obtain a representative,\footnote{The
		matrix that represents the join of partial solutions; see, for instance, Theorem \ref{thm:reduce1} and Claim~\ref{claim:reduceMaxMinCut}.} it is enough to take a basis of maximum weight,
	where the weight of a row $(X_1,\dots,X_q)$, representing one partial solution, is $f(X_1,\dots,X_q)$.  In particular, this property is satisfied by every function $f$ such that
	$f(X_1,\dots,X_q)=\w_1(X_1)+\w_2(X_2)+\dots+\w_q(X_q)$ with $\w_1,\dots,\w_q: V(G) \rightarrow \bQ$ being $q$ node-weight functions given in the
	input.
	
	\item The variant where the constraint ``$(X_1,\dots,X_q)$ is a partition of $V(G)$'' is replaced by a collection of constraints $A \bullet B$ with $\bullet\in\{ \subseteq, =\}$ and $A,B$ are two expressions based on the set operators $\cup,\cap$, set variables $X_1,\dots,X_q$, and constant sets $S_1,\dots,S_t$ given in the input. All the constraints in the collection must be satisfied by a solution.
	For any such constraint $\varphi$, one can prove that
	$\varphi(X_1,\dots,X_q,S_1,\dots,S_t)$ is true if and only if $\varphi(X_1\cap \{v\},\dots,X_q\cap\{v\},S_1\cap \{v\},\dots,S_t\cap\{v\})$ is true for every $v\in V(G)$.
	Thus, to handle these constraints, we just need to modify how the table is computed at the leaves of the decomposition, such as what we did for \textsc{Node-Weighted Steiner Tree} in Corollary~\ref{col:steiner}.
	That is, for every leaf of a decomposition associated with a vertex $v\in V(G)$, it is enough to consider only the partial solutions $(X_1,\dots,X_q)\in \{\emptyset,\{ v\}\}^q$ that satisfy all the constraints in the given collection.

	Many natural constraints can be expressed with this variant. For example, with the constraint $X_1\cup \dots \cup X_q= S_i$ and $S_i=V(G)$, we require that $(X_1,\dots,X_q)$ covers every vertex.
	We can also require that some pairs of $X_i$'s are disjoint with the constraint $X_i \cap X_j = S_i$ and given $S_i=\emptyset$ (notice that we can require that $(X_1,\dots,X_q)$ is partition with this variant).
	
\end{itemize}
We can combine the two variants to deal with some red-blue variants of problems such as \textsc{Connected Red-Blue Dominating Set} that asks, given a graph $G$, $R,B\subseteq
V(G)$, for a subset $X\subseteq B$ of minimum size such that $R\subseteq N(X)$ and $G[X]$ is connected. This problem was studied in~\cite{Abu-KhzamML11}
and  corresponds to the variant of \textsc{$(A,C)$-$M$-Partition} with $q=2$, with the constraints $X_1\subseteq B$ and $X_2=R$, $A=\emptyset$, $C=\{\{1\}\}$, $M$ the matrix where $M[2,1]=\bN^+$ (so that $X_2\subseteq N(X_1)$ for every solution $(X_1,X_2)$) and the other entries are $\bN$.
For the optimization, we require that a solution maximizes $f(X_1,X_2)=-|X_1|$.

\section{Conclusion}
\label{sec:conclusion}

This paper highlights the importance of the $d$-neighbor-\break equivalence relation in the design of algorithms parameterized by clique-width,\break ($\bQ$-)rank-width, and mim-width.
We prove that, surprisingly, this notion is helpful for problems with global constraints and also \textsc{Max Cut}: a $\W[1]$-hard problem parameterized by clique-width.
Can we use it for other $\W[1]$-hard problems parameterized by clique-width such as \textsc{Hamiltonian Cycle}, \textsc{Edge Dominating Set}, and \textsc{Graph Coloring}?

Recently, the $d$-neighbor-equivalence relation was used by the first author and collaborators in order to give a meta-algorithm for solving the
\textsc{Subset Feedback Vertex Set} problem and the \textsc{Node Multiway Cut} problem,  both problems in time $2^{O(\Qrw(G)^2\cdot\log(\Qrw(G)))}\cdot n^4$, $2^{O(\rw(G)^3)}\cdot n^4$, and $n^{O(\mim(\cL)^2)}$ with $G$ the input graph and $\cL$ a given decomposition \cite{BergougnouxPT20}.
It would be interesting to see whether the $d$-neighbor-equivalence relation can be helpful to deal with other kinds of constraints such as 2-connectivity and other generalizations of \textsc{Feedback Vertex Set} such as the ones studied in \cite{BonnetBKM19}.

Concerning mim-width, it is known that \textsc{Hamiltonian Cycle} is \NP-complete on graphs of mim-width 1, even when a rooted layout is provided \cite{JaffkeKT20a}.
\textsc{Graph Coloring} is \NP-complete on graphs of mim-width 2 because this latter is known to be \NP-complete on circular arc graphs \cite{GareyJMP80}.
Recently, \textsc{Max Cut} has been shown to be \NP-complete on interval graphs \cite{AdhikaryBMR20}. Consequently, \textsc{Max Cut} in \NP-hard on graphs of mim-width 1, even when a rooted layout is provided.

As explained in the introduction, the $2^{O(\mw(\cL))}\cdot n^{O(1)}$ time algorithms we obtain for clique-width are asymptotically optimal under \ETH.
This is also the case of our algorithms for \textsc{Max Cut} parameterized by clique-width and $\bQ$-rank-width.
Indeed, Fomin et al. \cite{FominGLS14} proved that there is no $n^{o(k)}\cdot f(k)$ time algorithm, $k$ being the clique-width of a given decomposition, for \textsc{Max Cut} unless \ETH fails.
Since the clique-width of a graph is always bigger than its $\bQ$-rank-width \cite{OumSV13}, this lower bound holds also for $\bQ$-rank-width.

However, for the other algorithmic results obtained in this paper, it is not known whether they are optimal under \ETH.
It would be particularly interesting to have tight upper bounds for rank-width since we know how to compute efficiently this parameter.
To the best of our knowledge, there is no algorithm parameterized by rank-width that is known to be optimal under \ETH.
Even for ``basic'' problems such as \textsc{Vertex Cover} or \textsc{Dominating Set}, the best algorithms \cite{Bui-XuanTV13} run in time $2^{O(k^2)}\cdot n^{O(1)}$, $k$ being the rank-width of the graph.
On the other hand, the best lower bounds state that, unless \ETH fails, there is no $2^{o(k)}\cdot n^{O(1)}$ time algorithm parameterized by rank-width for \textsc{Vertex Cover} (or \textsc{Dominating Set}) and no $n^{o(k)}\cdot f(k)$ time algorithm for \textsc{Max Cut} \cite{FominGLS14}.

Fomin et al. \cite{FominLPS16} have shown that we can use fast computation of representative sets in matroids\footnote{A matroid is a structure that abstracts and generalizes the notion of linear independence in vector spaces; see the book~\cite{Oxley06} for more information on matroids.} to obtain deterministic $2^{O(\tw(G))}\cdot n^{O(1)}$ time algorithms parameterized by tree-width for many connectivity problems.
Is this approach also generalizable with $d$-neighbor-width?
Can it be of any help for obtaining $2^{o(\rw(G)^2)}\cdot n^{O(1)}$ time algorithms for problems like \textsc{Vertex Cover} or \textsc{Dominating Set}?

\appendix

\section{Lower bounds on the \boldmath$n$-neighbor equivalence}\label{appendix}
The following theorem proves that the upper bounds of Lemma \ref{lem:comparen} on $\nec_n(A)$ are essentially tight.
In this theorem, we use a graph closely related to the one used in \cite[Theorem 4]{Bui-XuanTV11} to prove that, for any $k\in \bN$, there exists a graph $G$ and $A\subseteq V(G)$ such that $\rw(A)=k$ and $\nec_1(A)\in 2^{\Theta(k^2)}$.

\begin{theorem}\label{thm:cutnecn}
	For every $k\in \bN$, there exist an $n$-vertex graph $G_k$ and $A\subseteq V(G)$ such that $\rw(A)=k+1$, $\Qrw(A)=\mw(A)=2^k$, and $\nec_n(A)=(n/2^{k}-1)^{2^k}$.
\end{theorem}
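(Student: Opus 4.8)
The plan is to construct $G_k$ explicitly as a bipartite-like gadget with $A$ playing the role of the ``gadget side'' and $\comp{A}$ the role of a large set of ``test vertices'', so that distinct subsets of $A$ (up to $n$-neighbor equivalence) correspond to distinct profiles of neighbor-counts seen by $\comp{A}$. Concretely, I would let the columns of the adjacency matrix $M_{A,\comp{A}}$ range over all $2^k$ subsets of a $k$-element ``index set'', i.e. $A$ should contain $2^k$ groups of vertices, each group indexed by a subset $S\subseteq\{1,\dots,k\}$. Each test vertex $u\in\comp{A}$ is associated with one of the $k$ indices $i$, and $u$ is adjacent to exactly those vertices of $A$ whose group-index $S$ contains $i$. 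Put $m$ copies of each group on the $A$-side and $m$ copies of each test vertex on the $\comp{A}$-side, so that $n = 2^k m + km \approx 2^k m$ (one then sets $m = n/2^k - 1$, or adjusts constants so the arithmetic is exact, which is the source of the ``$n/2^k-1$'' in the statement). This is the graph ``closely related to'' the Bui-Xuan--Telle--Vatshelle construction referenced just before the theorem.

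The first step is to compute $\mw(A)$: the distinct rows of $M_{A,\comp{A}}$ are exactly the distinct subsets $S\subseteq\{1,\dots,k\}$ realized by the groups, so $\mw(A)=2^k$, and the same counting gives $\Qrw(A)=\mw(A)=2^k$ since the $2^k$ group-rows are the indicator vectors of all subsets of a $k$-set and these span a space of $\bQ$-dimension $2^k$ (the $2^k\times 2^k$ subset-incidence matrix is invertible over $\bQ$ — it is, up to row/column permutation, a tensor power of $\bigl(\begin{smallmatrix}1&0\\1&1\end{smallmatrix}\bigr)$). The second step is to compute $\rw(A)$: over $GF(2)$ the rank of that same tensor-power matrix is $2^k$, so one needs the extra structure of the test-vertex multiplicities or a slightly different gadget to pin the $GF(2)$-rank down to $k+1$ rather than $2^k$ — this is where I would mimic the BTV trick, using that the $GF(2)$-rank of the relevant incidence structure collapses because over $GF(2)$ the subset-sum relations are linear, leaving only $k$ ``coordinate'' rows plus possibly one all-ones row, hence $\rw(A)=k+1$. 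The third step is the $n$-neighbor equivalence count: two subsets $X,W\subseteq A$ are $n$-equivalent iff for every test vertex $u$, $|N(u)\cap X|=|N(u)\cap W|$ (Lemma~\ref{lem:property_nec_n}), and since all $m$ copies of a given test-index $i$ see the same vertices, the equivalence class of $X$ is determined precisely by the $k$-tuple of numbers $\bigl(|N(u_i)\cap X|\bigr)_{i=1}^k$; by choosing the $2^k$ groups so that one can independently tune how many vertices with each index-set are selected, one shows each of these $k$ counts independently ranges over $\{0,1,\dots,m\}$ — wait, that would give only $(m+1)^k$, which is too small. So the correct bookkeeping must be finer: the $n$-equivalence class is the full vector of counts indexed by the $2^k$ \emph{groups} modulo the linear relations imposed by the $k$ test vertices, or rather I should arrange that the $2^k$ groups each have exactly $m+1$ vertices and the profile seen by $\comp{A}$ records, for each of the $2^k$ groups separately, how many of its $m+1$ vertices lie in $X$ — achieved by making each group's vertices distinguishable to $\comp{A}$ via a private test vertex per group — giving $(m+1)^{2^k}=(n/2^k)^{2^k}$ classes, adjusted to $(n/2^k-1)^{2^k}$ after accounting for the $\comp{A}$-vertices in the vertex budget.

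\textbf{Main obstacle.} The delicate point — and the one I expect to consume most of the work — is the simultaneous realization of all three equalities $\rw(A)=k+1$, $\Qrw(A)=\mw(A)=2^k$, and $\nec_n(A)=(n/2^k-1)^{2^k}$ from a \emph{single} gadget: the first demands that the $GF(2)$-rank be exponentially smaller than the $\bQ$-rank (so the gadget must exploit characteristic-$2$ cancellation, forcing a subset-parity design on the $A$-side), while the last demands that $\comp{A}$ be rich enough to distinguish all $2^k$ groups by their selected-size (which pushes toward $\comp{A}$ having a private probe per group, tending to inflate $\rw$). Reconciling these requires the BTV-style trick: give each group its own probe vertex but wire the probes so that over $GF(2)$ they become linearly dependent on $k+1$ generators (e.g. probe for group $S$ has neighborhood equal to the symmetric difference, hence $GF(2)$-sum, of $k$ ``atomic'' neighborhoods plus a parity bit) while over $\bQ$ they remain independent and still record exact cardinalities. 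Verifying that this wiring keeps the cardinality-profiles fully independent (so the count is exactly $(m+1)^{2^k}$ and not collapsed by the same linear relations that collapse the $GF(2)$-rank) is the crux; once that is checked, substituting $m=n/2^k-1$ and $n$-counting finishes, and the bound matches Lemma~\ref{lem:comparen} up to the additive constant in the exponent and the $-1$ in the base.
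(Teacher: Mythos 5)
Your plan follows the paper's approach at the structural level: a bipartite gadget whose $A$-side consists of $2^k$ subset-indexed groups that are cloned, whose $\comp{A}$-side consists of probes, and whose adjacency matrix has $\bQ$-rank $2^k$ but $GF(2)$-rank $k+1$, with the $\nec_n$-count coming from $\bQ$-invertibility of that matrix (so that two subsets of $A$ are $n$-equivalent iff their per-group multiplicity vectors coincide). You also correctly identify the crux: a \emph{single} gadget must realize all three rank conditions at once.

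The gap is that you never construct that gadget, and the one you do start from would fail. Your ``columns range over all $2^k$ subsets of a $k$-element index set'' adjacency (the subset-incidence matrix, tensor power of $\bigl(\begin{smallmatrix}1&0\\1&1\end{smallmatrix}\bigr)$) has $GF(2)$-rank $2^k$, not $k+1$, because $\bigl(\begin{smallmatrix}1&0\\1&1\end{smallmatrix}\bigr)$ already has $GF(2)$-rank $2$ and tensor products multiply ranks over every field; so that choice cannot give $\rw(A)=k+1$. You then say a ``slightly different gadget'' with symmetric-difference wiring is needed but never define it. The paper's actual gadget is: $A_k=\{a_S:S\subseteq[k]\}$, $B_k=\{b_T:T\subseteq[k]\}$, with $a_S\sim b_T$ iff $|S\cap T|$ is even. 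Over $GF(2)$ each row is an \emph{affine-linear} function of $T$ (namely $1+\langle S,T\rangle_{GF(2)}$), so the rows span a space of dimension $\le k+1$ and the paper's recursion shows it is exactly $k+1$; over $\bQ$ the matrix is $(J+H)/2$ with $H$ the $2^k\times 2^k$ Hadamard matrix, and this is invertible, giving $\Qrw=\mw=2^k$. Only then can you apply your step~3 (bijectivity over $\bQ$ identifies $\nec_n$-classes with multiplicity vectors). Until a concrete gadget with both rank properties is written down and verified, the proof is incomplete. A smaller point: the paper clones only the $A$-side ($t$ copies of each $a_S$), leaving $\comp{A}=B_k$ with exactly $2^k$ vertices; your ``$m$ copies of each test vertex on the $\comp{A}$-side'' is unnecessary and complicates the vertex count.
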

\begin{proof}
	In the following, we will manipulate multiple graphs that have some vertices in common. To avoid any confusion, for every $\f\in \{\mw,\rw,\Qrw,\nec_n\}$, graph $G$, and $A\subseteq V(G)$, we will denote the value of $\f(A)$ in $G$  by $\f^G(A)$.
	For every $k\in \bN$, we define the sets $A_k:=\{a_S \mid S\subseteq [k]\}$, $B_k:=\{b_S\mid S \subseteq [k]\}$ and the graph $H_k:=(A_k\cup B_k, \{ \{ a_S,b_T\} \mid |S\cap T|\text{ is even}\})$.
	For instance, Figure \ref{fig:H_k} shows the graph $H_2$.
	
	For a matrix $M$, let us denote by $\rw(M)$ and $\Qrw(M)$ the rank of $M$ over $GF(2)$ and $\bQ$, respectively.
	For every $k\in \bN$, we denote by $M_k$ the adjacency matrix $M_{A_k,B_k}$.
	Notice that $\rw^{H_k}(A_k)=\rw(M_k)$ and $\Qrw^{H_k}(A_k)=\Qrw(M_k)$.
	We also consider the matrix $\comp{M}_k$ where for every $S,T\subseteq  [k]$, we have $\comp{M}[a_S,b_{T}]=1$ if $|S\cap T|$ is odd and 0 otherwise.
	Figure \ref{fig:H_k} shows the matrices $M_1$ and $\comp{M}_1$.
	
	\begin{figure}[!pb]
		\centering
		\includegraphics[width=0.8\columnwidth]{./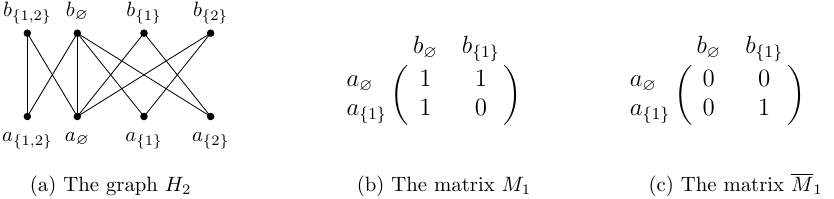}
		\caption{Graph $H_k$ with $k=2$ and the matrices $M_1$ and $\comp{M}_1$.}
		\label{fig:H_k}
	\end{figure}

	We want to prove by recurrence on $k$ that $\rw(M_k)=k+1$ and $\Qrw(M_k)=2^{k}$ for every $k\in\bN$.
	This is true for $k=0$ as we have $\rw(M_0)=\Qrw(M_0)=1$.
	
	Let $<$ be the total order on the subsets of $\bN^+$ where $S< T$ if $\max(S\triangle T) \in T$.
	For example, we have $\emptyset < \{1\}< \{2\} < \{1,2\} < \{3\} < \{1,3\} < \{2,3\} < \{1,2,3\}$.
	For every $k\in \bN$, we assume that the $i$th row (resp., column) of $M_k$ and $\comp{M}_k$ are associated, respectively, with $a_S$  (resp., $b_S$) where $S$ is the $i$th smallest subset of $[k]$ \wrtn $<$.
	This way, for every $k\in\bN$, the first $|A_{k+1}|/2=2^{k}$ rows (resp., columns) of $M_{k+1}$ and $\comp{M}_{k+1}$ are associated with the vertices $a_S$ (resp., $b_S$) that belong to $A_{k}$ (resp., $B_{k}$), \ien, we have $S\subseteq [k]$.
	On the other hand, the last $2^{k}$ rows (resp., columns) of  $M_k$ and $\comp{M}_k$  are associated with the vertices $a_S$ (resp., $b_S$) with $S=T\cup \{k+1\}$ with $T\subseteq [k]$.
	By definition, we deduce that, for every $k\in \bN$, we have
	\begin{equation}\label{eqA.1}
		M_{k+1}=
		\left(
		\begin{array}{c|c}
			M_{k} & M_{k} \\
			\hline
			M_{k} & \comp{M}_{k}
		\end{array}
		\right) \text{ and }
		\comp{M}_{k+1}=
		\left(
		\begin{array}{c|c}
			\comp{M}_{k} & \comp{M}_{k} \\
			\hline
			\comp{M}_{k} & M_{k}
		\end{array}
		\right).
	\end{equation}
	
	Since row and column additions do not change the rank of a matrix and from the definition of $M_k$ and $\comp{M}_k$, we deduce that, for every rank function $\f\in \{\rw,\Qrw\}$, we have
	\[
	\f(M_{k+1})=\f\left(
	\begin{array}{c|c}
		M_{k} & M_{k} \\
		\hline
		0 & \comp{M}_{k} - M_k
	\end{array}
	\right)=
	\f\left(
	\begin{array}{c|c}
		M_{k} & 0 \\
		\hline
		0 & \comp{M}_{k} - M_k
	\end{array}
	\right)= \f(M_{k}) + \f(\comp{M}_{k} - M_k).
	\]
	In $GF(2)$, the matrix $\comp{M}_{k} - M_k$ is an all-$1$'s matrix, we deduce that $\rw(M_{k+1})=\rw(M_k)+1$ for every $k\in \bN$.
	As $\rw(M_0)=1$, by recurrence, this implies that $\rw(M_k)=\rw^{H_k}(A_k)=k+1$.
	
	Now, let us prove that $\Qrw(M_k)=2^{k}$.
	From (\ref{eqA.1}), for every $k\in \bN$, we deduce that
	\begin{equation*} \comp{M}_{k+1}- M_{k+1}=
		\left(
		\begin{array}{c|c}
			\comp{M}_{k} - M_k & \comp{M}_{k} - M_k \\
			\hline
			\comp{M}_{k} - M_k & M_k - \comp{M}_{k}
		\end{array}
		\right). \end{equation*}
	Hence, the rank over $\bQ$ of the matrix $\comp{M}_{k+1}-M_{k+1}$ equals
	\begin{align}
		\Qrw\left(
		\begin{array}{c|c}
			\comp{M}_{k} - M_k & \comp{M}_{k} - M_k \\
			\hline
			0 & 2(M_k - \comp{M}_{k})
		\end{array}
		\right)&=
		\Qrw\left(
		\begin{array}{c|c}
			\comp{M}_{k} - M_k & 0 \\
			\hline
			0 & \comp{M}_{k} - M_k
		\end{array}
		\right)\\&=2\Qrw(\comp{M}_{k} - M_{k}).\notag
	\end{align}
	
	As $\Qrw(M_0)=\Qrw(\comp{M}_{0} - M_{0})=1$, by recurrence, we deduce that $\Qrw(\comp{M}_{k+1} - M_{k+1})= 2^{k}$ and thus $\Qrw(M_{k+1})= 2^k$.
	We conclude that $\Qrw^{H_k}(A_k)=2^{k}$.
	As $\Qrw^{H_k}(A_k)\leq \mw^{H_k}(A_k) \leq |A_k|=2^{k}$, we deduce that $\mw^{H_k}(A_k)=2^{k}$.

	Let $k,t\in \bN$ and $G_k$ be the graph obtained from $H_k$ by cloning $t$ times each vertex in $A_k$.
	That is, $V(G_k)=V(H_k)\cup \{a^2_S,\dots,a^t_S \mid S\subseteq [k]\}$, and for every $i\in \{2,\dots,t\}$ and $S\subseteq [k]$, we have $N(a_S^i)=N(a_S)$.

	Let $A_k^\star:=A_k\cup \{a^2_S,\dots,a^t_S \mid S\subseteq [k]\}$.
	Obviously, we have $\rw^{H_k}(A_k^\star)=k+1$, $\Qrw^{H_k}(A_k^\star)=\mw^{H_k}(A_k^\star)=2^{k}$ because the adjacency matrix between $A_k^\star$ and $B_k$ can be obtained from $M_k$ by adding $t-1$ copies of each row.
	
	Let $n$ be the number of vertices of $G$. We claim that $\nec_n^{H_k}(A_k^\star)=(n/2^{k}-1)^{2^k}$.
	For every $X\subseteq A_k^\star$, we define the column vector $v_X=(x_{\emptyset},x_{\{1\}},x_{\{2\}},x_{\{1,2\}},\dots,x_{[k]})$  such that, for every $S\subseteq [k]$, we have $x_S=|X\cap \{a_S,a_S^2,a_S^3,\dots,a_S^t\}|$.
	Observe that $M_k\cdot v_X=(y_{\emptyset},y_{\{1\}},\dots,y_{[k]})$ where, for every $S\subseteq [k]$, we have $y_S=|N(b_S)\cap X|$.
	Thus, for every $X,W\subseteq A^{\star}_{H_k}$, we have $X\equi{A^{\star}_{H_k}}{n} W$ if and only if $M_k\cdot v_X=M_k\cdot v_W$.
	As $\Qrw(M_k)=2^{k}$, by the rank-nullity problem, we deduce that the linear application associated with $M_k$ (over $\bQ$) is a bijection.
	We conclude that, for every $X,W\subseteq A^{\star}_{H_k}$, we have $X\equi{A^{\star}_{H_k}}{n} W$ if and only if $v_X=v_W$.
	Thus, $\nec_n^{H_k}(A_k^\star)=t^{2^k}$.
	By construction, we have $n=2^{k}(t+1)$.
	Consequently, $t=n/2^{k}-1$ and $\nec_n^{H_k}(A_k^\star)=(n/2^{k}-1)^{2^k}$.
\end{proof}

\end{document}